\theoremstyle{definition}
\newtheorem{definition}{Definition}
\newtheorem{example}[definition]{Example}
\theoremstyle{theorem}
\newtheorem{theorem}[definition]{Theorem}
\newtheorem{lemma}[definition]{Lemma}
\newtheorem{corollary}[definition]{Corollary}
\theoremstyle{remark}
\newtheorem*{remark}{Remark}
\newtheorem*{notation}{Notation}
\newtheorem*{convention}{Convention}
\newtheoremstyle{myplain}
  {\topsep}   % ABOVESPACE
  {\topsep}   % BELOWSPACE
  {\itshape}  % BODYFONT
  {0pt}       % INDENT (empty value is the same as 0pt)
  {\bfseries\sffamily} % HEADFONT
  {.}         % HEADPUNCT
  {5pt plus 1pt minus 1pt} % HEADSPACE
  {}          % CUSTOM-HEAD-SPEC
\newtheoremstyle{mydefinition}
  {\topsep}   % ABOVESPACE
  {\topsep}   % BELOWSPACE
  {\normalfont}  % BODYFONT
  {0pt}       % INDENT (empty value is the same as 0pt)
  {\bfseries\sffamily} % HEADFONT
  {.}         % HEADPUNCT
  {5pt plus 1pt minus 1pt} % HEADSPACE
  {}          % CUSTOM-HEAD-SPEC
\newtheoremstyle{myremark}
  {0.5\topsep}   % ABOVESPACE
  {0.5\topsep}   % BELOWSPACE
  {\normalfont}  % BODYFONT
  {0pt}       % INDENT (empty value is the same as 0pt)
  {\sffamily} % HEADFONT
  {.}         % HEADPUNCT
  {5pt plus 1pt minus 1pt} % HEADSPACE
  {}          % CUSTOM-HEAD-SPEC
\begin{document}
\title{\Huge \bf A Game-Semantic \\ Model of Computation}
\author{\Large \bf Norihiro Yamada \\
{\tt norihiro.yamada@cs.ox.ac.uk} \\
University of Oxford
}

% make the title area
\maketitle

\begin{abstract} % abstract
The present paper introduces a novel notion of `(effective) computability', called \emph{viability}, of strategies in game semantics in an \emph{intrinsic} (i.e., without recourse to the standard \emph{Church-Turing computability}), \emph{non-inductive} and \emph{non-axiomatic} manner, and shows, as a main technical achievement, that viable strategies are \emph{Turing complete}.
Consequently, we have given a mathematical foundation of computation in the same sense as Turing machines but \emph{beyond computation on natural numbers}, e.g., higher-order computation, in a more abstract fashion. 
As immediate corollaries, some of the well-known theorems in computability theory such as the smn-theorem and the first recursion theorem are generalized.
Notably, our game-semantic framework distinguishes `high-level' computational processes that operate directly on mathematical objects such as natural numbers (not on their symbolic representations) and their `symbolic implementations' that define their `computability', which sheds new light on the very concept of computation. 
This work is intended to be a stepping stone towards a new mathematical foundation of computation, intuitionistic logic and constructive mathematics.
%\parttitle{First part title} %if any
%Text for this section.

%\parttitle{Second part title} %if any
%Text for this section.
\end{abstract}

%%%%%%%%%%%%%%%%%%%%%%%%%%%%%%%%%%%%%%%%%%%%%%
%%                                          %%
%% The keywords begin here                  %%
%%                                          %%
%% Put each keyword in separate \kwd{}.     %%
%%                                          %%
%%%%%%%%%%%%%%%%%%%%%%%%%%%%%%%%%%%%%%%%%%%%%%

%%%%%%%%%%%%%%%%%%%%%%%%%%%%%%%%%%%%%%%%%%%%%%
%%                                          %%
%% The Main Body begins here                %%
%%                                          %%
%% Please refer to the instructions for     %%
%% authors on:                              %%
%% http://www.biomedcentral.com/info/authors%%
%% and include the section headings         %%
%% accordingly for your article type.       %%
%%                                          %%
%% See the Results and Discussion section   %%
%% for details on how to create sub-sections%%
%%                                          %%
%% use \cite{...} to cite references        %%
%%  \cite{koon} and                         %%
%%  \cite{oreg,khar,zvai,xjon,schn,pond}    %%
%%  \nocite{smith,marg,hunn,advi,koha,mouse}%%
%%                                          %%
%%%%%%%%%%%%%%%%%%%%%%%%%%%%%%%%%%%%%%%%%%%%%%

%%%%%%%%%%%%%%%%%%%%%%%%% start of article main body
% <put your article body there>

%%%%%%%%%%%%%%%%
%% Background %%
%%

%\tableofcontents

\section{Introduction}
The present work introduces an \emph{intrinsic}, \emph{non-inductive} and \emph{non-axiomatic} formulation of `effectively computable' strategies in game semantics and proves that they are \emph{Turing complete}.
This result leads to a novel mathematical foundation of computation \emph{beyond classical computation}, e.g., higher-order computation, that distinguishes `high-level' and `low-level' computational processes, where the latter defines `effective computability' of the former.

\begin{convention}
We informally use \emph{\bfseries computational processes} and \emph{\bfseries algorithms} almost as synonyms of \emph{computation}, but they put more emphasis on `processes'.
%but the first puts emphasis on abstract and `high-level' computation, while the second on symbolic, executable and `low-level' computation such that intuitively a computational process is `computable' if it is `representable' by an algorithm.
%One of the main aims of the present paper is to give a precise formulation of these informal concepts.
\end{convention}

\subsection{Search for Turing machines beyond classical computation}
\if0
In the 1930's, several mathematicians proposed different formulations of `(effective) computability' or `effectivity' of functions on natural numbers in an informal sense, and proved a surprising fact that these definitions are all equivalent, leading to the classic \emph{Church-Turing thesis} \cite{copeland2002church} that proposed to take any of these definitions as a mathematical definition of `computability' of such functions, which we call \emph{\bfseries classical computability} or \emph{\bfseries recursiveness} in this paper.
%Also, let us call a function \emph{recursive} if it is classically computable.
The motivation to capture classical computability was for its own sake as in any branch of pure mathematics as well as for a foundational question in mathematical logic, namely \emph{decidability} of first-order logic, known as \emph{Entscheidungsproblem} posed by David Hilbert in 1928 \cite{soare1999history}.

In this period, Alonzo Church and Alan Turing independently published papers \cite{church1936unsolvable,turing1936computable} in 1936, both of which gave the negative answer to Entscheidungsproblem, i.e., first-order logic is \emph{undecidable}. 
Although this result itself had tremendous impacts in logic and mathematics, their conceptual ideas and mathematical structures in the papers were no less influential as they can be seen as origins of (theoretical) computer science: Church defined classical computability via a formal system, called the \emph{$\lambda$-calculus}, which can be seen as an idealized functional programming language, while Turing did it via idealized `human computers', called \emph{Turing machines (TMs)}, which led to the invention of modern digital computers.
\fi

\emph{Turing machines (TMs)} introduced in the classic work \cite{turing1936computable} by Alan Turing have been widely accepted as providing a reasonable and highly convincing definition of \emph{`effectivity'} or \emph{`(effective) computability'} of functions on (finite sequences of) natural numbers, which we call \emph{\bfseries recursiveness}, \emph{\bfseries classical computability} or \emph{\bfseries Church-Turing computability} in the present paper, in a mathematically rigorous manner. 
This is because `computability' of a function intuitively means the very existence of an algorithm that `implements' the function's input/output-behavior, and TMs are none other than a mathematical formulation of this informal concept. 
%In other words, the main contribution of TMs was to give the first mathematically rigorous formulation of `algorithms', which underpins computability \cite{soare1999history}. 
\if0
Note that other pioneering approaches to classical computability do not have this foundational nature: The \emph{$\lambda$-calculus} \cite{church1936unsolvable}, \emph{combinatory logic} \cite{schonfinkel1924bausteine,curry1930grundlagen} and \emph{Post canonical systems} \cite{post1943formal} are \emph{syntactic} entities, and so they do not formalize how to `effectively' achieve the specified symbol manipulations in an as primitive level as TMs (e.g., \emph{$\beta$-reduction} in the $\lambda$-calculus does not describe how to execute itself, which is in fact a motivation for \emph{explicit substitution} \cite{rose1996explicit}); \emph{partial recursive functions} \cite{kleene1952introduction} are \emph{inductive} and \emph{axiomatic}, i.e., without defining what algorithms are, they inductively define `computable functions' by certain axioms, and so they are not as foundational as TMs. 
For instance, Kurt G\"{o}del found Alonzo Church's proposal to take the $\lambda$-calculus as a foundation for classical computability `thoroughly unsatisfactory', but he enthusiastically accepted TMs as such a foundation \cite{soare1999history}; also Church stated:
\fi
\if0
For instance, Alonzo Church stated as follows:
\begin{quote}
Computability by a Turing machine ... has the advantage of making the identification with effectiveness in the ordinary (not explicitly defined) sense evident immediately \cite{church1937review}.
\end{quote}
\fi
\if0
Later in the literature, various mathematical models of computation similar to TMs such as \emph{register machines (RMs)} \cite{cutland1980computability} have been proposed.
However, since their ideas and features are similar to those of TMs, below we just mention TMs as a representative of these similar models.
\fi

%Today, however, there are various algorithms in practice whose aim is \emph{processes per se}, rather than to `implement' functions, e.g., operating systems and web browsers.
%They are often \emph{interactive} and \emph{non-terminating}, for which TMs do not fit very well \cite{blass2006algorithms,abramsky2014intensionality,abramsky2008information,denning2010computation}. 
In mathematics, however, there are various kinds of \emph{non-classical computation}, where by \emph{\bfseries classical computation} we mean what merely `implements' a function on (finite sequences of) natural numbers, since there are a variety of mathematical objects other than natural numbers, for which TMs have certain limitations.

As an example of non-classical computation, consider \emph{higher-order computation} \cite{longley2015higher}, i.e., computation that  takes (as an input) or produces (as an output) another computation, which abounds in mathematics, e.g., quantification in mathematical logic, differentiation in analysis or simply an application $(f, a) \mapsto f(a)$ of a function $f : A \to B$ to an argument $a \in A$. 
However, TMs cannot capture higher-order computation in a natural or systematic fashion.
In fact, although TMs may compute on symbols that \emph{encode} other TMs, e.g., consider \emph{universal TMs} \cite{kozen2012automata,sipser2012introduction}, they cannot compute on `external behavior' of an input computation, which implies that the input is limited to a recursive one (to be encoded); however, it makes perfect sense to consider computation on non-recursive objects such as non-recursive real numbers. 
For this point, one may argue that \emph{oracle TMs} \cite{kozen2006theory,sipser2012introduction} may treat an input computation as an \emph{oracle}, a `black-box-like' computation that does not have to be recursive; however, it is like a function (rather than a computational process) that computes just in a single step, which appears conceptually mysterious and technically ad-hoc (another approach is to give an input computation as a potentially infinite sequence of symbols on the input tape \cite{weihrauch2012computable}, but it may be criticized in a similar manner). 

On the other hand, most of the other models of higher-order computation are, unlike TMs, either syntactic (such as $\lambda$-calculi and programming languages \cite{barendregt1984lambda,longley2015higher}), inductive and/or axiomatic (such as \emph{Kleene's schemata \textsf{S1}-\textsf{S9}} \cite{kleene1959recursive,kleene1963recursive}) or extrinsic (i.e., reducing to classical computation by \emph{encoding} whose `effectivity' is usually left imprecise \cite{cutland1980computability,longley2015higher}), thus lacking the semantic, direct and intrinsic natures of TMs.
Also, unlike classical computability, a confluence between different notions of higher-order computability has been rarely established \cite{longley2015higher}.
%In other words, we have not yet captured higher-order computability, let alone higher-order computation. 
For this problem, it would be a key step to establish a `TMs-like' model of higher-order computation since it may tell us which notion of higher-order computability is a `correct' one.

%Moreover, unlike classical computability, there is no universal notion of `higher-order computability' \cite{longley2015higher}.

%This suggests that computational processes are not only foundational for computability theory but also a broad concept of interest in its own right; this is reflected by the fact that (theoretical) computer science was born out of computability theory, but it became an independent academic discipline. 

%Nevertheless, a mathematical formulation of `algorithms' beyond TMs or similar models such as \emph{unlimited register machines (URMs)} \cite{cutland1980computability} has been scarce \cite{girard1989proofs}; traditional focuses in pure mathematics have been \emph{extensional} objects such as sets and functions, and major approaches in theoretical computer science have been \emph{syntactic}, e.g., $\lambda$-calculi and \emph{process calculi} \cite{milner1980calculus,hoare1978communicating}, but from a foundational point of view, `TMs-like' formulation would be preferable.

\subsection{Search for mathematics of `high-level' computational processes}
Perhaps more crucially than the limitation for non-classical computation mentioned above, one may argue that TMs are not appropriate as \emph{mathematics of computational processes} since computational steps of TMs are often \emph{too `low-level'} to see what they are supposed to compute. 
%This point is relatively ignorable if one is only concerned with classical computation as it suffices to read off inputs and outputs; however, it is no longer the case for non-classical one.
In other words, we need mathematics of \emph{`high-level'} computational processes that gives a `birds-eye-view' of `low-level' computational processes.\footnote{This idea is similar to that of \emph{denotational semantics} of programming languages \cite{scott1971toward}, but there is an important difference: Denotational semantics models programs usually by (extensional) functions, but we are concerned with (intensional) \emph{processes}.}
%however, conventional mathematics has been mainly concerned with \emph{extensional} objects such as sets and functions, and so mathematics of `high-level computational processes' has been scarce.
%Hence, similarly to \emph{denotational semantics} of programming languages \cite{winskel1993formal,gunter1992semantics},  
Also, what TMs formulate is essentially \emph{symbol manipulations}; however, the content of computation on mathematical, semantic and non-symbolic objects seems completely independent of its symbolic representation, e.g., consider a \emph{process} (not a function) to add numbers or to take the union of sets. 

Therefore, it would be rather appropriate, at least from conceptual and mathematical points of view, to formulate such `high-level' computational processes in a more abstract, in particular \emph{syntax-independent}, manner, in order to explain `low-level' computational processes, and then regard the latter as executable `symbolic implementations' of the former.
\if0
Note that it conforms to the \emph{semantics-first-view}, i.e., the view that semantic concepts must come first, and syntax is just notation for them, which is conventional in mathematics \cite{streicher2006domain,longley2015higher}. 
Also, from a theoretical viewpoint, syntax-independent approaches are often preferred for they enable one to study essence of concepts, ignoring superficial syntactic details. %e.g., in \emph{categorical logic}, different type theories may induce the same \emph{classifying category} \cite{pitts2001categorical,jacobs1999categorical}.

\begin{remark}
It is admittedly difficult to completely distinguish syntactic (or syntax-dependent) and semantic (or syntax-independent) concepts; however, our general guideline in this paper, though informal, is as follows:
\begin{itemize}

\item The substance of syntactic concepts is \emph{symbols}, which per se do not denote anything, and operations and properties on them are defined in terms of symbol manipulations, e.g., substitution, $\alpha$-equivalence and normalization;

\item Semantic concepts do not consist of symbols, where notation for them is just for convenience for mathematicians, and it is inessential for their `ontology' (for instance, natural numbers are invariant however they are written, e.g., unary, binary, hex or decimal).

\end{itemize}

%one may argue that symbols in syntactic entities can be replaced bijectively without any problem (in this sense syntax is arbitrary), and notation is in any case necessary to define and reason about semantic entities.
\end{remark}
\fi

\subsection{Our research problem: mathematics of computational processes}
\label{OurResearchProblem}
To summarize, it would be reasonable and meaningful from both conceptual and mathematical points of view to develop mathematics of abstract (in particular syntax-independent) and `high-level' computational processes as well as executable and `low-level' ones beyond classical computation such that the former is defined to be `effectively computable' if it is `implementable' by the latter.

\if0
However, since conventional mathematics has been concerned mainly with \emph{extensional} objects such as sets and functions, and major approaches in (theoretical) computer science have been \emph{symbol manipulations} (e.g., programming languages and assembly languages (or machine codes) are certainly in a relation of `high-level' and `low-level' computational processes, but they are syntactic entities), such \emph{mathematics of computational processes} has been scarce.
\fi

In fact, this (or similar) perspective is nothing new, and shared with various prominent researchers; for instance, Robin Milner stated:
\begin{quote}
... we should have achieved a mathematical model of computation, perhaps highly abstract in contrast with the concrete nature of paper and register machines, but such that programming languages are merely executable fragment of the theory ... \cite{milner2005software}
\end{quote}

We address this problem in the present paper.
However, since there are so many kinds of computation, e.g., parallel, concurrent, probabilistic, non-deterministic, quantum, etc., as the first step, this paper focuses on a certain kind of \emph{higher-order} and \emph{sequential} (i.e., at most one computational step may be performed at a time) computation, which is based on \emph{(sequential) game semantics}\footnote{It is both technically and conceptually simpler to focus on \emph{sequential} games, in which two participants \emph{alternately} take actions, than to consider more general \emph{concurrent} games, in which both participants may be active \emph{simultaneously} \cite{abramsky1999concurrent}.} introduced below.

\subsection{Game semantics}
\label{GameSemantics}
\emph{Game semantics} (of computation) \cite{abramsky1997semantics,abramsky1999game,hyland1997game} is a particular kind of \emph{denotational semantics} of programming languages \cite{amadio1998domains,winskel1993formal,gunter1992semantics}, in which types and terms are modeled as \emph{games} and \emph{strategies} (whose definitions are given in Section~\ref{Preliminaries}), respectively. 
Historically, having its roots in `games-based' approaches in mathematical logic to capture \emph{validity} \cite{requena1980dialogische,felscher1986dialogues}, \emph{higher-order computability} \cite{kleene1978recursive,kleene1980recursive,kleene1982recursive,kleene1985unimonotone,kleene1991recursive,gandy1993dialogues} and \emph{proofs in linear logic} \cite{blass1992game,abramsky1994games,hyland1993fair}, combined with ideas from \emph{sequential algorithms} \cite{berry1982sequential}, \emph{process calculi} \cite{milner1980calculus,hoare1978communicating} and \emph{geometry of interaction} \cite{girard1989geometry,girard1990geometry,girard1995geometry,girard2003geometry,girard2011geometry,girard2013geometry} in computer science, several variants of game semantics in its modern form were developed in the early 1990's to give the first syntax-independent characterization of the programming language PCF \cite{abramsky2000full,hyland2000full,nickau1994hereditarily}; since then a variety of games and strategies have been proposed to model various programming features \cite{abramsky1999game}. 

An advantage of game semantics is this flexibility: It models a wide range of languages by simply varying constraints on strategies \cite{abramsky1999game}, which enables one to systematically compare and relate different languages ignoring syntactic details.
Also, as \emph{full completeness} and \emph{full abstraction} results \cite{curien2007definability} in the literature have demonstrated, game semantics in general has an appropriate degree of abstraction (and thus it has a good potential to be mathematics of `high-level' computational processes).
Finally, yet another strong point of game semantics is its conceptual naturality: It interprets syntax as `dynamic interactions' between the participants of games, providing a computational and intensional explanation of syntax in a natural and intuitive (yet mathematically precise) manner.
Informally, one can imagine that games provide a `high-level' description of interactive computation between a TM and an oracle. %(n.b., conventional games are not intensional enough, but it is resolved by \emph{dynamic games} in Section~\ref{DynamicGamesAndStrategies}).
Note that such an intensional nature stands in sharp contrast to the traditional \emph{domain-theoretic} semantics  \cite{amadio1998domains} which for instance cannot capture \emph{sequentiality} of PCF (but the game models \cite{abramsky2000full,hyland2000full,nickau1994hereditarily} can).

Let us give a brief and informal introduction to games and strategies (as defined in \cite{abramsky1999game}) below in order to sketch the main idea of the present paper.

A \emph{game}, roughly, is a certain kind of a rooted forest whose branches represent possible `developments' or \emph{(valid) positions} of a `game in the usual sense' (such as chess, poker, etc.).
\emph{Moves} of a game are nodes of the game, where some moves are distinguished and called \emph{initial}; only initial moves can be the first element (or occurrence) of a position of the game. 
\emph{Plays} of a game are increasing sequences $\boldsymbol{\epsilon}, m_1, m_1 m_2, \dots$ of positions of the game, where $\boldsymbol{\epsilon}$ is the \emph{empty sequence}. 
For our purpose, it suffices to focus on rather standard \emph{sequential} (as opposed to \emph{concurrent} \cite{abramsky1999concurrent}) and \emph{unpolarized} (as opposed to \emph{polarized }\cite{laurent2004polarized}) games played by two participants, \emph{Player} (\emph{P}), who represents a `computational agent', and \emph{Opponent} (\emph{O}), who represents an `environment', in each of which O always starts a play (i.e., unpolarized), and then they alternately (i.e., sequential) perform moves allowed by the rules of the game.
Strictly speaking, a position of each game is not just a sequence of moves: Each occurrence $m$ of O's or O- (resp. P's or P-) non-initial move in a position points to a previous occurrence $m'$ of P- (resp. O-) move in the position, representing that $m$ is performed specifically as a response to $m'$. 
A \emph{strategy} on a game, on the other hand, is what tells P which move (together with a pointer) she should make at each of her turns in the game.
Hence, a game semantics $\llbracket \_ \rrbracket_{\mathcal{G}}$ of a programming language $\mathscr{L}$ interprets a type $\mathsf{A}$ of $\mathscr{L}$ as a game $\llbracket \mathsf{A} \rrbracket_{\mathcal{G}}$ that specifies possible plays between P and O, and a term $\mathsf{M : A}$\footnote{For simplicity, here we focus on \emph{closed} terms, i.e., ones with the \emph{empty context}.} of $\mathscr{L}$ as a strategy $\llbracket \mathsf{M} \rrbracket_{\mathcal{G}}$ that describes a `strategy' for P on how to play in $\llbracket \mathsf{A} \rrbracket_{\mathcal{G}}$; an `execution' of the term $\mathsf{M}$ is then modeled as a play in $\llbracket \mathsf{A} \rrbracket_{\mathcal{G}}$ for which P follows $\llbracket \mathsf{M} \rrbracket_{\mathcal{G}}$.

Let us consider a simple example. The game $N$ of natural numbers is the following rooted tree (which is infinite in width):
\begin{diagram}
& & q & & \\
& \ldTo(2, 2) \ldTo(1, 2) & \dTo & \rdTo(1, 2) \ \dots & \\
0 & 1 & 2 & 3 & \dots
\end{diagram}
in which a play starts with O's question $q$ (`What is your number?') and ends with P's answer $n \in \mathbb{N}$ (`My number is $n$!'), where $\mathbb{N}$ is the set of all natural numbers, and $n$ points to $q$ (though this pointer is omitted in the diagram). A strategy $\underline{10}$ on $N$, for instance, that corresponds to $10 \in \mathbb{N}$ can be represented by the map $q \mapsto 10$ equipped with a pointer from $10$ to $q$ (though it is the only choice).
In the following, the pointers of most strategies are obvious, and thus we often omit them.

There is a construction $\otimes$ on games, called \emph{tensor (product)}.
Conceptually, a position $\boldsymbol{s}$ of the tensor $A \otimes B$ of games $A$ and $B$ is an interleaving mixture of a position $\boldsymbol{t}$ of $A$ and a position $\boldsymbol{u}$ of $B$ developed `in parallel without communication'; more specifically, $\boldsymbol{t}$ (resp. $\boldsymbol{u}$) is the subsequence of $\boldsymbol{s}$ consisting of moves of $A$ (resp. $B$) such that the change of $AB$-parity (i.e., the switch between $\boldsymbol{t}$ and $\boldsymbol{u}$) in $\boldsymbol{s}$ must be made by O. 
The pointers in $\boldsymbol{t}$ and $\boldsymbol{u}$ are inherited from $\boldsymbol{s}$ in the obvious manner; this point holds also for other constructions on games and strategies in the rest of the introduction, and thus we shall not mention it again.
For instance, a maximal position of the tensor $N \otimes N$ is either of the following forms\footnote{The diagrams are only to make it explicit which component game each move belongs to; the two positions are just finite sequences $q^{[0]} n^{[0]} q^{[1]} m^{[1]}$ and $q^{[1]} m^{[1]} q^{[0]} n^{[0]}$ equipped with the pointers $q^{[i]} \leftarrow n^{[i]}$ and $q^{[i]} \leftarrow m^{[i]}$ ($i = 0, 1$).}:
\begin{center}
\begin{tabular}{ccccccccc}
$N^{[0]}$ & $\otimes$ & $N^{[1]}$ &&&& $N^{[0]}$ & $\otimes$ & $N^{[1]}$ \\ \cline{1-3} \cline{7-9}
\tikzmark{ctensor1} $q^{[0]}$&&&&&&&&\tikzmark{ctensor2} $q^{[1]}$ \\
\tikzmark{dtensor1} $n^{[0]}$&&&&&&&&\tikzmark{dtensor2} $m^{[1]}$ \\
&&\tikzmark{ctensor3} $q^{[1]}$&&&&\tikzmark{ctensor4} $q^{[0]}$&& \\
&&\tikzmark{dtensor3} $m^{[1]}$&&&&\tikzmark{dtensor4} $n^{[0]}$&&
\end{tabular}
\begin{tikzpicture}[overlay, remember picture, yshift=.25\baselineskip]
\draw [->] ({pic cs:dtensor1}) [bend left] to ({pic cs:ctensor1});
\draw [->] ({pic cs:dtensor2}) [bend left] to ({pic cs:ctensor2});
\draw [->] ({pic cs:dtensor3}) [bend left] to ({pic cs:ctensor3});
\draw [->] ({pic cs:dtensor4}) [bend left] to ({pic cs:ctensor4});
\end{tikzpicture}
\end{center}
where $n, m \in \mathbb{N}$, and $(\_)^{[i]}$ ($i = 0, 1$) are (arbitrary and unspecified) `tags' to distinguish the two copies of $N$ (but we often omit them if it does not bring confusion), and the arrows represent pointers in positions (henceforth we employ this notation).

Next, a fundamental construction $!$ on games, called \emph{exponential}, is basically the countably infinite iteration of $\otimes$, i.e., $!A \stackrel{\mathrm{df. }}{=} A \otimes A \otimes \dots$ for each game $A$, where the `tag' for each copy of $A$ is typically given by a natural number $i \in \mathbb{N}$.

Another central construction $\multimap$, called \emph{linear implication}, captures the notion of \emph{linear functions}, i.e., functions that consume exactly one input to produce an output. 
A position of the linear implication $A \multimap B$ from $A$ to $B$ is almost like a position of the tensor $A \otimes B$ except the following three points:
\begin{enumerate}

\item The first occurrence in the position must be a move of $B$; 

\item A change of $AB$-parity in the position must be made by P; 

\item Each occurrence of an initial move (called an \emph{initial occurrence}) of $A$ points to an initial occurrence of $B$.

\end{enumerate}
Thus, a typical position of the game $N \multimap N$ is the following:
\begin{center}
\begin{tabular}{ccc}
$N^{[0]}$ & $\multimap$ & $N^{[1]}$ \\ \hline 
&&\tikzmark{cmultimap1} $q^{[1]} $\tikzmark{cmultimap3} \\
\tikzmark{cmultimap2} $q^{[0]}$ \tikzmark{dmultimap1}&& \\
\tikzmark{dmultimap2} $n^{[0]}$&& \\
&&$m^{[1]}$ \tikzmark{dmultimap3}
\end{tabular}
\begin{tikzpicture}[overlay, remember picture, yshift=.25\baselineskip]
\draw [->] ({pic cs:dmultimap1}) to ({pic cs:cmultimap1});
\draw [->] ({pic cs:dmultimap2}) [bend left] to ({pic cs:cmultimap2});
\draw [->] ({pic cs:dmultimap3}) [bend right] to ({pic cs:cmultimap3});
\end{tikzpicture}
\end{center}
where $n, m \in \mathbb{N}$, which can be read as follows:
\begin{enumerate}
\item O's question $q^{[1]}$ for an output (`What is your output?');
\item P's question $q^{[0]}$ for an input (`Wait, what is your input?');
\item O's answer, say $n^{[0]}$, to $q^{[0]}$ (`OK, here is an input $n$.');
\item P's answer, say $m^{[1]}$, to $q^{[1]}$ (`Alright, the output is then $m$.').
\end{enumerate}
This play corresponds to any linear function that maps $n \mapsto m$.
The strategy $\mathit{succ}$ (resp. $\mathit{double}$) on $N \multimap N$ for the successor (resp. doubling) function is represented by the map $q^{[1]} \mapsto q^{[0]}, q^{[1]}q^{[0]}n^{[0]} \mapsto n+1^{[1]}$ (resp. $q^{[1]} \mapsto q^{[0]}, q^{[1]}q^{[0]}n^{[0]} \mapsto 2 \cdot n^{[1]}$). 

Let us remark here that the following play, which corresponds to a \emph{constant} linear function that maps $x \mapsto m$ for all $x \in \mathbb{N}$, is also possible: $\boldsymbol{\epsilon}, q^{[1]}, q^{[1]}m^{[1]}$.
Thus, strictly speaking, $A \multimap B$ is the game of \emph{affine functions} from $A$ to $B$, but we follow the standard convention to call it a linear implication. 

\if0
\begin{center}
\begin{tabular}{ccc}
$N^{[0]}$ & $\multimap$ & $N^{[1]}$ \\ \hline 
&&$q^{[1]} $\tikzmark{cmultimap4} \\
&&$m^{[1]} $\tikzmark{dmultimap4}
\end{tabular}
\begin{tikzpicture}[overlay, remember picture, yshift=.25\baselineskip]
\draw [->] ({pic cs:dmultimap4}) [bend right] to ({pic cs:cmultimap4});
\end{tikzpicture}
\end{center}
\fi

Another construction $\&$ on games, called \emph{product}, is similar to yet simpler than tensor: A position $\boldsymbol{s}$ of the product $A \& B$ of $A$ and $B$ is either a position $\boldsymbol{t}^{[0]}$ of $A^{[0]}$ or a position $\boldsymbol{u}^{[1]}$ of $B^{[1]}$.
It is the product in the category $\mathcal{G}$ of games and strategies, e.g., there is the \emph{pairing} $\langle \sigma, \tau \rangle : C \multimap A \& B$ of given strategies $\sigma : C \multimap A$ and $\tau : C \multimap B$ that plays as $\sigma$ (resp. $\tau$) if O initiates a play by a move of $A$ (resp. $B$).
Clearly, we may generalize product and pairing to $n$-ary operations for any $n \in \mathbb{N}$. 

\if0
The \emph{terminal game} $T$ only has the \emph{empty position}, i.e., the empty sequence $\boldsymbol{\epsilon}$. 
Clearly, $T$ is a terminal object in $\mathcal{G}$.
Note that we may regard any strategy $\sigma : G$ trivially as a strategy on the game $T \multimap G$ since $G \cong T \multimap G$; thus, we define the composition $\phi \circ \alpha : B$ of strategies $\alpha : A$ and $\phi : A \multimap B$ in the obvious manner.
\fi

%As the reader who is familiar with \emph{linear logic} \cite{girard1987linear} may have already noticed, 
These four constructions $\otimes$, $!$, $\multimap$ and $\&$ come from the corresponding ones in \emph{linear logic} \cite{abramsky1994games}. 
Thus, in particular, the usual \emph{implication} (or the \emph{function space}) $\Rightarrow$ is recovered by \emph{Girard translation} \cite{girard1987linear}: $A \Rightarrow B \stackrel{\mathrm{df. }}{=} \! \ !A \multimap B$.

Girard translation makes explicit the point that some functions need to refer to an input \emph{more than once} to produce an output, i.e., there are non-linear functions.
For instance, consider the game $(N \Rightarrow N) \Rightarrow N$ of higher-order functions, in which the following position is possible: 
\begin{center}
\begin{tabular}{ccccc}
$!(!N$ & $\multimap$ & $N)$ & $\multimap$ & $N$ \\ \hline 
&&&& \tikzmark{chigher1} $q$ \tikzmark{chigher9} \\
&&\tikzmark{chigher2} $(q, j)$ \tikzmark{dhigher1} && \\
\tikzmark{chigher3} $((q, i), j)$ \tikzmark{dhigher2} && && \\
\tikzmark{dhigher3} $((n, i), j)$&& && \\
&& \tikzmark{dhigher4} $(m, j)$ && \\
&&\tikzmark{chigher8} $(q, j')$ \tikzmark{dhigher5} && \\
\tikzmark{chigher7} $((q, i'), j')$ \tikzmark{dhigher6} && && \\
\tikzmark{dhigher7} $((n', i'), j')$&& && \\
&& \tikzmark{dhigher8} $(m', j')$ && \\
&&&& $l$ \tikzmark{dhigher9}
\end{tabular}
\begin{tikzpicture}[overlay, remember picture, yshift=.25\baselineskip]
\draw [->] ({pic cs:dhigher1}) to ({pic cs:chigher1});
\draw [->] ({pic cs:dhigher2}) to ({pic cs:chigher2});
\draw [->] ({pic cs:dhigher3}) [bend left] to ({pic cs:chigher3});
\draw [->] ({pic cs:dhigher4}) [bend left] to ({pic cs:chigher2});
\draw [->] ({pic cs:dhigher5}) to ({pic cs:chigher1});
\draw [->] ({pic cs:dhigher6}) to ({pic cs:chigher8});
\draw [->] ({pic cs:dhigher7}) [bend left] to ({pic cs:chigher7});
\draw [->] ({pic cs:dhigher8}) [bend left] to ({pic cs:chigher8});
\draw [->] ({pic cs:dhigher9}) [bend right] to ({pic cs:chigher9});
\end{tikzpicture}
\end{center}
where $n, n', m, m', l, i, i', j, j' \in \mathbb{N}$, $i \neq i'$ and $j \neq j'$, which can be read as follows:
\begin{enumerate}
\item O's question $q$ for an output (`What is your output?');
\item P's question $(q, j)$ for an input function (`Wait, your first output please!');
\item O's question $((q, i), j)$ for an input (`What is your first input then?');
\item P's answer, say $((n, i), j)$, to $((q, i), j)$ (`Here is my first input $n$.');
\item O's answer, say $(m, j)$, to $(q, j)$ (`OK, here is my first output $m$.');
\item P's question $(q, j')$ for an input function (`Your second output please!');
\item O's question $((q, i'), j')$ for an input (`What is your second input then?');
\item P's answer, say $((n', i'), j')$, to $((q, i'), j')$ (`Here is my second input $n'$.');
\item O's answer, say $(m', j')$, to $(q, j')$ (`OK, here is my second output $m'$.');
\item P's answer, say $l$, to $q$ (`Alright, my output is then $l$.').
\end{enumerate}
In this play, P asks O \emph{twice} about an input strategy $N \Rightarrow N$.
Clearly, such a play is not possible on the linear implication $(N \multimap N) \multimap N$ or $(N \Rightarrow N) \multimap N$.
The strategy $\mathit{PlusAppToZeroAndOne} : (N \Rightarrow N) \Rightarrow N$ that computes the sum $f(0)+f(1)$ for a given function $f : \mathbb{N} \Rightarrow \mathbb{N}$, for instance, plays as follows: 
\begin{center}
\begin{tabular}{ccccc}
$!(!N$ & $\multimap$ & $N)$ & $\multimap$ & $N$ \\ \hline 
&&&& \tikzmark{chigher11} $q$ \tikzmark{chigher19} \\
&&\tikzmark{chigher12} $(q, 0)$ \tikzmark{dhigher11} && \\
\tikzmark{chigher13} $((q, i), 0)$ \tikzmark{dhigher12} && && \\
\tikzmark{dhigher13} $((0, i), 0)$ && && \\
&& \tikzmark{dhigher14} $(m, 0)$&& \\
&&\tikzmark{chigher18} $(q, 1)$ \tikzmark{dhigher15} && \\
\tikzmark{chigher17} $((q, i'), 1)$ \tikzmark{dhigher16} && && \\
\tikzmark{dhigher17} $((1, i'), 1)$&& && \\
&& \tikzmark{dhigher18} $(m', 1)$ && \\
&&&& $m+m'$ \tikzmark{dhigher19}
\end{tabular}
\begin{tikzpicture}[overlay, remember picture, yshift=.25\baselineskip]
\draw [->] ({pic cs:dhigher11}) to ({pic cs:chigher11});
\draw [->] ({pic cs:dhigher12}) to ({pic cs:chigher12});
\draw [->] ({pic cs:dhigher13}) [bend left] to ({pic cs:chigher13});
\draw [->] ({pic cs:dhigher14}) [bend left] to ({pic cs:chigher12});
\draw [->] ({pic cs:dhigher15}) to ({pic cs:chigher11});
\draw [->] ({pic cs:dhigher16}) to ({pic cs:chigher18});
\draw [->] ({pic cs:dhigher17}) [bend left] to ({pic cs:chigher17});
\draw [->] ({pic cs:dhigher18}) [bend left] to ({pic cs:chigher18});
\draw [->] ({pic cs:dhigher19}) [bend right] to ({pic cs:chigher19});
\end{tikzpicture}
\end{center}
where the `tags' $j = 0$ and $j' = 1$ are arbitrarily chosen, i.e., any $j, j' \in \mathbb{N}$ work.

Finally, let us point out that any strategy $\phi : \ !A \multimap B$ induces its \emph{promotion} $\phi^\dagger : \ !A \multimap \ !B$ such that if $\phi$ plays, for instance, as 
\begin{center}
\begin{tabular}{ccc}
$!A$ & $\multimap$ & $B$ \\ \hline 
&&\tikzmark{cpromotion1} $b_1$ \tikzmark{cpromotion3} \\
\tikzmark{cpromotion2} $(a_1, i)$ \tikzmark{dpromotion1}&& \\
\tikzmark{dpromotion2} $(a_2, i)$ && \\
&&$b_2$ \tikzmark{dpromotion3}
\end{tabular}
\begin{tikzpicture}[overlay, remember picture, yshift=.25\baselineskip]
\draw [->] ({pic cs:dpromotion1}) to ({pic cs:cpromotion1});
\draw [->] ({pic cs:dpromotion2}) [bend left] to ({pic cs:cpromotion2});
\draw [->] ({pic cs:dpromotion3}) [bend right] to ({pic cs:cpromotion3});
\end{tikzpicture}
\end{center}
then $\phi^\dagger$ plays as
\begin{center}
\begin{tabular}{ccc}
$!A$ & $\multimap$ & $!B$ \\ \hline 
&&\tikzmark{cpromotion11} $(b_1, j)$ \tikzmark{cpromotion13} \\
\tikzmark{cpromotion12} $(a_1, \langle i, j \rangle)$ \tikzmark{dpromotion11} && \\
\tikzmark{dpromotion12} $(a_2, \langle i, j \rangle)$ && \\
&& $(b_2, j)$ \tikzmark{dpromotion13} \\
&& \tikzmark{cpromotion14} $(b_1, j')$ \tikzmark{cpromotion16} \\
\tikzmark{cpromotion15} $(a_1, \langle i, j' \rangle)$ \tikzmark{dpromotion14} && \\
\tikzmark{dpromotion15} $(a_2, \langle i, j' \rangle)$ && \\
&& $(b_2, j')$ \tikzmark{dpromotion16}
\end{tabular}
\begin{tikzpicture}[overlay, remember picture, yshift=.25\baselineskip]
\draw [->] ({pic cs:dpromotion11}) to ({pic cs:cpromotion11});
\draw [->] ({pic cs:dpromotion12}) [bend left] to ({pic cs:cpromotion12});
\draw [->] ({pic cs:dpromotion13}) [bend right] to ({pic cs:cpromotion13});
\draw [->] ({pic cs:dpromotion14}) to ({pic cs:cpromotion14});
\draw [->] ({pic cs:dpromotion15}) [bend left] to ({pic cs:cpromotion15});
\draw [->] ({pic cs:dpromotion16}) [bend right] to ({pic cs:cpromotion16});
\end{tikzpicture}
\end{center}
where $\langle \_, \_ \rangle : \mathbb{N} \times \mathbb{N} \stackrel{\sim}{\to} \mathbb{N}$ is an arbitrarily fixed bijection, i.e., $\phi^\dagger$ plays as $\phi$ for each \emph{thread} in a position of $!A \multimap \ !B$ that corresponds to a position of $!A \multimap B$.

\subsection{Towards a game-semantic model of computation}
As seen in the above examples, games and strategies capture higher-order (and sequential) computation in an abstract and conceptually natural fashion, where O plays the role of an oracle as a part of the formalization.
Note also that P computes on `external behavior' of O, and thus O's computation does not have to be recursive. 
Thus, one may expect that games and strategies would be appropriate as mathematics of `high-level' computational processes to solve our research problem defined in Section~\ref{OurResearchProblem}.
However, conventional games and strategies have never been formulated as a mathematical model of computation (in the sense of TMs); rather, a primary focus of the field has been \emph{full abstraction} \cite{curien2007definability,amadio1998domains}, i.e., to characterize \emph{observational equivalences} in syntax.
In other words, game semantics has not been concerned that much with step-by-step processes in computation or their `effective computability', and it has been identifying programs with the same \emph{value} \cite{winskel1993formal,gunter1992semantics}.
%Technically, it is achieved by \emph{hiding intermediate steps} in strategies so that strategies are always in \emph{normal form} \cite{abramsky1997semantics}.

For instance, strategies on the game $N \Rightarrow N$ typically play by $q . (q, i) . (n, i) . m$, where $n, m, i \in \mathbb{N}$, as described above, and so they are essentially functions that map $n \mapsto m$; in particular, it is not formulated at all how they calculate the fourth element $m$ from the third one $(n, i)$.\footnote{Nevertheless, they exhibit some simple communication between the participants as seen in the above examples; in other words, game semantics is intensional to some degree but not completely. This is roughly why game semantics has been very successful in the field of denotational semantics.}
Consequently, `effective computability' in game semantics has been \emph{extrinsic}: A strategy has been defined to be `effective' or \emph{recursive} if it is representable by a partial recursive function \cite{abramsky2000full,hyland2000full,feree2017game}.
%Let us emphasize that in this extrinsic approach step-by-step processes in computation or \emph{algorithms} are completely invisible.

This situation is in a sense frustrating since games and strategies seem to have a good potential to give a \emph{semantic}, \emph{non-axiomatic}, \emph{non-inductive} and \emph{intrinsic} (i.e., without recourse to an established model of computation) formulation of higher-order computation, but they have not taken advantage of this potential.

%Thus, to achieve what should be called a \emph{game-semantic model of computation}, we first need to refine the category of existing games and strategies so that it may accommodate step-by-step processes in computation, and then define their `effective computability' in terms of `atomic steps' in the processes.

For the potential of game semantics, we have decided to employ games and strategies as our basic mathematical framework and extend them to give mathematics of computational processes in the sense described in Section~\ref{OurResearchProblem}.
For this aim, we shall first refine the category $\mathcal{G}$ of games and strategies in such a way that accommodates step-by-step processes in computation, and then define their `effectivity' in terms of their `atomic steps'.
Fortunately, there is already the bicategory $\mathcal{DG}$ of \emph{dynamic games and strategies} \cite{yamada2016dynamic}, which addresses the first point.

\subsection{Dynamic games and strategies}
\label{DynamicGamesAndStrategies}
In the literature, there are several game models \cite{greenland2005game,blum2008concrete,ong2006model,dimovski2005data} that exhibit step-by-step processes in computation to serve as a tool for program verification and analysis (the work \cite{ghica2005slot,dal2008quantitative} may be called `intensional game semantics', but they rather keep track of `costs' in computation, not computational steps themselves). 
However, these variants of games and strategies are just conventional ones, and consequently such step-by-step processes have no official status in their categories.

The problem lies in the point that in conventional game semantics composition of strategies is executed as \emph{parallel composition plus hiding} \cite{abramsky1997semantics}, where \emph{hiding} excludes intermediate steps.
Let us illustrate this point by a simple and informal example as follows.
Consider again strategies $\mathit{succ}$ and $\mathit{double}$, but this time they are adjusted to the game $N \Rightarrow N$. 
Their computations can be described by the following diagrams:
\begin{center}
\begin{tabular}{ccccccccc}
$!N^{[0]}$ & $\stackrel{\mathit{succ}}{\multimap}$ & $N^{[1]}$ &&&& $!N^{[2]}$ & $\stackrel{\mathit{double}}{\multimap}$ & $N^{[3]}$ \\ \cline{1-3} \cline{7-9}
&&\tikzmark{csucc31} $q^{[1]}$ \tikzmark{csucc33}&&&&&&\tikzmark{cdouble31} $q^{[3]}$ \tikzmark{cdouble33} \\
\tikzmark{csucc32} $(q, 0)^{[0]}$ \tikzmark{dsucc31}&&&&&&\tikzmark{cdouble32} $(q, 0)^{[2]}$ \tikzmark{ddouble31}&& \\
\tikzmark{dsucc32} $(m, 0)^{[0]}$ &&&&&&\tikzmark{ddouble32} $(n, 0)^{[2]}$&& \\
&&$m+1^{[1]}$ \tikzmark{dsucc33} &&&&&&$2 \cdot n^{[3]}$ \tikzmark{ddouble33} 
\end{tabular}
\begin{tikzpicture}[overlay, remember picture, yshift=.25\baselineskip]
\draw [->] ({pic cs:dsucc31}) to ({pic cs:csucc31});
\draw [->] ({pic cs:dsucc32}) [bend left] to ({pic cs:csucc32});
\draw [->] ({pic cs:dsucc33}) [bend right] to ({pic cs:csucc33});
\draw [->] ({pic cs:ddouble31}) to ({pic cs:cdouble31});
\draw [->] ({pic cs:ddouble32}) [bend left] to ({pic cs:cdouble32});
\draw [->] ({pic cs:ddouble33}) [bend right] to ({pic cs:cdouble33});
\end{tikzpicture}
\end{center}
where the `tag' $(\_, 0)$ on moves of the domain $!N$ has been arbitrarily chosen (i.e., any natural number $i \in \mathbb{N}$ instead of $0$ works).
The composition $\mathit{double} \bullet \mathit{succ} \stackrel{\mathrm{df. }}{=} \mathit{double} \circ \mathit{succ}^\dagger = \mathit{succ}^\dagger ; \mathit{double} : N \Rightarrow N$ is calculated as follows. 
First, by \emph{internal communication}, we mean that $\mathit{succ}^\dagger$ and $\mathit{double}$ `communicate' to each other by `synchronizing' the codomain $!N^{[1]}$ of $\mathit{succ}^\dagger$ and the domain $!N^{[2]}$ of $\mathit{double}$, for which P plays the role of O in these intermediate games $!N^{[1]}$ and $!N^{[2]}$ by `copying' her last moves\footnote{More precisely, they are the \emph{last occurrences of P-moves}; however, for convenience we shall keep using this kind of abuse of terminology in the rest of the introduction.} of $!N^{[2]}$ and $!N^{[1]}$, respectively, resulting in the following play:
\begin{center}
\begin{tabular}{ccccccc}
$!N^{[0]}$ & $\stackrel{\mathit{succ}^\dagger}{\multimap}$ & $!N^{[1]}$ && $!N^{[2]}$ & $\stackrel{\mathit{double}}{\multimap}$ & $N^{[3]}$ \\ \hline
&&&&&& \tikzmark{cPC2} $q^{[3]}$ \tikzmark{cPC1} \\
&&&&\tikzmark{cPC3} \fbox{$(q, 0)^{[2]}$} \tikzmark{dPC2} && \\
&& \tikzmark{cPC4} \fbox{$(q, 0)^{[1]}$} \tikzmark{dPC3} &&&& \\
\tikzmark{cPC5} $(q, \langle 0, 0 \rangle)^{[0]}$ \tikzmark{dPC4} &&&&&& \\
\tikzmark{dPC5} $(n, \langle 0, 0 \rangle)^{[0]}$ &&&&&& \\
&&\tikzmark{dPC6} \fbox{$(n+1, 0)^{[1]}$} &&&& \\
&&&&\tikzmark{dPC7} \fbox{$(n+1, 0)^{[2]}$} && \\
&&&&&&$2 \cdot (n+1)^{[3]}$ \tikzmark{dPC1}
\end{tabular}
\begin{tikzpicture}[overlay, remember picture, yshift=.25\baselineskip]
\draw [->] ({pic cs:dPC1}) [bend right] to ({pic cs:cPC1});
\draw [->] ({pic cs:dPC2}) to ({pic cs:cPC2});
\draw [->] ({pic cs:dPC3}) to ({pic cs:cPC3});
\draw [->] ({pic cs:dPC4}) to ({pic cs:cPC4});
\draw [->] ({pic cs:dPC5}) [bend left] to ({pic cs:cPC5});
\draw [->] ({pic cs:dPC6}) [bend left] to ({pic cs:cPC4});
\draw [->] ({pic cs:dPC7}) [bend left] to ({pic cs:cPC3});
\end{tikzpicture}
\end{center}
where moves for internal communication are marked by square boxes just for clarity, and a pointer from $(q, 0)^{[1]}$ to $(q, 0)^{[2]}$ is added because the move $(q, 0)^{[1]}$ is no longer initial. 
Importantly, it is assumed that O plays on the `external game' $!N^{[0]} \multimap N^{[3]}$, `seeing' only moves of $!N^{[0]}$ or $N^{[3]}$.
The resulting play is to be read as follows:
\begin{enumerate}

\item O's question $q^{[3]}$ for an output in $!N^{[0]} \multimap N^{[3]}$ (`What is your output?');

\item P's question \fbox{$(q, 0)^{[2]}$} by $\mathit{double}$ for an input in $!N^{[2]} \multimap N^{[3]}$ (`Wait, what is your input?');

\item \fbox{$(q, 0)^{[2]}$} in turn triggers the question \fbox{$(q, 0)^{[1]}$} for an output in $!N^{[0]} \multimap \ !N^{[1]}$ (`What is your output?');

\item P's question $(q, \langle 0, 0 \rangle)^{[0]}$ by $\mathit{succ}^\dagger$ for an input in $!N^{[0]} \multimap \ !N^{[1]}$ (`Wait, what is your input?');

\item O's answer, say $(n, \langle 0, 0 \rangle)^{[1]}$, to the question $(q, \langle 0, 0 \rangle)^{[0]}$ in $!N^{[0]} \multimap \  !N^{[3]}$ (`Here is an input $n$.');

\item P's answer \fbox{$(n+1, 0)^{[1]}$} to the question \fbox{$(q, 0)^{[1]}$} by $\mathit{succ}^\dagger$ in $!N^{[0]} \multimap \ !N^{[1]}$ (`The output is then $n+1$.');

\item \fbox{$(n+1, 0)^{[1]}$} in turn triggers the answer \fbox{$(n+1, 0)^{[2]}$} to the question \fbox{$(q, 0)^{[2]}$} in $!N^{[2]} \multimap N^{[3]}$ (`Here is the input $n+1$.');

\item P's answer $2 \cdot (n+1)^{[3]}$ to the initial question $q^{[3]}$ by $\mathit{double}$ in $!N^{[0]} \multimap N^{[3]}$ (`The output is then $2 \cdot (n+1)$!').

\end{enumerate}

Next, \emph{hiding} means to hide or delete all moves with square boxes from the play, resulting in the strategy for the function $n \mapsto 2 \cdot (n + 1)$ as expected:
\begin{center}
\begin{tabular}{ccc}
$!N^{[0]}$ & $\stackrel{\mathit{succ}^\dagger ; \mathit{double}}{\multimap}$ & $N^{[3]}$ \\ \hline
&&\tikzmark{cPCH1} $q^{[3]}$ \tikzmark{cPCH3} \\
\tikzmark{cPCH2} $(q, \langle 0, 0 \rangle)^{[0]}$ \tikzmark{dPCH1}&& \\
\tikzmark{dPCH2} $(n, \langle 0, 0 \rangle)^{[0]}$&& \\
&&$2 \cdot (n + 1)^{[3]}$ \tikzmark{dPCH3} 
\end{tabular}
\begin{tikzpicture}[overlay, remember picture, yshift=.25\baselineskip]
\draw [->] ({pic cs:dPCH1}) to ({pic cs:cPCH1});
\draw [->] ({pic cs:dPCH2}) [bend left] to ({pic cs:cPCH2});
\draw [->] ({pic cs:dPCH3}) [bend right] to ({pic cs:cPCH3});
\end{tikzpicture}
\end{center}
By the hiding operation, the resulting play is a legal one on the game $N \Rightarrow N$, but let us point out that the intermediate occurrence of moves, \emph{which represent step-by-step processes} in computation, have been excluded by this operation.  

Nevertheless, the present author and Samson Abramsky have introduced a novel, \emph{dynamic} variant of games and strategies that systematically model `dynamics' and `intensionality' of computation, and also studied their algebraic structures \cite{yamada2016dynamic}.
In contrast to the previous work mentioned above, dynamic strategies \emph{themselves} embody step-by-step processes in computation by retaining intermediate occurrences of moves, and composition of them is parallel composition \emph{without} hiding.
In addition, the categorical structure of game semantics is not lost but rather refined by the \emph{cartesian closed bicategory} \cite{ouaknine1997two} $\mathcal{DG}$ of dynamic games and strategies, forming a categorical `universe' of `high-level' computational processes.
%For this reason, we adopt this dynamic variant of games and strategies; from now on, the words \emph{\bfseries games} and \emph{\bfseries strategies} refer to \emph{dynamic} ones by default.

\subsection{Viable strategies}
Now, the remaining problem is to define `effective' dynamic strategies in an \emph{intrinsic} (i.e., solely in terms of games and strategies), \emph{non-inductive} and \emph{non-axiomatic} manner.
Of course, we need to provide a convincing argument that justifies their `effectivity' (though such an argument can never be mathematically precise) as in the case of TMs.
Moreover, to obtain a powerful model of computation, they should be at least \emph{Turing complete}, i.e., they ought to subsume all classically computable partial functions.
This sets up, in addition to the conceptual quest so far, an intriguing mathematical question in its own right: 
\begin{quote}
Is there any \emph{intrinsic}, \emph{non-inductive} and \emph{non-axiomatic} notion of `effective computability' of dynamic strategies that is \emph{Turing complete}?
\end{quote}
Not surprisingly, perhaps, this problem has turned out to be challenging, and the main technical achievement of the present paper is to give a positive answer to it.

As already mentioned, our solution is to give `low-level' computational processes (which are clearly `executable') in order to define `effectivity' of dynamic strategies (or `high-level' computational processes).
This is achieved roughly as follows.
\begin{remark}
The concepts introduced below make sense for conventional (i.e., non-dynamic) games and strategies too, but they do not give rise to a Turing complete model of computation for composition of conventional strategies does not preserve our notion of `effectivity' or \emph{viability} as we shall see.
\end{remark}

First, we give, by an alphabet, a concrete formalization of `tags' for disjoint union of sets of moves for constructions on games in order to rigorously formulate `effectivity' of strategies.
As we have seen in Section~\ref{GameSemantics}, a finite number of `tags' suffice for most constructions on games, but it is not the case for exponential $!$. 
Then, we formalize `tags' for exponential by an unary representation $\underbrace{\ell \ell \dots \ell}_i$ of natural numbers $i \in \mathbb{N}$ (extended by a symbolic implementation of a recursive bijection $\mathbb{N}^\ast \stackrel{\sim}{\rightarrow} \mathbb{N}$ \cite{cutland1980computability}, but here we omit the extension for simplicity), and employ, instead of the game $N$, the \emph{lazy} variant $\mathcal{N}$ of natural number game whose maximal positions are either of the following forms:
\begin{center}
\begin{tabular}{ccccccccc}
&$\mathcal{N}$ &&&&&$\mathcal{N}$& \\ \cline{1-3} \cline{6-8}
&\tikzmark{cLazyN10} $\hat{q}$&&&&&\tikzmark{cLazyN1} $\hat{q}$& \\
&\tikzmark{dLazyN10} $\mathit{no}$&&&&& \tikzmark{cLazyN2} $\mathit{yes}$ & \\
&&&&&& \tikzmark{dLazyN2} $q$ & \\
&&&&&& \tikzmark{cLazyN4} $\mathit{yes}$ & \\
&&&&&& \tikzmark{dLazyN4} $q$ & \\
&&&&&& \tikzmark{dLazyN5} $\mathit{yes}$& \\
&&&&&&$\vdots$& \\
&&&&&& \tikzmark{cLazyN6} $q$ & \\
&&&&&& \tikzmark{cLazyN7} $\mathit{yes}$ & \\
&&&&&& \tikzmark{dLazyN7} $q$ & \\
&&&&&& \tikzmark{dLazyN8} $\mathit{no}$ & \\
\end{tabular}
\begin{tikzpicture}[overlay, remember picture, yshift=.25\baselineskip]
\draw [->] ({pic cs:cLazyN2}) [bend left] to ({pic cs:cLazyN1});
\draw [->] ({pic cs:dLazyN2}) [bend left] to ({pic cs:cLazyN2});
\draw [->] ({pic cs:cLazyN4}) [bend left] to ({pic cs:dLazyN2});
\draw [->] ({pic cs:dLazyN4}) [bend left] to ({pic cs:cLazyN4});
\draw [->] ({pic cs:dLazyN5}) [bend left] to ({pic cs:dLazyN4});
\draw [->] ({pic cs:cLazyN7}) [bend left] to ({pic cs:cLazyN6});
\draw [->] ({pic cs:dLazyN7}) [bend left] to ({pic cs:cLazyN7});
\draw [->] ({pic cs:dLazyN8}) [bend left] to ({pic cs:dLazyN7});
\draw [->] ({pic cs:dLazyN10}) [bend left] to ({pic cs:cLazyN10});
\end{tikzpicture}
\end{center}
where the number $n$ of $\mathit{yes}$ in the position ranges over all natural numbers, which represents the number intended by P.
In this way, $\mathcal{N}$ gives an unary representation\footnote{This choice is far from canonical; it should also work to employ another representation of natural numbers, e.g., the binary one. We have chosen the unary representation for its (both conceptual and technical) simplicity.} of natural numbers.
Note that the initial question $\hat{q}$ must be distinguished from the non-initial one $q$ for a technical reason, which will be clarified in Section~\ref{Preliminaries}.
This sets up a finitary representation of game-semantic computation on natural numbers. 
Let us write $\underline{n}$ for the position $\hat{q} . \mathit{yes} . \underbrace{q . \mathit{yes} \dots q . \mathit{yes}}_{n-1} . q . \mathit{no}$ such that $n \geqslant 1$, and $\underline{0}$ for the position $\hat{q} . \mathit{no}$, where we omit pointers (we shall often employ this convention).

Next, as we shall see, dynamic strategies modeling PCF only need to refer to \emph{at most three moves} in the history of previous moves which may be `effectively' identified by pointers (specifically the last three moves in the \emph{P-view} \cite{hyland2000full,abramsky1999game}; see Definition~\ref{DefViews}).
Thus, it may seem at first glance that \emph{finitary} dynamic strategies in the following sense suffice: A strategy is \emph{finitary} if its representation by a partial function \cite{mccusker1998games,hyland2000full} that assigns the next P-move to (at most) three previous moves, called its \emph{table}, is finite. 
However, it is not the case: Finitary strategies cannot handle unboundedly many manipulations of `tags' for exponential (more precisely, manipulations such that the length of input or output `tags' is unbounded), but such manipulations seem to be necessary for Turing completeness, e.g., a  strategy that models primitive recursion or minimization has to interact with input strategies unboundedly many times, and thus it must handle unboundedly many `tags'.

Then, the main idea of our solution is to define a strategy to be \emph{viable} if its table is `describable' by a finitary strategy.
To state it more precisely, let us note that there is the \emph{terminal game} $T$ which has only the empty sequence $\boldsymbol{\epsilon}$ as a position, and each game $G$ is isomorphic (or identical up to `tags') to the implication $T \Rightarrow G$.
Hence, we may regard strategies $\sigma : G$ as a one on the implication $T \Rightarrow G$ up to `tags', and vice versa.
Also, we define for each move $m = [m']_{\boldsymbol{e}}$ of a game $G$, where $\boldsymbol{e} = e_1 . e_2 \dots e_{k}$ is a unary representation of the `tag' for exponential on $m$, the strategy $\underline{m}$ on a suitable game $\mathcal{G}(M_G)$ that plays as $\hat{q} . \mathsf{m'} . q_1 . \mathsf{e_1} . q_2 . \mathsf{e_2} \dots q_{k} . \mathsf{e_{k}} . q_{k+1} . \checkmark$ (which is a little bit simplified version of the strategy representation of moves defined in Section~\ref{ViableStrategies}), where the font difference between the moves is just for clarity. 
In this manner, the strategy $\underline{m} : \mathcal{G}(M_G)$ encodes the move $m$ of $G$.
Then, viability of strategies is defined more precisely: A strategy $\sigma : G$ is defined to be \emph{viable} if its partial function representation $(m_3, m_2, m_1) \mapsto m$, where $m_1$, $m_2$ and $m_3$ are the last, second last and third last moves of the current P-view, respectively, is `implementable' by a finitary strategy $\mathcal{A}(\sigma)^\circledS : \mathcal{G}(M_{G}) \& \mathcal{G}(M_{G}) \& \mathcal{G}(M_{G}) \Rightarrow \mathcal{G}(M_{G})$, called an \emph{instruction strategy} for $\sigma$, in the sense that the composition $\mathcal{A}(\sigma)^\circledS \circ \langle \underline{m_3}, \underline{m_2}, \underline{m_1} \rangle^\dagger : \mathcal{G}(M_G)$ coincides with $\underline{m} : \mathcal{G}(M_{G})$ for all quadruples $(m_3, m_2, m_1) \mapsto m$ in the table of $\sigma$, where $\langle \underline{m_3}, \underline{m_2}, \underline{m_1} \rangle : T \Rightarrow \mathcal{G}(M_G) \& \mathcal{G}(M_G) \& \mathcal{G}(M_G)$ is the ternary pairing of the strategies $\underline{m_i} : T \Rightarrow \mathcal{G}(M_G)$ (i = 1, 2, 3).

For instance, consider the successor and doubling strategies modified for the lazy natural number game $\mathcal{N}$, whose plays (on a non-zero input) are as in Figure~\ref{FigSuccAndDoublingStrategies}.
Roughly, $\mathit{succ}$ copies a given input on $!\mathcal{N}^{[0]}$ and repeats it as an output on $\mathcal{N}^{[1]}$, but it adds one more $\mathit{yes}$ to $\mathcal{N}^{[1]}$ before $\mathit{no}$; similarly, $\mathit{double}$ copies an input and repeats it as an output, but it doubles the number of $\mathit{yes}$'s in the output.
It is easy to see that for computing the next P-move (and its pointer) at an odd-length position $\boldsymbol{s} = m_1 m_2 \dots m_{2i+1}$ the strategies only need to refer to at most the last O-move $m_{2i+1}$, the P-move $m_{2j}$ pointed by $m_{2i+1}$ and the O-move $m_{2j-1}$ (they are the last three moves in the P-view of $\boldsymbol{s}$), e.g., $\mathit{succ} : (\square, \square, [\hat{q}^{[1]}]) \mapsto [\hat{q}^{[0]}]$, $([\hat{q}^{[1]}], [\hat{q}^{[0]}], [\mathit{yes}^{[0]}]) \mapsto [\mathit{yes}^{[1]}]$, and so on, where $\square$ denotes `no move'.
Note that these strategies do not need unboundedly many manipulations of `tags'; they are in fact finitary (it is easy to construct finite tables for them, each of which consists of a finite number of quadruples of moves of the form $(m_3, m_2, m_1) \mapsto m$). 
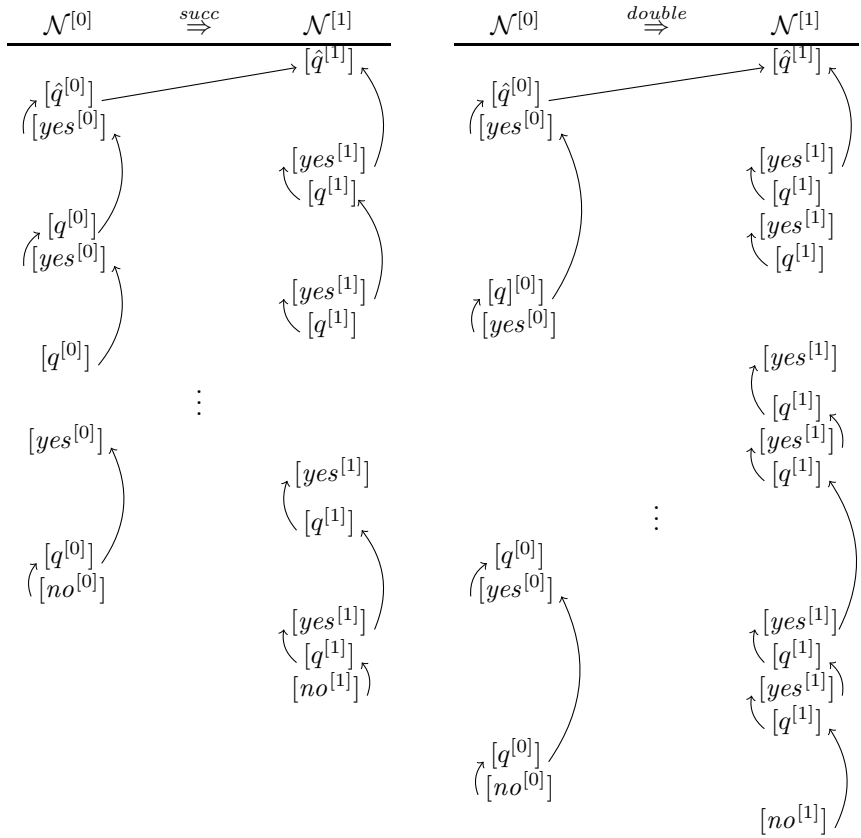
\begin{figure}
\begin{center}
\begin{tabular}{cccccccccccc}
$\mathcal{N}^{[0]}$ && $\stackrel{\mathit{succ}}{\Rightarrow}$ && $\mathcal{N}^{[1]}$ &&& $\mathcal{N}^{[0]}$ && $\stackrel{\mathit{double}}{\Rightarrow}$ && $\mathcal{N}^{[1]}$  \\ \cline{1-5} \cline{8-12}
&&&&\tikzmark{csucc1} $[\hat{q}^{[1]}]$ \tikzmark{csucc3} &&&& &&&\tikzmark{csucc21} $[\hat{q}^{[1]}]$ \tikzmark{csucc23} \\
\tikzmark{csucc2} $[\hat{q}^{[0]}]$ \tikzmark{dsucc1}&&&& &&& \tikzmark{csucc22} $[\hat{q}^{[0]}]$ \tikzmark{dsucc21}&&&& \\
\tikzmark{dsucc2} $[\mathit{yes}^{[0]}]$ \tikzmark{csucc5}&&&& &&& \tikzmark{dsucc22} $[\mathit{yes}^{[0]}]$ \tikzmark{csucc25}&&&& \\
&&&&\tikzmark{csucc4} $[\mathit{yes}^{[1]}]$ \tikzmark{dsucc3} &&&& &&&\tikzmark{csucc24} $[\mathit{yes}^{[1]}]$ \tikzmark{dsucc23} \\
&&&&\tikzmark{dsucc4} $[q^{[1]}]$\tikzmark{csucc9} &&&& &&&\tikzmark{dsucc24} $[q^{[1]}]$ \tikzmark{csucc29} \\
\tikzmark{csucc8} $[q^{[0]}] $\tikzmark{dsucc5}&&&& &&& &&&&\tikzmark{csucc27} $[\mathit{yes}^{[1]}]$ \tikzmark{dsucc29} \\
\tikzmark{dsucc8} $[\mathit{yes}^{[0]}]$ \tikzmark{csucc6}&&&& &&&& &&&\tikzmark{dsucc27} $[q^{[1]}]$ \\
&&&&\tikzmark{csucc7} $[\mathit{yes}^{[1]}]$ \tikzmark{dsucc9} &&& \tikzmark{csucc28} $[q]^{[0]}]$ \tikzmark{dsucc25} &&&& \\
&&&&\tikzmark{dsucc7} $[q^{[1]}]$ &&& \tikzmark{dsucc28} $[\mathit{yes}^{[0]}]$&&&& \\
$[q^{[0]}]$ \tikzmark{dsucc6}&&&& &&& &&&& \tikzmark{csucc211} $[\mathit{yes}^{[1]}]$ \\
&&$\vdots$&& &&&& &&& \tikzmark{dsucc211} $[q^{[1]}]$ \tikzmark{csucc213} \\
$[\mathit{yes}^{[0]}]$ \tikzmark{csucc10}&&&& &&& &&&& \tikzmark{csucc214} $[\mathit{yes}^{[1]}]$ \tikzmark{dsucc213} \\
&&&&\tikzmark{csucc11} $[\mathit{yes}^{[1]}]$ &&& &&&& \tikzmark{dsucc214} $[q^{[1]}]$ \tikzmark{csucc215} \\
&&&&\tikzmark{dsucc11} $[q^{[1]}]$ \tikzmark{csucc13} &&&&& $\vdots$ && \\
\tikzmark{csucc12} $[q^{[0]}]$ \tikzmark{dsucc10}&&&& &&& \tikzmark{csucc212} $[q^{[0]}]$ &&&& \\
\tikzmark{dsucc12} $[\mathit{no}^{[0]}]$&&&& &&& \tikzmark{dsucc212} $[\mathit{yes}^{[0]}]$ \tikzmark{cdouble1}&&&& \\
&&&&\tikzmark{csucc14} $[\mathit{yes}^{[1]}]$ \tikzmark{dsucc13} &&&& &&&\tikzmark{csucc210} $[\mathit{yes}^{[1]}]$\tikzmark{dsucc215} \\
&&&&\tikzmark{dsucc14} $[q^{[1]}]$ \tikzmark{csucc15} &&&& &&&\tikzmark{dsucc210} $[q^{[1]}]$ \tikzmark{csucc26} \\
&&&& $[\mathit{no}^{[1]}]$ \tikzmark{dsucc15} &&&& &&& \tikzmark{csucc29} $[\mathit{yes}^{[1]}]$ \tikzmark{dsucc26} \\
&&&&&&&&&&& \tikzmark{dsucc29} $[q^{[1]}]$ \tikzmark{cdouble3} \\
&&&&&&& \tikzmark{cdouble2} $[q^{[0]}]$ \tikzmark{ddouble1} &&&& \\
&&&&&&& \tikzmark{ddouble2} $[\mathit{no}^{[0]}]$ &&&& \\
&&&&&&&&&&& $[\mathit{no}^{[1]}]$ \tikzmark{ddouble3}
\end{tabular}
\begin{tikzpicture}[overlay, remember picture, yshift=.25\baselineskip]
    \draw [->] ({pic cs:dsucc1}) to ({pic cs:csucc1});
    \draw [->] ({pic cs:dsucc2}) [bend left] to ({pic cs:csucc2});
    \draw [->] ({pic cs:dsucc3}) [bend right] to ({pic cs:csucc3});
    \draw [->] ({pic cs:dsucc4}) [bend left] to ({pic cs:csucc4});
    \draw [->] ({pic cs:dsucc5}) [bend right] to ({pic cs:csucc5});
    \draw [->] ({pic cs:dsucc6}) [bend right] to ({pic cs:csucc6}); 
    \draw [->] ({pic cs:dsucc7}) [bend left] to ({pic cs:csucc7});
    \draw [->] ({pic cs:dsucc8}) [bend left] to ({pic cs:csucc8});
    \draw [->] ({pic cs:dsucc9}) [bend right] to ({pic cs:csucc9});
    \draw [->] ({pic cs:dsucc10}) [bend right] to ({pic cs:csucc10});
    \draw [->] ({pic cs:dsucc11}) [bend left] to ({pic cs:csucc11});
    \draw [->] ({pic cs:dsucc12}) [bend left] to ({pic cs:csucc12});
    \draw [->] ({pic cs:dsucc13}) [bend right] to ({pic cs:csucc13});
    \draw [->] ({pic cs:dsucc14}) [bend left] to ({pic cs:csucc14});
    \draw [->] ({pic cs:dsucc15}) [bend right] to ({pic cs:csucc15});
    \draw [->] ({pic cs:dsucc21}) to ({pic cs:csucc21});
    \draw [->] ({pic cs:dsucc22}) [bend left] to ({pic cs:csucc22});
    \draw [->] ({pic cs:dsucc23}) [bend right] to ({pic cs:csucc23});
    \draw [->] ({pic cs:dsucc24}) [bend left] to ({pic cs:csucc24});
    \draw [->] ({pic cs:dsucc25}) [bend right] to ({pic cs:csucc25});
    \draw [->] ({pic cs:dsucc26}) [bend right] to ({pic cs:csucc26}); 
    \draw [->] ({pic cs:dsucc27}) [bend left] to ({pic cs:csucc27});
    \draw [->] ({pic cs:dsucc28}) [bend left] to ({pic cs:csucc28});
    \draw [->] ({pic cs:dsucc29}) [bend left] to ({pic cs:csucc29});
    \draw [->] ({pic cs:dsucc210}) [bend left] to ({pic cs:csucc210});
    \draw [->] ({pic cs:dsucc211}) [bend left] to ({pic cs:csucc211});
    \draw [->] ({pic cs:dsucc212}) [bend left] to ({pic cs:csucc212});
    \draw [->] ({pic cs:dsucc213}) [bend right] to ({pic cs:csucc213});
    \draw [->] ({pic cs:dsucc214}) [bend left] to ({pic cs:csucc214});
    \draw [->] ({pic cs:dsucc215}) [bend right] to ({pic cs:csucc215});
    \draw [->] ({pic cs:ddouble1}) [bend right] to ({pic cs:cdouble1});
    \draw [->] ({pic cs:ddouble2}) [bend left] to ({pic cs:cdouble2}); 
    \draw [->] ({pic cs:ddouble3}) [bend right] to ({pic cs:cdouble3});
   %\draw [->] ([yshift=.75pt]{pic cs:a}) -- ({pic cs:c});
  \end{tikzpicture}
\caption{The `lazy' successor and doubling strategies $\mathit{succ}, \mathit{double} : \mathcal{N} \Rightarrow \mathcal{N}$.}
\label{FigSuccAndDoublingStrategies}
\end{center}
\end{figure}

On the other hand, consider the strategy $\mathit{CountNonZeros} : \mathcal{N} \Rightarrow \mathcal{N}$ that counts the number of non-zero inputs which a given dynamic strategy on $!\mathcal{N}$ provides before it does $\underline{0}$.
As described in Figure~\ref{FigMin}, it simply investigates if an input is $\underline{0}$ just by checking the first digit ($\mathit{yes}^{[0]}$ or $\mathit{no}^{[0]}$) and adds $\mathit{yes}^{[1]}$ to the output if the input is not $\underline{0}$ (i.e., if the first digit is $\mathit{yes}^{[0]}$), where each `round' or thread of $!\mathcal{N}^{[0]}$ corresponding to a position of $\mathcal{N}^{[0]}$ is distinguished by `tags' $[\_]_{\ell^n}$ for exponential.
Note that $\mathit{CountNonZeros}$ only needs to refer to at most the last three moves in each odd-length position (n.b., in this case they are the last three moves in the P-view) since the number $n$ of previous `rounds' is recorded by the `tag' $[\_]_{\ell^n}$.
The point here is that the number of `tags' for exponential that $\mathit{CountNonZeros}$ has to manipulate is \emph{unbounded} (though the manipulation is very simple), and therefore the strategy is not finitary; however, it is easy to show that the strategy is viable as follows.
First, its partial function representation can be given by the following \emph{infinitary} table (n.b., $n$ for $[\_]_{\ell^n}$ given below ranges over \emph{all} natural numbers):
\begin{align*}
&(\square, \square, [\hat{q}^{[1]}]) \mapsto [\hat{q}^{[0]}] \mid ([\hat{q}^{[1]}], [\hat{q}^{[0]}], [\mathit{yes}^{[0]}]) \mapsto [\mathit{yes}^{[1]}] \mid \\ &([q^{[1]}], [q^{[0]}]_{\ell^n}, [\mathit{yes}^{[0]}]_{\ell^n}) \mapsto [\mathit{yes}^{[1]}] \mid ([\mathit{yes}^{[0]}]_{\ell^n}, [\mathit{yes}^{[1]}], [q^{[1]}]) \mapsto [q^{[0]}]_{\ell^{n+1}} \mid \\ &([q^{[1]}], [q^{[0]}]_{\ell^{n+1}}, [\mathit{no}^{[0]}]_{\ell^{n+1}}) \mapsto [\mathit{no}^{[1]}]
\end{align*}
This `high-level' computational process is `implementable' by an instruction strategy $\mathcal{A}(\mathit{CNZ})^\circledS : \mathcal{G}(M_{\mathcal{N} \Rightarrow \mathcal{N}}) \& \mathcal{G}(M_{\mathcal{N} \Rightarrow \mathcal{N}}) \& \mathcal{G}(M_{\mathcal{N} \Rightarrow \mathcal{N}}) \Rightarrow \mathcal{G}(M_{\mathcal{N} \Rightarrow \mathcal{N}})$ which computes as in Figure~\ref{FigAlgMin}, where `tags' for $\mathcal{G}(M_{\mathcal{N} \Rightarrow \mathcal{N}}) \& \mathcal{G}(M_{\mathcal{N} \Rightarrow \mathcal{N}}) \& \mathcal{G}(M_{\mathcal{N} \Rightarrow \mathcal{N}}) \Rightarrow \mathcal{G}(M_{\mathcal{N} \Rightarrow \mathcal{N}})$ are omitted (that is, we do not write $\mathcal{G}(M_{\mathcal{N} \Rightarrow \mathcal{N}})^{[0]} \& \mathcal{G}(M_{\mathcal{N} \Rightarrow \mathcal{N}})^{[1]} \& \mathcal{G}(M_{\mathcal{N} \Rightarrow \mathcal{N}})^{[2]} \Rightarrow \mathcal{G}(M_{\mathcal{N} \Rightarrow \mathcal{N}})^{[3]}$) for brevity.
Note also the font difference between moves in the figure, which is again just for clarity.
Then clearly, there is a \emph{finite} table for $\mathcal{A}(\mathit{CNZ})^\circledS$ that maps the last $k$-moves in the P-view of each odd-length position to the next P-move for some $k \in \mathbb{N}$ (though it is too tedious to write down the table here), proving the viability of $\mathit{CountNonZeros}$.
Observe in particular how the infinitary manipulations of `tags' by $\mathit{CountNonZeros}$ is reduced to finitary ones by $\mathcal{A}(\mathit{CNZ})^\circledS$. 
\begin{figure}
\begin{center}
\begin{tabular}{ccccc}
$\mathcal{N}^{[0]}$ && $\Rightarrow$ && $\mathcal{N}^{[1]}$ \\ \hline
&&&&\tikzmark{cCNZ1} $[\hat{q}^{[1]}]$ \tikzmark{cCNZ3} \\
\tikzmark{cCNZ2} $[\hat{q}^{[0]}]$ \tikzmark{dCNZ1}&&&& \\
\tikzmark{dCNZ2} $[\mathit{yes}^{[0]}]$ \tikzmark{cCNZ5}&&&&  \\
&&&&\tikzmark{cCNZ4} $[\mathit{yes}^{[1]}]$ \tikzmark{dCNZ3} \\
&&&&\tikzmark{dCNZ4} $[q^{[1]}]$ \tikzmark{cCNZ14} \\
\tikzmark{cCNZ6} $[q^{[0]}]_\ell$ \tikzmark{dCNZ5}&&&& \\
\tikzmark{dCNZ6} $[\mathit{yes}^{[0]}]_\ell$ \tikzmark{cCNZ12} &&&& \\
&&&& \tikzmark{cCNZ10} $[\mathit{yes}^{[1]}]$ \tikzmark{dCNZ14} \\
&&&&\tikzmark{dCNZ10} $[q^{[1]}]$ \tikzmark{cCNZ15} \\
\tikzmark{cCNZ7} $[q^{[0]}]_{\ell \ell}$ \tikzmark{dCNZ12}&&&& \\
\tikzmark{dCNZ7} $[\mathit{yes}^{[0]}]_{\ell \ell}$ &&&& \\
&&&& $[\mathit{yes}^{[1]}]$ \tikzmark{dCNZ15} \\
&&$\vdots$&& \\
&&&&$[q^{[1]}]$ \tikzmark{cCNZ16} \\
\tikzmark{cCNZ8} $[q^{[0]}]_{\ell^n}$&&&& \\
\tikzmark{dCNZ8} $[\mathit{yes}^{[0]}]_{\ell^n}$ \tikzmark{cCNZ13} &&&& \\
&&&& \tikzmark{cCNZ11} $[\mathit{yes}^{[1]}]$ \tikzmark{dCNZ16} \\
&&&&\tikzmark{dCNZ11} $[q^{[1]}]$ \tikzmark{cCNZ17} \\
\tikzmark{cCNZ9} $[q^{[0]}]_{\ell^{n+1}}$ \tikzmark{dCNZ13}&&&& \\
\tikzmark{dCNZ9} $[\mathit{no}^{[0]}]_{\ell^{n+1}}$ &&&& \\
&&&& $[\mathit{no}^{[1]}]$ \tikzmark{dCNZ17}
\end{tabular}
\begin{tikzpicture}[overlay, remember picture, yshift=.25\baselineskip]
 \draw [->] ({pic cs:dCNZ1}) to ({pic cs:cCNZ1});
 \draw [->] ({pic cs:dCNZ2}) [bend left] to ({pic cs:cCNZ2});
 \draw [->] ({pic cs:dCNZ3}) [bend right] to ({pic cs:cCNZ3});
 \draw [->] ({pic cs:dCNZ4}) [bend left] to ({pic cs:cCNZ4});
 \draw [->] ({pic cs:dCNZ5}) [bend right] to ({pic cs:cCNZ5});
 \draw [->] ({pic cs:dCNZ6}) [bend left] to ({pic cs:cCNZ6});
 \draw [->] ({pic cs:dCNZ7}) [bend left] to ({pic cs:cCNZ7});
 \draw [->] ({pic cs:dCNZ8}) [bend left] to ({pic cs:cCNZ8});
 \draw [->] ({pic cs:dCNZ9}) [bend left] to ({pic cs:cCNZ9});
 \draw [->] ({pic cs:dCNZ10}) [bend left] to ({pic cs:cCNZ10});
 \draw [->] ({pic cs:dCNZ11}) [bend left] to ({pic cs:cCNZ11});
 \draw [->] ({pic cs:dCNZ12}) [bend right] to ({pic cs:cCNZ12});
 \draw [->] ({pic cs:dCNZ13}) [bend right] to ({pic cs:cCNZ13});
 \draw [->] ({pic cs:dCNZ14}) [bend right] to ({pic cs:cCNZ14});
 \draw [->] ({pic cs:dCNZ15}) [bend right] to ({pic cs:cCNZ15});
 \draw [->] ({pic cs:dCNZ16}) [bend right] to ({pic cs:cCNZ16});
 \draw [->] ({pic cs:dCNZ17}) [bend right] to ({pic cs:cCNZ17});
\end{tikzpicture}
\caption{The non-zero-counting strategy $\mathit{CountNonZeros} : \mathcal{N} \Rightarrow \mathcal{N}$.}
\label{FigMin}
\end{center}
\end{figure}
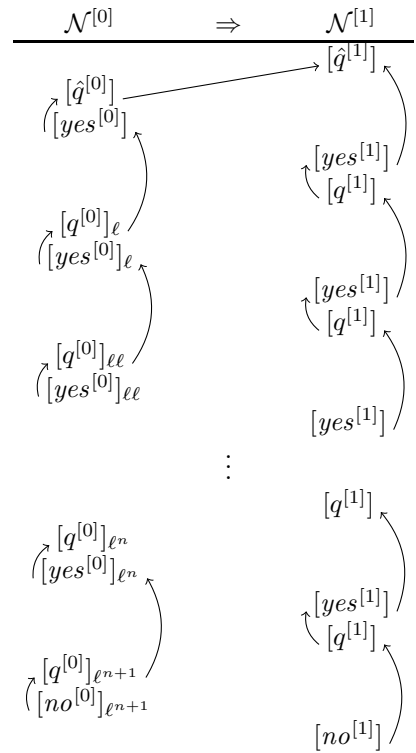

\begin{figure}[p]
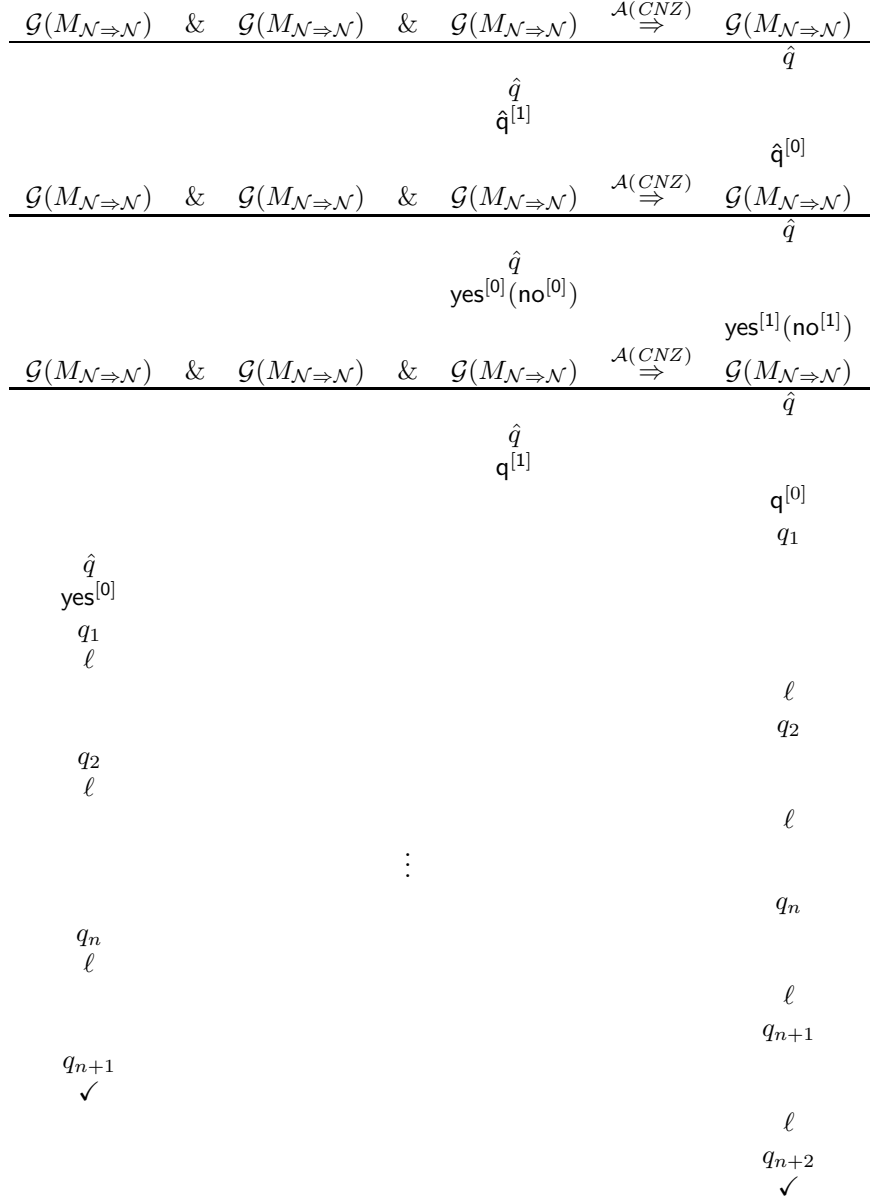

\begin{center}
\begin{tabular}{ccccccc}
$\mathcal{G}(M_{\mathcal{N} \Rightarrow \mathcal{N}})$ & $\&$ & $\mathcal{G}(M_{\mathcal{N} \Rightarrow \mathcal{N}})$ & $\&$ & $\mathcal{G}(M_{\mathcal{N} \Rightarrow \mathcal{N}})$ & $\stackrel{\mathcal{A}(\mathit{CNZ})}{\Rightarrow}$ & $\mathcal{G}(M_{\mathcal{N} \Rightarrow \mathcal{N}})$ \\
\cline{1-7}
&&&&&&$\hat{q}$ \\
&&&&$\hat{q}$&& \\
&&&&$\mathsf{\hat{q}^{[1]}}$&& \\
&&&&&& $\mathsf{\hat{q}^{[0]}}$ \\
$\mathcal{G}(M_{\mathcal{N} \Rightarrow \mathcal{N}})$ & $\&$ & $\mathcal{G}(M_{\mathcal{N} \Rightarrow \mathcal{N}})$ & $\&$ & $\mathcal{G}(M_{\mathcal{N} \Rightarrow \mathcal{N}})$ & $\stackrel{\mathcal{A}(\mathit{CNZ})}{\Rightarrow}$ & $\mathcal{G}(M_{\mathcal{N} \Rightarrow \mathcal{N}})$ \\
\cline{1-7}
&&&&&&$\hat{q}$ \\
&&&&$\hat{q}$&& \\
&&&&$\mathsf{yes^{[0]}} (\mathsf{no^{[0]}})$&& \\
&&&&&& $\mathsf{yes^{[1]}}(\mathsf{no^{[1]}})$ \\
$\mathcal{G}(M_{\mathcal{N} \Rightarrow \mathcal{N}})$ & $\&$ & $\mathcal{G}(M_{\mathcal{N} \Rightarrow \mathcal{N}})$ & $\&$ & $\mathcal{G}(M_{\mathcal{N} \Rightarrow \mathcal{N}})$ & $\stackrel{\mathcal{A}(\mathit{CNZ})}{\Rightarrow}$ & $\mathcal{G}(M_{\mathcal{N} \Rightarrow \mathcal{N}})$ \\
\cline{1-7}
&&&&&&$\hat{q}$ \\
&&&&$\hat{q}$&& \\
&&&&$\mathsf{q^{[1]}}$&& \\
&&&&&&$\mathsf{q}^{[0]}$ \\
&&&&&&$q_1$ \\
$\hat{q}$&&&&&& \\
$\mathsf{yes^{[0]}}$&&&&&& \\
$q_1$&&&&&& \\
$\mathsf{\ell}$&&&&&& \\
&&&&&& $\mathsf{\ell}$ \\
&&&&&&$q_{2}$ \\
$q_2$&&&&&& \\
$\mathsf{\ell}$&&&&&& \\
&&&&&&$\mathsf{\ell}$ \\
&&&$\vdots$&&& \\
&&&&&& $q_n$ \\
$q_n$&&&&&& \\
$\mathsf{\ell}$&&&&&& \\
&&&&&&$\mathsf{\ell}$ \\
&&&&&&$q_{n+1}$ \\
$q_{n+1}$&&&&&& \\
$\checkmark$&&&&&& \\
&&&&&&$\mathsf{\ell}$ \\
&&&&&&$q_{n+2}$ \\
&&&&&&$\checkmark$ 
\end{tabular}
\caption{An instruction strategy $\mathcal{A}(\mathit{CNZ})^\circledS$ for $\mathit{CountNonZeros} : \mathcal{N} \Rightarrow \mathcal{N}$.}
\label{FigAlgMin}
\end{center}
\end{figure}

This example illustrates why we need \emph{viable} (not only finitary) dynamic strategies for Turing completeness, where note that $\mathit{CountNonZero}$ can be seen as a simplification of minimization.
Also, it should be now clear why we employ composition of strategies \emph{without} hiding: An instruction strategy for the composition $\psi \circ \phi^\dagger : A \Rightarrow C$ of strategies $\phi : A \Rightarrow B$ and $\psi : B \Rightarrow C$ can be obtained simply as the disjoint union of instruction strategies for $\phi^\dagger$ and $\psi$, but it is not possible for composition \emph{with} hiding (in fact, there is no obvious way to construct an instruction strategy for the composition of $\phi$ and $\psi$ \emph{with} hiding). 

We advocate that viability of strategies gives a reasonable notion of `effective computability' as finitary strategies are clearly `effective', and so their `descriptions' or instruction strategies can be `effectively read off' by P.
Note also that viability is defined solely in terms of games and strategies without any axiom or induction.
Moreover, viability is at least as strong as Turing computability: As the main result of the present paper, we show that dynamic strategies definable in PCF are all viable (Theorem~\ref{ThmMainTheorem}), and therefore they are Turing complete (Corollary~\ref{CoroTuringCompleteness}).

Also, viable dynamic strategies solve the problem defined in Section~\ref{OurResearchProblem} in the following sense.
First, since dynamic games and strategies are abstract and syntax-independent concepts, they give a formulation of `high-level' computational processes, e.g., the lazy natural number game $\mathcal{N}$ defines natural numbers (not their symbolic representation) as `counting processes' in an abstract and syntax-independent fashion.
%In particular, dynamic games and strategies capture \emph{processes} better than conventional ones for the former captures `intermediate computational steps' hidden in the latter.
%In particular, in the bicategory of \emph{dynamic} games and strategies, we may formulate \emph{step-by-step} computational processes which are hidden in the category of conventional games and strategies.
Moreover, an instruction strategy for a viable strategy describes a `low-level' computational process that `implements' the strategy.
In this manner, we have obtained a single mathematical framework for both `high-level' and `low-level' computational processes as well as `effective computability' of the former in terms of the latter.

\subsection{Our contribution and related work}
\label{RelatedWorkAndOurContribution}
Our main technical achievement is to define an \emph{intrinsic}, \emph{non-inductive} and \emph{non-axiomatic} notion of `effective computability' of strategies in game semantics, namely \emph{viable dynamic strategies}, and show that they are \emph{Turing complete} (Corollary~\ref{CoroTuringCompleteness}).
We have also shown the converse (though it is not that surprising): The input-output behavior of each viable dynamic strategy that computes on natural numbers coincides with a partial recursive function (Theorem~\ref{ThmConservativeness}).
This immediately implies a \emph{universality} result \cite{plotkin1977lcf,abramsky2000full} as well: Every viable dynamic strategy on a dynamic game interpreting a type of PCF is the denotation of a term of PCF (Corollary~\ref{CoroUniversality}).
In addition, as immediate corollaries, some of the well-known theorems in computability theory \cite{cutland1980computability,rogers1967theory} such as the \emph{smn-theorem} and the \emph{first recursion theorem} are generalized to non-classical computation (Corollaries~\ref{CoroGeneralizedSMN} and \ref{CoroGeneralizedFRT}).
We hope that these technical results would convince the reader that viability of dynamic strategies is a natural and reasonable generalization of Church-Turing computability. 

Another, more conceptual contribution of the paper is to establish a single mathematical framework for both `high-level' and `low-level' computational processes, where the former defines \emph{what} computation does, while the latter describes \emph{how} to execute the former.
In comparison with existing mathematical models of computation, our game-semantic approach has some novel features.
First, in comparison with computation by TMs or programming languages, plays of games are a more abstract concept; in particular they are not necessarily symbol manipulations, which is why they are suitable for abstract and `high-level' computational processes. 
Next, computation in a game proceeds as an interaction between P and O (not an encoding or a realizer of O), which may be seen as a generalization of computation by TMs in which just one interaction occurs (i.e., O gives an input on the infinite tape, and then P returns an output on the tape); this in particular means that O's computation does not have to be recursive, and it is a part of our formalization, which is why game semantics in general captures higher-order computation in a natural and systematic manner.
The present work inherits this interactive nature of game semantics.
Last but not least, games are a semantic counterpart of \emph{types}, where note that types do not a priori exist in TMs, and types in programming languages are syntactic entities.
Hence, our approach provides a deeper clarification of types in the context of theory of computation. %particularly in the context of \emph{types-as-propositions} paradigm \cite{sorensen2006lectures}.

Moreover, by exploiting the flexibility of game semantics, our approach would be applicable to a wide range of computation though it is left as future work.
Also, game semantics has interpreted various logics as well \cite{abramsky1994games,hyland1997game,abramsky2015games,yamada2016game}, and so it would be possible to employ our framework for a \emph{realizability interpretation} of constructive logic \cite{troelstra1998realizability,van2008realizability}, for which viable dynamic strategies would be more suitable as \emph{realizers} than existing strategies such as \cite{blot2017realizability} since the former contains more `computational contents' and makes more sense as a model of computation than the latter. 
%This idea appears appealing since it makes sense to regard `low-level' computational processes as \emph{realizers} of the corresponding `high-level' ones.
Furthermore, the game models \cite{abramsky2015games,yamada2016game} have interpreted \emph{Martin-L\"{o}f type theory}, one of the most prominent foundations of constructive mathematics, and thus our framework would provide a mathematical and syntax-independent formalization of constructive mathematics too\footnote{This would be seen, if achieved, as a mathematical formalization of Brouwer's \emph{intuitionism} \cite{troelstra2014constructivism}, in which `(mental) constructions' are made precise as game-semantic computational processes, viz., plays by viable dynamic strategies, and moreover extended to \emph{interactive} `constructions'.}.
Of course, we need to work out details for these developments, which is out of the scope of the present paper, but it is in principle clear how to apply our framework to existing game semantics. 
In this sense, the present work would serve as a stepping-stone towards these extensions. 

In the literature, there have been several attempts to provide a mathematical foundation of computation beyond classical or symbolic ones.
We do not claim at all our game-semantic approach is best or canonical in comparison with the previous work; however, our approach certainly has some advantages.
For instance, Robin Gandy proposed in the famous paper \cite{gandy1980church} a notion of `mechanical devices', now known as \emph{Gandy machines (GMs)}, which appear more general than TMs, but showed that TMs are actually as powerful as GMs.
However, since GMs are an \emph{axiomatic} approach to define a general class of `mechanical devices' that are `effectively executable', they do not give a distinction between `high-level' and `low-level' computational processes, where GMs formulate the latter.
The more recent \emph{abstract state machines (ASMs)} \cite{gurevich2004abstract} developed by Yuri Gurevich employ an idea similar to GMs for `effective computability', namely by requiring an upper bound of elements that may change in a single step of computation, utilizing \emph{structures} in the sense of mathematical logic \cite{shoenfield1967mathematical}. 
Notably, ASMs define a very general notion of computation, namely computation as \emph{structure transition}; however, it seems that this framework is in some sense too general. 
For instance, it is possible that an ASM computes a real number in a single step, but then its `effective computability' is questionable; in general, an appropriate notion of `effective computability' of ASMs has been missing. 
Also, the way of computing a function by an ASM is to update input/output-pairs of the function in the \emph{element-wise} fashion, but it does not seem to be a common or natural processes in practice.
Yiannis Moschiovakis has also considered a mathematical foundation of algorithms \cite{moschovakis1998founding} in which, similarly to us, he proposed that algorithms and their `implementations' should be distinguished, where by algorithms he refers to what we call `high-level' computational processes. 
However, his framework, called \emph{recursors}, is also based on structures, and his notion of algorithms are relative to atomic operations given in each structure; thus, it does not give a foundational analysis on the notion of `effective computability'.
Therefore, although the previous work captures broader notions of computation, our approach has the advantage of achieving both of the distinction between `high-level' and `low-level' computational processes and the primitive notion of `effective computability'.
Also, the \emph{interactive} and \emph{typed} natures of game semantics stands in sharp contrast to the previous work as well.

At this point, we need to mention \emph{computability logic} \cite{japaridze2003introduction} developed by Giorgi Japaridze since his idea is similar to ours; he defines `effective computability' via computing machines playing in games.
Nevertheless, there are notable differences between computability logic and the present work.
First, computing machines in computability logic are a variant of TMs, and thus they are less novel as a model of computation than our approach; in fact, the definition of `effective computability' in computability logic can be seen more or less as a consequence of just spelling out the standard notion of recursive strategies \cite{abramsky2000full,hyland2000full,feree2017game}.  
Next, our framework inherits the categorical structure of game semantics, providing a \emph{compositional} formulation of logic and computation, i.e., a compound proof or program is constructed from its components, while there has been no known categorical structure of computability logic. %\footnote{In fact, Japaridze states in \cite{japaridze2003introduction} that computability logic is rather close to \emph{Blass games} \cite{blass1992game} which do not form a category.}. %For instance, how to compose (in the categorical sense) two computing machines?
Nevertheless, it would be interesting to adopt his `TMs-based' approach in our framework and compare its computational power with that of the present work.

Finally, let us mention some of the precursors of game semantics.
To clarify the notion of higher-order computability, Stephen Cole Kleene considered a model of higher-order computation based on `dialogues' between `computational oracles' in a series of papers \cite{kleene1978recursive,kleene1980recursive,kleene1982recursive}, which can be seen as the first attempt to define a mathematical notion of algorithms in a higher-order setting \cite{longley2015higher}.
Moreover, Gandy and his student Giovanni Pani refined these works by Kleene to obtain a model of PCF that satisfies \emph{universality} \cite{curien2007definability} though this work was not published. These previous works are direct ancestors of game semantics (in particular the so-called \emph{HO-games} \cite{hyland2000full} by Martin Hyland and Luke Ong).
As another line of research (motivated by the problem of full abstraction for PCF \cite{plotkin1977lcf}), Pierre-Louis Curien and Gerard Berry conceive of \emph{sequential algorithms} \cite{berry1982sequential} which was the first attempt to go beyond (extensional) functions to capture sequentiality of PCF. 
Sequential algorithms preceded and became highly influential to the development of game semantics; in fact, sequential algorithms are presented in the style of game semantics in \cite{longley2015higher}, and it is shown in \cite{bucciarelli1994another} that the oracle computation by Kleene can be represented by sequential algorithms (but the converse does not hold).
Nevertheless, neither of the previous work defines `effective computability' in a manner similar to the present work; our definition has an advantage in its intrinsic, non-inductive and non-axiomatic nature.
%Moreover, games capture \emph{types} in computation; thus, e.g., through `types-as-propositions' paradigm \cite{sorensen2006lectures}, our game-semantic approach would naturally give a mathematical foundation of constructive logics and mathematics as mentioned above.

\subsection{The structure of the paper}
The rest of the paper proceeds roughly as follows. This introduction ends with fixing some notation. Then, recalling dynamic games and strategies in Section~\ref{Preliminaries}, we define viability of strategies and establish, as a main theorem, the fact that viable dynamic strategies may interpret all terms of PCF in Section~\ref{ViableStrategies}, proving their Turing completeness as a corollary.
%We also show in this section the converse (for classical computation) and some immediate consequences of the main theorem.
Finally, we draw a conclusion and propose some future work in Section~\ref{ConclusionAndFutureWork}.

\begin{notation}
We use the following notation throughout the paper:
\begin{itemize}

%\item We write $\mathbb{N}$ for the set of natural numbers and define $\mathbb{N}^+ \stackrel{\mathrm{df. }}{=} \mathbb{N} \setminus \{ 0 \}$.

\item We use bold letters $\boldsymbol{s}, \boldsymbol{t}, \boldsymbol{u}, \boldsymbol{v}$, etc. for sequences, in particular $\boldsymbol{\epsilon}$ for the \emph{empty sequence}, and letters $a, b, c, d, m, n, x, y, z$, etc. for elements of sequences;

\item We often abbreviate a finite sequence $\boldsymbol{s} = (x_1, x_2, \dots, x_n)$ as $x_1 x_2 \dots x_n$, and write $s_i$, where $i \in \{ 1, 2, \dots, n \}$, as another notation for $x_i$;

\item A \emph{concatenation} of sequences is represented by the juxtaposition of them, but we often write $a \boldsymbol{s}$, $\boldsymbol{t} b$, $\boldsymbol{u} c \boldsymbol{v}$ for $(a) \boldsymbol{s}$, $\boldsymbol{t} (b)$, $\boldsymbol{u} (c) \boldsymbol{v}$, etc., and also write $\boldsymbol{s} . \boldsymbol{t}$ for $\boldsymbol{s t}$;

\item We define $\boldsymbol{s}^n \stackrel{\mathrm{df. }}{=} \underbrace{\boldsymbol{s} \boldsymbol{s} \cdots \boldsymbol{s}}_n$ for any sequence $\boldsymbol{s}$ and natural number $n \in \mathbb{N}$;

\item We write $\mathsf{Even}(\boldsymbol{s})$ (resp. $\mathsf{Odd}(\boldsymbol{s})$) iff $\boldsymbol{s}$ is of even-length (resp. odd-length);

\item We define $S^\mathsf{P} \stackrel{\mathrm{df. }}{=} \{ \boldsymbol{s} \in S \mid \mathsf{P}(\boldsymbol{s}) \}$ for a set $S$ of sequences and $\mathsf{P} \in \{ \mathsf{Even}, \mathsf{Odd} \}$;

\item $\boldsymbol{s} \preceq \boldsymbol{t}$ means $\boldsymbol{s}$ is a \emph{prefix} of $\boldsymbol{t}$, and given a set $S$ of sequences, we define $\mathsf{Pref}(S) \stackrel{\mathrm{df. }}{=} \{ \boldsymbol{s} \mid \exists \boldsymbol{t} \in S . \ \! \boldsymbol{s} \preceq \boldsymbol{t} \ \! \}$;

\item For a poset $P$ and a subset $S \subseteq P$, $\mathsf{Sup}(S)$ denotes the \emph{supremum} of $S$;

%\item We write $\mathbb{N}$ (resp. $\mathbb{N}^+$) for the set of natural numbers (resp. positive integers).

\item $X^* \stackrel{\mathrm{df. }}{=} \{ x_1 x_2 \dots x_n \mid n \in \mathbb{N}, \forall i \in \{ 1, 2, \dots, n \} . \ \! x_i \in X \ \! \}$ for each set $X$;

\item For a function $f : A \to B$ and a subset $S \subseteq A$, we define $f \upharpoonright S : S \to B$ to be the \emph{restriction} of $f$ to $S$, and $f^\ast : A^\ast \to B^\ast$ by $f^\ast(a_1 a_2 \dots a_n) \stackrel{\mathrm{df. }}{=} f(a_1) f(a_2) \dots f(a_n) \in B^\ast$ for all $a_1 a_2 \dots a_n \in A^\ast$;

%\item For a pair of sets $A, B$, we write $B^A$ for the set of functions from $A$ to $B$.

\item Given sets $X_1, X_2, \dots, X_n$ and $i \in \{ 1, 2, \dots, n \}$, we write $\pi_i$ (or $\pi_i^{(n)}$) for the \emph{$i^{\text{th}}$-projection function} $X_1 \times X_2 \times \dots \times X_n \to X_i$ that maps $(x_1, x_2, \dots, x_i, \dots, x_n) \mapsto x_i$ for all $x_j \in X_j$ ($j = 1, 2, \dots, i, \dots, n$).

\item $\simeq$ denote the \emph{Kleene equality}, i.e., $x \simeq y \stackrel{\mathrm{df. }}{\Leftrightarrow} (x \downarrow \wedge \ y \downarrow \wedge \ x = y) \vee (x \uparrow \wedge \ y \uparrow)$, where we write $x \downarrow$ if an element $x$ is defined and $x \uparrow$ otherwise.

%\item $\simeq$ (without any subscript) denotes the \emph{Kleene equality}, i.e., $x \simeq y$ means $x, y$ are both defined and $x = y$, or they are both undefined.

%\item We write $x \downarrow$ if an element $x$ is defined and $x \uparrow$ otherwise;

%\item $\simeq$ denotes the \emph{Kleene equality}, i.e., $x \simeq y \stackrel{\mathrm{df. }}{\Leftrightarrow} (x \downarrow \wedge \ y \downarrow \wedge \ x = y) \vee (x \uparrow \wedge \ y \uparrow)$.

\end{itemize}
\end{notation}

\section{Preliminary: games and strategies}
\label{Preliminaries}
Our games and strategies are essentially the `dynamic refinement' of McCusker's variant \cite{abramsky1999game,mccusker1998games}\footnote{Strictly speaking, the variants \cite{abramsky1999game,mccusker1998games} are slightly different, and  \cite{yamada2016dynamic} chooses \cite{abramsky1999game} as the basis, but \cite{mccusker1998games} describes a lot of useful technical details. \cite{abramsky1999game} is chosen in \cite{yamada2016dynamic} because it combines good points of the two best-known variants, \emph{AJM-games} \cite{abramsky2000full} and \emph{HO-games} \cite{hyland2000full} so that it may model \emph{linearity} as in \cite{abramsky1994games} and also characterize various programming features as in \cite{abramsky1999game}, though these points are left as future work in \cite{yamada2016dynamic}.}, which has been proposed under the name of \emph{dynamic games and strategies} by the present author and Abramsky in \cite{yamada2016dynamic} to capture `dynamics' (or \emph{rewriting}) and `intensionality' (or \emph{algorithms}) of computation by mathematical, particularly syntax-independent, concepts.
As already explained in the introduction, we have chosen this variant since, in contrast to conventional games and strategies, dynamic games and strategies capture \emph{step-by-step processes} in computation, which is essential for a `TMs-like' model of computation.
%For this reason, we adopt the category (more precisely \emph{bicategory}) of dynamic games and strategies in \cite{yamada2016dynamic} as a `universe' of computational processes.
%We shall make this point precise in the present section. 

However, we need some modifications of dynamic games and strategies. 
First, although disjoint union of sets of moves (for constructions on games) is usually treated \emph{informally} for brevity, we need to adopt a particular formalization of `tags' for the disjoint union because we are concerned with `effectivity' of strategies, and so we must show that manipulations of `tags' are `effectively executable' by strategies.
In particular, we have to employ \emph{exponential} $!$ in which different `rounds' or \emph{threads} are distinguished by such `effective tags'. %(otherwise, e.g., strategies interpreting \emph{fixed-point combinators} of PCF \cite{hyland1997game} cannot be `effective').
%Then, as in \cite{abramsky2000full,mccusker1998games}, it is reasonable to require strategies to behave \emph{consistently} up to permutations of `tags' for exponentials since plays should be identified up to such inessential difference.

In addition, we slightly refine the original definition of dynamic games in \cite{yamada2016dynamic} by requiring that an intermediate occurrence of an O-move in a position of a dynamic game must be a mere copy of the last occurrence of a P-move, which reflects the example of composition \emph{without} hiding in the introduction.
This modification is due to our computability-theoretic motivation: Intermediate occurrences of moves are `invisible' to O (as in the example of composition without hiding), and therefore \emph{P has to `effectively' compute intermediate occurrences of O-moves} too (though this point does not matter in \cite{yamada2016dynamic}); note that it is clearly `effective' to just copy the last occurrence.
Also, it conceptually makes sense too: Intermediate occurrences of O-moves are just `dummies' of those of P-moves, and thus what happens in the intermediate part of each play is essentially P's calculation only. 
Technically, this is achieved by introducing \emph{dummy internal O-moves} (in Definition~\ref{DefArenas}), and strengthening the axiom DP2 (in Definition~\ref{DefGames}).
Let us remark, however, that this refinement is technically trivial, and it is not our main contribution. 

This section presents the resulting variant of games and strategies.
Fixing an implementation of `tags' in Section~\ref{OnTags}, we recall dynamic games and strategies in Sections~\ref{DynamicGames} and \ref{DynamicStrategies}, respectively.
To make this paper essentially self-contained, we shall explain motivations and intuitions behind the definitions. %so that they make sense for readers who are not familiar with game semantics.

\subsection{On `tags' for disjoint union of sets}
\label{OnTags}
We begin with fixing `tags' for disjoint union that can be `effectively' manipulated.
We first define \emph{outer tags} or \emph{extended effective tags} (Definition~\ref{DefExtendedEffectiveTags}) for exponential, and then \emph{inner tags} (Definition~\ref{DefInnerTags}) for other constructions on games.

\begin{definition}[Effective tags]
An \emph{\bfseries effective tag} is a finite sequence over the two-element set $\Sigma = \{ \ell, \hbar \}$, where $\ell$ and $\hbar$ are arbitrarily fixed elements such that $\ell \neq \hbar$.
We often abbreviate the sequence $\ell^i = \ell \ell \dotsc \ell$ by $\underline{i}$  for each $i \in \mathbb{N}$.
\end{definition}

\begin{definition}[Decoding and encoding]
The \emph{\bfseries decoding function} $\mathit{de} : \Sigma^\ast \to \mathbb{N}^\ast$ and the \emph{\bfseries encoding function} $\mathit{en} : \mathbb{N}^\ast \to \Sigma^\ast$ are defined respectively by:
\begin{align*}
\mathit{de}(\boldsymbol{\gamma}) &\stackrel{\mathrm{df. }}{=}  (i_1, i_2, \dots, i_k) \\
\mathit{en}(j_1, j_2, \dots, j_l) &\stackrel{\mathrm{df. }}{=} \underline{j_1} \hbar \ \! \underline{j_2} \hbar \dots \underline{j_{l-1}} \hbar \ \! \underline{j_l}
\end{align*}
for all $\boldsymbol{\gamma} \in \Sigma^\ast$ and $(j_1, j_2, \dots, j_l) \in \mathbb{N}^\ast$, where $\boldsymbol{\gamma} = \underline{i_1} \hbar \ \!  \underline{i_2} \hbar \dots \underline{i_{k-1}} \hbar \ \! \underline{i_k}$.
\end{definition}

Clearly, the functions $\mathit{de} : \Sigma^\ast \leftrightarrows \mathbb{N}^\ast : \mathit{en}$ are mutually inverses (n.b., they both map the empty sequence $\boldsymbol{\epsilon}$ to itself).
In fact, each effective tag $\boldsymbol{\gamma} \in \Sigma^\ast$ is intended to be a binary representations of the finite sequence $\mathit{de}(\boldsymbol{\gamma}) \in \mathbb{N}^\ast$ of natural numbers.
%For technical convenience (specifically for \emph{promotion} of strategies), we assume that there are countably infinite ``variants'' of the symbol $\mid$ (like \emph{variable convention}), and for each e-tag $\boldsymbol{\gamma} = \underbrace{\mid \mid \dots \mid}_{i_1} \sharp \underbrace{\mid \mid \dots \mid}_{i_2} \sharp \dots \underbrace{\mid \mid \dots \mid}_{i_{k-1}} \sharp \underbrace{\mid \mid \dots \mid}_{i_k}$ the $\mid$'s separated by $\sharp$ are \emph{pairwise distinct}.

However, effective tags are not sufficient for our purpose: For \emph{`nested' exponentials} (Definition~\ref{DefExponential}), we need to \emph{`effectively'} associate a natural number to each finite sequence of natural numbers in an \emph{`effectively'} invertible manner. 
Of course it is possible as there is a recursive bijection $\wp : \mathbb{N}^\ast \stackrel{\sim}{\to} \mathbb{N}$ whose inverse is recursive too, which is an elementary fact in computability theory \cite{cutland1980computability,rogers1967theory}, but we cannot rely on it for we are aiming at developing an \emph{autonomous} foundation of `computability'.

On the other hand, this bijection is necessary only for manipulating effective tags, and so we would like to avoid an involved mechanism to achieve it.
Then, our solution for this problem is to simply introduce elements to \emph{denote} the bijection:
\begin{definition}[Extended effective tags]
\label{DefExtendedEffectiveTags}
\emph{\bfseries Extended effective tags} are expressions $\boldsymbol{e} \in (\Sigma \cup \ \! \{ \Lbag, \Rbag \})^\ast$, where $\Lbag$ and $\Rbag$ are arbitrary elements with $\Lbag \neq \Rbag$ and $\Sigma \cap \{ \Lbag, \Rbag \} = \emptyset$, generated by the grammar $\boldsymbol{e} \stackrel{\mathrm{df. }}{\equiv} \boldsymbol{\gamma} \ \! | \! \ \boldsymbol{e}_1 \hbar \ \! \boldsymbol{e}_2 \ \! | \! \ \Lbag \boldsymbol{e} \Rbag$, where $\boldsymbol{\gamma}$ ranges over effective tags.
\end{definition}

\begin{notation}
Let $\mathcal{T}$ denote the set of all extended effective tags.
\end{notation}

\begin{definition}[Extended decoding]
\label{DefExtendedDecoding}
The \emph{\bfseries extended decoding function} $\mathit{ede} : \mathcal{T} \to \mathbb{N}^\ast$ is recursively defined by: 
\begin{align*}
\mathit{ede}(\boldsymbol{\gamma}) &\stackrel{\mathrm{df. }}{=} \mathit{de}(\boldsymbol{\gamma}) \\
\mathit{ede}(\boldsymbol{e}_1 \hbar \ \! \boldsymbol{e}_2) &\stackrel{\mathrm{df. }}{=} \mathit{ede}(\boldsymbol{e}_1) . \mathit{ede}(\boldsymbol{e}_2) \\
\mathit{ede}(\Lbag \boldsymbol{e} \Rbag) &\stackrel{\mathrm{df. }}{=} (\wp(\mathit{ede}(\boldsymbol{e})))
\end{align*}
where $\wp : \mathbb{N}^\ast \stackrel{\sim}{\to} \mathbb{N}$ is any recursive bijection fixed throughout the present paper such that $\wp (i_1, i_2, \dots, i_k) \neq \wp ( j_1, j_2, \dots, j_l)$ whenever $k \neq l$ (see, e.g.,  \cite{cutland1980computability}).
\end{definition}

Of course, we lose the bijectivity between $\Sigma^\ast$ and $\mathbb{N}^\ast$ for \emph{extended effective tags} (since if $\mathit{ede}(\Lbag \boldsymbol{e} \Rbag) = (i)$, then $\mathit{ede}(\underline{i}) = (i)$ but $\Lbag \boldsymbol{e} \Rbag \neq \underline{i}$), but in return, we may `effectively execute' the bijection $\wp : \mathbb{N}^\ast \stackrel{\sim}{\to} \mathbb{N}$ by just inserting $\Lbag$ and $\Rbag$.

We shall adopt extended effective tags to form countably infinite copies of moves for exponential; see Definition~\ref{DefExponential}.

\begin{convention}
From now on, \emph{\bfseries outer tags} refer to extended effective tags, and we write $\boldsymbol{e}, \boldsymbol{f}, \boldsymbol{g}, \boldsymbol{h}$, etc. for outer tags.
\end{convention}

On the other hand, as `tags' for other constructions on games, i.e., $(\_)^{[i]}$ in the introduction, we simply use a finite number of distinguished symbols; as we shall see, four symbols suffice for this purpose. 
Let us call them \emph{inner tags}:
\begin{definition}[Inner tags]
\label{DefInnerTags}
Let $\mathscr{W}$, $\mathscr{E}$, $\mathscr{N}$ and $\mathscr{S}$ be arbitrarily fixed and pairwise distinct elements, each of which is called an \emph{\bfseries inner tag}.
\end{definition}

We will focus on games whose moves are \emph{tagged elements} in the following sense:
\begin{definition}[Inner elements]
\label{DefInnerElements}
An \emph{\bfseries inner element} is a finitely nested pair $( \dots ((m, t_1), t_2), \dots, t_k)$, usually written $m_{t_1 t_2 \dots t_k}$, such that $m$ is a distinguished element, called the \emph{\bfseries substance} of $m_{t_1 t_2 \dots t_k}$, and $t_i$'s ($i = 1, 2, \dots, k$) are inner tags.
\end{definition}

%In the following, we focus on games whose \emph{moves} are all equipped with tags, or:
\begin{definition}[Tagged elements]
\label{DefTaggedElements}
A \emph{\bfseries tagged element} is any pair $(m_{t_1 t_2 \dots t_k}, \boldsymbol{e})$, usually written $[m_{t_1 t_2 \dots t_k}]_{\boldsymbol{e}}$, of an inner element $m_{t_1 t_2 \dots t_k}$ and an outer tag $\boldsymbol{e} \in \mathcal{T}$.
%We usually write $[m]_{\boldsymbol{e}}$ for tagged elements $(m, \boldsymbol{e})$.
\end{definition}

\begin{convention}
We often abbreviate tagged elements $[m_{t_1 t_2 \dots t_k}]_{\boldsymbol{e}}$ as $[m]_{\boldsymbol{e}}$ if the inner tags are not important, and inner elements $m_{t_1 t_2 \dots t_k}$ as $m$ for similar situations. 
%Consequently, we often write $m$ for tagged elements $[m_{t_1 t_2 \dots t_k}]_{\boldsymbol{e}}$.
\end{convention}

\subsection{Games}
\label{DynamicGames}
As stated in the introduction, our games are \emph{dynamic games} introduced in \cite{yamada2016dynamic}. %except that they are equipped with an equivalence relation on \emph{positions} (Definition~\ref{DefGames}) that ignores permutations of outer tags in exponentials as in \cite{abramsky2000full,mccusker1998games}.
The main idea of dynamic games is to introduce, in McCusker's games \cite{abramsky1999game,mccusker1998games}, a distinction between \emph{internal} and \emph{external} moves, where internal moves constitute `internal communication' between strategies (i.e., moves with square boxes in the introduction), and they are to be \emph{a posteriori} hidden by the \emph{hiding operation}, in order to capture `intensionality' and `dynamics' of computation by internal moves and the hiding operation, respectively.
Conceptually, internal moves are `invisible' to O as they represent how P `internally' calculates the next external P-move (i.e., step-by-step processes in computation).
In addition, unlike \cite{yamada2016dynamic} we restrict internal O-moves to `dummies' of internal P-moves via \emph{dummy functions} (in Definition~\ref{DefArenas}) for the computability-theoretic motivation mentioned above.

We first review the basic definitions of dynamic games in the present section; see \cite{yamada2016dynamic} for the details, and \cite{abramsky1999game,abramsky1997semantics,hyland1997game} for a general introduction to game semantics.

\begin{convention}
To distinguish our `dynamic concepts' from conventional ones \cite{abramsky1999game,mccusker1998games}, we add the word \emph{static} in front of the latter, e.g., static arenas, static games, etc.
\end{convention}

\subsubsection{Arenas and legal positions}
Similarly to McCusker's games, dynamic games are based on two preliminary concepts: \emph{arenas} and \emph{legal positions}. An arena defines the basic components of a game, which in turn induces legal positions that specify the basic rules of the game. 
Let us begin with recalling these two concepts. 

\begin{definition}[Dynamic arenas \cite{yamada2016dynamic}]
\label{DefArenas}
A \emph{\bfseries dynamic arena} is a quadruple $G = (M_G, \lambda_G, \vdash_G, \Delta_G)$, where:
\begin{itemize}
\item $M_G$ is a set of \emph{tagged elements}, called \emph{\bfseries moves}, such that: \textsc{(M)} The set $\pi_1(M_G)$ of \emph{inner elements} of $G$ is finite;

\item $\lambda_G$ is a function $M_G \to \{ \mathsf{O}, \mathsf{P} \} \times \{ \mathsf{Q}, \mathsf{A} \} \times \mathbb{N}$, called the \emph{\bfseries labeling function}, where $\mathsf{O}$, $\mathsf{P}$, $\mathsf{Q}$ and $\mathsf{A}$ are arbitrarily fixed symbols, called the \emph{\bfseries labels}, that satisfies: \textsc{(L)} $\mu(G) \stackrel{\mathrm{df. }}{=}  \mathsf{Sup}(\{ \lambda_G^{\mathbb{N}}(m) \mid m \in M_G \}) \in \mathbb{N}$;

\item $\vdash_G \ \subseteq (\{ \star \} \cup M_G) \times M_G$ is a relation, where $\star$ is an arbitrarily fixed symbol such that $\star \not \in M_G$, called the \emph{\bfseries enabling relation}, that satisfies:
\begin{itemize}

\item \textsc{(E1)} If $\star \vdash_G m$, then $\lambda_G(m) = (\mathsf{O}, \mathsf{Q}, 0)$ and $n = \star$ whenever $n \vdash_G m$;

\item \textsc{(E2)} If $m \vdash_G n$ and $\lambda_G^{\mathsf{QA}}(n) = \mathsf{A}$, then $\lambda_G^{\mathsf{QA}}(m) = \mathsf{Q}$ and $\lambda_G^{\mathbb{N}}(m) = \lambda_G^{\mathbb{N}}(n)$;

\item \textsc{(E3)} If $m \vdash_G n$ and $m \neq \star$, then $\lambda_G^{\mathsf{OP}}(m) \neq \lambda_G^{\mathsf{OP}}(n)$;

\item \textsc{(E4)} If $m \vdash_G n$, $m \neq \star$ and $\lambda_G^{\mathbb{N}}(m) \neq \lambda_G^{\mathbb{N}}(n)$, then $\lambda_G^{\mathsf{OP}}(m) = \mathsf{O}$;

\end{itemize}

\item $\Delta_G$ is a bijection $M_G^{\mathsf{PInt}} \stackrel{\sim}{\rightarrow} M_G^{\mathsf{OInt}}$, called the \emph{\bfseries dummy function},  that satisfies: \textsc{(D)} For any $[m_{\boldsymbol{t}}]_{\boldsymbol{e}} \in M_G^{\mathsf{PInt}}$ and $[n_{\boldsymbol{u}}]_{\boldsymbol{f}} \in M_G^{\mathsf{OInt}}$, if $\Delta_G([m_{\boldsymbol{t}}]_{\boldsymbol{e}}) = [n_{\boldsymbol{u}}]_{\boldsymbol{f}}$, then $m = n$, $\boldsymbol{e} = \boldsymbol{f}$, $\lambda_G^{\mathsf{QA}}([m_{\boldsymbol{t}}]_{\boldsymbol{e}}) = \lambda_G^{\mathsf{QA}}([n_{\boldsymbol{u}}]_{\boldsymbol{f}})$, $\lambda_G^{\mathbb{N}}([m_{\boldsymbol{t}}]_{\boldsymbol{e}}) = \lambda_G^{\mathbb{N}}([n_{\boldsymbol{u}}]_{\boldsymbol{f}})$, and $\boldsymbol{u} = \delta_G(\boldsymbol{t})$ for some \emph{finite} partial function $\delta_G$ on inner tags

\end{itemize}
in which $\lambda_G^{\mathsf{OP}} \stackrel{\mathrm{df. }}{=} \pi_1 \circ \lambda_G : M_G \to \{ \mathsf{O}, \mathsf{P} \}$, $\lambda_G^{\mathsf{QA}} \stackrel{\mathrm{df. }}{=} \pi_2 \circ \lambda_G : M_G \to \{ \mathsf{Q}, \mathsf{A} \}$, $\lambda_G^{\mathbb{N}} \stackrel{\mathrm{df. }}{=} \pi_3 \circ \lambda_G : M_G \to \mathbb{N}$, $M_G^{\mathsf{PInt}} \stackrel{\mathrm{df. }}{=} \langle \lambda_G^{\mathsf{OP}}, \lambda_G^{\mathbb{N}} \rangle^{-1}(\{ (\mathsf{P}, d) \mid d \geqslant 1 \ \! \})$ and $M_G^{\mathsf{OInt}} \stackrel{\mathrm{df. }}{=} \langle \lambda_G^{\mathsf{OP}}, \lambda_G^{\mathbb{N}} \rangle^{-1}(\{ (\mathsf{O}, d) \mid d \geqslant 1 \ \! \})$. 
A move $m \in M_G$ is \emph{\bfseries initial} if $\star \vdash_G m$, an \emph{\bfseries O-move} (resp. a \emph{\bfseries P-move}) if $\lambda_G^{\mathsf{OP}}(m) = \mathsf{O}$ (resp. if $\lambda_G^{\mathsf{OP}}(m) = \mathsf{P}$), a \emph{\bfseries question} (resp. an \emph{\bfseries answer}) if $\lambda_G^{\mathsf{QA}}(m) = \mathsf{Q}$ (resp. if $\lambda_G^{\mathsf{QA}}(m) = \mathsf{A}$), and \emph{\bfseries internal} or \emph{\bfseries $\boldsymbol{\lambda_G^{\mathbb{N}}(m)}$-internal}  (resp. \emph{\bfseries external}) if $\lambda_G^{\mathbb{N}}(m) > 0$ (resp. if $\lambda_G^{\mathbb{N}}(m) = 0$).
A finite sequence $\boldsymbol{s} \in M_G^\ast$ of moves is \emph{\bfseries $\boldsymbol{d}$-complete} if it ends with a move $m$ such that $\lambda_G^{\mathbb{N}}(m) = 0 \vee \lambda_G^{\mathbb{N}}(m) > d$, where $d \in \mathbb{N} \cup \{ \omega \}$, and $\omega$ is the \emph{least transfinite ordinal}.
For each $m \in M_G^{\mathsf{PInt}}$, $\Delta_G(m) \in M_G^{\mathsf{OInt}}$ is called the \emph{\bfseries dummy} of $m$.
We write $M_G^{\mathsf{Init}}$ (resp. $M_G^{\mathsf{Int}}$, $M_G^{\mathsf{Ext}}$) for the set of all initial (resp. internal, external) moves of $G$.
\end{definition}

\begin{remark}
Dynamic arenas in the sense defined above are dynamic arenas in the sense defined in \cite{yamada2016dynamic} equipped with the additional structure of \emph{dummy functions}.
\end{remark}

A dynamic arena is a \emph{static arena} defined in \cite{abramsky1999game}, equipped with a \emph{labeling} $\lambda_G^{\mathbb{N}}$ on moves and \emph{dummies} of internal P-moves, satisfying additional axioms about them.
From the opposite angle, dynamic arenas are a generalization of static arenas: A static arena is equivalent to a dynamic arena whose moves are all external.

Recall that a static arena $G$ determines possible \emph{moves} of a game, each of which is O's/P's question/answer, and specifies which move $n$ can be performed for each move $m$ by the relation $m \vdash_G n$ (and $\star \vdash_G m$ means that $m$ can initiate a play). Its axioms are E1, E2 and E3 (excluding conditions on priority orders):
\begin{itemize}

\item E1 sets the convention that an initial move must be O's question, and an initial move cannot be performed for a previous move;

\item E2 states that an answer must be performed for a question;

\item E3 mentions that an O-move must be performed for a P-move, and vice versa.

\end{itemize}

Then, as an additional structure for \emph{dynamic arenas} $G$, \cite{yamada2016dynamic} employs all natural numbers for $\lambda_G^{\mathbb{N}}$, not only the \emph{internal/external (I/E)}-parity, to define a \emph{step-by-step} execution of the \emph{hiding operation} $\mathcal{H}$: The operation $\mathcal{H}$ deletes all internal moves $m$ such that $\lambda_G^{\mathbb{N}}(m)$, called the \emph{\bfseries priority order}\footnote{It is rather called \emph{degree of internality} in \cite{yamada2016dynamic}. This change of name is motivated by the hindsight that it should be possible to define another order of the step-by-step execution of the hiding operation $\mathcal{H}$, where \cite{yamada2016dynamic} defines the priority order particularly according to the `depth of composition' or \emph{degree of internality} of moves.} (since it indicates the priority order of $m$ with respect to the execution of $\mathcal{H}$), is $1$.\footnote{Although the main focus of \cite{yamada2016dynamic} is to capture a small-step operational semantics of a programming language by such a step-by-step hiding operation $\mathcal{H}$, such fine-grained steps do not play a main role in this paper. Nevertheless, we keep the structure $\lambda_G^{\mathbb{N}}$ as it makes sense for a model of computation to be equipped with the step-by-step hiding operation, and it would be interesting to consider `effectivity' of the hiding operation as future work.}

In addition, unlike \cite{yamada2016dynamic} we have introduced the additional structure of \emph{dummy functions} for the computability-theoretic motivation mentioned at the beginning of Section~\ref{Preliminaries}.
The idea is that each internal O-move $m \in M_G^{\mathsf{OInt}}$ of a dynamic game $G$ must be a \emph{dummy} of a unique internal P-move $m' \in M_G^{\mathsf{PInt}}$, i.e., $m = \Delta_G(m')$, and $m$ may occur in a play only right after an occurrence of $m'$, which axiomatizes the phenomenon of intermediate occurrences of moves in the composition of $\mathit{succ}$ and $\mathit{double}$ \emph{without} hiding described in the introduction.   
We shall formalize the restriction on occurrences of internal O-moves by the axiom DP2 in Definition~\ref{DefGames}.

Note that the additional axioms for dynamic areas are intuitively natural:
\begin{itemize}

\item M requires the set $\pi_1 (M_G)$ to be finite so that each move is distinguishable, which is not required in \cite{yamada2016dynamic} but necessary to define `effective computability';

\item L requires the least upper bound $\mu(G)$ to be finite as it is conceptually natural and technically necessary for \emph{concatenation} $\ddagger$ of games (Definition~\ref{DefConcatenationOfGames});

%\item Intuitively, $\lambda \textsf{2}$ is an immediate consequence if every initial move is generated by \emph{concatenation} of games (see Definition~\ref{DefConcatenationOfGames}); technically, it is to define \emph{generalized product} of games (see Definition~\ref{DefPairingOfGames})

\item E1 adds $\lambda_G^{\mathbb{N}}(m) = 0$ for all $m \in M_G^{\mathsf{Init}}$ as O cannot `see' internal moves, and thus he cannot initiate a play with an internal move;

\item E2 additionally requires the priority orders between a `QA-pair' to be the same since otherwise outputs of the hiding operation would not be well-defined;

\item E4 states that only P can perform a move for a previous move if they have different priority orders because internal moves are `invisible' to O (as we shall see, if $\lambda_G^{\mathbb{N}}(m_1) = k_1 < k_2 = \lambda_G^{\mathbb{N}}(m_2)$, then after the $k_1$-many iteration of the hiding operation, $m_1$ and $m_2$ become external and internal, respectively, i.e., the I/E-parity of moves is \emph{relative}, which is why E4 is not only concerned with I/E-parity but more fine-grained priority orders);
%different degrees of internality for internal moves are `invisible' to O (as we shall see, if $\lambda_G^{\mathbb{N}}(m_1) = k_1 < k_2 = \lambda_G^{\mathbb{N}}(m_2)$, then after the $k_1$-many iteration of the \emph{hiding operation}, $m_1$ and $m_2$ become external and internal, respectively, i.e., the I/E-parity of moves is \emph{relative}, which is why E4 is not only concerned with I/E-parity but more fine-grained degrees of internality).

\item D requires that each internal P-move $p \in M_G^{\mathsf{PInt}}$ and its dummy $\Delta_G(p) \in M_G^{\mathsf{OInt}}$ may differ only in their inner tags since the latter is a `dummy' of the former (again, it reflects the informal example in the introduction); also, $\Delta_G(p)$ is `effectively' obtainable from $p$ by a \emph{finitary} calculation $\delta_G$ on inner tags.
%required to be obtainable from that of $p$ by some \emph{finitary} calculation $\delta_G$ so that $\Delta_G(p)$ is `effectively computable' from $p$.  

\end{itemize}

%Conceptually, a move $[m]_{\boldsymbol{e}}$ is an `action' of doing $m$ plus `writing out' $\boldsymbol{e}$.

\begin{convention}
From now on, \emph{\bfseries arenas} refer to \emph{dynamic} arenas by default.
\end{convention}

As explained previously, an interaction between P and O in a game is represented by a finite sequence of moves that satisfies certain axioms (under the name of \emph{(valid) positions}; see Definition~\ref{DefGames}).
However, strictly speaking, we need to equip such sequences with an additional structure, called \emph{justifiers} or \emph{pointers}, to distinguish similar but different computational processes (see, e.g., \cite{abramsky1999game} for this point):

\begin{notation}
The $i^\text{th}$-occurrence of an element $a$ in a sequence $\boldsymbol{s}$ is temporarily (see the convention below) written $a[i]$.
\end{notation}

\begin{definition}[J-sequences \cite{hyland2000full,abramsky1999game}]
\label{DefJSequences}
A \emph{\bfseries justified (j-) sequence} of an arena $G$ is a finite sequence $\boldsymbol{s} \in M_G^\ast$ in which each occurrence $n[i]$ of a non-initial move $n$ is associated with a unique occurrence $m[j]$ of a move $m$ in $\boldsymbol{s}$, called the \emph{\bfseries justifier} of $n[i]$ in $\boldsymbol{s}$, that occurs previously in $\boldsymbol{s}$ and satisfies $m \vdash_G n$.
In this case, we say that $n[i]$ is \emph{\bfseries justified} by $m[j]$, or equivalently there is a \emph{\bfseries pointer} from $n[i]$ to $m[j]$.
\end{definition}

\begin{convention}
In the rest of the paper, we simply say `an occurrence $m$ (of a move)' instead of `an occurrence $m[i]$ of a move $m$'.
This abuse of notation does not bring any serious confusion in practice.
Moreover, we call an occurrence of an initial (resp. non-initial) move an \emph{\bfseries initial occurrence} (resp. a \emph{\bfseries non-initial occurrence}).
\end{convention}

\begin{notation}
We write $\mathcal{J}_{\boldsymbol{s}}(n)$ for the justifier of a non-initial occurrence $n$ in a j-sequence $\boldsymbol{s}$, where $\mathcal{J}_{\boldsymbol{s}}$ may be thought of as the `function of pointers in $\boldsymbol{s}$', and $\mathscr{J}_G$ for the set of all j-sequences of an arena $G$.
We write $\boldsymbol{s} = \boldsymbol{t}$ for any $\boldsymbol{s}, \boldsymbol{t} \in \mathscr{J}_G$ if $\boldsymbol{s}$ and $\boldsymbol{t}$ are the same j-sequence of $G$, i.e., the same sequence with the same pointers.
\end{notation}

The idea is that each non-initial occurrence in a j-sequence must be performed for a specific previous occurrence, viz., its \emph{justifier}.  
Since the present paper is not concerned with a \emph{faithful} interpretation of programs, one may wonder if justifiers would play any  important role in the rest of the paper; however, they \emph{do} in a novel manner: They allow P to `effectively' collect, from the history of previous moves, a bounded number of necessary ones, as we shall see in Section~\ref{SubsectionViableStrategies}.

Note that the first element $m$ of each non-empty j-sequence $m \boldsymbol{s} \in \mathscr{J}_G$ must be initial; we particularly call $m$ the \emph{\bfseries opening occurrence} of $m \boldsymbol{s}$.
Clearly, an opening occurrence must be an initial occurrence, but not necessarily vice versa. 

Let us now consider justifiers, j-sequences and arenas from the `external viewpoint' (Definitions~\ref{DefExternalJustifiers}, \ref{DefHidingOperationOnJsequences} and \ref{DefExternalArenas} below):
\begin{definition}[J-subsequences]
\label{DefJSubsequences}
Given an arena $G$ and a j-sequence $\boldsymbol{s} \in \mathscr{J}_G$, a \emph{\bfseries j-subsequence} of $\boldsymbol{s}$ is a j-sequence $\boldsymbol{t} \in \mathscr{J}_G$ such that $\boldsymbol{t}$ is a subsequence of $\boldsymbol{s}$, and $\mathcal{J}_{\boldsymbol{t}}(n) = m$ iff there exist occurrences $m_1, m_2, \dots, m_k$ ($k \in \mathbb{N}$) in $\boldsymbol{s}$ eliminated in $\boldsymbol{t}$ such that $\mathcal{J}_{\boldsymbol{s}}(n) = m_1 \wedge \mathcal{J}_{\boldsymbol{s}}(m_1) = m_2 \wedge \dots \wedge \mathcal{J}_{\boldsymbol{s}}(m_{k-1}) = m_k \wedge \mathcal{J}_{\boldsymbol{s}}(m_{k}) = m$.
\end{definition}

\begin{definition}[External justifiers \cite{yamada2016dynamic}]
\label{DefExternalJustifiers}
Let $G$ be an arena, $\boldsymbol{s} \in \mathscr{J}_G$ and $d \in \mathbb{N} \cup \{ \omega \}$.
Each non-initial occurrence $n$ in $\boldsymbol{s}$ has a unique sequence of justifiers $m m_1 m_2 \dots m_k n$ $(k \geqslant 0)$, i.e., $\mathcal{J}_{\boldsymbol{s}}(n) = m_k$, $\mathcal{J}_{\boldsymbol{s}}(m_k) = m_{k-1}$, \dots, $\mathcal{J}_{\boldsymbol{s}}(m_{2}) = m_1$ and $\mathcal{J}_{\boldsymbol{s}}(m_1) = m$, such that $\lambda_G^{\mathbb{N}}(m) = 0 \vee \lambda_G^{\mathbb{N}}(m) > d$ and $0 < \lambda_G^{\mathbb{N}}(m_i) \leqslant d$ for $i = 1, 2, \dots, k$. $m$ is called the \emph{\bfseries $\boldsymbol{d}$-external justifier} of $n$ in $\boldsymbol{s}$, and written $\mathcal{J}_{\boldsymbol{s}}^{\circleddash d}(n)$.
\end{definition}

Note that $d$-external justifiers are a simple generalization of justifiers: $0$-external justifiers coincide with justifiers.
$d$-external justifiers are intended to be justifiers after the $d$-times iteration of the hiding operation $\mathcal{H}$, as we shall see shortly.

%Now, let us recall the \emph{hiding operations} on j-sequences (Definition~\ref{DefHidingOperationOnJsequences}) and on arenas (Definition~\ref{DefExternalArenas}), respectively:
\begin{definition}[External j-subsequences \cite{yamada2016dynamic}]
\label{DefHidingOperationOnJsequences}
Given an arena $G$, $\boldsymbol{s} \in \mathscr{J}_G$ and $d \in \mathbb{N} \cup \{ \omega \}$, the \emph{\bfseries $\boldsymbol{d}$-external justified (j-) subsequence} $\mathcal{H}^d_G(\boldsymbol{s})$ of $\boldsymbol{s}$ is obtained from $\boldsymbol{s}$ by deleting occurrences of internal moves $m$ such that $0 < \lambda_G^{\mathbb{N}}(m) \leqslant d$ and equipping the resulting subsequence of $\boldsymbol{s}$ with pointers $\mathcal{J}_{\boldsymbol{s}}^{\circleddash d}$\footnote{Strictly speaking, $\mathcal{J}_{\mathcal{H}^d_G(\boldsymbol{s})}$ is the obvious \emph{restriction} of $\mathcal{J}_{\boldsymbol{s}}^{\circleddash d}$.}.
\end{definition}

\if0
\begin{definition}[Hiding operation on justified sequences]
Let $G$ be an arena, and $d \in \mathbb{N} \cup \{ \omega \}$. We define the \emph{\bfseries $\boldsymbol{d}$-hiding operation} $\mathcal{H}_G^d$ on justified sequences in $G$ by $\boldsymbol{s} \mapsto \mathcal{H}_G^d(\boldsymbol{s})$.
\end{definition}
\fi

%We are now ready to define the \emph{hiding operation on arenas}:
\begin{definition}[External arenas \cite{yamada2016dynamic}]
\label{DefExternalArenas}
Let $G$ be an arena, and assume $d \in \mathbb{N} \cup \{ \omega \}$. 
The \emph{\bfseries $\boldsymbol{d}$-external arena} $\mathcal{H}^d(G)$ of $G$ is defined by:
\begin{itemize}

\item $M_{\mathcal{H}^d(G)} \stackrel{\mathrm{df. }}{=} \{ m \in M_G \mid \lambda_G^{\mathbb{N}}(m) = 0 \vee  \lambda_G^{\mathbb{N}}(m) > d \ \! \}$;

\item $\lambda_{\mathcal{H}^d(G)} \stackrel{\mathrm{df. }}{=} \lambda_G^{\circleddash d} \upharpoonright M_{\mathcal{H}^d(G)}$, where $\lambda_G^{\circleddash d} \stackrel{\mathrm{df. }}{=}  \langle \lambda_G^{\mathsf{OP}}, \lambda_G^{\mathsf{QA}}, m \mapsto \lambda_G^{\mathbb{N}} (m) \circleddash d \rangle$ and $n \circleddash d \stackrel{\mathrm{df. }}{=} \begin{cases} n - d &\text{if $n \geqslant d$} \\ 0 &\text{otherwise} \end{cases}$ for all $n \in \mathbb{N}$;

\item $m \vdash_{\mathcal{H}^d(G)} \! n \stackrel{\mathrm{df. }}{\Leftrightarrow} \exists k \in \mathbb{N}, m_1, m_2, \dots, m_{2k-1}, m_{2k} \in M_G \setminus M_{\mathcal{H}^d(G)} . \ \! m \vdash_G m_1 \wedge m_1 \vdash_G m_2 \wedge \dots \wedge m_{2k-1} \vdash_G m_{2k} \wedge m_{2k} \vdash_G n$ ($\Leftrightarrow m \vdash_G n$ if $k = 0$);

\item $\Delta_{\mathcal{H}^d(G)}  \stackrel{\mathrm{df. }}{=} \Delta_G \upharpoonright M_{\mathcal{H}^d(G)}$.

\end{itemize}
\end{definition}
Thus, $\mathcal{H}^d(G)$ is obtained from $G$ by deleting all internal moves $m$ such that $0 < \lambda_G^{\mathbb{N}}(m) \leqslant d$, decreasing by $d$ the priority orders of the remaining moves, and `concatenating' the enabling relation to form the `$d$-external' one. 

\begin{convention}
For each $d \in \mathbb{N} \cup \{ \omega \}$, we regard $\mathcal{H}^d$ as an operation on arenas $G$, called the \emph{\bfseries $\boldsymbol{d}$-hiding operation} on arenas, and $\mathcal{H}_G^d$ as an operation on j-sequences of $G$, called the \emph{\bfseries $\boldsymbol{d}$-hiding operation} on j-sequences.
\end{convention}

\begin{lemma}[Closure of arenas and j-sequences under hiding \cite{yamada2016dynamic}]
\label{LemClosureOfArenasAndJSequencesUnderHiding}
If $G$ is an arena, then, for all $d \in \mathbb{N} \cup \{ \omega \}$, so is $\mathcal{H}^d(G)$ and $\mathcal{H}_G^d(\boldsymbol{s}) \in \mathscr{J}_{\mathcal{H}^d(G)}$ for all $\boldsymbol{s} \in \mathscr{J}_G$. 
Also, $\underbrace{\mathcal{H}^1 \circ \mathcal{H}^1 \cdots \circ \mathcal{H}^1}_i(G) = \mathcal{H}^i (G)$ and $\mathcal{H}^1_{\mathcal{H}^{i-1}(G)} \circ \mathcal{H}^1_{\mathcal{H}^{i-2}(G)} \circ \cdots \circ \mathcal{H}^1_{\mathcal{H}^{1}(G)} \circ \mathcal{H}^1_{G}(\boldsymbol{s}) = \mathcal{H}_G^i(\boldsymbol{s})$ for any $i \in \mathbb{N}$ (it means $G = G$ and $\boldsymbol{s} = \boldsymbol{s}$ if $i = 0$), arena $G$ and $\boldsymbol{s} \in \mathscr{J}_G$.
\end{lemma}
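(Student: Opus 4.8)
The plan is to split the statement into three claims and dispatch them in turn: first, that $\mathcal{H}^d(G)$ is again a (dynamic) arena for every $d \in \mathbb{N} \cup \{ \omega \}$; second, that $\mathcal{H}^d_G(\boldsymbol{s}) \in \mathscr{J}_{\mathcal{H}^d(G)}$ whenever $\boldsymbol{s} \in \mathscr{J}_G$; and third, the two iteration identities. The guiding observation, recorded already in the remark after Definition~\ref{DefArenas}, is that a dynamic arena here is a dynamic arena of \cite{yamada2016dynamic} equipped with the extra datum of a dummy function $\Delta_G$, and that $\mathcal{H}^d$ acts on the underlying triple $(M_G, \lambda_G, \vdash_G)$ exactly as the hiding operation of \cite{yamada2016dynamic} does. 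Hence for every part of the statement that does not mention the new axioms \textsc{(M)} and \textsc{(D)}, the corresponding closure lemma of \cite{yamada2016dynamic} applies and its proof transfers verbatim; the genuinely new work is (i) to verify \textsc{(M)} and \textsc{(D)} for $\mathcal{H}^d(G)$, (ii) to note that $\Delta_G$ does not enter the definition of a j-sequence at all, so the second claim needs no new argument, and (iii) to note that $\Delta_G$ contributes only trivially to the iteration identities.

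Verifying \textsc{(M)} is immediate: $M_{\mathcal{H}^d(G)} \subseteq M_G$, so $\pi_1(M_{\mathcal{H}^d(G)}) \subseteq \pi_1(M_G)$ is a subset of a finite set. For \textsc{(D)} the key point is that axiom \textsc{(D)} for $G$ forces $\lambda_G^{\mathbb{N}}(\Delta_G(p)) = \lambda_G^{\mathbb{N}}(p)$ for each $p \in M_G^{\mathsf{PInt}}$, so a P-internal move and its dummy have the same priority order and are therefore retained by the $d$-hiding (priority $> d$) or deleted by it (priority $\leqslant d$) together; a retained internal move moreover keeps priority $\lambda_G^{\mathbb{N}}(p) \circleddash d \geqslant 1$, hence stays internal in $\mathcal{H}^d(G)$ and on the same $\mathsf{OP}$-side. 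It follows that $\Delta_G \upharpoonright M_{\mathcal{H}^d(G)}$ maps $M_{\mathcal{H}^d(G)}^{\mathsf{PInt}}$ into $M_{\mathcal{H}^d(G)}^{\mathsf{OInt}}$; it is injective as a restriction of an injection, and surjective because any O-internal move $o$ of $\mathcal{H}^d(G)$ is an O-internal move of $G$ whose $\Delta_G$-preimage has the same priority as $o$ and is hence retained. The remaining conditions of \textsc{(D)} --- equality of substances, of outer tags, of $\mathsf{QA}$-labels, of (now shifted) priority orders, and $\boldsymbol{u} = \delta_G(\boldsymbol{t})$ --- all transfer directly, reusing the same finite $\delta_G$ (shifting both sides of an equation by $d$ preserves it). The case $d = \omega$ is trivial, since then both $M_{\mathcal{H}^\omega(G)}^{\mathsf{PInt}}$ and $M_{\mathcal{H}^\omega(G)}^{\mathsf{OInt}}$ are empty.

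For the axioms inherited from \cite{yamada2016dynamic} I would just recall the shape of the arguments, since they explain the form of Definition~\ref{DefExternalArenas}: \textsc{(L)} holds because priorities only decrease; \textsc{(E1)} because an intermediate hidden move cannot be $\star$-enabled (it is internal), so $\star \vdash_{\mathcal{H}^d(G)} n$ reduces to $\star \vdash_G n$ with $n$ external; \textsc{(E3)} because a chain $m \vdash_G m_1 \vdash_G \cdots \vdash_G m_{2k} \vdash_G n$ has an odd number of steps, each flipping $\mathsf{OP}$-parity by \textsc{(E3)} for $G$; \textsc{(E2)} because by \textsc{(E2)} for $G$ the immediate $\vdash_G$-predecessor of an answer has the same priority order, which is $0$ or $> d$, hence not a hidden move, forcing $k = 0$; and \textsc{(E4)} by locating the first step along the chain at which the priority order changes, which must be the step $m \vdash_G m_1$ (otherwise a hidden move would inherit $m$'s external priority), whence \textsc{(E4)} for $G$ gives $\lambda_G^{\mathsf{OP}}(m) = \mathsf{O}$. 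For the second claim, $\mathcal{H}^d_G(\boldsymbol{s})$ carries the pointers $\mathcal{J}_{\boldsymbol{s}}^{\circleddash d}$ (Definition~\ref{DefHidingOperationOnJsequences}), and each retained non-initial occurrence $n$ is connected to its $d$-external justifier $m$ by a chain of justifiers through hidden moves; applying \textsc{(E4)} at its two ends shows both $m$ and its last hidden predecessor are O-moves, so by the \textsc{(E3)}-alternation the chain has an even number of hidden moves, matching the shape of $\vdash_{\mathcal{H}^d(G)}$ --- exactly the argument of \cite{yamada2016dynamic}, untouched by $\Delta_G$.

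Finally, the iteration identities follow by induction on $i$ from the single composition law $\mathcal{H}^1 \circ \mathcal{H}^{i-1} = \mathcal{H}^i$ (and its j-sequence analogue), which unwinds to the statements that retaining every move of priority $0$ or $> i-1$ and then every move of shifted priority $0$ or $> 1$ retains exactly every move of priority $0$ or $> i$ and shifts the survivors by $i$ in total; that a concatenation of $\vdash_G$-chains through hidden moves is again such a chain; that nested restrictions of $\Delta_G$ compose to a single restriction; and that the $d$-external justifier computed in stages agrees with the one computed in a single step, by the chain-of-justifiers characterisation of $\mathcal{J}_{\boldsymbol{s}}^{\circleddash d}$. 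This is again the corresponding statement of \cite{yamada2016dynamic} with the $\Delta_G$-part trivial. The one place where genuine care is needed --- and the only real obstacle --- is the compatibility of the dummy structure with the priority shift in the verification of \textsc{(D)}: that a P-internal move and its dummy are hidden in lockstep and land on opposite $\mathsf{OP}$-sides with matching internal priorities. This rests squarely on the equation $\lambda_G^{\mathbb{N}}(\Delta_G(p)) = \lambda_G^{\mathbb{N}}(p)$ built into \textsc{(D)}; without it, $\mathcal{H}^d$ could break the bijection. Everything else is either imported from \cite{yamada2016dynamic} or routine parity and transitivity bookkeeping.
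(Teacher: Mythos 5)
Your proposal is correct and follows essentially the same route as the paper: delegate everything already established in \cite{yamada2016dynamic} to that reference, and verify that the dummy function restricts to a well-defined bijection on $\mathcal{H}^d(G)$, the key point being that $\lambda_G^{\mathbb{N}}(\Delta_G(p)) = \lambda_G^{\mathbb{N}}(p)$ (from axiom \textsc{(D)}) forces $p$ and $\Delta_G(p)$ to be retained or deleted in lockstep. Your write-up is more detailed than the paper's (which treats the inherited axioms and the trivial check of \textsc{(M)} as already done or obvious), but the substance is identical.
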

\begin{proof}
We need to consider the additional structure of dummy functions; everything else has been proved in \cite{yamada2016dynamic}.
Let $G$ be an arena, and $d \in \mathbb{N} \cup \{ \omega \}$.
For each $p \in M_G^{\mathsf{PInt}}$, we clearly have $p \in M_{\mathcal{H}^d(G)} \Leftrightarrow \Delta_G(p) \in M_{\mathcal{H}^d(G)}$; thus, $\Delta_{\mathcal{H}^d(G)}$ is a well-defined bijection $M_{\mathcal{H}^d(G)}^{\mathsf{PInt}} \stackrel{\sim}{\to} M_{\mathcal{H}^d(G)}^{\mathsf{OInt}}$.
Finally, the axiom D on $\Delta_{\mathcal{H}^d(G)}$ clearly follows from that on $G$, completing the proof. 
\end{proof}

\begin{convention}
Thanks to Lemma~\ref{LemClosureOfArenasAndJSequencesUnderHiding}, henceforth we regard the $i$-hiding operations $\mathcal{H}^i$ and $\mathcal{H}^i_G$ as the $i$-times iteration of the $1$-hiding operations $\mathcal{H}^1$ and $\mathcal{H}^1_G$, respectively, for all $i \in \mathbb{N}$. 
For this reason, we write $\mathcal{H}$ and $\mathcal{H}_G$ for $\mathcal{H}^1$ and $\mathcal{H}^1_G$, respectively, and call them the \emph{\bfseries hiding operations} (on arenas and j-sequences, respectively). 
\end{convention}

Next, let us recall the notion of `relevant part' of previous moves, called \emph{views}:
\begin{definition}[Views \cite{hyland2000full,abramsky1999game,mccusker1998games}] 
\label{DefViews}
Given a j-sequence $\boldsymbol{s}$ of an arena $G$, the \emph{\bfseries Player (P-) view} $\lceil \boldsymbol{s} \rceil_G$ and the \emph{\bfseries Opponent (O-) view} $\lfloor \boldsymbol{s} \rfloor_G$ (we often omit the subscript $G$) are defined by induction on the length of $\boldsymbol{s}$ as follows: 
\begin{itemize}

\item$\lceil \boldsymbol{\epsilon} \rceil_G \stackrel{\mathrm{df. }}{=} \boldsymbol{\epsilon}$;

\item $\lceil \boldsymbol{s} m \rceil_G \stackrel{\mathrm{df. }}{=} \lceil \boldsymbol{s} \rceil_G . m$ if $m$ is a P-move;

\item $\lceil \boldsymbol{s} m \rceil_G \stackrel{\mathrm{df. }}{=} m$ if $m$ is initial;

\item $\lceil \boldsymbol{s} m \boldsymbol{t} n \rceil_G \stackrel{\mathrm{df. }}{=} \lceil \boldsymbol{s} \rceil_G . m n$ if $n$ is an O-move with $\mathcal{J}_{\boldsymbol{s} m \boldsymbol{t} n}(n) = m$;

\item $\lfloor \boldsymbol{\epsilon} \rfloor_G \stackrel{\mathrm{df. }}{=} \boldsymbol{\epsilon}$;

\item $\lfloor \boldsymbol{s} m \rfloor_G \stackrel{\mathrm{df. }}{=} \lfloor \boldsymbol{s} \rfloor_G . m$ if $m$ is an O-move;

\item $\lfloor \boldsymbol{s} m \boldsymbol{t} n \rfloor_G \stackrel{\mathrm{df. }}{=} \lfloor \boldsymbol{s} \rfloor_G . m n$ if $n$ is a P-move with $\mathcal{J}_{\boldsymbol{s} m \boldsymbol{t} n}(n) = m$

\end{itemize}
where the justifiers of the remaining occurrences in $\lceil \boldsymbol{s} \rceil_G$ (resp. $\lfloor \boldsymbol{s} \rfloor_G$) are unchanged if they occur in $\lceil \boldsymbol{s} \rceil_G$ (resp. $\lfloor \boldsymbol{s} \rfloor_G$) and undefined otherwise. 
A \emph{\bfseries view} is either a P-view or an O-view. 
\end{definition}

\if0
\begin{notation}
We omit the subscript $G$ in $\lceil \boldsymbol{s} \rceil_G$, $\lfloor \boldsymbol{s} \rfloor_G$ when the game $G$ is obvious.
\end{notation}
\fi

The idea behind this definition is as follows.
For a j-sequence $\boldsymbol{t}m$ of an arena $G$ such that $m$ is a P-move (resp. an O-move), the P-view $\lceil \boldsymbol{t} \rceil_G$ (resp. the O-view $\lfloor \boldsymbol{t} \rfloor_G$) is intended to be the currently `relevant part' of $\boldsymbol{t}$ for P (resp. O). That is, P (resp. O) is concerned only with the last O-move (resp. P-move), its justifier and that justifier's `concern', i.e., P-view (resp. O-view), which then recursively proceeds.
As explained in \cite{abramsky1999game}, strategies (Definition~\ref{DefStrategies}) that model computation without \emph{state} see only P-views, not entire histories of previous moves, as inputs; they are called \emph{innocent} strategies (Definition~\ref{DefInnocence}).
In this sense, innocence captures \emph{state-freeness} of strategies.
In this paper, however, P-views play a different yet fundamental role for our notion of `effective computability' in Section~\ref{SubsectionViableStrategies}.
%As the first step of a game-semantic approach to `effective computability', we shall focus on innocent strategies in this paper.

We are now ready to define:
\begin{definition}[Dynamic legal positions \cite{yamada2016dynamic}]
A \emph{\bfseries dynamic legal position} of an arena $G$ is a j-sequence $\boldsymbol{s} \in \mathscr{J}_G$ that satisfies:
\begin{itemize}

\item \textsc{(Alternation)} If $\boldsymbol{s} = \boldsymbol{s}_1 m n \boldsymbol{s}_2$, then $\lambda_G^\mathsf{OP} (m) \neq \lambda_G^\mathsf{OP} (n)$;

%\item \textsf{\bfseries Bracketing.} If $\boldsymbol{t} q \boldsymbol{u} a \preceq \boldsymbol{s}$, where the question $q$ is ``answered'' by the answer $a$ (i.e., $q$ justifies $a$), then there is no ``unanswered'' question in $\boldsymbol{u}$.

\item \textsc{(Generalized Visibility)} If $\boldsymbol{s} = \boldsymbol{t} m \boldsymbol{u}$ with $m$ non-initial and $d \in \mathbb{N} \cup \{ \omega \}$ satisfy $\lambda_G^{\mathbb{N}}(m) = 0 \vee \lambda_G^{\mathbb{N}}(m) > d$, then $\mathcal{J}^{\circleddash d}_{\boldsymbol{s}}(m)$ occurs in $\lceil \mathcal{H}_G^d(\boldsymbol{t}) \rceil_{\mathcal{H}^d(G)}$ if $m$ is a P-move, and it occurs in $\lfloor \mathcal{H}_G^d(\boldsymbol{t}) \rfloor_{\mathcal{H}^d(G)}$ if $m$ is an O-move;

\item \textsc{(IE-switch)} If $\boldsymbol{s} = \boldsymbol{s}_1 m n \boldsymbol{s}_2$ with $\lambda_G^{\mathbb{N}}(m) \neq \lambda_G^{\mathbb{N}}(n)$, then $m$ is an O-move.

\end{itemize} 
\end{definition}

\begin{notation}
We write $\mathscr{L}_G$ for the set of all dynamic legal positions of an arena $G$.
\end{notation}

Recall that a \emph{static legal position} defined in \cite{abramsky1999game} is a j-sequence that satisfies alternation and \emph{visibility}, i.e., generalized visibility only for $d = 0$ \cite{hyland2000full,abramsky1999game,mccusker1998games}, which is technically to guarantee that the P-view and the O-view of a j-sequence are again j-sequences and conceptually to ensure that the justifier of each non-initial occurrence belongs to the `relevant part' of the history of previous moves.
Static legal positions are to specify the basic rules of a \emph{static game} in the sense that every \emph{(valid) position} of the game must be a static legal position \cite{abramsky1999game}:
\begin{itemize}

\item In a position of the game, O always performs the first move by a question, and then P and O alternately play (by alternation), in which every non-initial move is made for a specific previous move (by justification);

%\item Moreover, we require that an answer must be made for the most recent unanswered question (bracketing).  

\item The justifier of each non-initial occurrence belongs to the `relevant part', i.e., the view, of the previous moves (by visibility) in a position. %so that the P-, O-views of positions are again legal positions (see \cite{mccusker1998games,hyland2000full} for the detail of this point).
\end{itemize}

Similarly, dynamic legal positions are to specify the basic rules of a \emph{dynamic game} (Definition~\ref{DefGames}). They are static legal positions that satisfy additional axioms:
\begin{itemize}

\item Generalized visibility is a generalization of visibility; it requires that visibility holds after any iteration of the \emph{hiding operations} on arenas and j-sequences for Theorem~\ref{ThmClosureOfGamesUnderHiding} below;

\item IE-switch states that only P can change a priority order during a play because internal moves are `invisible' to O, where the same remark as in E4 is applied for its finer distinction of priority orders than the mere I/E-parity.

\end{itemize}

Note that a dynamic legal position in which no internal move occurs is equivalent to a static legal position.

\begin{convention}
From now on, \emph{\bfseries legal positions} refer to \emph{dynamic} legal positions by default.
\end{convention}
\if0
In a legal position, there may be several initial occurrences; the legal position consists of \emph{chains of justifiers}, each of which is initiated by such an initial occurrence, and chains with the same initial move form a \emph{thread}:
\begin{definition}[Threads \cite{abramsky1999game,mccusker1998games}]
Let $G$ be an arena, and $\boldsymbol{s} \in \mathscr{L}_G$. Assume that $m$ is an occurrence of a move in $\boldsymbol{s}$. The \emph{\bfseries chain of justifiers} from $m$ is a sequence $m_k \dots m_2 m_1 m \in M_G^\ast$ such that $k \geqslant 0$, $\mathcal{J}_{\boldsymbol{s}}(m) = m_{1}, \mathcal{J}_{\boldsymbol{s}}(m_{1}) = m_{2}, \dots, \mathcal{J}_{\boldsymbol{s}}(m_{k-1}) = m_k$, and $m_k$ is initial. In this case, $m$ (also $m_i$ for all $i \in \{ 1, 2, \dots, k \}$) is said to be \emph{\bfseries hereditarily justified} by $m_k$, and the subsequence of $\boldsymbol{s}$ consisting of the chains of justifiers in which $m_k$ occurs is called the \emph{\bfseries thread} of $m_k$ in $\boldsymbol{s}$. 
\end{definition}
\fi
\if0
\begin{notation}
We write $\boldsymbol{s} \upharpoonright I$, where $\boldsymbol{s} \in \mathscr{L}_G$ for some arena $G$ and $I$ is a set of initial occurrences in $\boldsymbol{s}$, for the subsequence of $\boldsymbol{s}$ consisting of threads of initial occurrences in $I$, and define $\boldsymbol{s} \upharpoonright m \stackrel{\mathrm{df. }}{=} \boldsymbol{s} \upharpoonright \{ m \}$.
\end{notation}
\fi

\subsubsection{Games}
We are now ready to recall \emph{dynamic games}:
\begin{definition}[Dynamic games \cite{yamada2016dynamic}]
\label{DefGames}
A \emph{\bfseries dynamic game} is a quintuple $G = (M_G, \lambda_G, \vdash_G, \Delta_G, P_G)$ such that the quadruple $(M_G, \lambda_G, \vdash_G, \Delta_G)$ is an arena, and $P_G$ is a subset of $\mathscr{L}_G$ whose elements are called \emph{\bfseries (valid) positions} of $G$ that satisfies:
\begin{itemize}

\item \textsc{(P1)} $P_G$ is non-empty and \emph{prefix-closed} (i.e., $\boldsymbol{s}m \in P_G \Rightarrow \boldsymbol{s} \in P_G$);

%\item \textsc{(P2)} If $\boldsymbol{s} \in P_G$ and $I$ is a set of initial occurrences in $\boldsymbol{s}$, then $\boldsymbol{s} \upharpoonright I \in P_G$;

\item \textsc{(DP2)} If $\boldsymbol{s} = \boldsymbol{t} . p . o' . \boldsymbol{u} . p' . o$ (resp. $\boldsymbol{s} = \boldsymbol{t} . o' . \boldsymbol{u} . p' . o$), where $o$ is an internal O-move, and $o'$ is an internal (resp. external) O-move such that $o' = \mathcal{J}_{\boldsymbol{s}}(p')$, then $o = \Delta_G(p')$ and $\mathcal{J}_{\boldsymbol{s}}(o) = p$ (resp. $\mathcal{J}_{\boldsymbol{s}}(o) = p'$).

%\item \textsc{(DP2Ext)} If $\boldsymbol{s} o' \boldsymbol{t} p' o$, where $o$ is an internal O-move, $o'$ is the justifier of $o = \mathscr{D}_G(p')$ and $\mathcal{J}(o) = p'$.

%\item \textsc{(DP2)} For any $\boldsymbol{s}m, \boldsymbol{s}' m' \in P_G^{\mathsf{Odd}}$ and $i \in \mathbb{N}$ such that $i < \lambda_G^{\mathbb{N}}(m) = \lambda_G^{\mathbb{N}}(m')$, if $\mathcal{H}_G^i(\boldsymbol{s}) = \mathcal{H}_G^i(\boldsymbol{s}')$, then $m = m'$ and $\mathcal{J}_{\boldsymbol{s}m}^{\circleddash i}(m) = \mathcal{J}_{\boldsymbol{s}'m'}^{\circleddash i}(m')$.

\end{itemize}
A \emph{\bfseries play} of $G$ is a (finitely or infinitely) increasing (with respect to $\preceq$) sequence $(\boldsymbol{\epsilon}, m_1, m_1m_2, \dots)$ of valid positions of $G$.
\end{definition}

\if0
I.e., our games are dynamic games defined in \cite{yamada2016dynamic} equipped with \emph{identification of positions} that is to ignore permutations of `tags' for exponential as in \cite{abramsky2000full,mccusker1998games}.
Thus, a `reasonable' strategy should behave in the same manner up to the identification of positions of the underlying game; in fact, this is our motivation to introduce identifications of positions in our games\footnote{The game models \cite{abramsky2000full,mccusker1998games} need identifications of positions mainly for their \emph{full abstraction} results. In contrast, we shall use identifications of positions for a conceptual point, namely to show that our game-semantic model of computation consists only of `reasonable' strategies in the sense stated above.}, and we shall require this condition on our strategies (Definition~\ref{DefConsistency}).
Our definition of identification of positions requires stronger conditions than that of \cite{abramsky2000full,mccusker1998games}, but they make sense from the viewpoint that they identify positions up to permutations of `tags' for exponential (Definition~\ref{DefExponential}); also, they are compatible with the hiding operation (Theorem~\ref{ThmClosureOfGamesUnderHiding}).
\fi

\begin{remark}
In \cite{yamada2016dynamic}, each dynamic game $G$ is equipped with an equivalence relation $\simeq_G$ on its positions in order to ignore permutations of `tags' for exponential $!$ as in \cite{abramsky2000full} and Section 3.6 of \cite{mccusker1998games}.
Naturally, dynamic strategies $\sigma : G$ are identified up to $\simeq_G$, i.e., the equivalence class $[\sigma]$ of $\sigma$ with respect to $\simeq_G$ is a morphism in the category of dynamic games and strategies \cite{yamada2016dynamic}, which matches the syntactic equality on terms.
However, our notion of `effective computability' or \emph{viability} (Definition~\ref{DefViability}) is defined on dynamic strategies, not their equivalence classes, and our focus is not a (fully complete) interpretation of a programming language; thus, we do not have to take such equivalence classes at all.
Hence, for simplicity, we exclude such equivalence relations on positions from the structure of dynamic games in the present paper.

Of course, we may easily adopt the full definition of dynamic games $G$ (i.e., with $\simeq_G$) and equivalence classes $[\sigma]$ of dynamic strategies $\sigma : G$ as in \cite{yamada2016dynamic}: We may simply define $[\sigma]$ to be \emph{viable} if there is some viable representative $\tau \in [\sigma]$.

%Note that dynamic strategies which we present in this paper are all definable in the language PCF, and it is well-known that they respect the equivalence relation on positions of the underlying games \cite{abramsky2000full,mccusker1998games}. 
\end{remark}

Thus, dynamic games are \emph{static games} defined in \cite{abramsky1999game,mccusker1998games} except that their arenas are dynamic ones and they additionally satisfy the axiom DP2.
The axiom P1 talks about the natural phenomenon that each non-empty `moment' of a play must have the previous `moment'. 
In addition, by the axiom DP2 for dynamic games, internal O-moves must be performed as \emph{dummies} of the last internal P-moves, where the pointers specified by the axiom would make sense if one considers the example of composition of $\mathit{succ}$ and $\mathit{double}$ without hiding in the introduction. 
Conceptually, we impose the axiom because O cannot `see' internal moves, and thus the internal part of each play must be essentially P's calculation only; technically, it is to ensure \emph{external consistency} of dynamic strategies: Dynamic strategies behave always in the same way from the viewpoint of O, i.e., the external part of each play by a dynamic strategy does not depend on the internal part (see \cite{yamada2016dynamic} for the detail).

\begin{remark}
The axiom DP2 defined in \cite{yamada2016dynamic} just requires \emph{determinacy} of internal O-moves in each play (it is similar to determinacy of P-moves for strategies), which works for the purpose of the paper.
However, in the present paper, we are concerned with `effective computability' of strategies, and thus in particular computation of internal O-moves by P must be `effective' (since O cannot compute them). 
For this point, we have strengthened the axiom DP2 as above so that computation of internal O-moves becomes virtually nothing.
\end{remark}

\begin{convention}
Henceforth, \emph{\bfseries games} refer to \emph{dynamic} games by default. 
\end{convention}

%Next, let us recall a `substructure relation' between games:
\begin{definition}[Subgames \cite{yamada2016dynamic}]
A \emph{\bfseries subgame} of a game $G$ is a game $H$ that satisfies $M_H \subseteq M_G$, $\lambda_H = \lambda_G \upharpoonright M_H$, $\vdash_H \ \subseteq \ \vdash_G \cap \ (\{ \star \} \cup M_H) \times M_H$, $\Delta_H = \Delta_G \upharpoonright M_H$, $P_H \subseteq P_G$ and $\mu(H) = \mu(G)$.
In this case, we write $H \trianglelefteqslant G$.
\end{definition}
%The last condition $\mu(H) = \mu(G)$ of $H \trianglelefteqslant G$ is to preserve the subgame relation under \emph{concatenation} (Definition~\ref{DefConcatenationOfGames}); see Theorem~\ref{ThmConstructionsOnGames}.

\if0
Following the standard approach in game semantics to obtain a \emph{cartesian closed} structure \cite{hyland1997game,abramsky1999game,harmer2004innocent}, we shall focus on \emph{well-opened} games, i.e., games such that there is at most one initial move in each of their positions:
Additionally, for technical convenience, we will require that every initial move in a game has the empty outer tag $\boldsymbol{\epsilon}$.
Let us call games satisfying these two conditions \emph{finitely well-opened}:
\begin{definition}[Finite well-openness]
A game $G$ is \emph{\bfseries well-opened} if $\boldsymbol{s} . [m]_{\boldsymbol{e}} \in P_G$ with $[m]_{\boldsymbol{e}}$ initial implies $\boldsymbol{s} = \boldsymbol{\epsilon}$, and \emph{\bfseries finitely well-opened} if it is well-opened and $[n]_{\boldsymbol{f}} \in M_G^{\mathsf{Init}}$ implies $\boldsymbol{f} = \boldsymbol{\epsilon}$.
\end{definition}
\fi

Below, let us give some simple examples of games:

\begin{example}
The \emph{\bfseries terminal game} $T$ is defined by $T \stackrel{\mathrm{df. }}{=} (\emptyset, \emptyset, \emptyset, \emptyset, \{ \boldsymbol{\epsilon} \})$.
$T$ is the simplest game as its position is only the empty sequence $\boldsymbol{\epsilon}$.
\end{example}

\begin{example}
The \emph{\bfseries boolean game} $\boldsymbol{2}$ is defined by:
\begin{itemize}

\item $M_{\boldsymbol{2}} \stackrel{\mathrm{df. }}{=} \{ [\hat{q}], [\mathit{tt}], [\mathit{ff}] \}$;

\item $\lambda_{\boldsymbol{2}} : [\hat{q}] \mapsto (\mathsf{O}, \mathsf{Q}, 0), [\mathit{tt}] \mapsto (\mathsf{P}, \mathsf{A}, 0), [\mathit{ff}] \mapsto (\mathsf{P}, \mathsf{A}, 0)$;

\item $\vdash_{\boldsymbol{2}} \stackrel{\mathrm{df. }}{=} \{ (\star, [\hat{q}]), ([\hat{q}], [\mathit{tt}]), ([\hat{q}], [\mathit{ff}]) \}$;

\item $\Delta_{\boldsymbol{2}} \stackrel{\mathrm{df. }}{=} \emptyset$;

\item $P_{\boldsymbol{2}} \stackrel{\mathrm{df. }}{=} \mathsf{Pref}(\{ [\hat{q}] . [\mathit{tt}], [\hat{q}] . [\mathit{ff}] \})$, where each non-initial move is justified by $[\hat{q}]$.

%\item $\boldsymbol{s} \simeq_{\boldsymbol{2}} \boldsymbol{t} \stackrel{\mathrm{df. }}{\Leftrightarrow} \boldsymbol{s} = \boldsymbol{t}$.

\end{itemize}
There are just two maximal positions of $\boldsymbol{2}$: $[\hat{q}] . [\mathit{tt}]$ and $[\hat{q}] . [\mathit{ff}]$, where each answer (i.e., $[\mathit{tt}]$ or $[\mathit{ff}]$) is made for the respective initial question $[\hat{q}]$. 
The former (resp. the latter) is to represent the truth value \emph{true} (resp. \emph{false}).
\end{example}

\begin{example}
The \emph{\bfseries natural number game} $N$ is defined by:
\begin{itemize}

\item $M_{N} \stackrel{\mathrm{df. }}{=} \{ [\hat{q}] \} \cup \{ [n] \ \! | \ \! n \in \mathbb{N} \ \! \}$;

\item $\lambda_{N} : [\hat{q}] \mapsto (\mathsf{O}, \mathsf{Q}, 0), [n] \mapsto (\mathsf{P}, \mathsf{A}, 0)$;

\item $\vdash_{N} \stackrel{\mathrm{df. }}{=} \{ (\star, [\hat{q}]) \} \cup \{ ([\hat{q}], [n]) \ \! | \ \! n \in \mathbb{N} \ \! \}$;

\item $\Delta_{N} \stackrel{\mathrm{df. }}{=} \emptyset$;

\item $P_{N} \stackrel{\mathrm{df. }}{=} \mathsf{Pref}(\{ [\hat{q}] . [n] \ \! | \ \! n \in \mathbb{N} \ \! \})$, where $n$ is justified by $[\hat{q}]$.

%\item $\boldsymbol{s} \simeq_{N} \boldsymbol{t} \stackrel{\mathrm{df. }}{\Leftrightarrow} \boldsymbol{s} = \boldsymbol{t}$.

\end{itemize}
The position $[q] . [n]$ is to represent the natural number $n \in \mathbb{N}$, where the answer $[n]$ is made for the question $[q]$.
This is the formal definition of $N$ sketched in the introduction though we have slightly changed the notation for the moves.
\end{example}

This game $N$ of natural numbers is, though standard in the literature, almost the same as the set $\mathbb{N}$ of natural numbers except the trivial one-round communication between the participants. This point is unsatisfactory for at least two reasons:
\begin{enumerate}

\item It is difficult to define an intrinsic, non-inductive and non-axiomatic notion of `effective computability' of strategies on games generated from $N$ via the construction $\Rightarrow$ of \emph{function space} defined below since there is no intensional or `low-level' structure in $N$ (see, e.g.,  \cite{abramsky2014intensionality} for this point);

\item The game $N$ has almost the same mathematical structure as that of $\mathbb{N}$, and thus it contributes almost nothing to foundations of mathematics.

\end{enumerate}

Motivated by these points, we adopt the following `lazy' variant:
\begin{example}
\label{ExLazyNaturalNumbers}
The \emph{\bfseries lazy natural number game} $\mathcal{N}$ is defined by:
\begin{itemize}

\item $M_{\mathcal{N}} \stackrel{\mathrm{df. }}{=} \{ [\hat{q}], [q], [\mathit{yes}], [\mathit{no}] \}$;

\item $\lambda_{\mathcal{N}} : [\hat{q}] \mapsto (\mathsf{O}, \mathsf{Q}, 0), [q] \mapsto (\mathsf{O}, \mathsf{Q}, 0), [\mathit{yes}] \mapsto (\mathsf{P}, \mathsf{A}, 0), [\mathit{no}] \mapsto (\mathsf{P}, \mathsf{A}, 0)$;

\item $\vdash_{\mathcal{N}} \stackrel{\mathrm{df. }}{=} \{ (\star, [\hat{q}]), ([\hat{q}], [\mathit{no}]), ([\hat{q}], [\mathit{yes}]), ([q], [\mathit{no}]), ([q], [\mathit{yes}]), ([\mathit{yes}], [q]) \}$;

\item $\Delta_{\mathcal{N}} \stackrel{\mathrm{df. }}{=} \emptyset$;

\item $P_{\mathcal{N}} \stackrel{\mathrm{df. }}{=} \mathsf{Pref}(\{ [\hat{q}] . ([\mathit{yes}] . [q])^n . [\mathit{no}] \ \! | \ \! n \in \mathbb{N} \ \! \})$, where each non-initial occurrence is justified by the last occurrence.

%\item $\boldsymbol{s} \simeq_{\mathcal{N}} \boldsymbol{t} \stackrel{\mathrm{df. }}{\Leftrightarrow} \boldsymbol{s} = \boldsymbol{t}$.

\end{itemize}
We need two questions $[\hat{q}]$ and $[q]$, the initial and non-initial ones, respectively, for the axiom E1.
Intuitively, O asks by a question $[\hat{q}]$ or $[q]$ if P would like to `count one more', and P replies it by $[\mathit{yes}]$ if she would like to and by $[\mathit{no}]$ otherwise. Thus, for each $n \in \mathbb{N}$, the position $[\hat{q}] . ([\mathit{yes}] . [q])^n . [\mathit{no}]$ is to represent the number $n$, where each occurrence of an answer (i.e., $[\mathit{yes}]$ or $[\mathit{no}]$) is made for the last occurrence of a question (i.e., $[\hat{q}]$ or $[q]$).
Let us define $\underline{n} \stackrel{\mathrm{df. }}{=} \mathsf{Pref}(\{ [\hat{q}] . ([\mathit{yes}] . [q])^n . [\mathit{no}] \})^{\mathsf{Even}}$.

The game $\mathcal{N}$ defines natural numbers in an intuitively natural manner, namely as `counting processes', where our choice of notation for moves is inessential, i.e., $\mathcal{N}$ is \emph{syntax-independent}. 
Moreover, we may actually define it \emph{intrinsically}, i.e., without recourse to the set $\mathbb{N}$, by specifying its positions inductively: $[\hat{q}] . [\mathit{no}] \in P_{\mathcal{N}} \wedge ([\hat{q}] . \boldsymbol{s} . [\mathit{no}] \in P_{\mathcal{N}} \Rightarrow  [\hat{q}] . \boldsymbol{s} . [\mathit{yes}] . [q] . [\mathit{no}] \in P_{\mathcal{N}})$.
Thus, we may \emph{define} (rather than \emph{represent}) natural numbers to be the positions of $\mathcal{N}$ though we will not investigate foundational consequences of this definition in the present paper.

As we shall see, such step-by-step processes underlying natural numbers allow us to define `effective computability' of strategies in an intrinsic, non-inductive and non-axiomatic manner in Section~\ref{SubsectionViableStrategies}.
\end{example}

\begin{convention}
Henceforth, the \emph{\bfseries natural number game} refers to the \emph{lazy} variant $\mathcal{N}$.
\end{convention}

\if0
\begin{remark}
Of course, our choice of the natural number game $\mathcal{N}$ is far from canonical, and another variant, e.g., the binary presentation (where note that $\mathcal{N}$ corresponds to the unary presentation of natural numbers), would be certainly possible.  
\end{remark}
\fi

The following game will play a key role in Section~\ref{SubsectionViableStrategies} when we define `effective computability' of strategies:
\begin{example}
\label{ExTagGame}
The \emph{\bfseries (outer) tag game} $\mathcal{G}(\mathcal{T})$ is defined by:
\begin{itemize}

\item $M_{\mathcal{G}(\mathcal{T})} \stackrel{\mathrm{df. }}{=} \{ [\hat{q}_{\mathcal{T}}], [q_{\mathcal{T}}], [\sharp], [\prime], [\checkmark], [\langle], [\rangle] \}$, for which we define a function $\mathscr{D} : \sharp \mapsto \hbar, \prime \mapsto \ell, \langle \ \! \mapsto \Lbag, \rangle \mapsto \Rbag$ (n.b. this tells which outer tag each move represents);

\item $\lambda_{\mathcal{G}(\mathcal{T})} : [q] \mapsto (\mathsf{O}, \mathsf{Q}, 0), [a] \mapsto (\mathsf{P}, \mathsf{A}, 0)$, where $q \in \{ \hat{q}_{\mathcal{T}}, q_{\mathcal{T}} \}, a \in \{ \sharp, \prime, \checkmark, \langle, \rangle \}$;

\item $\vdash_{\mathcal{G}(\mathcal{T})} \stackrel{\mathrm{df. }}{=} \{ (\star, [\hat{q}_{\mathcal{T}}]) \} \cup \{ ([q], [a]) \mid q \in \{ \hat{q}_{\mathcal{T}}, q_{\mathcal{T}} \}, a \in \{ \sharp, \prime, \checkmark, \langle, \rangle \} \} \\ \cup \{ ([b], [q_{\mathcal{T}}]) \mid b \in \{ \sharp, \prime, \langle, \rangle \} \}$;

\item $\Delta_{\mathcal{G}(\mathcal{T})} \stackrel{\mathrm{df. }}{=} \emptyset$;

\item $P_{\mathcal{G}(\mathcal{T})} \stackrel{\mathrm{df. }}{=} \mathsf{Pref}(\{ [\hat{q}_{\mathcal{T}}] . [b_1] . [q_{\mathcal{T}}] . [b_2] \dots [q_{\mathcal{T}}] . [b_k] . [q_{\mathcal{T}}] . [\checkmark] \mid k \in \mathbb{N}, b_1, b_2, \dots, b_k \in \{ \sharp, \prime, \langle, \rangle \}, \mathscr{D}^\ast(b_1 b_2 \dots b_k) \in \mathcal{T} \ \! \})$, where each non-initial occurrence is justified by the last occurrence.

%\item $\boldsymbol{s} \simeq_{\mathcal{G}(\mathcal{T})} \boldsymbol{t} \stackrel{\mathrm{df. }}{\Leftrightarrow} \boldsymbol{s} = \boldsymbol{t}$.

\end{itemize}
Similarly to $\mathcal{N}$, each position $[\hat{q}_{\mathcal{T}}] . [b_1] . [q_{\mathcal{T}}] .  [b_2] \dots [q_{\mathcal{T}}] . [b_k] . [q_{\mathcal{T}}] . [\checkmark]$ of $\mathcal{G}(\mathcal{T})$ is to represent the sequence $b_1 b_2 \dots b_k$, which denotes the outer tag $\mathscr{D}^\ast(b_1 b_2 \dots b_k) \in \mathcal{T}$.
%Note that positions of $\mathcal{G}(\mathcal{T})$ may be also defined without recourse to the set $\mathbb{N}$.
\end{example}

\if0
\begin{notation}
Given a function $f : X \to Y$ and a game $G$ such that $M_G \subseteq X$, we write $f \upharpoonright G$, $f \downharpoonright G$ for the restrictions $f \upharpoonright M_G$, $f \upharpoonright (X \setminus M_G)$, respectively. 
\end{notation}
\fi

\if0
\begin{convention}
We often abbreviate moves $[m]$ with the empty outer tag $\boldsymbol{\epsilon}$ as $m$.
\end{convention}
\fi

Now, let us now recall the \emph{hiding operation} on games:
\begin{definition}[Hiding on games \cite{yamada2016dynamic}]
For each $d \in \mathbb{N} \cup \{ \omega \}$, we define the \emph{\bfseries $\boldsymbol{d}$-hiding operation} $\mathcal{H}^d$ on games as follows. Given a game $G$, the \emph{\bfseries $\boldsymbol{d}$-external game} $\mathcal{H}^d(G)$ of $G$ consists of the $d$-external arena $(M_{\mathcal{H}^d(G)}, \lambda_{\mathcal{H}^d(G)}, \vdash_{\mathcal{H}^d(G)}, \Delta_{\mathcal{H}^d(G)})$ of the arena $G$, and the set $P_{\mathcal{H}^d(G)} \stackrel{\mathrm{df. }}{=} \{ \mathcal{H}^d_G(\boldsymbol{s}) \mid \boldsymbol{s} \in P_G \ \! \}$ of positions. %and the identification of positions $\simeq_{\mathcal{H}^d(G)}$ defined by $\mathcal{H}^\Delta_G(\boldsymbol{s}) \simeq_{\mathcal{H}^d(G)} \mathcal{H}^\Delta_G(\boldsymbol{t})  \stackrel{\mathrm{df. }}{\Leftrightarrow} \boldsymbol{s} \simeq_G^d \boldsymbol{t}$.
A game $H$ is \emph{\bfseries normalized} if $\mathcal{H}^\omega(H) = H$, i.e., if $M_H$ does not contain any internal move.
\end{definition}

%We write $\mathcal{H}$ for $\mathcal{H}^1$, and call it the \emph{\bfseries hiding operation} on games.

%The following is the first of the main theorems of the present paper:
\begin{theorem}[Closure of games under hiding \cite{yamada2016dynamic}]
\label{ThmClosureOfGamesUnderHiding}
For any game $G$, $\mathcal{H}^d(G)$ forms a well-defined game for all $d \in \mathbb{N} \cup \{ \omega \}$.
Moreover, if $G \trianglelefteqslant H$, then $\mathcal{H}^d(G) \trianglelefteqslant \mathcal{H}^d(H)$ for all $d \in \mathbb{N} \cup \{ \omega \}$.
Furthermore, $\underbrace{\mathcal{H}^1 \circ \mathcal{H}^1 \cdots \circ \mathcal{H}^1}_i(G) = \mathcal{H}^i(G)$ for all $i \in \mathbb{N}$.
\end{theorem}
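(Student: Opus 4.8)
The plan is to carry over as much as possible from \cite{yamada2016dynamic} and from Lemma~\ref{LemClosureOfArenasAndJSequencesUnderHiding}, and then to verify by hand only the two genuinely new ingredients: that $\Delta_{\mathcal{H}^d(G)}$ is a legitimate dummy function, and that $P_{\mathcal{H}^d(G)}$ satisfies the strengthened axiom DP2. Fix a game $G$ and $d \in \mathbb{N} \cup \{ \omega \}$. By Lemma~\ref{LemClosureOfArenasAndJSequencesUnderHiding} the quadruple $(M_{\mathcal{H}^d(G)}, \lambda_{\mathcal{H}^d(G)}, \vdash_{\mathcal{H}^d(G)}, \Delta_{\mathcal{H}^d(G)})$ is a well-defined arena (in particular $\Delta_{\mathcal{H}^d(G)} = \Delta_G \upharpoonright M_{\mathcal{H}^d(G)}$ is a bijection $M_{\mathcal{H}^d(G)}^{\mathsf{PInt}} \stackrel{\sim}{\to} M_{\mathcal{H}^d(G)}^{\mathsf{OInt}}$ satisfying D), and $\mathcal{H}_G^d(\boldsymbol{s}) \in \mathscr{J}_{\mathcal{H}^d(G)}$ for every $\boldsymbol{s} \in P_G$; note also $\mu(\mathcal{H}^d(G)) = \mu(G) \circleddash d \leqslant \mu(G) \in \mathbb{N}$, so L holds. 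It therefore remains to show that $P_{\mathcal{H}^d(G)} = \{ \mathcal{H}_G^d(\boldsymbol{s}) \mid \boldsymbol{s} \in P_G \}$ consists of dynamic legal positions of $\mathcal{H}^d(G)$ and satisfies P1 and DP2.

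That $\mathcal{H}_G^d(\boldsymbol{s})$ is a dynamic legal position of $\mathcal{H}^d(G)$ whenever $\boldsymbol{s} \in P_G$ --- i.e.\ that alternation, generalized visibility and IE-switch are preserved by the hiding operation --- is exactly the corresponding statement for dynamic games without dummy structure, which is established in \cite{yamada2016dynamic} and to which the dummy structure is irrelevant. Axiom P1 is immediate: every prefix of $\mathcal{H}_G^d(\boldsymbol{s}m)$ is of the form $\mathcal{H}_G^d(\boldsymbol{s}')$ for some prefix $\boldsymbol{s}' \preceq \boldsymbol{s}m$, and prefixes of $\boldsymbol{s}m \in P_G$ lie in $P_G$ by P1 for $G$, so $P_{\mathcal{H}^d(G)}$ is prefix-closed; non-emptiness is clear since $\boldsymbol{\epsilon} = \mathcal{H}_G^d(\boldsymbol{\epsilon})$.

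The substantive step is DP2 for $P_{\mathcal{H}^d(G)}$. Let $\boldsymbol{r} = \mathcal{H}_G^d(\boldsymbol{s})$ with $\boldsymbol{s} \in P_G$, and suppose $\boldsymbol{r} = \boldsymbol{t} . p . o' . \boldsymbol{u} . p' . o$ (resp.\ $\boldsymbol{r} = \boldsymbol{t} . o' . \boldsymbol{u} . p' . o$) with $o$ an internal O-move of $\mathcal{H}^d(G)$, i.e.\ $\lambda_G^{\mathbb{N}}(o) > d$, and $o' = \mathcal{J}_{\boldsymbol{r}}(p')$ internal (resp.\ external) in $\mathcal{H}^d(G)$. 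Since $o$ survives hiding it is an internal O-move of $G$ with priority $> d$; by axiom D its $\Delta_G$-preimage $p^\circ \stackrel{\mathrm{df. }}{=} \Delta_G^{-1}(o)$ is an internal P-move of $G$ of the \emph{same} priority $> d$, hence also survives, and by DP2 for $\boldsymbol{s}$ the occurrence of $o$ in $\boldsymbol{s}$ is immediately preceded by $p^\circ$ with $o = \Delta_G(p^\circ)$. As nothing lies strictly between $p^\circ$ and $o$ in $\boldsymbol{s}$ and both survive, the same adjacency holds in $\boldsymbol{r}$, forcing $p' = p^\circ$ and hence $o = \Delta_G(p')$, as required. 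For the pointer, one traces $\mathcal{J}_{\boldsymbol{r}}(o) = \mathcal{J}_{\boldsymbol{s}}^{\circleddash d}(o)$ along the chain of justifiers hidden in passing from $\boldsymbol{s}$ to $\boldsymbol{r}$: DP2 for $\boldsymbol{s}$ gives $\mathcal{J}_{\boldsymbol{s}}(o) \in \{ p, p' \}$ according to whether $\mathcal{J}_{\boldsymbol{s}}(p')$ is internal or external in $G$, and, since $o' = \mathcal{J}_{\boldsymbol{s}}^{\circleddash d}(p')$ is internal in $\mathcal{H}^d(G)$ exactly when the intervening justifiers are all hidden, a short case analysis yields $\mathcal{J}_{\boldsymbol{r}}(o) = p$ in the internal case and $\mathcal{J}_{\boldsymbol{r}}(o) = p'$ in the external case. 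This pointer-chasing, resting on the fact that axiom D keeps each dummy pair $\{ q, \Delta_G(q) \}$ either wholly among the surviving moves or wholly hidden, is the one place requiring real care; everything else is bookkeeping, and I expect it to be the main obstacle.

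Finally, subgame monotonicity and the iteration identity are inherited from the arena and j-sequence levels. If $G \trianglelefteqslant H$ then $M_{\mathcal{H}^d(G)} \subseteq M_{\mathcal{H}^d(H)}$, the labelings and dummy functions restrict correctly, $\vdash_{\mathcal{H}^d(G)} \ \subseteq \ \vdash_{\mathcal{H}^d(H)} \cap \ (\{ \star \} \cup M_{\mathcal{H}^d(G)}) \times M_{\mathcal{H}^d(G)}$ (because $G$ and $H$ agree on $M_G$, so the hidden intermediaries used to form the enabling relations coincide), $P_{\mathcal{H}^d(G)} \subseteq P_{\mathcal{H}^d(H)}$ since $P_G \subseteq P_H$, and $\mu(\mathcal{H}^d(G)) = \mu(G) \circleddash d = \mu(H) \circleddash d = \mu(\mathcal{H}^d(H))$; hence $\mathcal{H}^d(G) \trianglelefteqslant \mathcal{H}^d(H)$. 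For the iteration identity, the arena-level identity $\underbrace{\mathcal{H}^1 \circ \cdots \circ \mathcal{H}^1}_i(G) = \mathcal{H}^i(G)$ and the j-sequence-level identity $\mathcal{H}^1_{\mathcal{H}^{i-1}(G)} \circ \cdots \circ \mathcal{H}^1_{G}(\boldsymbol{s}) = \mathcal{H}_G^i(\boldsymbol{s})$ from Lemma~\ref{LemClosureOfArenasAndJSequencesUnderHiding} give, applying the latter pointwise over $P_G$, that the position sets of $\underbrace{\mathcal{H}^1 \circ \cdots \circ \mathcal{H}^1}_i(G)$ and $\mathcal{H}^i(G)$ coincide, so the two games are equal.
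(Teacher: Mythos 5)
Your overall route is the same as the paper's: the paper's proof of this theorem is literally a one-line reduction to Lemma~\ref{LemClosureOfArenasAndJSequencesUnderHiding} and to \cite{yamada2016dynamic}, and you likewise delegate the arena-level facts, legality, P1, the subgame clause and the iteration identity to those sources. You go further than the paper in actually isolating the only genuinely new obligation, namely preservation of the strengthened DP2, and your first half of that check is sound: since axiom D forces $\lambda_G^{\mathbb{N}}(p^\circ) = \lambda_G^{\mathbb{N}}(\Delta_G(p^\circ))$, a dummy pair survives or is hidden as a whole, adjacency of $p^\circ$ and $o$ in $\boldsymbol{s}$ persists in $\boldsymbol{r}$, and $o = \Delta_G(p')$ follows.

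The gap is in the pointer clause, which you dispatch with ``a short case analysis yields \dots''; that case analysis is the entire content of the new obligation and it does not obviously close. Write $o_1 \stackrel{\mathrm{df. }}{=} \mathcal{J}_{\boldsymbol{s}}(p')$. If $o_1$ is external, or internal with priority $> d$, the chase works exactly as you indicate. But if $o_1$ is internal and \emph{hidden} (priority in $(0,d]$), DP2 for $\boldsymbol{s}$ gives $\mathcal{J}_{\boldsymbol{s}}(o) = p_1$ where $p_1$ immediately precedes $o_1$, and by DP2 at $o_1$ together with D the move $p_1 = \Delta_G^{-1}(o_1)$ is hidden as well, so the chain from $o$ continues to $o_2 \stackrel{\mathrm{df. }}{=} \mathcal{J}_{\boldsymbol{s}}(p_1)$. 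In the sub-case where $o_2$ is external, one computes $o' = \mathcal{J}^{\circleddash d}_{\boldsymbol{s}}(p') = o_2$ (external), so DP2 for $\boldsymbol{r}$ demands $\mathcal{J}_{\boldsymbol{r}}(o) = p'$; yet the chase gives $\mathcal{J}^{\circleddash d}_{\boldsymbol{s}}(o) = o_2$, which is an O-move and hence cannot even legally justify the O-move $o$, let alone equal $p'$. So to finish you must \emph{rule this sub-case out}, i.e.\ show that in a position of a game an internal P-move surviving $\mathcal{H}^d$ can never be justified by a hidden internal O-move (priorities cannot ``jump down'' across a justifier into the hidden region and then exit to an external move). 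None of E1--E4, D, IE-switch or generalized visibility yields this on its face, and your proposal supplies no such invariant; this missing step, not bookkeeping, is the real content of the theorem beyond the cited lemma.
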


\begin{proof}
Based on Lemma~\ref{LemClosureOfArenasAndJSequencesUnderHiding}; see \cite{yamada2016dynamic}.
\end{proof}

\begin{convention}
Thanks to Theorem~\ref{ThmClosureOfGamesUnderHiding}, the $i$-hiding operation $\mathcal{H}^i$ on games for each $i \in \mathbb{N}$ can be thought of as the $i$-times iteration of the $1$-hiding operation $\mathcal{H}^1$, which we call the \emph{\bfseries hiding operation} on games and write $\mathcal{H}$ for it.
\end{convention}

\subsubsection{Constructions on games}
\label{ConstructionsOnGames}
Now, let us recall the constructions on games in \cite{yamada2016dynamic} with `tags' formalized by outer and inner tags defined in Section~\ref{OnTags}.
%For constructions except \emph{exponential} $!$, it suffices to take \emph{finite} disjoint unions of sets of moves, and so we fix the \emph{\bfseries inner tags} $\mathscr{W}$, $\mathscr{E}$, $\mathscr{N}$ and $\mathscr{S}$ for them (for we need just four elements).
A \emph{\bfseries tag} refers to an outer or internal tag.

On the other hand, for readers who are not familiar with game semantics, we first give a rather standard presentation of each construction, which keeps `tags' unspecified, before its formal definition. 
For this aim, we employ:
\begin{notation}
We write $x \in S + T$ iff $x \in S$ or $x \in T$, where we cannot have both $x \in S$ and $x \in T$ by the implicit `tag' for the disjoint union $S+T$. 
Also, given functions $f : S \to U$ and $g : T \to U$, we write $[f, g]$ for the function $S + T \to U$ that maps $x \in S + T$ to $f(x) \in U$ iff $x \in S$, and to $g(x) \in U$ otherwise. 
Moreover, given relations $R_S \subseteq S \times S$ and $R_T \subseteq T \times T$, we write $R_S + R_T$ for the relation on $S + T$ such that $(x, y) \in R_S + R_T \stackrel{\mathrm{df. }}{\Leftrightarrow} (x, y) \in R_S \vee (x, y) \in R_T$. 
\end{notation}

Let us begin with \emph{tensor (product)} $\otimes$. 
As already mentioned in the introduction, a position of the tensor product $A \otimes B$ consists of a position of $A$ and a position of $B$ played `in parallel without communication'.
That is, given games $A$ and $B$, their tensor $A \otimes B$ is defined by:
\begin{itemize}

\item $M_{A \otimes B} \stackrel{\mathrm{df. }}{=} M_A + M_B$;

\item $\lambda_{A \otimes B} \stackrel{\mathrm{df. }}{=} [\lambda_A, \lambda_B]$;

\item $\vdash_{A \otimes B} \ \stackrel{\mathrm{df. }}{=} \ \vdash_A + \ \! \vdash_B$;

\item $\Delta_{A \otimes B} \stackrel{\mathrm{df. }}{=} [\Delta_A, \Delta_B]$;

\item $P_{A \otimes B} \stackrel{\mathrm{df. }}{=} \{ \boldsymbol{s} \in \mathscr{L}_{A \otimes B} \mid \boldsymbol{s} \upharpoonright A \in P_A, \boldsymbol{s} \upharpoonright B \in P_B \ \! \}$

\end{itemize}
where $\boldsymbol{s} \upharpoonright A$ (resp. $\boldsymbol{s} \upharpoonright B$) denotes the j-subsequence of $\boldsymbol{s}$ that consists of moves of $A$ (resp. $B$).
As an illustration, recall the example $N \otimes N$ in the introduction, in which the `tags' are informally written as $(\_)^{[i]}$ ($i = 0, 1$).

As explained in \cite{abramsky1997semantics}, it is easy to see that in a play of a tensor $A \otimes B$ only O can switch between the component games $A$ and $B$ (by alternation). 

Let us now give the formal definition of tensor, for which the `tags' $(\_)^{[0]}$ and $(\_)^{[1]}$ are formalized by inner tags $(\_, \mathscr{W})$ and $(\_, \mathscr{E})$, respectively:
\begin{definition}[Tensor of games \cite{abramsky1999game}]
The \emph{\bfseries tensor (product)} $A \otimes B$ of games $A$ and $B$ is defined by:
\begin{itemize}

\item $M_{A \otimes B} \stackrel{\mathrm{df. }}{=} \{ [(a, \mathscr{W})]_{\boldsymbol{e}} \mid [a]_{\boldsymbol{e}} \in M_A \} \cup \{ [(b, \mathscr{E})]_{\boldsymbol{f}} \mid [b]_{\boldsymbol{f}} \in M_B \}$;

\item $\lambda_{A \otimes B} ([(m, X)]_{\boldsymbol{e}}) \stackrel{\mathrm{df. }}{=} \begin{cases} \lambda_A([m]_{\boldsymbol{e}}) &\text{if $X = \mathscr{W}$;} \\ \lambda_B([m]_{\boldsymbol{e}}) &\text{if $X = \mathscr{E}$;} \end{cases}$

\item $\star \vdash_{A \otimes B} [(m, X)]_{\boldsymbol{e}} \stackrel{\mathrm{df. }}{\Leftrightarrow} (X = \mathscr{W} \wedge \star \vdash_A [m]_{\boldsymbol{e}}) \vee (X = \mathscr{E} \wedge \star \vdash_B [m]_{\boldsymbol{e}})$;

\item $[(m, X)]_{\boldsymbol{e}} \vdash_{A \otimes B} [(n, Y)]_{\boldsymbol{f}} \stackrel{\mathrm{df. }}{\Leftrightarrow} (X = \mathscr{W} = Y \wedge [m]_{\boldsymbol{e}} \vdash_A [n]_{\boldsymbol{f}}) \vee (X = \mathscr{E} = Y \wedge [m]_{\boldsymbol{e}} \vdash_B [n]_{\boldsymbol{f}})$;

\item $\Delta_{A \otimes B} ([(m, X)]_{\boldsymbol{e}}) \stackrel{\mathrm{df. }}{=} \begin{cases} [(m', \mathscr{W})]_{\boldsymbol{e}} &\text{if $X = \mathscr{W}$, where $\Delta_A([m]_{\boldsymbol{e}}) = [m']_{\boldsymbol{e}}$;} \\ [(m'', \mathscr{E})]_{\boldsymbol{e}} &\text{if $X = \mathscr{E}$, where $\Delta_B([m]_{\boldsymbol{e}}) = [m'']_{\boldsymbol{e}}$;} \end{cases}$

\item $P_{A \otimes B} \stackrel{\mathrm{df. }}{=} \{ \boldsymbol{s} \in \mathscr{L}_{A \otimes B} \mid \boldsymbol{s} \upharpoonright \mathscr{W} \in P_A, \boldsymbol{s} \upharpoonright \mathscr{E} \in P_B \ \! \}$, where $\boldsymbol{s} \upharpoonright X$ (with $X \in \{ \mathscr{W}, \mathscr{E} \}$) is the j-subsequence of $\boldsymbol{s}$ that consists of moves of the form $[(m, X)]_{\boldsymbol{e}}$ changed into $[m]_{\boldsymbol{e}}$.

%\item $\boldsymbol{s} \simeq_{A \otimes B} \boldsymbol{t} \stackrel{\mathrm{df. }}{\Leftrightarrow} (\pi_2 \circ \pi_1)^\ast(\boldsymbol{s}) = (\pi_2 \circ \pi_1)^\ast(\boldsymbol{t}) \wedge \boldsymbol{s} \upharpoonright \mathscr{E} \simeq_A \boldsymbol{t} \upharpoonright \mathscr{E} \wedge \boldsymbol{s} \upharpoonright \mathscr{W} \simeq_B \boldsymbol{t} \upharpoonright \mathscr{W}$.

\end{itemize}
\end{definition}

\if0
I.e., the tensor product $A \otimes B$ is the one defined in \cite{abramsky1999game,mccusker1998games} equipped with the equivalence relation $\simeq_{A \otimes B}$ that identifies positions if they have the same pattern of occurrences of $A, B$-parity and their $A$- (resp. $B$-) projections are identified by $\simeq_A$ (resp. $\simeq_B$).
We shall apply this general recipe to \emph{linear implication} $\multimap$ and \emph{product} $\&$ in \cite{abramsky1999game,mccusker1998games} below.
\fi
%It is easy to see that on the tensor product $A \otimes B$ only Opponent can switch between the component games $A$ and $B$ by alternation (see, e.g., \cite{abramsky1997semantics} for its proof).

\if0
\begin{remark}
Just for the set $M_{A \otimes B}$ of moves of a tensor product $A \otimes B$ to be well-defined, we do not need both of the inner and outer tags; either one suffices. 
However, the outer one is necessary for the preservation of the second condition of the axiom \textsf{I1} under $\otimes$, while the inner one makes the development of `effectivity' of strategies much simpler.
For this reason, we have implemented the disjoint union $M_{A \otimes B} = M_A + M_B$ as above.
The same remark is applied to other constructions below.
\end{remark}
\fi

\begin{example}
Some typical plays in the tensor $\mathcal{N} \otimes \mathcal{N}$ are as follows:
\begin{center}
\begin{tabular}{ccccccccc}
$\mathcal{N}$ & $\otimes$ & $\mathcal{N}$ &&&& $\mathcal{N}$ & $\otimes$ & $\mathcal{N}$ \\ \cline{1-3} \cline{7-9}
\tikzmark{LazyTensorC1} $[(\hat{q}, \mathscr{W})]$ &&&&&&&& \tikzmark{LazyTensorC6}  $[(\hat{q}, \mathscr{E})]$ \\
\tikzmark{LazyTensorD1} $[(\mathit{yes}, \mathscr{W})]$ \tikzmark{LazyTensorC2}&&&&&&&&\tikzmark{LazyTensorD6} $[(\mathit{yes}, \mathscr{E})]$ \tikzmark{LazyTensorC7} \\
&&$[(\hat{q}, \mathscr{E})]$ \tikzmark{LazyTensorC4}&&&&&& \tikzmark{LazyTensorC8} $[(q, \mathscr{E})]$ \tikzmark{LazyTensorD7} \\
&&$[(\mathit{no}, \mathscr{E})]$ \tikzmark{LazyTensorD4}&&&&&& \tikzmark{LazyTensorD8} $[(\mathit{yes}, \mathscr{E})]$ \tikzmark{LazyTensorC9} \\
\tikzmark{LazyTensorC3} $[(q, \mathscr{W})]$ \tikzmark{LazyTensorD2}&&&&&& \tikzmark{LazyTensorC5} $[(\hat{q}, \mathscr{W})]$&& \\
\tikzmark{LazyTensorD3} $[(\mathit{no}, \mathscr{W})]$ &&&&&& \tikzmark{LazyTensorD5} $[(\mathit{no}, \mathscr{W})]$ && \\
&&&&&&&&\tikzmark{LazyTensorC10} $[(q, \mathscr{E})]$ \tikzmark{LazyTensorD9} \\
&&&&&&&&\tikzmark{LazyTensorD10} $[(\mathit{no}, \mathscr{E})]$ \\
\end{tabular}
\begin{tikzpicture}[overlay, remember picture, yshift=.25\baselineskip]
\draw [->] ({pic cs:LazyTensorD1}) [bend left] to ({pic cs:LazyTensorC1});
\draw [->] ({pic cs:LazyTensorD2}) [bend right] to ({pic cs:LazyTensorC2});
\draw [->] ({pic cs:LazyTensorD3}) [bend left] to ({pic cs:LazyTensorC3});
\draw [->] ({pic cs:LazyTensorD4}) [bend right] to ({pic cs:LazyTensorC4});
\draw [->] ({pic cs:LazyTensorD5}) [bend left] to ({pic cs:LazyTensorC5});
\draw [->] ({pic cs:LazyTensorD6}) [bend left] to ({pic cs:LazyTensorC6});
\draw [->] ({pic cs:LazyTensorD7}) [bend right] to ({pic cs:LazyTensorC7});
\draw [->] ({pic cs:LazyTensorD8}) [bend left] to ({pic cs:LazyTensorC8});
\draw [->] ({pic cs:LazyTensorD9}) [bend right] to ({pic cs:LazyTensorC9});
\draw [->] ({pic cs:LazyTensorD10}) [bend left] to ({pic cs:LazyTensorC10});
\end{tikzpicture}
\end{center}
\end{example}

Next, the \emph{linear implication} $A \multimap B$ is a space of \emph{linear functions} from $A$ to $B$ in the sense of \emph{linear logic} \cite{girard1987linear}, i.e., they consume exactly one input in $A$ to produces an output in $B$ (n.b., strictly speaking, it is an \emph{affine implication} for its functions are not necessarily \emph{strict} as explained in the introduction).
Usually, the linear implication $A \multimap B$ is given by: 
\begin{itemize}

\item $M_{A \multimap B} \stackrel{\mathrm{df. }}{=} M_{\mathcal{H}^\omega(A)} + M_B$;

\item $\lambda_{A \multimap B} \stackrel{\mathrm{df. }}{=} [\overline{\lambda_{\mathcal{H}^\omega(A)}}, \lambda_B]$, where $\overline{\lambda_{\mathcal{H}^\omega(A)}} \stackrel{\mathrm{df. }}{=} \langle \overline{\lambda_{\mathcal{H}^\omega(A)}^\textsf{OP}}, \lambda_{\mathcal{H}^\omega(A)}^\textsf{QA} \rangle$ and $\overline{\lambda_{\mathcal{H}^\omega(A)}^\textsf{OP}} (m) \stackrel{\mathrm{df. }}{=} \begin{cases} \mathsf{P} \ &\text{if $\lambda_{\mathcal{H}^\omega(A)}^\textsf{OP} (m) = \textsf{O}$} \\ \textsf{O} \ &\text{otherwise} \end{cases}$;

\item $\star \vdash_{A \multimap B} m \stackrel{\mathrm{df. }}{\Leftrightarrow} \star \vdash_B m$;

\item $m \vdash_{A \multimap B} n \ (m \neq \star) \stackrel{\mathrm{df. }}{\Leftrightarrow} (m \vdash_{\mathcal{H}^\omega(A)} n) \vee (m \vdash_B n) \vee (\star \vdash_B m \wedge \star \vdash_{\mathcal{H}^\omega(A)} n)$;

\item $\Delta_{A \multimap B} \stackrel{\mathrm{df. }}{=} [\emptyset, \Delta_B]$ (where note that $\Delta_{\mathcal{H}^\omega(A)} = \emptyset$);

\item $P_{A \multimap B} \stackrel{\mathrm{df. }}{=} \{ \boldsymbol{s} \in \mathscr{L}_{\mathcal{H}^\omega(A) \multimap B} \mid \boldsymbol{s} \upharpoonright \mathcal{H}^\omega(A) \in P_{\mathcal{H}^\omega(A)}, \boldsymbol{s} \upharpoonright B \in P_B \ \! \}$.

\end{itemize}
As an illustration, recall the example $N \multimap N$ in the introduction. 

Note that $A$ must be normalized to $\mathcal{H}^\omega(A)$ since otherwise the linear implication $A \multimap B$ may not satisfy the axiom DP2 (and the dummy function $\Delta_{A \multimap B}$ would not be well-defined).
It conceptually makes sense too for the roles of P and O in the domain $A$ are exchanged, and thus P should not be able to `see' internal moves of $A$.
%Also, dually to the tensor $A \otimes B$, in a play of the linear implication $A \multimap B$ only P can switch between the component games $A$ and $B$ (by alternation); see \cite{abramsky1997semantics} for the detail. 
Note also that $A \multimap B$ is almost $A \otimes B$ if $A$ is normalized except the switch of the roles in $A$; dually to $A \otimes B$, in a play of $A \multimap B$ only P can switch between $A$ and $B$ by alternation (see, e.g., \cite{abramsky1997semantics} for the proof).
Surprisingly, this simple point turns $A \multimap B$ into a game for linear functions from $A$ to $B$.

Similarly to tensor, the formal definition of linear implication is as follows:
\begin{definition}[Linear implication \cite{abramsky1999game}]
\label{DefLinearImplication}
The \emph{\bfseries linear implication} $A \multimap B$ from a normalized game $A$ to another (not necessarily normalized) game $B$ is defined by:
\begin{itemize}

\item $M_{A \multimap B} \stackrel{\mathrm{df. }}{=} \{ [(a, \mathscr{W})]_{\boldsymbol{e}} \mid [a]_{\boldsymbol{e}} \in M_A \} \cup \{ [(b, \mathscr{E})]_{\boldsymbol{f}} \mid [b]_{\boldsymbol{f}} \in M_B \}$;

\item $\lambda_{A \multimap B} ([(m, X)]_{\boldsymbol{e}}) \stackrel{\mathrm{df. }}{=} \begin{cases} \overline{\lambda_A}([m]_{\boldsymbol{e}}) &\text{if $X = \mathscr{W}$;} \\ \lambda_B([m]_{\boldsymbol{e}}) &\text{if $X = \mathscr{E}$} \end{cases}$, where $\overline{\lambda_A} \stackrel{\mathrm{df. }}{=} \langle \overline{\lambda_A^\textsf{OP}}, \lambda_A^\textsf{QA}, \lambda_A^{\mathbb{N}} \rangle$ and $\overline{\lambda_A^\textsf{OP}} ([a]_{\boldsymbol{e}}) \stackrel{\mathrm{df. }}{=} \begin{cases} \mathsf{P} \ &\text{if $\lambda_A^\textsf{OP} ([a]_{\boldsymbol{e}}) = \textsf{O}$;} \\ \textsf{O} \ &\text{otherwise} \end{cases}$ for all $[a]_{\boldsymbol{e}} \in M_A$;

\item $\star \vdash_{A \multimap B} [(m, X)]_{\boldsymbol{e}} \stackrel{\mathrm{df. }}{\Leftrightarrow} X = \mathscr{E} \wedge \star \vdash_B [m]_{\boldsymbol{e}}$;

\item $[(m, X)]_{\boldsymbol{e}} \vdash_{A \multimap B} [(n, Y)]_{\boldsymbol{f}} \stackrel{\mathrm{df. }}{\Leftrightarrow} \begin{cases} \begin{aligned} &(X = \mathscr{W} = Y \wedge [m]_{\boldsymbol{e}} \vdash_A [n]_{\boldsymbol{f}}) \\ &\vee (X = \mathscr{E} = Y \wedge [m]_{\boldsymbol{e}} \vdash_B [n]_{\boldsymbol{f}}) \\ &\vee (X = \mathscr{E} \wedge Y = \mathscr{W} \wedge \star \vdash_B [m]_{\boldsymbol{e}} \wedge \star \vdash_A [n]_{\boldsymbol{f}}); \end{aligned} \end{cases}$

\item $\Delta_{A \multimap B}([(b, \mathscr{E})]_{\boldsymbol{f}}) \stackrel{\mathrm{df. }}{=} [(b', \mathscr{E})]_{\boldsymbol{f}}$, where $\Delta_B([b]_{\boldsymbol{f}}) = [b']_{\boldsymbol{f}}$;

\item $P_{A \multimap B} \stackrel{\mathrm{df. }}{=} \{ \boldsymbol{s} \in \mathscr{L}_{A \multimap B} \mid \boldsymbol{s} \upharpoonright \mathscr{W} \in P_A, \boldsymbol{s} \upharpoonright \mathscr{E} \in P_B \}$, where pointers from initial occurrences in $\boldsymbol{s} \upharpoonright \mathscr{W}$ to initial occurrences in $\boldsymbol{s} \upharpoonright \mathscr{E}$ are simply deleted.

%\item $\boldsymbol{s} \simeq_{A \multimap B} \boldsymbol{t} \stackrel{\mathrm{df. }}{\Leftrightarrow} (\pi_2 \circ \pi_1)^\ast(\boldsymbol{s}) = (\pi_2  \circ \pi_1)^\ast(\boldsymbol{t}) \wedge \boldsymbol{s} \upharpoonright \mathscr{E} \simeq_A \boldsymbol{t} \upharpoonright \mathscr{E} \wedge \boldsymbol{s} \upharpoonright \mathscr{W} \simeq_B \boldsymbol{t} \upharpoonright \mathscr{W}$.

\end{itemize}
This construction $\multimap$ is then extended to any pair $(C, B)$ of games: For a not necessarily normalized game $C$, we define $C \multimap B \stackrel{\mathrm{df. }}{=} \mathcal{H}^\omega(C) \multimap B$. %so that the construction $\multimap$ is defined for any pair $(C, B)$ of games.
\end{definition}

\begin{example}
Some typical plays of the linear implication $\boldsymbol{2} \multimap \boldsymbol{2}$ are as follows:
\begin{center}
\begin{tabular}{ccccccccc}
$\boldsymbol{2}$ & $\multimap$ & $\boldsymbol{2}$ &&&& $\boldsymbol{2}$ & $\multimap$ & $\boldsymbol{2}$ \\ \cline{1-3} \cline{7-9}
&&\tikzmark{LIC1} $[(\hat{q}, \mathscr{E})]$ \tikzmark{LIC3}&&&&&&\tikzmark{LIC4} $[(\hat{q}, \mathscr{E})]$ \\
\tikzmark{LIC2} $[(\hat{q}, \mathscr{W})]$ \tikzmark{LID1}&&&&&&&&\tikzmark{LID4} $[(\mathit{ff}, \mathscr{E})]$ \\
\tikzmark{LID2} $[(\mathit{ff}, \mathscr{W})]$&&&&&&&& \\
&&$[(\mathit{tt}, \mathscr{E})]$ \tikzmark{LID3}&&&&&&
\end{tabular}
\begin{tikzpicture}[overlay, remember picture, yshift=.25\baselineskip]
\draw [->] ({pic cs:LID1}) to ({pic cs:LIC1});
\draw [->] ({pic cs:LID2}) [bend left] to ({pic cs:LIC2});
\draw [->] ({pic cs:LID3}) [bend right] to ({pic cs:LIC3});
\draw [->] ({pic cs:LID4}) [bend left] to ({pic cs:LIC4});
\end{tikzpicture}
\end{center}
Note that the left diagram describes a \emph{strict} linear function, while the right diagram does a non-strict one.
\end{example} 

Next, let us recall \emph{product} $\&$ of games.
As stated in the introduction, a position of the product $A \& B$ is essentially a position of $A$ or a one of $B$:
Given games $A$ and $B$, their product $A \& B$ is given by:

\begin{itemize}

\item $M_{A \& B} \stackrel{\mathrm{df. }}{=} M_A + M_B$;

\item $\lambda_{A \& B} \stackrel{\mathrm{df. }}{=} [\lambda_A, \lambda_B]$;

\item $\vdash_{A \& B} \ \stackrel{\mathrm{df. }}{=} \ \vdash_A + \ \! \vdash_B$;

\item $\Delta_{A \& B} \stackrel{\mathrm{df. }}{=} [\Delta_A, \Delta_B]$;

\item $P_{A \& B} \stackrel{\mathrm{df. }}{=} \{ \boldsymbol{s} \in \mathscr{L}_{A \& B} \mid (\boldsymbol{s} \upharpoonright A \in P_A \wedge \boldsymbol{s} \upharpoonright B = \boldsymbol{\epsilon}) \vee (\boldsymbol{s} \upharpoonright A = \boldsymbol{\epsilon} \wedge \boldsymbol{s} \upharpoonright B \in P_B) \ \! \}$.

\end{itemize}
%As an illustration, see the example of $N \& N$ in the introduction.

Similarly to the case of tensor, we formalize product as follows: 
\begin{definition}[Product \cite{abramsky1999game}]
The \emph{\bfseries product} $A \& B$ of games $A$ and $B$ is given by:
\begin{itemize}

\item $M_{A \& B} \stackrel{\mathrm{df. }}{=} \{ [(a, \mathscr{W})]_{\boldsymbol{e}} \mid [a]_{\boldsymbol{e}} \in M_A \ \! \} \cup \{ [(b, \mathscr{E})]_{\boldsymbol{f}} \mid [b]_{\boldsymbol{f}} \in M_B \}$;

\item $\lambda_{A \& B} ([(m, X)]_{\boldsymbol{e}}) \stackrel{\mathrm{df. }}{=} \begin{cases} \lambda_A([m]_{\boldsymbol{e}}) &\text{if $X = \mathscr{W}$;} \\ \lambda_B([m]_{\boldsymbol{e}}) &\text{if $X = \mathscr{E}$;} \end{cases}$

\item $\star \vdash_{A \& B} [(m, X)]_{\boldsymbol{e}} \stackrel{\mathrm{df. }}{\Leftrightarrow} (X = \mathscr{W} \wedge \star \vdash_A [m]_{\boldsymbol{e}}) \vee (X = \mathscr{E} \wedge \star \vdash_B [m]_{\boldsymbol{e}})$;

\item $[(m, X)]_{\boldsymbol{e}} \vdash_{A \& B} [(n, Y)]_{\boldsymbol{f}} \stackrel{\mathrm{df. }}{\Leftrightarrow} (X = \mathscr{W} = Y \wedge [m]_{\boldsymbol{e}} \vdash_A [n]_{\boldsymbol{f}}) \vee (X = \mathscr{E} = Y \wedge [m]_{\boldsymbol{e}} \vdash_B [n]_{\boldsymbol{f}})$;

\item $\Delta_{A \& B} ([(m, X)]_{\boldsymbol{e}}) \stackrel{\mathrm{df. }}{=} \begin{cases} [(m', \mathscr{W})]_{\boldsymbol{e}} &\text{if $X = \mathscr{W}$, where $\Delta_A([m]_{\boldsymbol{e}}) = [m']_{\boldsymbol{e}}$;} \\ [(m'', \mathscr{E})]_{\boldsymbol{e}} &\text{if $X = \mathscr{E}$, where $\Delta_B([m]_{\boldsymbol{e}}) = [m'']_{\boldsymbol{e}}$;} \end{cases}$

\item $P_{A \& B} \stackrel{\mathrm{df. }}{=} \{ \boldsymbol{s} \in \mathscr{L}_{A \& B} \mid (\boldsymbol{s} \upharpoonright \mathscr{W} \in P_A \wedge \boldsymbol{s} \upharpoonright \mathscr{E} = \boldsymbol{\epsilon}) \vee (\boldsymbol{s} \upharpoonright \mathscr{W} = \boldsymbol{\epsilon} \wedge \boldsymbol{s} \upharpoonright \mathscr{E} \in P_B) \ \! \}$.

%\item $\boldsymbol{s} \simeq_{A \& B} \boldsymbol{t} \stackrel{\mathrm{df. }}{\Leftrightarrow} (\pi_2 \circ \pi_1)^\ast(\boldsymbol{s}) = (\pi_2 \circ \pi_1)^\ast(\boldsymbol{t}) \wedge \boldsymbol{s} \upharpoonright \mathscr{E} \simeq_A \boldsymbol{t} \upharpoonright \mathscr{E} \wedge \boldsymbol{s} \upharpoonright \mathscr{W} \simeq_B \boldsymbol{t} \upharpoonright \mathscr{W}$.

\end{itemize}
\end{definition}

For the \emph{cartesian closed bicategory} $\mathcal{DG}$ of dynamic games and strategies defined in \cite{yamada2016dynamic}, however, we need to generalize the construction $C \multimap A \& B$ on normalized games $A$, $B$ and $C$, where $\&$ precedes $\multimap$, because we need to pair strategies $\sigma : L$ and $\tau : R$ such that $\mathcal{H}^\omega(L) \trianglelefteqslant C \multimap A$ and $\mathcal{H}^\omega(R) \trianglelefteqslant C \multimap B$, and the ambient game of the pairing $\langle \sigma, \tau \rangle$ would be such a generalization of $C \multimap A \& B$.

For this point, \cite{yamada2016dynamic} defines the \emph{pairing} $\langle L, R \rangle$ of any games $L$ and $R$ with $\mathcal{H}^\omega(L) \trianglelefteqslant C \multimap A$ and $\mathcal{H}^\omega(R) \trianglelefteqslant C \multimap B$ for some normalized games $A$, $B$ and $C$ by:
\begin{itemize}

\item $M_{\langle L, R \rangle} \stackrel{\mathrm{df. }}{=} M_C + (M_L \setminus M_C) + (M_R \setminus M_C)$;

\item $\lambda_{\langle L, R \rangle} \stackrel{\mathrm{df. }}{=} [\overline{\lambda_C}, \lambda_L \downharpoonright M_C, \lambda_R \downharpoonright M_C]$;

\item $m \vdash_{\langle L, R \rangle} n \stackrel{\mathrm{df. }}{\Leftrightarrow} m \vdash_L n \vee m \vdash_R n$;

\item $\Delta_{\langle L, R \rangle} \stackrel{\mathrm{df. }}{=} [\Delta_L \downharpoonright M_C, \Delta_R \downharpoonright M_C]$;

\item $P_{\langle L, R \rangle} \stackrel{\mathrm{df. }}{=} \{ \boldsymbol{s} \in \mathscr{L}_{L \& R} \mid (\boldsymbol{s} \upharpoonright L \in P_L \wedge \boldsymbol{s} \upharpoonright R = \boldsymbol{\epsilon}) \vee (\boldsymbol{s} \upharpoonright L = \boldsymbol{\epsilon} \wedge \boldsymbol{s} \upharpoonright R \in P_R) \ \! \}$

\end{itemize}
where given a function $f : X \to Y$ and a subset $Z \subseteq X$ we write $f \downharpoonright Z : X \setminus Z \to Y$ for the restrictions of $f$ to the subset $X \setminus Z \subseteq X$. 

Note that the definition of a pairing $\langle L, R \rangle$ does not depend on the choice of the normalized games $A$, $B$ and $C$ such that $\mathcal{H}^\omega(L) \trianglelefteqslant C \multimap A$ and $\mathcal{H}^\omega(R) \trianglelefteqslant C \multimap B$.
Also, we have $\langle C \multimap A, C \multimap B \rangle = C \multimap A \& B$; pairing of games is to serve as a generalization of this phenomenon in the sense that $\mathcal{H}^\omega(\langle L, R \rangle) \trianglelefteqslant C \multimap A \& B$ holds (see \cite{yamada2016dynamic} for the proof), where note that the slightly involved disjoint union of sets of moves for $M_{\langle L, R \rangle}$ is to establish this subgame relation.

Let us now formalize `tags' for the disjoint union $M_C + (M_L \setminus M_C) + (M_R \setminus M_C)$ as follows:
\begin{itemize}

\item Adding no tags for external moves of the form $[(c, \mathscr{W})]_{\boldsymbol{e}}$ of $L$ or $R$, where $[c]_{\boldsymbol{e}}$ must be a move of $C$ by the definition of tags for $\multimap$ (see Definition~\ref{DefLinearImplication}); 

\item Changing external moves of the form $[(a, \mathscr{E})]_{\boldsymbol{f}}$ of $L$, where $[a]_{\boldsymbol{f}}$ must be a move of $A$, into $[((a, \mathscr{W}), \mathscr{E})]_{\boldsymbol{f}}$; 

\item Changing external moves of the form $[(b, \mathscr{E})]_{\boldsymbol{g}}$ of $R$, where $[b]_{\boldsymbol{g}}$ must be a move of $B$, into $[((b, \mathscr{E}), \mathscr{E})]_{\boldsymbol{g}}$; 

\item Changing internal moves $[l]_{\boldsymbol{h}}$ of $L$ into $[(l, \mathscr{S})]_{\boldsymbol{h}}$; 

\item Changing internal moves $[r]_{\boldsymbol{k}}$ of $R$ into $[(r, \mathscr{N})]_{\boldsymbol{k}}$

\end{itemize} 
where of course these tags are not canonical at all.

Then, we formalize the labeling function, the enabling relation and the dummy function of $\langle L, R \rangle$ by the obvious pattern matching on internal tags; positions of $\langle L, R \rangle$ are formalized in the obvious manner. 
However, the enabling relation is rather involved; thus, for convenience, we define the \emph{peeling} $\mathit{peel}_{\langle L, R \rangle}(m) \in M_L \cup M_R$ of each move $m \in M_{\langle L, R \rangle}$ such that changing the inner tag of $\mathit{peel}_{\langle L, R \rangle}(m)$ as defined above results in $m$, and also the \emph{attribute} $\mathit{att}_{\langle L, R \rangle}(m) \in \{ L, R, C \}$ of $m$ by: 
\begin{equation*}
\mathit{att}_{\langle L, R \rangle}(m) \stackrel{\mathrm{df. }}{=} \begin{cases} L &\text{if $\mathit{peel}_{\langle L, R \rangle}(m) \in M_L \setminus M_C$;} \\ R &\text{if $\mathit{peel}_{\langle L, R \rangle}(m) \in M_R \setminus M_C$;} \\ C &\text{otherwise (i.e., if $\mathit{peel}_{\langle L, R \rangle}(m) \in M_C$).} \end{cases}
\end{equation*}
The enabling relation $m \vdash_{\langle L, R \rangle} n$ is then easily defined as the conjunction of:
\begin{itemize}

\item $\mathit{att}_{\langle L, R \rangle}(m) = \mathit{att}_{\langle L, R \rangle}(n) \vee \mathit{att}_{\langle L, R \rangle}(m) = C \vee \mathit{att}_{\langle L, R \rangle}(n) = C$;

\item $\mathit{peel}_{\langle L, R \rangle}(m) \vdash_L \mathit{peel}_{\langle L, R \rangle}(n) \vee \mathit{peel}_{\langle L, R \rangle}(m) \vdash_R \mathit{peel}_{\langle L, R \rangle}(n)$.

\end{itemize}

Formally, we define pairing of games as follows:
\begin{definition}[Pairing of games \cite{yamada2016dynamic}]
\label{DefPairingOfGames}
The \emph{\bfseries pairing} $\langle L, R \rangle$ of games $L$ and $R$ such that $\mathcal{H}^\omega(L) \trianglelefteqslant C \multimap A$ and $\mathcal{H}^\omega(R) \trianglelefteqslant C \multimap B$ for any normalized games $A$, $B$ and $C$ is given by:
\begin{itemize}

\item $M_{\langle L, R \rangle} \stackrel{\mathrm{df. }}{=} \{ [(c, \mathscr{W})]_{\boldsymbol{e}} \mid [(c, \mathscr{W})]_{\boldsymbol{e}} \in M_L^{\mathsf{Ext}} \cup M_R^{\mathsf{Ext}}, [c]_{\boldsymbol{e}} \in M_C \} \\ \cup \{ [((a, \mathscr{W}), \mathscr{E})]_{\boldsymbol{f}} \mid [(a, \mathscr{E})]_{\boldsymbol{f}} \in M_L^{\mathsf{Ext}}, [a]_{\boldsymbol{f}} \in M_A \} \\ \cup \{ [((b, \mathscr{E}), \mathscr{E})]_{\boldsymbol{g}} \mid [(b, \mathscr{E})]_{\boldsymbol{g}} \in M_R^{\mathsf{Ext}}, [b]_{\boldsymbol{g}} \in M_B \} \\ \cup \{ [(l, \mathscr{S})]_{\boldsymbol{h}} \mid [l]_{\boldsymbol{h}} \in M_L^{\mathsf{Int}} \} \cup \{ [(r, \mathscr{N})]_{\boldsymbol{k}} \mid [r]_{\boldsymbol{k}} \in M_R^{\mathsf{Int}} \}$;

\item $\lambda_{\langle L, R \rangle} ([(m, X)]_{\boldsymbol{e}}) \stackrel{\mathrm{df. }}{=} \begin{cases} \overline{\lambda_C} ([m]_{\boldsymbol{e}}) &\text{if $X = \mathscr{W}$;} \\ \lambda_A([a]_{\boldsymbol{e}}) &\text{if $X = \mathscr{E}$ and $m$ is of the form $(a, \mathscr{W})$;} \\ \lambda_B([b]_{\boldsymbol{e}}) &\text{if $X = \mathscr{E}$ and $m$ is of the form $(b, \mathscr{E})$;} \\ \lambda_L([m]_{\boldsymbol{e}}) &\text{if $X = \mathscr{S}$;} \\ \lambda_R([m]_{\boldsymbol{e}}) &\text{if $X = \mathscr{N}$;} \end{cases}$

\item $\star \vdash_{\langle L, R \rangle} [(m, X)]_{\boldsymbol{e}} \stackrel{\mathrm{df. }}{\Leftrightarrow} X = \mathscr{E} \wedge (\exists [a]_{\boldsymbol{e}} \in M_A^{\mathsf{Init}} . \ \! m = (a, \mathscr{W}) \vee \exists [b]_{\boldsymbol{e}} \in M_B^{\mathsf{Init}} . \ \! m = (b, \mathscr{E}))$;

\item $[(m, X)]_{\boldsymbol{e}} \vdash_{\langle L, R \rangle} [(n, Y)]_{\boldsymbol{f}} \stackrel{\mathrm{df. }}{\Leftrightarrow} (\mathit{att}_{\langle L, R \rangle}([(m, X)]_{\boldsymbol{e}}) = \mathit{att}_{\langle L, R \rangle}([(n, Y)]_{\boldsymbol{f}}) \vee \mathit{att}_{\langle L, R \rangle}([(m, X)]_{\boldsymbol{e}}) = C \vee \mathit{att}_{\langle L, R \rangle}([(n, Y)]_{\boldsymbol{f}}) = C) \wedge (\mathit{peel}_{\langle L, R \rangle}([(m, X)]_{\boldsymbol{e}}) \vdash_L \mathit{peel}_{\langle L, R \rangle}([(n, Y)]_{\boldsymbol{f}}) \vee \mathit{peel}_{\langle L, R \rangle}([(m, X)]_{\boldsymbol{e}}) \vdash_R \mathit{peel}_{\langle L, R \rangle}([(n, Y)]_{\boldsymbol{f}}))$, where the function $\mathit{att}_{\langle L, R \rangle} : M_{\langle L, R \rangle} \to \{ L, R, C \}$ is defined by $[(c, \mathscr{W})]_{\boldsymbol{e}} \mapsto C$, $[((a, \mathscr{W}), \mathscr{E})]_{\boldsymbol{f}} \mapsto L$, $[((b, \mathscr{E}), \mathscr{E})]_{\boldsymbol{g}} \mapsto R$, $[(l, \mathscr{S})]_{\boldsymbol{h}} \mapsto L$, $[(r, \mathscr{N})]_{\boldsymbol{k}} \mapsto R$, and the function $\mathit{peel}_{\langle L, R \rangle} : M_{\langle L, R \rangle} \to M_L \cup M_R$ by $[(c, \mathscr{W})]_{\boldsymbol{e}} \mapsto [(c, \mathscr{W})]_{\boldsymbol{e}}$, $[((a, \mathscr{W}), \mathscr{E})]_{\boldsymbol{f}} \mapsto [(a, \mathscr{E})]_{\boldsymbol{f}}$, $[((b, \mathscr{E}), \mathscr{E})]_{\boldsymbol{g}} \mapsto [(b, \mathscr{E})]_{\boldsymbol{g}}$, $[(l, \mathscr{S})]_{\boldsymbol{h}} \mapsto [l]_{\boldsymbol{h}}$, $[(r, \mathscr{N})]_{\boldsymbol{k}} \mapsto [r]_{\boldsymbol{k}}$;

\item $\Delta_{\langle L, R \rangle} ([(m, X)]_{\boldsymbol{e}}) \stackrel{\mathrm{df. }}{=} \begin{cases} [(l', \mathscr{S})]_{\boldsymbol{e}} &\text{if $X = \mathscr{S}$, where $\Delta_L([l]_{\boldsymbol{e}}) = [l']_{\boldsymbol{e}}$;} \\ [(r', \mathscr{N})]_{\boldsymbol{e}} &\text{if $X = \mathscr{N}$, where $\Delta_R([r]_{\boldsymbol{e}}) = [r']_{\boldsymbol{e}}$;} \end{cases}$

\item $P_{\langle L, R \rangle} \stackrel{\mathrm{df. }}{=} \{ \boldsymbol{s} \in \mathscr{L}_{\langle L, R \rangle} \mid (\boldsymbol{s} \upharpoonright L \in P_L \wedge \boldsymbol{s} \upharpoonright B = \boldsymbol{\epsilon}) \vee (\boldsymbol{s} \upharpoonright R \in P_R \wedge \boldsymbol{s} \upharpoonright A = \boldsymbol{\epsilon}) \ \! \}$, where $\boldsymbol{s} \upharpoonright L$ (resp. $\boldsymbol{s} \upharpoonright R$) is the j-subsequence of $\boldsymbol{s}$ that consists of moves $x$ such that $\mathit{peel}_{\langle L, R \rangle}(x) \in M_L$ (resp. $\mathit{peel}_{\langle L, R \rangle}(x) \in M_R$) changed into $\mathit{peel}_{\langle L, R \rangle}(x)$, and $\boldsymbol{s} \upharpoonright B$ (resp. $\boldsymbol{s} \upharpoonright A$) is the j-subsequence of $\boldsymbol{s}$ that consists of moves of the form $[((b, \mathscr{E}), \mathscr{E})]_{\boldsymbol{g}}$ with $[b]_{\boldsymbol{g}} \in M_B$ (resp. $[((a_0, \mathscr{W}), \mathscr{E})]_{\boldsymbol{f}}$ with $[a]_{\boldsymbol{g}} \in M_A$).

%\item $\boldsymbol{s} \simeq_{L \&_C R} \boldsymbol{t} \stackrel{\mathrm{df. }}{\Leftrightarrow} (\pi_2 \circ \pi_1)^\ast(\boldsymbol{s}) = (\pi_2 \circ \pi_1)^\ast(\boldsymbol{t}) \wedge \boldsymbol{s} \upharpoonright L \simeq_L \boldsymbol{t} \upharpoonright L \vee \boldsymbol{s} \upharpoonright R \simeq_R \boldsymbol{t} \upharpoonright R$.

\end{itemize}
\end{definition}

\begin{example}
Some typical plays of the pairing $\langle \boldsymbol{2} \multimap \boldsymbol{2}, \boldsymbol{2} \multimap \boldsymbol{2} \rangle$ are as follows:
\begin{center}
\begin{tabular}{ccccccc}
$\langle \boldsymbol{2}$ && $\multimap$ & $\boldsymbol{2},$ & $\boldsymbol{2}$ & $\multimap$ & $\boldsymbol{2} \rangle$ \\ \cline{1-7}
&&&&&&\tikzmark{PairingC1} $[((\hat{q}, \mathscr{E}), \mathscr{E})]$ \tikzmark{PairingC3} \\
&&&&\tikzmark{PairingC2} $[(\hat{q}, \mathscr{W})]$ \tikzmark{PairingD1}&& \\
&&&&\tikzmark{PairingD2} $[(\mathit{ff}, \mathscr{W})]$&& \\
&&&&&&$[((\mathit{tt}, \mathscr{E}), \mathscr{E})]$ \tikzmark{PairingD3} \\ \\
$\langle \boldsymbol{2}$ && $\multimap$ & $\boldsymbol{2},$ & $\boldsymbol{2}$ & $\multimap$ & $\boldsymbol{2} \rangle $ \\ \cline{1-7}
&&&\tikzmark{PairingC4} $[((\hat{q}, \mathscr{W}), \mathscr{E})]$&&& \\
%$[(q, \mathscr{E})]$&&&&&& \\
%$[(\mathit{ff}, \mathscr{E})]$&&&&&& \\
&&&\tikzmark{PairingD4} $[((\mathit{tt}, \mathscr{W}), \mathscr{E})]$&&&
\end{tabular}
\begin{tikzpicture}[overlay, remember picture, yshift=.25\baselineskip]
\draw [->] ({pic cs:PairingD1}) to ({pic cs:PairingC1});
\draw [->] ({pic cs:PairingD2}) [bend left] to ({pic cs:PairingC2});
\draw [->] ({pic cs:PairingD3}) [bend right] to ({pic cs:PairingC3});
\draw [->] ({pic cs:PairingD4}) [bend left] to ({pic cs:PairingC4});
\end{tikzpicture}
\end{center}
\end{example}

Next, let us recall \emph{exponential} $!$ in the sense of linear logic, i.e., the exponential $!A$ of a given game $A$ is essentially a countably infinite tensor of $A$, i.e., $!A \stackrel{\mathrm{df. }}{=} A \otimes A \otimes \dots$
The exponential $!A$ is usually given by:
\begin{itemize}

\item $M_{!A} \stackrel{\mathrm{df. }}{=} M_A \times \mathbb{N}$;

\item $\lambda_{!A} : (a, i) \mapsto \lambda_A(a)$;

\item $\star \vdash_{!A} (a, i) \stackrel{\mathrm{df. }}{\Leftrightarrow} \star \vdash_A a$;

\item $(a, i) \vdash_{!A} (a', j) \stackrel{\mathrm{df. }}{\Leftrightarrow} i = j \wedge a \vdash_A a'$;

\item $\Delta_{!A} : (a, i) \mapsto (\Delta_A(a), i)$;

\item $P_{!A} \stackrel{\mathrm{df. }}{=} \{ \boldsymbol{s} \in \mathscr{L}_{!A} \mid \forall i \in \mathbb{N} . \ \! \boldsymbol{s} \upharpoonright i \in P_A \ \! \}$

\end{itemize}
where $\boldsymbol{s} \upharpoonright i$ is the j-subsequence of $\boldsymbol{s}$ that consists of occurrences of the form $(a, i)$ but changed into $a$.

The naive idea is then to formalize each `tag' $(\_, i)$ by an effective tag $[\_]_{\underline{i}}$, but as already mentioned before, we need to generalize it to an extended effective tag $[\_]_{\boldsymbol{f}}$.
Therefore, we formalize exponential as follows: 
\begin{definition}[Exponential \cite{mccusker1998games}]
\label{DefExponential}
The \emph{\bfseries exponential} $!A$ of a game $A$ is defined by:
\begin{itemize}

\item $M_{!A} \stackrel{\mathrm{df. }}{=} \{ [m]_{\Lbag \boldsymbol{f} \Rbag \hbar \boldsymbol{e}} \ \! | \ \! [m]_{\boldsymbol{e}} \in M_A, \boldsymbol{f} \in \mathcal{T} \ \! \}$; 

\item  $\lambda_{!A} ([m]_{\Lbag \boldsymbol{f} \Rbag \hbar \boldsymbol{e}}) \stackrel{\mathrm{df. }}{=} \lambda_A([m]_{\boldsymbol{e}})$;

\item $\star \vdash_{!A} [m]_{\Lbag \boldsymbol{f} \Rbag \hbar \boldsymbol{e}} \stackrel{\mathrm{df. }}{\Leftrightarrow} \star \vdash_A [m]_{\boldsymbol{e}}$; 

\item $[m]_{\Lbag \boldsymbol{f} \Rbag \hbar \boldsymbol{e}} \vdash_{!A} [m']_{\Lbag \boldsymbol{f'} \Rbag \hbar \boldsymbol{e'}} \stackrel{\mathrm{df. }}{\Leftrightarrow} \boldsymbol{f} = \boldsymbol{f'} \wedge [m]_{\boldsymbol{e}} \vdash_A [m']_{\boldsymbol{e'}}$;

\item  $\Delta_{!A} ([m]_{\Lbag \boldsymbol{f} \Rbag \hbar \boldsymbol{e}}) \stackrel{\mathrm{df. }}{=} [m']_{\Lbag \boldsymbol{f} \Rbag \hbar \boldsymbol{e}}$, where $\Delta_A([m]_{\boldsymbol{e}}) = [m']_{\boldsymbol{e}}$;

\item $P_{!A} \stackrel{\mathrm{df. }}{=} \{ \boldsymbol{s} \in \mathscr{L}_{!A} \mid \forall \boldsymbol{f} \in \mathcal{T} . \ \! \boldsymbol{s} \upharpoonright \boldsymbol{f} \in P_A \wedge (\boldsymbol{s} \upharpoonright \boldsymbol{f} \neq \boldsymbol{\epsilon} \Rightarrow \forall \boldsymbol{g} \in \mathcal{T} . \ \! \boldsymbol{s} \upharpoonright \boldsymbol{g} \neq \boldsymbol{\epsilon} \Rightarrow \mathit{ede}(\boldsymbol{f}) \neq \mathit{ede}(\boldsymbol{g})) \ \! \}$, where $\boldsymbol{s} \upharpoonright \boldsymbol{f}$ is the j-subsequence of $\boldsymbol{s}$ that consists of moves of the form $[m]_{\Lbag \boldsymbol{f} \Rbag \hbar \boldsymbol{e}}$ changed into $[m]_{\boldsymbol{e}}$.

%\item $\boldsymbol{s} \simeq_{!A} \boldsymbol{t} \stackrel{\mathrm{df. }}{\Leftrightarrow} \exists \varphi \in \mathcal{P}(\mathbb{N}) . \ \! (\pi_1 \circ \mathit{ede} \circ \pi_2)^\ast(\boldsymbol{s}) = (\varphi \circ \pi_1 \circ \mathit{ede} \circ \pi_2)^\ast(\boldsymbol{t}) \wedge \forall i \in \mathbb{N} . \ \! \boldsymbol{s} \upharpoonright \varphi (i) \simeq_A \boldsymbol{t} \upharpoonright i$.

\end{itemize}
\end{definition}
Thus, our exponential $!A$ is essentially a slight modification of the one in \cite{mccusker1998games,hyland1997game} which generalizes moves $[m]_{\underline{i} \hbar \boldsymbol{e}}$ to $[m]_{\Lbag \boldsymbol{f} \Rbag \hbar \boldsymbol{e}}$, where $[m]_{\boldsymbol{e}} \in M_A$, $i \in \mathbb{N}$ and $\boldsymbol{f} \in \mathcal{T}$.
By the condition on positions of $!A$, an element $\boldsymbol{f}$ in an outer tag $[\_]_{\Lbag \boldsymbol{f} \Rbag \hbar \boldsymbol{e}}$ that represents a natural number $i \in \mathbb{N}$, i.e., $\mathit{ede}(\boldsymbol{f}) = i$, is unique in each $\boldsymbol{s} \in P_{!A}$.
%Note that by the definition of $P_{!A}$ there is at most one thread with a tag $\Lbag \boldsymbol{f} \Rbag \hbar \boldsymbol{e}$ such that $\mathit{ede}(\Lbag \boldsymbol{f} \Rbag) = i$ for each $i \in \mathbb{N}$ in a position of $!A$.
%In contrast, $\simeq_{!A}$ identifies tags that denote the same number.
%Therefore our exponential $!A$ is essentially the same as the one in \cite{mccusker1998games,hyland1997game} except that there is a choice in the implementation $\Lbag \boldsymbol{f} \Rbag$ of tags $i \in \mathbb{N}$ (and if $A$ is well-opened, which will be our focus, then the implementation $\Lbag \boldsymbol{f} \Rbag$ for each $i \in \mathbb{N}$ is unique within $!A$ by the definition of $\vdash_{!A}$).
%Note that if $A$ is well-opened, which will be our focus, then the implementation $\Lbag \boldsymbol{f} \Rbag$ for each $i \in \mathbb{N}$ is unique within a play of $!A$ by the definition of $\vdash_{!A}$
\if0
\begin{remark}
Unlike games in \cite{abramsky2000full,mccusker1998games}, we have not introduced an equivalence relation on positions of each game to ignore permutations of `tags' for exponential because we are not concerned with definability nor full abstraction in this paper.
Also, contrary to the rather simple exponential in \cite{abramsky1999game,yamada2016dynamic}, our exponential may distinguish `nested tags', and thus we do not have to require games to be \emph{well-opened} to induce a cartesian closed structure (see \cite{mccusker1998games} for this point). 
\end{remark}
\fi

\begin{notation}
We often write $A \Rightarrow B$ for the linear implication $!A \multimap B$ for any games $A$ and $B$, which we call the \emph{\bfseries implication} or \emph{\bfseries function space} from $A$ to $B$.
The constructions $\multimap$ and $\Rightarrow$ are both right associative.
\end{notation}

\begin{example}
Some typical plays in the exponential $! \boldsymbol{2}$ are as follows:
\begin{center}
\begin{tabular}{ccccc}
$! \boldsymbol{2}$ &&&& $!\boldsymbol{2}$ \\ \cline{1-1} \cline{5-5}
\tikzmark{ExC1} $[\hat{q}]_{\Lbag \underline{10} \Rbag \hbar}$&&&&\tikzmark{ExC3} $[\hat{q}]_{\Lbag \underline{2} \hbar \underline{3} \hbar \underline{5} \Rbag \hbar}$ \\
\tikzmark{ExD1} $[\mathit{tt}]_{\Lbag \underline{10} \Rbag \hbar}$&&&&\tikzmark{ExD3} $[\mathit{tt}]_{\Lbag \underline{2} \hbar \underline{3} \hbar \underline{5} \Rbag \hbar}$ \\
\tikzmark{ExC2} $[\hat{q}]_{\Lbag \underline{100} \Rbag \hbar}$&&&&\tikzmark{ExC4} $[\hat{q}]_{\Lbag \Lbag \underline{2} \hbar \underline{3} \Rbag \hbar \underline{5} \Rbag \hbar}$ \\
\tikzmark{ExD2} $[\mathit{ff}]_{\Lbag \underline{100} \Rbag \hbar}$&&&&\tikzmark{ExD4} $[\mathit{tt}]_{\Lbag \Lbag \underline{2} \hbar \underline{3} \Rbag \hbar \underline{5} \Rbag \hbar}$ 
%$[q]_{\langle \underline{1} \sharp \underline{1} \rangle \sharp}$&&&& \\
%$[\mathit{tt}]_{\langle \underline{1} \sharp \underline{1} \rangle \sharp}$&&&&
\end{tabular}
\begin{tikzpicture}[overlay, remember picture, yshift=.25\baselineskip]
\draw [->] ({pic cs:ExD1}) [bend left] to ({pic cs:ExC1});
\draw [->] ({pic cs:ExD2}) [bend left] to ({pic cs:ExC2});
\draw [->] ({pic cs:ExD3}) [bend left] to ({pic cs:ExC3});
\draw [->] ({pic cs:ExD4}) [bend left] to ({pic cs:ExC4});
\end{tikzpicture}
\end{center}
\end{example}

Similarly to the case of pairing, exponential is generalized in \cite{yamada2016dynamic}: Given a game $G$ such that $\mathcal{H}^\omega(G) \trianglelefteqslant \ !A \multimap B$ for some normalized games $A$ and $B$, there is the \emph{promotion} $G^\dagger$ of $G$ such that $\mathcal{H}^\omega(G^\dagger) \trianglelefteqslant \ !A \multimap \ !B$.
In fact, promotion is a generalization of exponential for $(!T \multimap B)^\dagger \simeq \ !B$ holds for any normalized game $B$, where note that $!T \multimap B \simeq B$; see \cite{yamada2016dynamic} for the proof.
Promotion of games has been defined in \cite{yamada2016dynamic} because a morphism $A \to B$ in the bicategory $\mathcal{DG}$ is a strategy $\phi : G$ such that $\mathcal{H}^\omega(G) \trianglelefteqslant \ !A \multimap B$, and therefore it is necessary to take a generalized promotion $\phi^\dagger$ (Definition~\ref{DefPromotionOfStrategies}) for composition of strategies in $\mathcal{DG}$, whose ambient game is $G^\dagger$.
The promotion $G^\dagger$ is simply given by:
\begin{itemize}

\item $M_{G^\dagger} \stackrel{\mathrm{df. }}{=} ((M_G \setminus M_{!A}) \times \mathbb{N}) + M_{!A}$;

\item $\lambda_{G^\dagger} : ((m, i) \in (M_G \setminus M_{!A}) \times \mathbb{N}) \mapsto \lambda_G(m), ((a, j) \in M_{!A}) \mapsto \lambda_G(a, j)$;

\item $\star \vdash_{G^\dagger} (m, i) \stackrel{\mathrm{df. }}{\Leftrightarrow} \star \vdash_G m$ for all $i \in \mathbb{N}$;

\item $(m, i) \vdash_{G^\dagger} (n, j) \stackrel{\mathrm{df. }}{\Leftrightarrow} (i = j \wedge m, n \in M_G \setminus M_{!A} \wedge m \vdash_G n) \\ \vee (i = j \wedge m \vdash_A n) \vee (m \in M_G \setminus M_{!A} \wedge (n, j) \in M_{!A} \wedge m \vdash_G (n, j))$;

\item $\Delta_{G^\dagger} : (m, i) \mapsto (\Delta_G(m), i)$;

\item $P_{G^\dagger} \stackrel{\mathrm{df. }}{=} \{ \boldsymbol{s} \in \mathscr{L}_{G^\dagger} \mid \forall i \in \mathbb{N} . \ \! \boldsymbol{s} \upharpoonright i \in P_G \ \! \}$, where $\boldsymbol{s} \upharpoonright i$ is the j-subsequence of $\boldsymbol{s}$ that consists of moves of the form $(m, i)$ with $m \in M_G \setminus M_{!A}$ or $(a, \langle i, j \rangle)$ with $a \in M_A$ and $j \in \mathbb{N}$ but changed into $m$ or $(a, j)$, respectively.

\end{itemize}

Let us formalize `tags' of moves of $G^\dagger$ as follows:
\begin{itemize}

\item We do not change tags of moves of $G$ coming from $!A$, i.e., ones of the form $[(a, \mathscr{W})]_{\Lbag \boldsymbol{f} \Rbag \hbar \boldsymbol{e}}$;

\item We duplicate moves of $G$ coming from $B$, i.e., ones of the form $[(b, \mathscr{E})]_{\boldsymbol{e}}$, as $[(b, \mathscr{E})]_{\Lbag \boldsymbol{f} \Rbag \hbar \boldsymbol{e}}$ for each $\boldsymbol{f} \in \mathcal{T}$;

\item We duplicate internal moves $[m]_{\boldsymbol{e}}$ of $G$ as $[(m, \mathscr{S})]_{\Lbag \boldsymbol{f} \Rbag \hbar \boldsymbol{e}}$ for each $\boldsymbol{f} \in \mathcal{T}$.

\end{itemize}
where again this way of formalizing `tags' is far from canonical. 

Then, the labeling function, the enabling relation and the dummy function of $G^\dagger$ are again defined by pattern matching on inner tags in the obvious manner, for which unlike the case of pairing we do not use peeling or attributes as promotion is simpler.
Also, positions of $G^\dagger$ are given by a straightforward generalization of those of exponential defined in Definition~\ref{DefExponential}. 

Formally, we define promotion on games as follows: 
\begin{definition}[Promotion of games \cite{yamada2016dynamic}]
Given a game $G$ with $\mathcal{H}^\omega(G) \trianglelefteqslant \ !A \multimap B$ for some normalized games $A$ and $B$, its \emph{\bfseries promotion} $G^\dagger$ is given by:
\begin{itemize}

\item $M_{G^\dagger} \stackrel{\mathrm{df. }}{=} \{ [(a, \mathscr{W})]_{\Lbag \boldsymbol{f} \Rbag \hbar \boldsymbol{e}} \ \! | \ \! [(a, \mathscr{W})]_{\Lbag \boldsymbol{f} \Rbag \hbar \boldsymbol{e}} \in M_G, [a]_{\boldsymbol{e}} \in M_A \} \\
\cup \{ [(b, \mathscr{E})]_{\Lbag \boldsymbol{f} \Rbag \hbar \boldsymbol{e}} \ \! | \ \! [(b, \mathscr{E})]_{\boldsymbol{e}} \in M_G, [b]_{\boldsymbol{e}} \in M_B, \boldsymbol{f} \in \mathcal{T} \ \! \} \\ \cup \{ [(m, \mathscr{S})]_{\Lbag \boldsymbol{f} \Rbag \hbar \boldsymbol{e}} \ \! | \ \! [m]_{\boldsymbol{e}} \in M_G^{\mathsf{Int}}, \boldsymbol{f} \in \mathcal{T} \ \! \}$;

\item $\lambda_{G^\dagger} : [(a, \mathscr{W})]_{\Lbag \boldsymbol{f} \Rbag \hbar \boldsymbol{e}} \mapsto \lambda_G([(a, \mathscr{W})]_{\Lbag \boldsymbol{f} \Rbag \hbar \boldsymbol{e}}), [(b, \mathscr{E})]_{\Lbag \boldsymbol{f} \Rbag \hbar \boldsymbol{e}} \mapsto \lambda_G([(b, \mathscr{E})]_{\boldsymbol{e}}), [(m, \mathscr{S})]_{\Lbag \boldsymbol{f} \Rbag \hbar \boldsymbol{e}} \mapsto \lambda_G([m]_{\boldsymbol{e}})$;

\item $\star \vdash_{G^\dagger} [(b, X)]_{\Lbag \boldsymbol{f} \Rbag \hbar \boldsymbol{e}} \stackrel{\mathrm{df. }}{\Leftrightarrow} X = \mathscr{E} \wedge \star \vdash_B [b]_{\boldsymbol{e}}$;

\item $[(m, X)]_{\Lbag \boldsymbol{f} \Rbag \hbar \boldsymbol{e}} \vdash_{G^\dagger} [(n, Y)]_{\Lbag \boldsymbol{g} \Rbag \hbar \boldsymbol{h}} \stackrel{\mathrm{df. }}{\Leftrightarrow} (X = \mathscr{W} = Y \wedge [(m, \mathscr{W})]_{\Lbag \boldsymbol{f} \Rbag \hbar \boldsymbol{e}} \vdash_{G} [(n, \mathscr{W})]_{\Lbag \boldsymbol{g} \Rbag \hbar \boldsymbol{h}}) \vee (X = \mathscr{E} \wedge Y = \mathscr{W} \wedge \star \vdash_B [m]_{\boldsymbol{e}} \wedge \star \vdash_A [n]_{\boldsymbol{h}}) \\ \vee (X = \mathscr{E} = Y \wedge [(m, \mathscr{E})]_{\boldsymbol{e}} \vdash_{G} [(n, \mathscr{E})]_{\boldsymbol{h}}) \vee (X = \mathscr{S} = Y \wedge [m]_{\boldsymbol{e}} \vdash_{G} [n]_{\boldsymbol{h}})$;

\item $\Delta_{G^\dagger} : [(m, \mathscr{S})]_{\Lbag \boldsymbol{f} \Rbag \hbar \boldsymbol{e}} \stackrel{\mathrm{df. }}{=} [(m', \mathscr{S})]_{\Lbag \boldsymbol{f} \Rbag \hbar \boldsymbol{e}}$, where $\Delta_G([m]_{\boldsymbol{e}}) = [m']_{\boldsymbol{e}}$;

\item $P_{G^\dagger} \stackrel{\mathrm{df. }}{=} \{ \boldsymbol{s} \in \mathscr{L}_{G^\dagger} \mid \forall \boldsymbol{f} \in \mathcal{T} . \ \! \boldsymbol{s} \upharpoonright \boldsymbol{f} \in P_G \wedge (\boldsymbol{s} \upharpoonright \boldsymbol{f} \neq \boldsymbol{\epsilon} \Rightarrow \forall \boldsymbol{g} \in \mathcal{T} . \ \! \boldsymbol{s} \upharpoonright \boldsymbol{g} \neq \boldsymbol{\epsilon} \Rightarrow \mathit{ede}(\boldsymbol{f}) \neq \mathit{ede}(\boldsymbol{g})) \ \! \}$, where $\boldsymbol{s} \upharpoonright \boldsymbol{f}$ is a j-subsequence of $\boldsymbol{s}$ that consists of moves of the form $[(a, \mathscr{W})]_{\Lbag \Lbag \boldsymbol{f} \Rbag \hbar \Lbag \boldsymbol{g} \Rbag \Rbag \hbar \boldsymbol{e}}$ with $[a]_{\boldsymbol{e}} \in M_A$, $[(b, \mathscr{E})]_{\Lbag \boldsymbol{f} \Rbag \hbar \boldsymbol{e}}$ with $[b]_{\boldsymbol{e}} \in M_B$ or $[(m, \mathscr{S})]_{\Lbag \boldsymbol{f} \Rbag \hbar \boldsymbol{e}}$ with $[m]_{\boldsymbol{e}} \in M_G^{\mathsf{Int}}$ changed into $[(a, \mathscr{W})]_{\Lbag \boldsymbol{g} \Rbag \hbar \boldsymbol{e}}$, $[(b, \mathscr{E})]_{\boldsymbol{e}}$ and $[m]_{\boldsymbol{e}}$, respectively.

\end{itemize}

\end{definition}

Now, let us recall \emph{concatenation} of games, which was first introduced in \cite{yamada2016dynamic}:
Given games $J$ and $K$ such that $\mathcal{H}^\omega(J) \trianglelefteqslant A \multimap B$ and $\mathcal{H}^\omega(K) \trianglelefteqslant B \multimap C$ for some normalized games $A$, $B$ and $C$, their concatenation $J \ddagger K$ is given by:
\begin{itemize}

\item $M_{J \ddagger K} \stackrel{\mathrm{df. }}{=} M_J + M_K$;

\item $\lambda_{J \ddagger K} \stackrel{\mathrm{df. }}{=} [\lambda_J \downharpoonright M_{B^{[1]}}, \lambda^{+\mu}_J \upharpoonright M_{B^{[1]}}, \lambda^{+\mu}_K \upharpoonright M_{B^{[2]}}, \lambda_K \downharpoonright M_{B^{[2]}}]$, where $\lambda_G^{+ \mu} \stackrel{\mathrm{df. }}{=}  \langle \lambda_G^{\mathsf{OP}}, \lambda_G^{\mathsf{QA}}, n \mapsto \lambda_G^{\mathbb{N}} (n) + \mu \rangle$ ($G$ is $J$ or $K$), and $\mu \stackrel{\mathrm{df. }}{=} \mathsf{Max}(\mu(J), \mu(K)) + 1$;

\item $\star \vdash_{J \ddagger K} m \stackrel{\mathrm{df. }}{\Leftrightarrow} \star \vdash_K m$;

\item $m \vdash_{J \ddagger K} n \ (m \neq \star) \stackrel{\mathrm{df. }}{\Leftrightarrow} m \vdash_J n \vee m \vdash_K n \vee (\star \vdash_{B^{[2]}} m \wedge \star \vdash_{B^{[1]}} n)$;

\item $\Delta_{J \ddagger K} \stackrel{\mathrm{df. }}{=} [\Delta_J, \Delta_K] \upharpoonright M_{J \ddagger K}$;

\item $P_{J \ddagger K} \stackrel{\mathrm{df. }}{=} \{ \boldsymbol{s} \in \mathscr{J}_{J \ddagger K} \mid \boldsymbol{s} \upharpoonright J \in P_J, \boldsymbol{s} \upharpoonright K \in P_K, \boldsymbol{s}  \upharpoonright B^{[1]}, B^{[2]} \in \mathit{pr}_B \ \! \}$, where $\mathit{pr}_B \stackrel{\mathrm{df. }}{=} \{ \bm{s} \in P_{B^{[1]} \multimap B^{[2]}} \mid \forall \bm{t} \preceq{\bm{s}}. \ \mathsf{Even}(\bm{t}) \Rightarrow \bm{t} \upharpoonright B^{[1]} = \bm{t} \upharpoonright B^{[2]} \}$.

\end{itemize}
Note that moves of $B$ (in $J$ or $K$) become internal in $J \ddagger K$, and therefore they would be deleted by the hiding operation $\mathcal{H}$ on games.

Concatenation corresponds to \emph{composition without hiding} in the introduction, and it plays a central role in \cite{yamada2016dynamic}.
We shall see later that \emph{concatenation $\sigma \ddagger \tau$ of strategies} $\sigma : J$ and $\tau : K$ (Definition~\ref{DefConcatenationAndCompositionOfStrategies}), where $\mathcal{H}^\omega(J) \trianglelefteqslant A \multimap B$ and $\mathcal{H}^\omega(K) \trianglelefteqslant B \multimap C$ for some normalized games $A$, $B$ and $C$, forms a strategy on the game $J \ddagger K$, and also that $\mathcal{H}^\omega(\sigma) : A \multimap B$, $\mathcal{H}^\omega(\tau) : B \multimap C$ and $\mathcal{H}^\omega(\sigma) ; \mathcal{H}^\omega(\tau) = \mathcal{H}^\omega(\sigma \ddagger \tau) : \mathcal{H}^\omega(J \ddagger K) \trianglelefteqslant A \multimap C$, whence $\mathcal{H}^\omega(\sigma) ; \mathcal{H}^\omega(\tau) : A \multimap C$ (for $\phi : G \trianglelefteqslant H$ implies $\phi : H$ for any strategy $\phi$ and games $G$ and $H$; see \cite{yamada2016dynamic} for the proof).
Note that $\mathcal{H}^\omega(\sigma) ; \mathcal{H}^\omega(\tau) : \mathcal{H}^\omega(J \ddagger K) \trianglelefteqslant A \multimap C$ becomes just the familiar relation $\sigma ; \tau : A \multimap C$ when $\sigma : J = A \multimap B$ and $\tau : K = B \multimap C$; concatenation is to capture a generalization of this phenomenon.
%In fact, \cite{yamada2016dynamic} has introduced concatenation as a generalization of this phenomenon.

Let us formalize `tags' for concatenation as follows:
\begin{itemize}

\item We do not change moves of $A$ or $C$, i.e., ones of the form $[(a, \mathscr{W})]_{\boldsymbol{e}} \in M_J^{\mathsf{Ext}}$ or $[(c, \mathscr{E})]_{\boldsymbol{f}} \in M_K^{\mathsf{Ext}}$;

\item We change moves of $B$ in $J$, i.e., external ones of the form $[(b, \mathscr{E})]_{\boldsymbol{g}}$, into $[((b, \mathscr{E}), \mathscr{S})]_{\boldsymbol{g}}$;

\item We change moves of $B$ in $K$, i.e., external ones of the form $[(b, \mathscr{W})]_{\boldsymbol{g}}$, into $[((b, \mathscr{W}), \mathscr{N})]_{\boldsymbol{g}}$;

\item We change internal moves $[m]_{\boldsymbol{l}}$ of $J$ into $[(m, \mathscr{S})]_{\boldsymbol{l}}$;

\item We change internal moves $[n]_{\boldsymbol{r}}$ of $K$ into $[(n, \mathscr{N})]_{\boldsymbol{r}}$.

\end{itemize}

Then again, the labeling function, the enabling relation and the dummy function of $J \ddagger K$ are defined by the obvious pattern matching, and positions of $J \ddagger K$ are defined as usual.

Formally, concatenation of games is defined as follows, where it should be now clear how the peeling $\mathit{peel}_{J \ddagger K}$ and the attributes $\mathit{att}_{J \ddagger K}$ work:
\begin{definition}[Concatenation of games \cite{yamada2016dynamic}]
\label{DefConcatenationOfGames}
Given games $J$, $K$, $A$, $B$ and $C$ with $\mathcal{H}^\omega(J) \trianglelefteqslant A \multimap B$ and $\mathcal{H}^\omega(K) \trianglelefteqslant B \multimap C$, the \emph{\bfseries concatenation} $J \ddagger K$ of $J$ and $K$ is defined by:
\begin{itemize}

\item $M_{J \ddagger K} \stackrel{\mathrm{df. }}{=} \{ [(a, \mathscr{W})]_{\boldsymbol{e}} \mid [(a, \mathscr{W})]_{\boldsymbol{e}} \in M_J^{\mathsf{Ext}}, [a]_{\boldsymbol{e}} \in M_A \} \\ \cup \{ [(c, \mathscr{E})]_{\boldsymbol{f}} \mid [(c, \mathscr{E})]_{\boldsymbol{f}} \in M_K^{\mathsf{Ext}}, [c]_{\boldsymbol{f}} \in M_C \} \\ \cup \{ [((b, \mathscr{E}), \mathscr{S})]_{\boldsymbol{g}} \mid [(b, \mathscr{E})]_{\boldsymbol{g}} \in M_J^{\mathsf{Ext}}, [b]_{\boldsymbol{g}} \in M_B \} \\ \cup \{ [((b, \mathscr{W}), \mathscr{N})]_{\boldsymbol{g}} \mid [(b, \mathscr{W})]_{\boldsymbol{g}} \in M_K^{\mathsf{Ext}}, [b]_{\boldsymbol{g}} \in M_B \} \\ \cup \{ [(m, \mathscr{S})]_{\boldsymbol{l}} \mid [m]_{\boldsymbol{l}} \in M_J^{\mathsf{Int}} \} \cup \{ [(n, \mathscr{N})]_{\boldsymbol{r}} \mid [n]_{\boldsymbol{r}} \in M_K^{\mathsf{Int}} \}$;

\item $\lambda_{J \ddagger K}([(m, X)]_{\boldsymbol{e}}) \stackrel{\mathrm{df. }}{=} \begin{cases} \lambda_J^{+\mu}([m]_{\boldsymbol{e}}) &\text{if $X = \mathscr{S} \wedge \exists [b]_{\boldsymbol{e}} \in M_B . \ \! [m]_{\boldsymbol{e}} = [(b, \mathscr{E})]_{\boldsymbol{e}} \in M_J^{\mathsf{Ext}}$;} \\ \lambda_J([m]_{\boldsymbol{e}}) &\text{if $X = \mathscr{W} \vee (X = \mathscr{S} \wedge [m]_{\boldsymbol{e}} \in M_J^{\mathsf{Int}})$;} \\ \lambda_K^{+\mu}([m]_{\boldsymbol{e}}) &\text{if $X = \mathscr{N} \wedge \exists [b]_{\boldsymbol{e}} \in M_B . \ \! [m]_{\boldsymbol{e}} = [(b, \mathscr{W})]_{\boldsymbol{e}} \in M_K^{\mathsf{Ext}}$;} \\ \lambda_K([m]_{\boldsymbol{e}}) &\text{if $X = \mathscr{E} \vee (X = \mathscr{N} \wedge [m]_{\boldsymbol{e}} \in M_K^{\mathsf{Int}});$} \end{cases}$ %where $\mu \stackrel{\mathrm{df. }}{=} \mathsf{Sup}(\{ \mu(J), \mu(K) \}) + 1 \in \mathbb{N}$ and $\lambda_G^{+\mu} \stackrel{\mathrm{df. }}{=}  \langle \lambda_G^{\mathsf{OP}}, \lambda_G^{\mathsf{QA}}, x \mapsto \lambda_G^{\mathbb{N}}(x) + \mu \rangle$ for any game $G$ (n.b. $\mu \in \mathbb{N}$ since $\mu(J), \mu(K) \in \mathbb{N}$);

\item $\star \vdash_{J \ddagger K} [(m, X)]_{\boldsymbol{e}} \stackrel{\mathrm{df. }}{\Leftrightarrow} X = \mathscr{E} \wedge \star \vdash_C [m]_{\boldsymbol{e}}$;

\item $[(m, X)]_{\boldsymbol{e}} \vdash_{J \ddagger K} [(n, Y)]_{\boldsymbol{f}} \stackrel{\mathrm{df. }}{\Leftrightarrow} \begin{cases} \begin{aligned} & (\mathit{att}_{J \ddagger K}([(m, X)]_{\boldsymbol{e}}) = J = \mathit{att}_{J \ddagger K}([(n, Y)]_{\boldsymbol{f}}) \\ & \ \wedge \mathit{peel}_{J \ddagger K}([(m, X)]_{\boldsymbol{e}}) \vdash_J\mathit{peel}_{J \ddagger K}([(n, Y)]_{\boldsymbol{f}})) \\ &\vee (\mathit{att}_{J \ddagger K}([(m, X)]_{\boldsymbol{e}}) = K = \mathit{att}_{J \ddagger K}([(n, Y)]_{\boldsymbol{f}}) \\ & \ \ \ \ \wedge \mathit{peel}_{J \ddagger K}([(m, X)]_{\boldsymbol{e}})  \vdash_K \mathit{peel}_{J \ddagger K}([(n, Y)]_{\boldsymbol{f}})) \\ &\vee (X = \mathscr{N} \wedge Y = \mathscr{S} \wedge \exists [b]_{\boldsymbol{e}}, [b']_{\boldsymbol{f}} \in M_B^{\mathsf{Init}} . \\ & \ \ \ \ \ m = (b, \mathscr{W}) \wedge n = (b, \mathscr{E})) \end{aligned} \end{cases}$ where the function $\mathit{att}_{J \ddagger K} : M_{J \ddagger K} \to \{ J, K \}$ is defined by $[(a, \mathscr{W})]_{\boldsymbol{e}} \mapsto J$, $[(m, \mathscr{S})]_{\boldsymbol{l}} \mapsto J$, $[((b, \mathscr{E}), \mathscr{S})]_{\boldsymbol{g}} \mapsto J$, $[(c, \mathscr{E})]_{\boldsymbol{f}} \mapsto K$, $[(n, \mathscr{N})]_{\boldsymbol{r}} \mapsto K$, $[((b, \mathscr{W}), \mathscr{N})]_{\boldsymbol{g}} \mapsto K$, and the function $\mathit{peel}_{J \ddagger K} : M_{J \ddagger K} \to M_J \cup M_K$ by $[(a, \mathscr{W})]_{\boldsymbol{e}} \mapsto [(a, \mathscr{W})]_{\boldsymbol{e}}$, $[(c, \mathscr{E})]_{\boldsymbol{f}} \mapsto [(c, \mathscr{E})]_{\boldsymbol{f}}$, $[((b, \mathscr{E}), \mathscr{S})]_{\boldsymbol{g}} \mapsto [(b, \mathscr{E})]_{\boldsymbol{g}}$, $[((b, \mathscr{W}), \mathscr{N})]_{\boldsymbol{g}} \mapsto [(b, \mathscr{W})]_{\boldsymbol{g}}$, $[(m, \mathscr{S})]_{\boldsymbol{l}} \mapsto [m]_{\boldsymbol{l}}$, $[(n, \mathscr{N})]_{\boldsymbol{r}} \mapsto [n]_{\boldsymbol{r}}$;

\item $\Delta_{J \ddagger K} ([(m, X)]_{\boldsymbol{e}}) \stackrel{\mathrm{df. }}{=} \begin{cases} [(m', \mathscr{S})]_{\boldsymbol{e}} &\text{if $X = \mathscr{S}$ and $\Delta_J([m]_{\boldsymbol{e}}) = [m']_{\boldsymbol{e}}$;} \\ [(m'', \mathscr{N})]_{\boldsymbol{e}} &\text{if $X = \mathscr{N}$ and $\Delta_K([m]_{\boldsymbol{e}}) = [m'']_{\boldsymbol{e}}$;} \\ [((b, \mathscr{W}), \mathscr{N})]_{\boldsymbol{e}} &\text{if $X = \mathscr{S}$, $\Delta_J([m]_{\boldsymbol{e}}) \uparrow$ and $m = (b, \mathscr{E})$;} \\ [((b, \mathscr{E}), \mathscr{S})]_{\boldsymbol{e}} &\text{if $X = \mathscr{N}$, $\Delta_K([m]_{\boldsymbol{e}}) \uparrow$ and $m = (b, \mathscr{W})$;} \end{cases}$

\item $P_{J \ddagger K} \stackrel{\mathrm{df. }}{=} \{ \boldsymbol{s} \in \mathscr{J}_{J \ddagger K} \mid \boldsymbol{s} \upharpoonright J \in P_J, \boldsymbol{s} \upharpoonright K \in P_K, \boldsymbol{s} \upharpoonright B^{[1]}, B^{[2]} \in \mathit{pr}_B \}$, where $\boldsymbol{s} \upharpoonright J$ (resp. $\boldsymbol{s} \upharpoonright K$) is the j-subsequence of $\boldsymbol{s}$ that consists of moves $m$ such that $\mathit{att}_{J \ddagger K}(m) = J$ (resp. $\mathit{att}_{J \ddagger K}(m) = K$) changed into $\mathit{peel}_{J \ddagger K}(m)$,  $B^{[1]}$ and $B^{[2]}$ are two copies of $B$, $\boldsymbol{s} \upharpoonright B^{[1]}, B^{[2]}$ is the j-subsequence of $\boldsymbol{s}$ that consists of moves of $B^{[1]}$ or $B^{[2]}$, i.e., moves $[((b, X), Y)]_{\boldsymbol{e}}$ such that $[b]_{\boldsymbol{e}} \in M_B \wedge ((X = \mathscr{E} \wedge Y = \mathscr{S}) \vee (X = \mathscr{W} \wedge Y = \mathscr{N}))$ changed into $[(b, \overline{X})]_{\boldsymbol{e}}$, for which $\overline{\mathscr{E}} \stackrel{\mathrm{df. }}{=} \mathscr{W}$ and $\overline{\mathscr{W}} \stackrel{\mathrm{df. }}{=} \mathscr{E}$, and $\mathit{pr}_B \stackrel{\mathrm{df. }}{=} \{ \boldsymbol{t} \in P_{B^{[1]} \multimap B^{[2]}} \ \! | \! \ \forall \boldsymbol{u} \preceq{\boldsymbol{t}}. \ \mathsf{Even}(\boldsymbol{u}) \Rightarrow \boldsymbol{u} \upharpoonright \mathscr{W} = \boldsymbol{u} \upharpoonright \mathscr{E} \ \! \}$.

%\item $\boldsymbol{s} \simeq_{J \ddagger K} \boldsymbol{t} \stackrel{\mathrm{df. }}{\Leftrightarrow} (\pi_2 \circ \pi_1)^\ast(\boldsymbol{s}) = (\pi_2 \circ \pi_1)^\ast(\boldsymbol{t}) \wedge \boldsymbol{s} \upharpoonright \mathscr{E} \simeq_J \boldsymbol{t} \upharpoonright \mathscr{E} \wedge \boldsymbol{s} \upharpoonright \mathscr{W} \simeq_K \boldsymbol{t} \upharpoonright \mathscr{W}$.

\end{itemize}

\end{definition}

\begin{example}
Typical plays of the concatenation $(\mathcal{N} \multimap \mathcal{N}) \ddagger (\mathcal{N} \multimap \mathcal{N})$ are:
\begin{center}
\begin{tabular}{ccccccc}
$(\mathcal{N}$ & $\multimap$ & $\mathcal{N})$ & $\ddagger$ & $(\mathcal{N}$ & $\multimap$ & $\mathcal{N})$ \\ \cline{1-7} 
&&&&&&\tikzmark{ConC1} $[(\hat{q}, \mathscr{E})]$ \tikzmark{ConC10} \\
&&&&\tikzmark{ConC5} $[((\hat{q}, \mathscr{W}), \mathscr{N})]$ \tikzmark{ConD1}&& \\
&&\tikzmark{ConC2} $[((\hat{q}, \mathscr{E}), \mathscr{S})]$ \tikzmark{ConC4}&&&& \\
\tikzmark{ConC3} $[(\hat{q}, \mathscr{W})]$ \tikzmark{ConD2}&&&&&& \\
\tikzmark{ConD3} $[(\mathit{no}, \mathscr{W})]$&&&&&& \\
&&\tikzmark{ConC7} $[((\mathit{yes}, \mathscr{E}), \mathscr{S})]$ \tikzmark{ConD4}&&&& \\
&&&&\tikzmark{ConD5} $[((\mathit{yes}, \mathscr{W}), \mathscr{N})]$ \tikzmark{ConC6}&& \\
&&&&\tikzmark{ConC9} $[((q, \mathscr{W}), \mathscr{N})]$ \tikzmark{ConD6}&& \\
&&\tikzmark{ConD7} $[((q, \mathscr{E}), \mathscr{S})]$ \tikzmark{ConC8}&&&& \\
&&$[((\mathit{no}, \mathscr{E}), \mathscr{S})]$ \tikzmark{ConD8}&&&& \\
&&&&\tikzmark{ConD9} $[((\mathit{no}, \mathscr{W}), \mathscr{N})]$&& \\
&&&&&&\tikzmark{ConC11} $[(\mathit{yes}, \mathscr{E})]$ \tikzmark{ConD10} \\
&&&&&&\tikzmark{ConD11} $[(q, \mathscr{E})]$ \tikzmark{ConC12} \\
&&&&&&$[(\mathit{no}, \mathscr{E})]$ \tikzmark{ConD12}
\end{tabular}
\begin{tikzpicture}[overlay, remember picture, yshift=.25\baselineskip]
\draw [->] ({pic cs:ConD1}) to ({pic cs:ConC1});
\draw [->] ({pic cs:ConD2}) to ({pic cs:ConC2});
\draw [->] ({pic cs:ConD3}) [bend left] to ({pic cs:ConC3});
\draw [->] ({pic cs:ConD4}) [bend right] to ({pic cs:ConC4});
\draw [->] ({pic cs:ConD5}) [bend left] to ({pic cs:ConC5});
\draw [->] ({pic cs:ConD6}) [bend right] to ({pic cs:ConC6});
\draw [->] ({pic cs:ConD7}) [bend left] to ({pic cs:ConC7});
\draw [->] ({pic cs:ConD8}) [bend right] to ({pic cs:ConC8});
\draw [->] ({pic cs:ConD9}) [bend left] to ({pic cs:ConC9});
\draw [->] ({pic cs:ConD10}) [bend right] to ({pic cs:ConC10});
\draw [->] ({pic cs:ConD11}) [bend left] to ({pic cs:ConC11});
\draw [->] ({pic cs:ConD12}) [bend right] to ({pic cs:ConC12});
\draw [->] ({pic cs:ConC4}) to ({pic cs:ConC5});
\end{tikzpicture}
\end{center}
\end{example}

%Since these constructions clearly preserve the axioms \textsf{I1}, \textsf{I2} and \textsf{I3} ($\multimap$ preserves \textsf{I2} for \emph{well-opened} games), combined with the closure property of dynamic games under these constructions which has been proved in \cite{yamada2016dynamic}, we have:

Finally, let us recall the rather trivial \emph{currying} $\Lambda$ and \emph{uncurrying} $\Lambda^{\circleddash}$ \cite{abramsky1999game}).
Roughly, currying $\Lambda$ is a generalization of the map $A \otimes B \multimap C \mapsto A \multimap (B \multimap C)$, and uncurrying $\Lambda^\circleddash$ is a generalization of the inverse $A \multimap (B \multimap C) \mapsto A \otimes B \multimap C$, where $A$, $B$ and $C$ are arbitrary normalized games.
For their simplicity, let us skip their informal definitions and just present the formal ones: 
\begin{definition}[Currying of games \cite{abramsky1999game}]
If a game $G$ satisfies $\mathcal{H}^\omega(G) \trianglelefteqslant A \otimes B \multimap C$ for some normalized games $A$, $B$ and $C$, then the \emph{\bfseries currying} $\Lambda(G)$ of $G$ is given by:
\begin{itemize}

\item $M_{\Lambda(G)} \stackrel{\mathrm{df. }}{=} \{ [(a, \mathscr{W})]_{\boldsymbol{e}} \mid [((a, \mathscr{W}), \mathscr{W})]_{\boldsymbol{e}} \in M_G^{\mathsf{Ext}}, [a]_{\boldsymbol{e}} \in M_A \} \\ \cup \{ [((b, \mathscr{W}), \mathscr{E})]_{\boldsymbol{f}} \mid [((b, \mathscr{E}), \mathscr{W})]_{\boldsymbol{f}} \in M_G^{\mathsf{Ext}}, [b]_{\boldsymbol{f}} \in M_B \} \\ \cup \{ [((c, \mathscr{E}), \mathscr{E})]_{\boldsymbol{g}} \mid [(c, \mathscr{E})]_{\boldsymbol{g}} \in M_G^{\mathsf{Ext}}, [c]_{\boldsymbol{g}} \in M_C \} \cup \{ [(m, \mathscr{N})]_{\boldsymbol{h}} \mid [m]_{\boldsymbol{h}} \in M_G^{\mathsf{Int}} \}$;

\item $\lambda_{\Lambda(G)} : [(a, \mathscr{W})]_{\boldsymbol{e}} \mapsto \lambda_G([((a, \mathscr{W}), \mathscr{W})]_{\boldsymbol{e}})$, $[((b, \mathscr{W}), \mathscr{E})]_{\boldsymbol{f}} \mapsto \lambda_G([((b, \mathscr{E}), \mathscr{W})]_{\boldsymbol{f}})$, $[((c, \mathscr{E}), \mathscr{E})]_{\boldsymbol{g}} \mapsto \lambda_G([(c, \mathscr{E})]_{\boldsymbol{g}})$, $[(m, \mathscr{N})]_{\boldsymbol{h}} \mapsto \lambda_G([m]_{\boldsymbol{h}})$;

\item $\star \vdash_{\Lambda(G)} [m]_{\boldsymbol{e}} \stackrel{\mathrm{df. }}{\Leftrightarrow} \exists [c]_{\boldsymbol{e}} \in M_C^{\mathsf{Init}} . \ \! m = ((c, \mathscr{E}), \mathscr{E})$;

\item $[m]_{\boldsymbol{e}} \vdash_{\Lambda(G)} [n]_{\boldsymbol{f}} \stackrel{\mathrm{df. }}{\Leftrightarrow} \mathit{peel}_{\Lambda(G)}([m]_{\boldsymbol{e}}) \vdash_{G} \mathit{peel}_{\Lambda(G)}([n]_{\boldsymbol{f}})$, where the function $\mathit{peel}_{\Lambda(G)} : M_{\Lambda(G)} \to M_G$ is defined by $[(a, \mathscr{W})]_{\boldsymbol{e}} \mapsto [((a, \mathscr{W}), \mathscr{W})]_{\boldsymbol{e}}$, $[((b, \mathscr{W}), \mathscr{E})]_{\boldsymbol{f}} \mapsto [((b, \mathscr{E}), \mathscr{W})]_{\boldsymbol{f}}$, $[((c, \mathscr{E}), \mathscr{E})]_{\boldsymbol{g}} \mapsto [(c, \mathscr{E})]_{\boldsymbol{g}}$, $[(m, \mathscr{N})]_{\boldsymbol{h}} \mapsto [m]_{\boldsymbol{h}}$;

\item $\Delta_{\Lambda(G)} : [(m, \mathscr{N})]_{\boldsymbol{e}} \mapsto [(m', \mathscr{N})]_{\boldsymbol{e}}$, where $\Delta_G : [m]_{\boldsymbol{e}} \mapsto [m']_{\boldsymbol{e}}$;

\item $P_{\Lambda(G)} \stackrel{\mathrm{df. }}{=} \{ \boldsymbol{s} \in \mathscr{L}_{\Lambda(G)} \mid \mathit{peel}_{\Lambda(G)}^\ast(\boldsymbol{s}) \in P_G \}$, where the structure of justifiers in $\mathit{peel}_{\Lambda(G)}^\ast(\boldsymbol{s})$ is the same as the one in $\boldsymbol{s}$.

\end{itemize}
\end{definition}
It is easy to see that $\mathcal{H}^\omega(\Lambda(G)) \trianglelefteqslant A \multimap (B \multimap C)$ if $\mathcal{H}^\omega(G) \trianglelefteqslant A \otimes B \multimap C$, which is a generalization of the equation $\Lambda(A \otimes B \multimap C) = A \multimap (B \multimap C)$.

\begin{definition}[Uncurrying of games \cite{abramsky1999game}]
If games $H$, $A$, $B$ and $C$ satisfy $\mathcal{H}^\omega(H) \trianglelefteqslant A \multimap (B \multimap C)$, then the \emph{\bfseries uncurrying} $\Lambda^{\circleddash}(H)$ of $H$ is defined by:
\end{definition}
\begin{itemize}

\item $M_{\Lambda^\circleddash(H)} \stackrel{\mathrm{df. }}{=} \{ [((a, \mathscr{W}), \mathscr{W})]_{\boldsymbol{e}} \mid [(a, \mathscr{W})]_{\boldsymbol{e}} \in M_H^{\mathsf{Ext}}, [a]_{\boldsymbol{e}} \in M_A \} \\ \cup \{ [((b, \mathscr{E}), \mathscr{W})]_{\boldsymbol{f}} \mid [((b, \mathscr{W}), \mathscr{E})]_{\boldsymbol{f}} \in M_H^{\mathsf{Ext}}, [b]_{\boldsymbol{f}} \in M_B \} \\ \cup \{ [(c, \mathscr{E})]_{\boldsymbol{g}} \mid [((c, \mathscr{E}), \mathscr{E})]_{\boldsymbol{g}} \in M_H^{\mathsf{Ext}}, [c]_{\boldsymbol{g}} \in M_C \} \cup \{ [(m, \mathscr{S})]_{\boldsymbol{h}} \mid [m]_{\boldsymbol{h}} \in M_H^{\mathsf{Int}} \}$;

\item $\lambda_{\Lambda^\circleddash(H)} : [((a, \mathscr{W}), \mathscr{W})]_{\boldsymbol{e}} \mapsto \lambda_H([(a, \mathscr{W})]_{\boldsymbol{e}})$, $[((b, \mathscr{E}), \mathscr{W})]_{\boldsymbol{f}} \mapsto \lambda_H([((b, \mathscr{W}), \mathscr{E})]_{\boldsymbol{f}})$, $[(c, \mathscr{E})]_{\boldsymbol{g}} \mapsto \lambda_H([((c, \mathscr{E}), \mathscr{E})]_{\boldsymbol{g}}), [(m, \mathscr{S})]_{\boldsymbol{h}} \mapsto \lambda_H([m]_{\boldsymbol{h}})$;

\item $\star \vdash_{\Lambda^\circleddash(H)} [m]_{\boldsymbol{e}} \stackrel{\mathrm{df. }}{\Leftrightarrow} \exists [c]_{\boldsymbol{e}} \in M_C^{\mathsf{Init}} . \ \! m = (c, \mathscr{E})$;

\item $[m]_{\boldsymbol{e}} \vdash_{\Lambda^\circleddash(H)} [n]_{\boldsymbol{f}} \stackrel{\mathrm{df. }}{\Leftrightarrow} \mathit{peel}_{\Lambda^\circleddash(H)}([m]_{\boldsymbol{e}}) \vdash_{H} \mathit{peel}_{\Lambda^\circleddash(H)}([n]_{\boldsymbol{f}})$, where the function $\mathit{peel}_{\Lambda^\circleddash(H)} : M_{\Lambda^\circleddash(H)} \to M_H$ is defined by $[((a, \mathscr{W}), \mathscr{W})]_{\boldsymbol{e}} \mapsto [(a, \mathscr{W})]_{\boldsymbol{e}}$, $[((b, \mathscr{E}), \mathscr{W})]_{\boldsymbol{f}} \mapsto [((b, \mathscr{W}), \mathscr{E})]_{\boldsymbol{f}}$, $[(c, \mathscr{E})]_{\boldsymbol{g}} \mapsto [((c, \mathscr{E}), \mathscr{E})]_{\boldsymbol{g}}$, $[(m, \mathscr{S})]_{\boldsymbol{h}} \mapsto [m]_{\boldsymbol{h}}$;

\item $\Delta_{\Lambda^\circleddash(H)} : [(m, \mathscr{S})]_{\boldsymbol{h}} \mapsto  [(m', \mathscr{S})]_{\boldsymbol{h}}$, where $\Delta_H :  [m]_{\boldsymbol{h}} \mapsto  [m']_{\boldsymbol{h}}$;

\item $P_{\Lambda^\circleddash(H)} \stackrel{\mathrm{df. }}{=} \{ \boldsymbol{s} \in \mathscr{L}_{\Lambda^\circleddash(H)} \mid \mathit{peel}_{\Lambda^\circleddash(H)}^\ast(\boldsymbol{s}) \in P_H \}$, where the structure of justifiers in $\mathit{peel}_{\Lambda^\circleddash(H)}^\ast(\boldsymbol{s})$ is the same as the one in $\boldsymbol{s}$.

\end{itemize}
Dually to currying, $\mathcal{H}^\omega(\Lambda^\circleddash(H)) \trianglelefteqslant A \otimes B \multimap C$ if $\mathcal{H}^\omega(H) \trianglelefteqslant A \multimap (B \multimap C)$, which is a generalization of the equation $\Lambda^\circleddash(A \multimap (B \multimap C)) = A \otimes B \multimap C$.
It is easy to see that $\Lambda$ and $\Lambda^\circleddash$ are inverses to each other for normalized games, i.e., $\Lambda \circ \Lambda^\circleddash(A \multimap (B \multimap C)) = A \multimap (B \multimap C)$ and $\Lambda^\circleddash \circ \Lambda(A \otimes B \multimap C) = A \otimes B \multimap C$ for any normalized games $A$, $B$ and $C$.
Note also that currying $\Lambda(G)$ and uncurrying $\Lambda^\circleddash(H)$ do not depend on the choice of the normalized games $A$, $B$ and $C$ such that $\mathcal{H}^\omega(G) \trianglelefteqslant A \otimes B \multimap C$ and $\mathcal{H}^\omega(\Lambda^\circleddash(H)) \trianglelefteqslant A \otimes B \multimap C$. 

The following are two of the technical highlights of \cite{yamada2016dynamic}, where it is straightforward to see that the additional structure of dummy functions and the strengthened axiom DP2 are preserved under the operations on games introduced so far (and therefore the two results still hold).

\begin{theorem}[Constructions on games \cite{yamada2016dynamic}]
\label{ThmConstructionsOnGames}
Games are closed under $\otimes$, $\multimap$, $\&$, $\langle \_, \_ \rangle$, $!$, $(\_)^\dagger$, $\ddagger$, $\Lambda$ and $\Lambda^{\circleddash}$.
Moreover, the subgame relation is preserved under these constructions, i.e., $\clubsuit_{i \in I} G_i \trianglelefteqslant \clubsuit_{i \in I} H_i$ if $G_i \trianglelefteqslant H_i$ for all $i \in I$, where $\clubsuit_{i \in I}$ is either $\otimes$, $\multimap$, $\&$, $\langle \_, \_ \rangle$, $!$, $(\_)^\dagger$, $\ddagger$, $\Lambda$ or $\Lambda^{\circleddash}$ (and thus, I is either $\{ 1 \}$ or $\{ 1, 2 \}$).
\end{theorem}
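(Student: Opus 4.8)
The plan is to ride on \cite{yamada2016dynamic} as far as possible. There it is already established that dynamic games \emph{without} the extra structure of dummy functions --- and satisfying the weaker form of \textsc{DP2} --- are closed under $\otimes$, $\multimap$, $\&$, $\langle\_,\_\rangle$, $!$, $(\_)^\dagger$, $\ddagger$, $\Lambda$ and $\Lambda^{\circleddash}$, and that $\trianglelefteqslant$ is preserved. So, for each construction $\clubsuit$, the only genuinely new obligations are: (i) the map $\Delta_{\clubsuit(\cdots)}$ exhibited in the definition is a well-defined bijection $M^{\mathsf{PInt}} \to M^{\mathsf{OInt}}$ satisfying axiom \textsc{D}; (ii) $P_{\clubsuit(\cdots)}$ satisfies the strengthened \textsc{DP2}; (iii) both are compatible with $\trianglelefteqslant$. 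Since every $\mathcal{H}^d$ already preserves dummy functions and \textsc{DP2} by Lemma~\ref{LemClosureOfArenasAndJSequencesUnderHiding}, no further interaction with hiding has to be examined. I would first clear the ``componentwise'' constructions $\otimes$, $\&$, $!$, $\Lambda$, $\Lambda^{\circleddash}$: every move of the composite is a tagged copy of a move of a single component with its $\mathsf{OP}$-, $\mathsf{QA}$- and priority labels untouched; $\Delta_{\clubsuit(\cdots)}$ is the component's $\Delta$ relabelled only on inner tags (so \textsc{D} holds with the evident finite $\delta$); and, using that \textsc{IE-switch} forbids a component/thread switch immediately after an internal P-move, an internal O-move of the composite has its predecessor in the same component, whence \textsc{DP2} descends to that component's \textsc{DP2} via the relevant projection. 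Preservation of $\trianglelefteqslant$ here is immediate because all defining data are taken pointwise.

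Next I would treat $\multimap$, $\langle\_,\_\rangle$ and $(\_)^\dagger$, which require only a little more care. For $\multimap$ the domain is normalized, so $\mathcal{H}^\omega(A)$ has no internal moves and $\Delta_{A \multimap B}$ is just $\Delta_B$ up to inner tags; \textsc{DP2} reduces to \textsc{DP2} for $B$ along the projection $\boldsymbol{s} \mapsto \boldsymbol{s} \upharpoonright \mathscr{E}$, using \textsc{IE-switch} and generalized visibility to see (exactly as in \cite{yamada2016dynamic}) that an internal occurrence and its justifier both live in the $\mathscr{E}$-component. Pairing and promotion are handled the same way but componentwise, resp.\ thread-wise: $\Delta_{\langle L,R\rangle}$ is $\Delta_L$ and $\Delta_R$ restricted to the $\mathscr{S}$- and $\mathscr{N}$-parts, and $\Delta_{G^\dagger}$ is $\Delta_G$ copied onto each exponential thread, so since positions restrict to positions of the components, resp.\ of $G$ on each thread, \textsc{D} and \textsc{DP2} descend; preservation of $\trianglelefteqslant$ is the content of the pairing-closure and promotion-closure lemmas of \cite{yamada2016dynamic}, to which the new data are simply appended.

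The delicate case, and the one I expect to be the main obstacle, is concatenation $J \ddagger K$ (with $\mathcal{H}^\omega(J) \trianglelefteqslant A \multimap B$ and $\mathcal{H}^\omega(K) \trianglelefteqslant B \multimap C$), because here new \emph{internal} moves arise that were \emph{external} in the components, namely the $B$-moves of $J$ (tagged $\mathscr{S}$) and of $K$ (tagged $\mathscr{N}$); this is why $\Delta_{J \ddagger K}$ carries the two ``cross-over'' clauses $[((b,\mathscr{E}),\mathscr{S})]_{\boldsymbol{e}} \mapsto [((b,\mathscr{W}),\mathscr{N})]_{\boldsymbol{e}}$ and its inverse. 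I would first check that these, together with the clauses inherited from $\Delta_J$ and $\Delta_K$, assemble into a bijection $M_{J\ddagger K}^{\mathsf{PInt}} \to M_{J\ddagger K}^{\mathsf{OInt}}$: a move $b$ that is a P-move of $B$ becomes P-internal in the $J$-copy but O-internal in the $K$-copy (the roles flip in the domain of $K = B \multimap C$), and dually, so the cross-over clauses pair precisely the two copies of $B$ consistently with the $\Delta_J$- and $\Delta_K$-pairs; moreover all such $B$-moves are external in $J$ and in $K$, hence of priority $0$ there and of priority $\mu$ in $J \ddagger K$, so priorities and $\mathsf{QA}$-labels agree across each $\Delta_{J\ddagger K}$-pair, giving \textsc{D}. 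The real work is \textsc{DP2}: given $\boldsymbol{s} \in P_{J \ddagger K}$ ending in an internal O-move $o$ preceded by $p'$, I would split on whether $o$ belongs to a $B$-copy or is genuinely internal to $J$ or $K$; in the latter case \textsc{DP2} descends to $\boldsymbol{s} \upharpoonright J$ or $\boldsymbol{s} \upharpoonright K$ as before; in the former case I would invoke precisely the copycat constraint $\boldsymbol{s} \upharpoonright B^{[1]}, B^{[2]} \in \mathit{pr}_B$ (even-length prefixes have equal $\mathscr{W}$- and $\mathscr{E}$-projections) to force $o = \Delta_{J \ddagger K}(p')$ with the pointer prescribed by \textsc{DP2}, the two sub-cases of \textsc{DP2} (internal versus external justifier $o'$ of $p'$) corresponding to $p'$ being an interior versus an opening $B$-occurrence in its copy. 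This is exactly where the strengthening of \textsc{DP2} relative to \cite{yamada2016dynamic} is used, so I expect it to need the most attention. Finally, preservation of $\trianglelefteqslant$ under $\ddagger$ follows from the $\ddagger$-closure lemma of \cite{yamada2016dynamic} together with the observation that $\Delta$ and $\mathit{pr}_B$ are defined by the same pattern matching for sub- and super-games.
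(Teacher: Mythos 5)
Your proposal is correct and takes essentially the same route as the paper: the paper's own ``proof'' of Theorem~\ref{ThmConstructionsOnGames} consists precisely of delegating everything already present in \cite{yamada2016dynamic} to that reference and asserting that preservation of the added dummy functions and the strengthened axiom \textsc{DP2} is straightforward, which is exactly the decomposition into obligations (i)--(iii) that you adopt. Your detailed verification — in particular the analysis of the cross-over dummies and the copycat constraint $\mathit{pr}_B$ in the $\ddagger$ case, which is indeed the only place where formerly external moves become internal — simply supplies the routine checking that the paper leaves implicit.
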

\if0
\begin{proof}
Straightforward but lengthy; see \cite{yamada2016dynamic} for the detail.
\end{proof}
\fi

\begin{lemma}[Hiding lemma on games \cite{yamada2016dynamic}]
\label{LemHidingLemmaOnGames}
Let $\clubsuit_{i \in I}$ be either $\otimes$, $\multimap$, $\&$, $\langle \_, \_ \rangle$, $!$, $(\_)^\dagger$, $\ddagger$, $\Lambda$ or $\Lambda^\circleddash$, and $d \in \mathbb{N} \cup \{ \omega \}$. Then, for any family $(G_i)_{i \in I}$ of games, 
\begin{enumerate}

\item $\mathcal{H}^d(\clubsuit_{i \in I}G_i) \trianglelefteqslant \clubsuit_{i \in I} \mathcal{H}^d(G_i)$ if $\clubsuit_{i \in I}$ is not $\ddagger$;

\item $\mathcal{H}^d (G_1 \ddagger G_2) = \mathcal{H}^d(G_1) \ddagger \mathcal{H}^d(G_2)$ if $\mathcal{H}^d(G_1 \ddagger G_2)$ is not yet normalized, and $\mathcal{H}^d(G_1 \ddagger G_2) \trianglelefteqslant A \multimap C$ otherwise, where $\mathcal{H}^{d-1}(G_1) \trianglelefteqslant A \multimap B$ and $\mathcal{H}^{d-1}(G_2) \trianglelefteqslant B \multimap C$ for some normalized games $A$, $B$ and $C$.

\end{enumerate}
\end{lemma}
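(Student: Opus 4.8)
The plan is to follow the argument of \cite{yamada2016dynamic} and then observe that the two new ingredients of our variant — the dummy functions $\Delta_G$ and the strengthened axiom DP2 — survive that argument unchanged. First I would reduce to the case $d = 1$: by Theorem~\ref{ThmClosureOfGamesUnderHiding} (with Lemma~\ref{LemClosureOfArenasAndJSequencesUnderHiding}) we have $\mathcal{H}^d = \underbrace{\mathcal{H}^1 \circ \cdots \circ \mathcal{H}^1}_d$ for finite $d$, and $\mathcal{H}^\omega$ is the stationary value of this iteration since each game has only finitely many priority orders below $\mu(G) \in \mathbb{N}$ by axiom L. So it suffices to prove the statement for $d = 1$ and then iterate, using the fact (Theorem~\ref{ThmConstructionsOnGames}) that the constructions preserve $\trianglelefteqslant$ to propagate the subgame relation through the iteration.

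For each construction $\clubsuit$ other than $\ddagger$, I would unfold the definitions of $\mathcal{H}^1(\clubsuit_{i \in I} G_i)$ and of $\clubsuit_{i \in I}\mathcal{H}^1(G_i)$ on all five data ($M$, $\lambda$, $\vdash$, $\Delta$, $P$) and check the inclusions. The key point is that none of these constructions introduces new internal moves or alters priority orders: the moves of $\clubsuit_{i \in I} G_i$ are, up to a bijective renaming of inner tags — and, for $\multimap$, a flip of the $\mathsf{OP}$-label in the domain, which commutes with $\lambda_G^{\mathbb{N}}$ and hence with $\circleddash 1$ — exactly the moves of the $G_i$; the component $\lambda^{\mathbb{N}}$ is inherited verbatim; and the hidden enabling relation is built from $\vdash$ by the same chain-through-internal-moves recipe whether one hides first or builds first. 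Hence $\mathcal{H}^1(\clubsuit_{i\in I}G_i)$ and $\clubsuit_{i\in I}\mathcal{H}^1(G_i)$ share moves, labels, enabling relation and dummy function; the only possible discrepancy is in the position sets, where one gets $P_{\mathcal{H}^1(\clubsuit_i G_i)} \subseteq P_{\clubsuit_i \mathcal{H}^1(G_i)}$ because a hidden valid position of the composite is in particular a legal position whose component projections are hidden valid positions of the components, while the converse can fail. For the constructions defined via peeling and attributes ($\langle\_,\_\rangle$, $\Lambda$, $\Lambda^\circleddash$) the same reasoning applies once one notes that $\mathit{peel}$ and $\mathit{att}$ inspect only inner tags, which $\mathcal{H}^1$ leaves untouched, so they commute with hiding.

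The case of $\ddagger$ is the one requiring extra bookkeeping, and I expect it to be the main obstacle. Here moves of the shared game $B$ become internal with priority order shifted up by $\mu = \mathsf{Max}(\mu(J),\mu(K)) + 1$ (Definition~\ref{DefConcatenationOfGames}), so there is a threshold: as long as $\mathcal{H}^d$ has not yet reached priority order $\mu$, it removes only the moves already internal in $J$ or $K$, and one checks that this coincides on the nose with hiding inside each factor and re-concatenating, i.e. $\mathcal{H}^d(G_1 \ddagger G_2) = \mathcal{H}^d(G_1) \ddagger \mathcal{H}^d(G_2)$ — equality, since no interleaving freedom has yet been destroyed. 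Once $d$ is large enough that the $B$-moves start being hidden, the shared game is eliminated and the composite normalizes to a subgame of $A \multimap C$, exactly as in ordinary composition-with-hiding; here one obtains only $\trianglelefteqslant A \multimap C$, since hiding the $B$-communication is what yields "parallel composition plus hiding" and generally loses positions. The care needed is purely in tracking which priority orders fall below $d$ at each stage and in invoking Lemma~\ref{LemClosureOfArenasAndJSequencesUnderHiding} to know the intermediate objects are genuine games.

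Finally, the genuinely new verification is short. By Definition~\ref{DefExternalArenas}, $\Delta_{\mathcal{H}^d(G)} = \Delta_G \upharpoonright M_{\mathcal{H}^d(G)}$, and for each construction $\Delta_{\clubsuit_i G_i}$ is defined by pattern matching on inner tags with the component dummy functions plugged in; since $\mathcal{H}^d$ touches neither inner tags nor $\lambda^{\mathsf{QA}}$ and commutes with $\Delta$ by Lemma~\ref{LemClosureOfArenasAndJSequencesUnderHiding}, the two descriptions of the dummy function of the hidden composite agree. As for DP2, Lemma~\ref{LemClosureOfArenasAndJSequencesUnderHiding} and Theorem~\ref{ThmClosureOfGamesUnderHiding} already guarantee that every object in sight is a bona fide dynamic game, hence satisfies DP2; nothing beyond the position-set inclusions established above is needed. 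This completes the plan.
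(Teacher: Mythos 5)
Your plan is correct and follows essentially the route the paper takes: the paper itself gives no proof of this lemma, deferring to \cite{yamada2016dynamic} and only remarking (just before Theorem~\ref{ThmConstructionsOnGames}) that the additional structure of dummy functions and the strengthened axiom DP2 are straightforwardly preserved, which is exactly what your final paragraph verifies. The rest of your outline (reduction to $d=1$, commutation of hiding with each construction up to the position-set inclusion, and the priority-order threshold $\mu$ for $\ddagger$) is a faithful reconstruction of the cited argument.
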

\if0
\begin{proof}
See \cite{yamada2016dynamic}.
\end{proof}
\fi

\subsection{Strategies}
\label{DynamicStrategies}
Next, let us recall another central notion of \emph{strategies}.

\subsubsection{Strategies}
Our strategies are the \emph{dynamic} variant defined in \cite{yamada2016dynamic}.
However, there is nothing special in the definition: A dynamic strategy on a (dynamic) game is a strategy on the game in the conventional sense \cite{abramsky1999game,mccusker1998games}.
%A strategy $\sigma : G$ is \emph{dynamic} if so is $G$.
Formally:
\begin{definition}[Dynamic strategies \cite{abramsky1999game,mccusker1998games,yamada2016dynamic}]
\label{DefStrategies}
A \emph{\bfseries dynamic strategy} $\sigma$ on a (dynamic) game $G$, written $\sigma : G$, is a subset $\sigma \subseteq P_G^{\mathsf{Even}}$ that satisfies:
\begin{itemize}

\item \textsc{(S1)} It is non-empty and \emph{even-prefix-closed}: $\sigma \neq \emptyset \wedge (\boldsymbol{s}mn \in \sigma \Rightarrow \boldsymbol{s} \in \sigma)$;

\item \textsc{(S2)} It is \emph{deterministic}: $\boldsymbol{s}mn, \boldsymbol{s}mn' \in \sigma \Rightarrow n = n' \wedge \mathcal{J}_{\boldsymbol{s}mn}(n) = \mathcal{J}_{\boldsymbol{s}mn'}(n')$.

\end{itemize}
%It is \emph{\bfseries normalized} if no internal move of $G$ occurs in a position in $\sigma$.
\end{definition}

%We usually abbreviate the condition $ \mathcal{J}_{\boldsymbol{s}mn}(n) = \mathcal{J}_{\boldsymbol{s}mn'}(n')$.

%Clearly, a normalized strategy is equivalent to a static strategy.

\begin{convention}
From now on, \emph{\bfseries strategies} refer to \emph{dynamic} strategies as defined above by default.
\end{convention}

\if0
The relation between strategies and games is similar to the one between elements and sets.
For instance, like $a \in A \wedge A \subseteq B \Rightarrow a \in B$ holds, it is easy to see that $\sigma : G \wedge G \trianglelefteqslant H \Rightarrow \sigma : H$ holds.
\fi

Next, let us recall two constraints on strategies: \emph{innocence} and \emph{well-bracketing}. 
One of the highlights of HO-games \cite{hyland2000full} is to establish a one-to-one correspondence between terms of PCF in a certain \emph{$\eta$-long normal form}, known as \emph{PCF B\"{o}hm trees} \cite{amadio1998domains}, and innocent and well-bracketed strategies (on games modeling types of PCF).
That is, the two conditions limit the codomain of the interpretation of PCF, i.e., the category of HO-games, in such a way that the interpretation becomes \emph{full}.

Roughly, a strategy is \emph{innocent} if its computation depends only on the P-view of the current odd-length position (rather than the current position itself), and \emph{well-bracketed} if every `question-answering' by the strategy is done in the `last-question-first-answered' fashion. 
Formally:
\begin{definition}[Innocence \cite{hyland2000full,abramsky1999game,mccusker1998games}]
\label{DefInnocence}
A strategy $\sigma : G$ is \emph{\bfseries innocent} if $\boldsymbol{s}mn, \boldsymbol{t} \in \sigma \wedge \boldsymbol{t} m \in P_G \wedge \lceil \boldsymbol{t} m \rceil_G = \lceil \boldsymbol{s} m \rceil_G \Rightarrow \boldsymbol{t}mn \in \sigma$.
\end{definition}

\begin{definition}[Well-bracketing \cite{abramsky1999game,mccusker1998games}]
\label{DefWellBracketing}
A strategy $\sigma : G$ is \emph{\bfseries well-bracketed} if, whenever there is a position $\boldsymbol{s} q \boldsymbol{t} a \in \sigma$, where $q$ is a question that justifies an answer $a$, every question in $\boldsymbol{t'}$ defined by $\lceil \boldsymbol{s} q \boldsymbol{t} \rceil_G = \lceil \boldsymbol{s} q \rceil_G . \boldsymbol{t'}$ justifies an answer in $\boldsymbol{t'}$.
\end{definition}

The bijective correspondence holds also for the game model \cite{abramsky1999game}, on which our games and strategies are based.
Moreover, it corresponds respectively to modeling \emph{states} and \emph{control operators} to relax innocence and well-bracketing in the model; in this sense, the two conditions characterize `purely functional' languages \cite{abramsky1999game}.

Note that innocence and well-bracketing have been imposed on strategies in order to establish \emph{full abstraction} and/or \emph{definability} \cite{curien2007definability}, but neither is our main concern in this paper.
However, we would like P to be able to collect a \emph{bounded} number of `relevant' moves from the current odd-length position in an `effective' fashion; for this point, it is convenient to focus on \emph{innocent} strategies since it then suffices for P to trace back the chain of justifiers. 
In fact, we shall define our notion of `computability' only on \emph{innocent} strategies in this paper.

On the other hand, we do not impose well-bracketing on strategies (thus, control operators are `effective' in our sense); nevertheless, we shall consider only strategies modeling terms of PCF which are all well-bracketed.

\begin{remark}
We conjecture that it is possible to define `effectivity' of \emph{non-innocent} strategies in a fashion similar to the case of innocent ones defined in Section~\ref{SubsectionViableStrategies}.
For this, however, we need to modify the procedure for P to collect a bounded number of moves from the current odd-length position (defined in Section~\ref{SubsectionViableStrategies}) so that she may refer to moves outside of the P-view, which is left as future work.
\end{remark}

\begin{convention}
From now on, a \emph{\bfseries strategy} refers to an \emph{innocent} strategy by default. 
We may clearly regard innocent strategies $\sigma : G$ as \emph{\bfseries (partial) view functions} $f_\sigma : \lceil P_G^{\mathsf{Odd}} \rceil_G \rightharpoonup M_G$ (see \cite{mccusker1998games} for the detail); we shall freely exchange the tree-representation $\sigma$ and the function representation $f_\sigma$, and write $\sigma$ for $f_\sigma$.
\end{convention}

\begin{notation}
Henceforth, we abbreviate inner elements $(\dots((m, X_1), X_2), \dots, X_k)$, where $X_i \in \{ \mathscr{W}, \mathscr{E}, \mathscr{N}, \mathscr{S} \}$ for $i = 1, 2, \dots, k$, as $m_{X_1 X_2 \dots X_k}$.
Also, we often indicate the form of tags of moves $[m_{X_1 X_2 \dots X_k}]_{\boldsymbol{e}}$ of a game $G$ informally by $[G_{X_1 X_2 \dots X_k}]_{\boldsymbol{e}}$. %where we call $m_{X_1 X_2 \dots X_k}$ the \emph{\bfseries inner elements}, $X_1 X_2 \dots X_k$ and $\boldsymbol{e}$ the (\emph{\bfseries inner} and \emph{\bfseries outer}) \emph{\bfseries tags} of $[m_{X_1 X_2 \dots X_k}]_{\boldsymbol{e}}$, respectively.
\end{notation}

\begin{example}
\label{ExSuccPred}
The \emph{\bfseries successor strategy} $\mathit{succ} : [\mathcal{N}_\mathscr{W}]_{\Lbag \boldsymbol{e} \Rbag \hbar} \Rightarrow [\mathcal{N}_\mathscr{E}]$ is defined by:
\begin{equation*}
\mathit{succ} \stackrel{\mathrm{df. }}{=} \mathsf{Pref}(\{ [\hat{q}_\mathscr{E}] [\hat{q}_\mathscr{W}]_{\Lbag \Rbag \hbar} ([\mathit{y}_\mathscr{W}]_{\Lbag \Rbag \hbar} [\mathit{y}_\mathscr{E}] [q_\mathscr{E}] [q_\mathscr{W}]_{\Lbag \Rbag \hbar})^i [\mathit{n}_\mathscr{W}]_{\Lbag \Rbag \hbar} [\mathit{y}_\mathscr{E}] [q_\mathscr{E}]  [\mathit{n}_\mathscr{E}] \mid i \in \mathbb{N} \ \! \})^{\mathsf{Even}}.
\end{equation*} 
where $\mathit{y}$ and $\mathit{n}$ are abbreviates $\mathit{yes}$ and $\mathit{no}$, respectively. 
That is, $\mathit{succ}$ copies an input `counting process' $\underline{n}$ in the domain $[!\mathcal{N}_\mathscr{W}]_{\Lbag \boldsymbol{e} \Rbag \hbar}$ and repeats it in the codomain $[\mathcal{N}_\mathscr{E}]$ adding `one more counting' as already mentioned in the introduction; thus, it makes sense to take $\mathit{succ}$ as a strategy for the successor function $n \mapsto n+1$.
%This is our formalization of the successor strategy in the introduction. 

Similarly, the \emph{\bfseries predecessor strategy} $\mathit{pred} : [\mathcal{N}_\mathscr{W}]_{\Lbag \boldsymbol{e} \Rbag \hbar} \Rightarrow [\mathcal{N}_\mathscr{E}]$ is defined by:
\begin{align*}
\mathit{pred} \stackrel{\mathrm{df. }}{=} \ &\mathsf{Pref}(\{ [\hat{q}_\mathscr{E}] [\hat{q}_\mathscr{W}]_{\Lbag \Rbag \hbar} [\mathit{y}_\mathscr{W}]_{\Lbag \Rbag \hbar} [q_\mathscr{W}]_{\Lbag \Rbag \hbar} ([\mathit{y}_\mathscr{W}]_{\Lbag \Rbag \hbar} [\mathit{y}_\mathscr{E}] [q_\mathscr{E}] [q_\mathscr{W}]_{\Lbag \Rbag \hbar})^i [\mathit{n}_\mathscr{W}]_{\Lbag \Rbag \hbar} [\mathit{n}_\mathscr{E}] \mid i \in \mathbb{N} \ \! \} \\
&\cup \{ [\hat{q}_\mathscr{E}] [\hat{q}_\mathscr{W}]_{\Lbag \Rbag \hbar} [\mathit{n}_\mathscr{W}]_{\Lbag \Rbag \hbar} [\mathit{n}_\mathscr{E}] \})^{\mathsf{Even}}.
\end{align*}
Thus, if the input in the domain $[!N_\mathscr{W}]_{\Lbag \boldsymbol{e} \Rbag \hbar}$ is $\underline{0}$, then $\mathit{pred}$ simply `copy-cats' it in the codomain $[\mathcal{N}_{\mathscr{E}}]$; otherwise, i.e., if the input is $\underline{n+1}$ for some $n \in \mathbb{N}$, then $\mathit{pred}$ `copy-cats' it except the first `counting'.
Therefore $\mathit{pred}$ `implements' the predecessor function $0 \mapsto 0$, $n+1 \mapsto n$.
\end{example}

As in the case of games, we now define the \emph{hiding operation on strategies}.
Note that an even-length position is not necessarily preserved under the hiding operation on j-sequences.
For instance, let $\boldsymbol{s} m n \boldsymbol{t}$ be an even-length position of a game $G$ such that $\boldsymbol{s} m$ (resp. $\boldsymbol{t} n$) consists of external (resp. internal) moves only.
By IE-switch on $G$, $m$ is an O-move, and so $\mathcal{H}^\omega (\boldsymbol{s} m n \boldsymbol{t}) = \boldsymbol{s} m$ is of odd-length.
Thus, we define:

\begin{definition}[Hiding on strategies \cite{yamada2016dynamic}]
Given a game $G$, a position $\boldsymbol{s} \in P_G$ and a number $d \in \mathbb{N} \cup \{ \omega \}$, let $\boldsymbol{s} \natural \mathcal{H}_G^d \stackrel{\mathrm{df. }}{=} \begin{cases} \mathcal{H}_G^d(\boldsymbol{s}) &\text{if $\boldsymbol{s}$ is $d$-complete (Definition~\ref{DefArenas});} \\ \boldsymbol{t} &\text{otherwise, where $\mathcal{H}_G^d(\boldsymbol{s}) = \boldsymbol{t}m$.} \end{cases}$
We define the \emph{\bfseries $\boldsymbol{d}$-hiding operation} $\mathcal{H}^d$ on strategies by $\mathcal{H}^d : (\sigma : G) \mapsto \{ \boldsymbol{s} \natural \mathcal{H}_G^d \mid \boldsymbol{s} \in \sigma \ \! \}$.
A strategy $\sigma : G$ is \emph{\bfseries normalized} if $\mathcal{H}^\omega(\sigma) = \sigma$.
\end{definition}

%Again, we write $\mathcal{H}$ for $\mathcal{H}^1$ and call it the \emph{\bfseries hiding operation} on strategies.

The following beautiful theorem in a sense implies that the above definition is a reasonable one.
Also, it induces the \emph{hiding functor} $\mathcal{H}^\omega$ from the bicategory $\mathcal{DG}$ of dynamic games and strategies to the category $\mathcal{G}$ of static games and strategies \cite{yamada2016dynamic}.
%Here, we just state the theorem; see \cite{yamada2016dynamic} for the proof:
\begin{theorem}[Hiding theorem \cite{yamada2016dynamic}]
\label{ThmHidingTheorem}
If $\sigma : G$, then $\mathcal{H}^d(\sigma) : \mathcal{H}^d(G)$ for all $d \in \mathbb{N} \cup \{ \omega \}$, and $\underbrace{\mathcal{H}^1 \circ \mathcal{H}^1 \cdots \circ \mathcal{H}^1}_i (\sigma) = \mathcal{H}^i(\sigma) : \mathcal{H}^i(G)$ for all $i \in \mathbb{N}$.
\end{theorem}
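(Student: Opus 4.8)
The plan is to follow the proof of the hiding theorem in \cite{yamada2016dynamic} and check that the two features new to our variant — the dummy functions $\Delta_G$ and the strengthened axiom DP2 — are compatible with it. Since \cite{yamada2016dynamic} (together with Lemma~\ref{LemClosureOfArenasAndJSequencesUnderHiding} and Theorem~\ref{ThmClosureOfGamesUnderHiding}) already settles everything at the level of arenas, j-sequences and games, including the iteration identities $\underbrace{\mathcal{H}^1 \circ \cdots \circ \mathcal{H}^1}_i = \mathcal{H}^i$ there, only the strategy level remains. First I would dispose of $d = \omega$: by axiom L the priority orders of $G$ are bounded by $\mu(G) \in \mathbb{N}$, so $\mathcal{H}^\omega(G) = \mathcal{H}^{\mu(G)}(G)$, $\mathcal{H}^\omega_G = \mathcal{H}^{\mu(G)}_G$ and hence $\mathcal{H}^\omega(\sigma) = \mathcal{H}^{\mu(G)}(\sigma)$; thus $d = \omega$ follows from the finite case. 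By the iteration identity (proved below) it then suffices to treat $d = 1$, the general finite case following by induction.

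For $d = 1$ I would verify that $\mathcal{H}^1(\sigma) \stackrel{\mathrm{df.}}{=} \{ \boldsymbol{s} \natural \mathcal{H}^1_G \mid \boldsymbol{s} \in \sigma \}$ is a (by our convention, innocent) strategy on $\mathcal{H}^1(G)$. That $\boldsymbol{s} \natural \mathcal{H}^1_G \in P_{\mathcal{H}^1(G)}$ is immediate from the definition of $P_{\mathcal{H}^1(G)}$ together with Lemma~\ref{LemClosureOfArenasAndJSequencesUnderHiding} and prefix-closure of $P_G$, using that $\mathcal{H}^1_G$ is monotone for $\preceq$. The crucial combinatorial point is that $\boldsymbol{s} \in P_G^{\mathsf{Even}}$ implies $\boldsymbol{s} \natural \mathcal{H}^1_G \in P_{\mathcal{H}^1(G)}^{\mathsf{Even}}$: after an external O-move, IE-switch forces a $1$-internal block to proceed as alternating pairs of a $1$-internal P-move and a $1$-internal O-move, and the strengthened DP2 forces each such O-move to be the dummy of the preceding P-move; hence the block deleted by $\mathcal{H}^1$ has even length when the position is $1$-complete, and odd length otherwise (so the truncation in $\natural$ restores even parity). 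Non-emptiness of $\mathcal{H}^1(\sigma)$ holds via $\boldsymbol{\epsilon}$; even-prefix-closure follows from even-prefix-closure of $\sigma$ and compatibility of $\boldsymbol{s} \mapsto \boldsymbol{s}\natural\mathcal{H}^1_G$ with $\preceq$; determinacy holds because, after an external O-move $m$, the ensuing $1$-internal block of any witness position in $\sigma$ is determined move-by-move by $\sigma$'s partial view function (the internal P-moves) and by $\Delta_G$ (the forced internal O-moves), so the next surviving move, i.e.\ the next move of $\mathcal{H}^1(\sigma)$, and its justifier are uniquely determined; innocence of $\mathcal{H}^1(\sigma)$ follows from innocence of $\sigma$ via the compatibility of the hiding operation with P-views (a consequence of generalized visibility), which reduces equality of P-views in $\mathcal{H}^1(G)$ to equality of P-views in $G$.

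Finally I would prove the strategy-level iteration identity $\underbrace{\mathcal{H}^1 \circ \cdots \circ \mathcal{H}^1}_i(\sigma) = \mathcal{H}^i(\sigma)$. By Lemma~\ref{LemClosureOfArenasAndJSequencesUnderHiding} the analogous identity holds for $\mathcal{H}^i_G$ on j-sequences, so the only thing left to check is that the truncation operator $\natural$ composes correctly, i.e.\ $(\boldsymbol{s} \natural \mathcal{H}^1_G) \natural \mathcal{H}^1_{\mathcal{H}^1(G)} = \boldsymbol{s} \natural \mathcal{H}^2_G$, and so on; this is a short case analysis on whether $\boldsymbol{s}$ is $1$-complete, $2$-complete, $\dots$, using that a $k$-complete position remains $(k-1)$-complete after one hiding step. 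I expect the main obstacle to be precisely the parity lemma of the previous paragraph: confirming that the stronger DP2 — internal O-moves as forced dummies — still yields the correct parity of deleted blocks and that no step of the \cite{yamada2016dynamic} argument is disturbed by this extra constraint; since the constraint only removes freedom (internal O-moves become determined), I expect every other step to become easier, not harder.
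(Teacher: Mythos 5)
Your proposal is correct and matches the paper's (implicit) approach: the paper gives no proof of this theorem at all, importing it from \cite{yamada2016dynamic}, and for the neighbouring results (Lemma~\ref{LemClosureOfArenasAndJSequencesUnderHiding}, Theorem~\ref{ThmClosureOfGamesUnderHiding}) it only checks that the new structure — the dummy functions and the strengthened axiom DP2 — is compatible with the arguments of the prior work, which is exactly the reduction you carry out. Your parity analysis of the deleted $1$-internal blocks and the observation that DP2 makes internal O-moves forced (hence external consistency and determinacy survive hiding) supply precisely the details the paper leaves to the reader.
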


\subsubsection{Constructions on strategies}
\label{ConstructionsOnStrategies}
Next, let us review standard constructions on strategies \cite{abramsky1999game,mccusker1998games}, for which we need to adopt a formalization of `tags'.
Having introduced our formalization of `tags' for constructions on games in Section~\ref{ConstructionsOnGames}, let us just present formalized constructions on strategies (without standard and informal versions) as it should be clear enough.

\emph{Copy-cat strategies} are the most basic strategy, which, as the name indicates, just `copy-cats' the last occurrence of an O-move in the current odd-length position:
\begin{definition}[Copy-cats \cite{abramsky1994games,abramsky2000full,hyland2000full}]
\label{CopyCatStrategies}
The \emph{\bfseries copy-cat (strategy)} $\mathit{cp}_A : A \multimap A$ on a normalized game $A$ is defined by:
\begin{equation*}
\mathit{cp}_A \stackrel{\mathrm{df. }}{=} \{ \boldsymbol{s} \in P_{A \multimap A}^\mathsf{Even} \mid \forall \boldsymbol{t} \preceq{\boldsymbol{s}}. \ \mathsf{Even}(\boldsymbol{t}) \Rightarrow \boldsymbol{t} \upharpoonright \mathscr{W} = \boldsymbol{t} \upharpoonright \mathscr{E} \ \! \}.
\end{equation*}
\end{definition}

The following \emph{dereliction} plays essentially in the same way as copy-cats:
\begin{definition}[Derelicition \cite{abramsky2000full, mccusker1998games}]
\label{Derelictions}
Let $A$ be a normalized game. The \emph{\bfseries dereliction} $\mathit{der}_A : A \Rightarrow A$ on $A$ is defined by:
\begin{equation*}
\mathit{der}_A \stackrel{\mathrm{df. }}{=} \{ \boldsymbol{s} \in P_{A \Rightarrow A}^\mathsf{Even} \mid \forall \boldsymbol{t} \preceq{\boldsymbol{s}}. \ \mathsf{Even}(\boldsymbol{t}) \Rightarrow \boldsymbol{t} \upharpoonright (\mathscr{W})_{\Lbag \Rbag \hbar \_} = \boldsymbol{t} \upharpoonright (\mathscr{E})_{\_} \ \! \}
\end{equation*}
where $\boldsymbol{t} \upharpoonright (\mathscr{W})_{\Lbag \Rbag \hbar \_}$ (resp. $\boldsymbol{t} \upharpoonright (\mathscr{E})_{\_}$) is the j-subsequence of $\boldsymbol{t}$ that consists of moves of the form $[(a, \mathscr{W})]_{(\Lbag \boldsymbol{f} \Rbag \hbar \boldsymbol{e})}$ (resp. $[(a', \mathscr{E})]_{\boldsymbol{e'}}$) changed into $[a]_{\boldsymbol{e}}$ (resp. $[a']_{\boldsymbol{e'}}$).
\end{definition}

Next, as in the case of the tensor of games, we have:
\begin{definition}[Tensor of strategies \cite{abramsky1994games, mccusker1998games}]
Given static strategies $\sigma : A \multimap C$ and $\tau : B \multimap D$, their \emph{\bfseries tensor (product)} $\sigma \! \otimes \! \tau : A \otimes B \multimap C \otimes D$ is defined by: 
\begin{equation*}
\sigma \! \otimes \! \tau \stackrel{\mathrm{df. }}{=} \{ \boldsymbol{s} \in \mathscr{L}_{A \otimes B \multimap C \otimes D} \mid \boldsymbol{s} \upharpoonright (\mathscr{W}, \_) \in \sigma, \boldsymbol{s} \upharpoonright (\mathscr{E}, \_) \in \tau \ \! \}
\end{equation*}
where $\boldsymbol{s} \upharpoonright (\mathscr{W}, \_)$ (resp. $\boldsymbol{s} \upharpoonright (\mathscr{E}, \_)$) is the j-subsequence of $\boldsymbol{s}$ consisting of moves of the form $[((m, \mathscr{W}), X)]_{\boldsymbol{e}}$ (resp. $[((m', \mathscr{E}), Y)]_{\boldsymbol{e}}$) changed into $[(m, X)]_{\boldsymbol{e}}$ (resp. $[(m', Y)]_{\boldsymbol{e}}$).
\end{definition}

Intuitively, the tensor $\sigma \! \otimes \! \tau : A \otimes B \multimap C \otimes D$ plays by $\sigma$ if the last occurrence of an O-move belongs to $A \multimap C$, and by $\tau$ otherwise.

\begin{example}
Consider the tensor $\mathit{succ} \otimes \mathit{pred} : [!\mathcal{N}_{\mathscr{W} \mathscr{W}}]_{\Lbag \boldsymbol{e} \Rbag \hbar} \otimes [!\mathcal{N}_{\mathscr{E} \mathscr{W}}]_{\Lbag \boldsymbol{e'} \Rbag \hbar} \multimap [\mathcal{N}_{\mathscr{W} \mathscr{E}}] \otimes [\mathcal{N}_{\mathscr{E} \mathscr{E}}]$.
A typical play by $\mathit{succ} \otimes \mathit{pred}$ is as follows:
\begin{center}
\begin{tabular}{ccccccc}
$[!\mathcal{N}_{\mathscr{W} \mathscr{W}}]_{\Lbag \boldsymbol{e} \Rbag \hbar}$ & $\otimes$ & $[!\mathcal{N}_{\mathscr{E} \mathscr{W}}]_{\Lbag \boldsymbol{e'} \Rbag \hbar}$ & $\stackrel{\mathit{succ} \otimes \mathit{pred}}{\multimap}$ & $[\mathcal{N}_{\mathscr{W} \mathscr{E}}]$ & $\otimes$ & $[\mathcal{N}_{\mathscr{E} \mathscr{E}}]$ \\
\cline{1-7}
&&&&&&\tikzmark{SuccTensorDoubleC1} $[\hat{q}_{\mathscr{E} \mathscr{E}}]$ \tikzmark{SuccTensorDoubleC8} \\
&&\tikzmark{SuccTensorDoubleC2} $[\hat{q}_{\mathscr{E} \mathscr{W}}]_{\Lbag \Rbag \hbar}$ \tikzmark{SuccTensorDoubleD1}&&&& \\
&&\tikzmark{SuccTensorDoubleD2} $[\mathit{yes}_{\mathscr{E} \mathscr{W}}]_{\Lbag \Rbag \hbar}$ \tikzmark{SuccTensorDoubleC3}&&&& \\
&&\tikzmark{SuccTensorDoubleC7} $[q_{\mathscr{E} \mathscr{W}}]_{\Lbag \Rbag \hbar}$ \tikzmark{SuccTensorDoubleD3}&&&& \\
&&&&\tikzmark{SuccTensorDoubleC4} $[\hat{q}_{\mathscr{W} \mathscr{E}}]$ \tikzmark{SuccTensorDoubleC6}&& \\
\tikzmark{SuccTensorDoubleC5} $[\hat{q}_{\mathscr{W} \mathscr{W}}]_{\Lbag \Rbag \hbar}$ \tikzmark{SuccTensorDoubleD4}&&&&&& \\
\tikzmark{SuccTensorDoubleD5} $[\mathit{no}_{\mathscr{W} \mathscr{W}}]_{\Lbag \Rbag \hbar}$&&&&&& \\
&&&&\tikzmark{SuccTensorDoubleC9} $[\mathit{yes}_{\mathscr{W} \mathscr{E}}]$ \tikzmark{SuccTensorDoubleD6}&& \\
&&\tikzmark{SuccTensorDoubleD7} $[\mathit{no}_{\mathscr{E} \mathscr{W}}]_{\Lbag \Rbag \hbar}$&&&& \\
&&&&&&$[\mathit{no}_{\mathscr{E} \mathscr{E}}]$ \tikzmark{SuccTensorDoubleD8} \\
&&&&\tikzmark{SuccTensorDoubleD9} $[q_{\mathscr{W} \mathscr{E}}]$ \tikzmark{SuccTensorDoubleC10}&& \\
&&&&$[\mathit{no}_{\mathscr{W} \mathscr{E}}]$ \tikzmark{SuccTensorDoubleD10}&& 
\end{tabular}
\begin{tikzpicture}[overlay, remember picture, yshift=.25\baselineskip]
\draw [->] ({pic cs:SuccTensorDoubleD1}) to ({pic cs:SuccTensorDoubleC1});
\draw [->] ({pic cs:SuccTensorDoubleD2}) [bend left] to ({pic cs:SuccTensorDoubleC2});
\draw [->] ({pic cs:SuccTensorDoubleD3}) [bend right] to ({pic cs:SuccTensorDoubleC3});
\draw [->] ({pic cs:SuccTensorDoubleD4}) to ({pic cs:SuccTensorDoubleC4});
\draw [->] ({pic cs:SuccTensorDoubleD5}) [bend left] to ({pic cs:SuccTensorDoubleC5});
\draw [->] ({pic cs:SuccTensorDoubleD6}) [bend right] to ({pic cs:SuccTensorDoubleC6});
\draw [->] ({pic cs:SuccTensorDoubleD7}) [bend left] to ({pic cs:SuccTensorDoubleC7});
\draw [->] ({pic cs:SuccTensorDoubleD8}) [bend right] to ({pic cs:SuccTensorDoubleC8});
\draw [->] ({pic cs:SuccTensorDoubleD9}) [bend left] to ({pic cs:SuccTensorDoubleC9});
\draw [->] ({pic cs:SuccTensorDoubleD10}) [bend right] to ({pic cs:SuccTensorDoubleC10});
\end{tikzpicture}
\end{center}
\end{example}

The next construction is the categorical \emph{pairing} in the category $\mathcal{G}$ of static games and strategies \cite{abramsky1999game,mccusker1998games}:
\begin{definition}[Pairing of strategies \cite{abramsky2000full, mccusker1998games}]
\label{Pairing}
Given strategies $\sigma : C \multimap A$ and $\tau : C \multimap B$, their \emph{\bfseries pairing} $\langle \sigma, \tau \rangle : C \multimap A \& B$ is defined by: 
\begin{align*}
\langle \sigma, \tau \rangle \stackrel{\mathrm{df. }}{=} & \ \! \{ \boldsymbol{s} \in \mathscr{L}_{C \multimap A \& B} \mid \boldsymbol{s} \upharpoonright (\mathscr{W} \multimap \mathscr{W} \mathscr{E}) \in \sigma, \boldsymbol{s} \upharpoonright (\mathscr{W} \multimap \mathscr{E} \mathscr{E}) = \boldsymbol{\epsilon} \ \! \} \\ &\cup \ \! \{ \boldsymbol{s} \in \mathscr{L}_{C \multimap A \& B} \mid \boldsymbol{s} \upharpoonright (\mathscr{W} \multimap \mathscr{E} \mathscr{E}) \in \tau, \boldsymbol{s} \upharpoonright (\mathscr{W} \multimap \mathscr{W} \mathscr{E}) = \boldsymbol{\epsilon} \ \! \}
\end{align*}
where $\boldsymbol{s} \upharpoonright (\mathscr{W} \multimap \mathscr{W} \mathscr{E})$ (resp. $\boldsymbol{s} \upharpoonright (\mathscr{W} \multimap \mathscr{E} \mathscr{E})$) is the j-subsequence of $\boldsymbol{s}$ that consists of moves of the form $[(c, \mathscr{W})]_{\boldsymbol{e}}$ or $[((a, \mathscr{W}), \mathscr{E})]_{\boldsymbol{f}}$ with $[a] \in M_A$ (resp. or $[((b, \mathscr{E}), \mathscr{E})]_{\boldsymbol{f}}$ with $[b] \in M_B$) with the latter changed into $[(a, \mathscr{E})]_{\boldsymbol{f}}$ (resp. $[(b, \mathscr{E})]_{\boldsymbol{f}}$).
\end{definition}

However, for the bicategory $\mathcal{DG}$ of dynamic games and strategies \cite{yamada2016dynamic}, we need the following generalization (for the reason explained right before introducing pairing of games in Section~\ref{ConstructionsOnGames}):
\begin{definition}[Generalized pairing \cite{yamada2016dynamic}]
Given strategies $\sigma : L$ and $\tau : R$ such that $\mathcal{H}^\omega(L) \trianglelefteqslant C \multimap A$, $\mathcal{H}^\omega(R) \trianglelefteqslant C \multimap B$ for some normalized games $A$, $B$ and $C$, their \emph{\bfseries (generalized) pairing} $\langle \sigma, \tau \rangle : \langle L, R \rangle$ is defined by:
\begin{align*}
\langle \sigma, \tau \rangle \stackrel{\mathrm{df. }}{=} \{ \boldsymbol{s} \in \mathscr{L}_{\langle L, R \rangle} \mid (\boldsymbol{s} \upharpoonright L \in \sigma \wedge \boldsymbol{s} \upharpoonright B = \boldsymbol{\epsilon}) \vee (\boldsymbol{s} \upharpoonright R \in \tau \wedge \boldsymbol{s} \upharpoonright A = \boldsymbol{\epsilon}) \ \! \}.
\end{align*}
\end{definition}

It is clearly a generalization of pairing; consider the case where $L = C \multimap A$ and $R = C \multimap B$.

\begin{convention}
Henceforth, \emph{\bfseries pairing} refers to \emph{generalized} pairing by default. 
\end{convention}

\begin{example}
Consider the pairing $\langle \mathit{succ}, \mathit{pred} \rangle : [!\mathcal{N}_\mathscr{W}]_{\Lbag \boldsymbol{e} \Rbag \hbar} \multimap [\mathcal{N}_{\mathscr{WE}}] \& [\mathcal{N}_{\mathscr{EE}}]$.
Its typical plays are as follows:
\begin{center}
\begin{tabular}{ccccc}
$[!\mathcal{N}_{\mathscr{W}}]_{\Lbag \boldsymbol{e} \Rbag \hbar}$ & $\stackrel{\langle \mathit{succ}, \mathit{pred} \rangle}{\multimap}$ & $[\mathcal{N}_{\mathscr{W} \mathscr{E}}]$ & $\&$ & $[\mathcal{N}_{\mathscr{E} \mathscr{E}}]$  \\
\cline{1-5} 
&&\tikzmark{SuccPairPredC1} $[q_{\mathscr{W} \mathscr{E}}]$ \tikzmark{SuccPairPredC3}&& \\
\tikzmark{SuccPairPredC2} $[q_{\mathscr{W}}]_{\Lbag \Rbag \hbar}$ \tikzmark{SuccPairPredD1}&&&& \\
\tikzmark{SuccPairPredD2} $[\mathit{no}_{\mathscr{W}}]_{\Lbag \Rbag \hbar}$&&&& \\
&&\tikzmark{SuccPairPredC4} $[\mathit{yes}_{\mathscr{W} \mathscr{E}}]$ \tikzmark{SuccPairPredD3}&& \\
&&\tikzmark{SuccPairPredD4} $[q_{\mathscr{W} \mathscr{E}}]$ \tikzmark{SuccPairPredC5}&& \\
&&$[\mathit{no}_{\mathscr{W} \mathscr{E}}]$ \tikzmark{SuccPairPredD5}&& \\ \\
\end{tabular}
\begin{tabular}{ccccc}
$[!\mathcal{N}_{\mathscr{W}}]_{\Lbag \boldsymbol{e} \Rbag \hbar}$ & $\stackrel{\langle \mathit{succ}, \mathit{pred} \rangle}{\multimap}$ & $[\mathcal{N}_{\mathscr{W} \mathscr{E}}]$ & $\&$ & $[\mathcal{N}_{\mathscr{E} \mathscr{E}}]$ \\
\cline{1-5}
&&&&\tikzmark{SuccPairPredC6} $[q_{\mathscr{E} \mathscr{E}}]$ \tikzmark{SuccPairPredC10} \\
\tikzmark{SuccPairPredC7} $[q_{\mathscr{W}}]_{\Lbag \Rbag \hbar}$ \tikzmark{SuccPairPredD6}&&&& \\
\tikzmark{SuccPairPredD7} $[\mathit{yes}_{\mathscr{W}}]_{\Lbag \Rbag \hbar}$ \tikzmark{SuccPairPredC8}&&&& \\
\tikzmark{SuccPairPredC9} $[q_{\mathscr{W}}]_{\Lbag \Rbag \hbar}$ \tikzmark{SuccPairPredD8}&&&& \\
\tikzmark{SuccPairPredD9} $[\mathit{no}_{\mathscr{W}}]_{\Lbag \Rbag \hbar}$&&&& \\
&&&& $[\mathit{no}_{\mathscr{E} \mathscr{E}}]$ \tikzmark{SuccPairPredD10} \\
\end{tabular}
\begin{tikzpicture}[overlay, remember picture, yshift=.25\baselineskip]
\draw [->] ({pic cs:SuccPairPredD1}) to ({pic cs:SuccPairPredC1});
\draw [->] ({pic cs:SuccPairPredD2}) [bend left] to ({pic cs:SuccPairPredC2});
\draw [->] ({pic cs:SuccPairPredD3}) [bend right] to ({pic cs:SuccPairPredC3});
\draw [->] ({pic cs:SuccPairPredD4}) [bend left] to ({pic cs:SuccPairPredC4});
\draw [->] ({pic cs:SuccPairPredD5}) [bend right] to ({pic cs:SuccPairPredC5});
\draw [->] ({pic cs:SuccPairPredD6}) to ({pic cs:SuccPairPredC6});
\draw [->] ({pic cs:SuccPairPredD7}) [bend left] to ({pic cs:SuccPairPredC7});
\draw [->] ({pic cs:SuccPairPredD8}) [bend right] to ({pic cs:SuccPairPredC8});
\draw [->] ({pic cs:SuccPairPredD9}) [bend left] to ({pic cs:SuccPairPredC9});
\draw [->] ({pic cs:SuccPairPredD10}) [bend right] to ({pic cs:SuccPairPredC10});
\end{tikzpicture}
\end{center}
\end{example}

Next, let us recall \emph{promotion} of strategies:
\begin{definition}[Promotion of strategies \cite{abramsky2000full, mccusker1998games}]
\label{DefPromotionOfStrategies}
Given a strategy $\phi : \ !A \multimap B$, its \emph{\bfseries promotion} $\phi^{\dagger} : \ !A \multimap \ !B$ is defined by: 
\begin{equation*}
\phi^{\dagger} \stackrel{\mathrm{df. }}{=} \{ \boldsymbol{s} \in \mathscr{L}_{!A \multimap !B} \mid \ \forall e \in \mathcal{T} . \ \! \boldsymbol{s} \upharpoonright \boldsymbol{e} \in \phi \ \! \}
\end{equation*}
where $\boldsymbol{s} \upharpoonright \boldsymbol{e}$ is the j-subsequence of $\boldsymbol{s}$ that consists of moves of the form $[(b, \mathscr{E})]_{\Lbag \boldsymbol{e} \Rbag \hbar \boldsymbol{e'}}$ with $[b]_{\boldsymbol{e'}} \in M_B$ or $[(a, \mathscr{W})]_{\Lbag \Lbag \boldsymbol{e} \Rbag \hbar \Lbag \boldsymbol{f} \Rbag \Rbag \hbar \boldsymbol{f'}}$ with $[a]_{\boldsymbol{f'}} \in M_A$, which are respectively changed into $[(b, \mathscr{E})]_{\boldsymbol{e'}}$ and $[(a, \mathscr{W})]_{\Lbag \boldsymbol{f} \Rbag \hbar \boldsymbol{f'}}$.
\end{definition}

As stated before, \cite{yamada2016dynamic} generalizes promotion of strategies (for the reason explained right before introducing promotion of games in Section~\ref{ConstructionsOnGames}) as follows: 
\begin{definition}[Generalized promotion of strategies \cite{abramsky2000full, mccusker1998games}]
\label{DefGeneralizedPromotionOfStrategies}
Given a strategy $\phi : G$ such that $\mathcal{H}^\omega(G) \trianglelefteqslant \ !A \multimap \ !B$ for some normalized games $A$ and $B$, its \emph{\bfseries (generalized) promotion} $\phi^{\dagger} : G^\dagger$ is defined by: 
\begin{equation*}
\phi^{\dagger} \stackrel{\mathrm{df. }}{=} \{ \boldsymbol{s} \in \mathscr{L}_{G^\dagger} \mid \ \forall e \in \mathcal{T} . \ \! \boldsymbol{s} \upharpoonright \boldsymbol{e} \in \phi \ \! \}
\end{equation*}
where $\boldsymbol{s} \upharpoonright \boldsymbol{e}$ is the j-subsequence of $\boldsymbol{s}$ that consists of moves of the form $[(b, \mathscr{E})]_{\Lbag \boldsymbol{e} \Rbag \hbar \boldsymbol{e'}}$ with $[b]_{\boldsymbol{e'}} \in M_B$, $[(a, \mathscr{W})]_{\Lbag \Lbag \boldsymbol{e} \Rbag \hbar \Lbag \boldsymbol{f} \Rbag \Rbag \hbar \boldsymbol{f'}}$ with $[a]_{\boldsymbol{f'}} \in M_A$, or $[(m, \mathscr{S})]_{\Lbag \boldsymbol{e} \Rbag \hbar \boldsymbol{e'}}$ with $[m]_{\boldsymbol{e'}} \in M_G^{\mathsf{Int}}$, which are respectively changed into $[(b, \mathscr{E})]_{\boldsymbol{e'}}$, $[(a, \mathscr{W})]_{\Lbag \boldsymbol{f} \Rbag \hbar \boldsymbol{f'}}$ and $[m]_{\boldsymbol{e'}}$.
\end{definition}

It is clearly a generalization of promotion; consider the case $G = \ !A \multimap B$.

\begin{convention}
Henceforth, \emph{\bfseries promotion} refers to \emph{generalized} promotion by default. 
\end{convention}

\begin{example}
\label{ExPromotion}
Consider the promotion $\mathit{succ}^\dagger : [!\mathcal{N}_\mathscr{W}]_{\Lbag \boldsymbol{e} \Rbag \hbar} \multimap [!\mathcal{N}_\mathscr{E}]_{\Lbag \boldsymbol{e'} \Rbag \hbar}$.
Its typical play is as depicted in the following diagram:
\begin{center}
\begin{tabular}{ccc}
$[!\mathcal{N}_\mathscr{W}]_{\Lbag \boldsymbol{e} \Rbag \hbar}$ & $\stackrel{\mathit{succ}^\dagger}{\multimap}$ & $[!\mathcal{N}_\mathscr{E}]_{\Lbag \boldsymbol{e'} \Rbag \hbar}$ \\
\cline{1-3}
&&\tikzmark{SuccDaggerC1} $[\hat{q}_{\mathscr{E}}]_{\Lbag \boldsymbol{e'} \Rbag \hbar}$ \tikzmark{SuccDaggerC3} \\
\tikzmark{SuccDaggerC2} $[\hat{q}_{\mathscr{W}}]_{\Lbag \Lbag \boldsymbol{e'} \Rbag \hbar \Lbag \Rbag \Rbag \hbar}$ \tikzmark{SuccDaggerD1}&& \\
\tikzmark{SuccDaggerD2} $[\mathit{yes}_{\mathscr{W}}]_{\Lbag \Lbag \boldsymbol{e'} \Rbag \hbar \Lbag \Rbag \Rbag \hbar}$ \tikzmark{SuccDaggerC8}&& \\
&&\tikzmark{SuccDaggerC7} $[\mathit{yes}_{\mathscr{E}}]_{\Lbag \boldsymbol{e'} \Rbag \hbar}$ \tikzmark{SuccDaggerD3} \\
&&\tikzmark{SuccDaggerC4} $[\hat{q}_{\mathscr{E}}]_{\Lbag \boldsymbol{e''} \Rbag \hbar}$ \tikzmark{SuccDaggerC6} \\
\tikzmark{SuccDaggerC5} $[\hat{q}_{\mathscr{W}}]_{\Lbag \Lbag \boldsymbol{e''} \Rbag \hbar \Lbag \Rbag \Rbag \hbar}$ \tikzmark{SuccDaggerD4}&& \\
\tikzmark{SuccDaggerD5} $[\mathit{no}_{\mathscr{W}}]_{\Lbag \Lbag \boldsymbol{e''} \Rbag \hbar \Lbag \Rbag \Rbag \hbar}$&& \\
&&\tikzmark{SuccDaggerC11} $[\mathit{yes}_{\mathscr{E}}]_{\Lbag \boldsymbol{e''} \Rbag \hbar}$ \tikzmark{SuccDaggerD6} \\
&&\tikzmark{SuccDaggerD7} $[q_{\mathscr{E}}]_{\Lbag \boldsymbol{e'} \Rbag \hbar}$ \tikzmark{SuccDaggerC10} \\
\tikzmark{SuccDaggerC9} $[q_{\mathscr{W}}]_{\Lbag \Lbag \boldsymbol{e'} \Rbag \hbar \Lbag \Rbag \Rbag \hbar}$ \tikzmark{SuccDaggerD8}&& \\
\tikzmark{SuccDaggerD9} $[\mathit{yes}_{\mathscr{W}}]_{\Lbag \Lbag \boldsymbol{e'} \Rbag \hbar \Lbag \Rbag \Rbag \hbar}$ \tikzmark{SuccDaggerC14}&& \\
&&\tikzmark{SuccDaggerC13} $[\mathit{yes}_{\mathscr{E}}]_{\Lbag \boldsymbol{e'} \Rbag \hbar}$ \tikzmark{SuccDaggerD10} \\
&&\tikzmark{SuccDaggerD11} $[q_{\mathscr{E}}]_{\Lbag \boldsymbol{e''} \Rbag \hbar}$ \tikzmark{SuccDaggerC12} \\
&&$[\mathit{no}_{\mathscr{E}}]_{\Lbag \boldsymbol{e''} \Rbag \hbar}$ \tikzmark{SuccDaggerD12} \\
&&\tikzmark{SuccDaggerD13} $[q_{\mathscr{E}}]_{\Lbag \boldsymbol{e'} \Rbag \hbar}$ \tikzmark{SuccDaggerC16} \\
\tikzmark{SuccDaggerC15} $[q_{\mathscr{W}}]_{\Lbag \Lbag \boldsymbol{e'} \Rbag \hbar \Lbag \Rbag \Rbag \hbar}$ \tikzmark{SuccDaggerD14}&& \\
\tikzmark{SuccDaggerD15} $[\mathit{no}_{\mathscr{W}}]_{\Lbag \Lbag \boldsymbol{e'} \Rbag \hbar \Lbag \Rbag \Rbag \hbar}$&& \\
& & \tikzmark{SuccDaggerC17} $[\mathit{yes}_{\mathscr{E}}]_{\Lbag \boldsymbol{e'} \Rbag \hbar}$ \tikzmark{SuccDaggerD16} \\
&&\tikzmark{SuccDaggerD17} $[q_{\mathscr{E}}]_{\Lbag \boldsymbol{e'} \Rbag \hbar}$ \tikzmark{SuccDaggerC18} \\
&&$[\mathit{no}_{\mathscr{E}}]_{\Lbag \boldsymbol{e'} \Rbag \hbar}$ \tikzmark{SuccDaggerD18}
\end{tabular}
\begin{tikzpicture}[overlay, remember picture, yshift=.25\baselineskip]
\draw [->] ({pic cs:SuccDaggerD1}) to ({pic cs:SuccDaggerC1});
\draw [->] ({pic cs:SuccDaggerD2}) [bend left] to ({pic cs:SuccDaggerC2});
\draw [->] ({pic cs:SuccDaggerD3}) [bend right] to ({pic cs:SuccDaggerC3});
\draw [->] ({pic cs:SuccDaggerD4}) to ({pic cs:SuccDaggerC4});
\draw [->] ({pic cs:SuccDaggerD5}) [bend left] to ({pic cs:SuccDaggerC5});
\draw [->] ({pic cs:SuccDaggerD6}) [bend right] to ({pic cs:SuccDaggerC6});
\draw [->] ({pic cs:SuccDaggerD7}) [bend left] to ({pic cs:SuccDaggerC7});
\draw [->] ({pic cs:SuccDaggerD8}) [bend right] to ({pic cs:SuccDaggerC8});
\draw [->] ({pic cs:SuccDaggerD9}) [bend left] to ({pic cs:SuccDaggerC9});
\draw [->] ({pic cs:SuccDaggerD10}) [bend right] to ({pic cs:SuccDaggerC10});
\draw [->] ({pic cs:SuccDaggerD11}) [bend left] to ({pic cs:SuccDaggerC11});
\draw [->] ({pic cs:SuccDaggerD12}) [bend right] to ({pic cs:SuccDaggerC12});
\draw [->] ({pic cs:SuccDaggerD13}) [bend left] to ({pic cs:SuccDaggerC13});
\draw [->] ({pic cs:SuccDaggerD14}) [bend right] to ({pic cs:SuccDaggerC14});
\draw [->] ({pic cs:SuccDaggerD15}) [bend left] to ({pic cs:SuccDaggerC15});
\draw [->] ({pic cs:SuccDaggerD16}) [bend right] to ({pic cs:SuccDaggerC16});
\draw [->] ({pic cs:SuccDaggerD17}) [bend left] to ({pic cs:SuccDaggerC17});
\draw [->] ({pic cs:SuccDaggerD18}) [bend right] to ({pic cs:SuccDaggerC18});
\end{tikzpicture}
\end{center}
Note that there are two \emph{threads}\footnote{A \emph{thread} in a j-sequence $\boldsymbol{s}$ is a j-subsequence of $\boldsymbol{s}$ that consists of moves \emph{hereditarily justified} by the same initial occurrence in $\boldsymbol{s}$; see \cite{abramsky1999game,mccusker1998games} for its precise definition.} in the above play, and the strategy $\mathit{succ}^\dagger$ behaves as $\mathit{succ}$ in both of the threads. 
\end{example}

\if0
\begin{definition}[Parallel composition \cite{abramsky1997semantics}]
Given strategies $\sigma : A \multimap B$, $\tau : B \multimap C$, we define their \emph{\bfseries parallel composition} $\sigma \! \parallel \! \tau$ by 
\begin{equation*}
\sigma \! \parallel \! \tau \stackrel{\mathrm{df. }}{=} \{ \boldsymbol{s} \in \mathscr{J}_{((A \multimap B_1) \multimap B_2) \multimap C} \mid \boldsymbol{s} \! \upharpoonright \! A, B_1 \in \sigma, \boldsymbol{s} \! \upharpoonright \! B_2, C \in \tau, \boldsymbol{s} \! \upharpoonright \! B_1, B_2 \in \textsf{pr}_{B} \} 
\end{equation*}
where $B_1$, $B_2$ are two copies of $B$, $\textsf{pr}_B \stackrel{\mathrm{df. }}{=} \{ \boldsymbol{s} \in P_{B_1 \multimap B_2} \ | \ \forall \boldsymbol{t} \preceq{\boldsymbol{s}}. \ \mathsf{Even}(\boldsymbol{t}) \Rightarrow \boldsymbol{t} \! \upharpoonright \! B_1 = \boldsymbol{t} \! \upharpoonright \! B_2 \}$.
\end{definition}

\begin{remark}
Parallel composition is just a preliminary notion for the following \emph{composition} of the category of games and strategies \cite{abramsky1997semantics,abramsky1999game,mccusker1998games}; it does not preserve strategies. 
\end{remark}

Now, we are ready to define the \emph{composition} of strategies, which can be phrased as `parallel composition plus hiding' \cite{abramsky1997semantics}.
\begin{definition}[Composition \cite{abramsky1997semantics,mccusker1998games}]
\label{DefCompositionOfStrategies}
Given strategies $\sigma : A \multimap B$, $\tau : B \multimap C$, we define their \emph{\bfseries composition} $\sigma ; \tau$ (also written $\tau \circ \sigma$) by $\sigma ; \tau \stackrel{\mathrm{df. }}{=} \{ \boldsymbol{s} \! \upharpoonright \! A, C \ \! | \ \! \boldsymbol{s} \in \sigma \! \parallel \! \tau \}$, where $\boldsymbol{s} \! \upharpoonright \! A, C$ is a subsequence of $\boldsymbol{s}$ consisting of moves in $A$ or $C$ equipped with the pointer 
\begin{equation*}
m \leftarrow n \stackrel{\mathrm{df. }}{\Leftrightarrow} \exists k \in \mathbb{N}, m_1, m_2, \dots, m_k \in M_{B} . \ \! m \leftarrow m_1 \leftarrow m_2 \leftarrow \dots \leftarrow m_k \leftarrow n \ \text{in $\boldsymbol{s}$}.
\end{equation*}
\end{definition}
\fi

%Recall that composition of strategies is ``internal communication plus hiding'', which can be seen precisely in the above definition.
%We now reformulate it as follows:
Now, let us recall a central construction of strategies in \cite{yamada2016dynamic}, which reformulates \emph{composition of strategies} as follows:
\begin{definition}[Concatenation and composition of strategies \cite{yamada2016dynamic}]
\label{DefConcatenationAndCompositionOfStrategies}
Let $\sigma : J$ and $\tau : K$; and assume that $\mathcal{H}^\omega(J) \trianglelefteqslant A \multimap B$ and $\mathcal{H}^\omega(K) \trianglelefteqslant B \multimap C$ for some normalized games $A$, $B$ and $C$. Their \emph{\bfseries concatenation} $\sigma \ddagger \tau : J \ddagger K$ is defined by: 
\begin{equation*}
\sigma \ddagger \tau \stackrel{\mathrm{df. }}{=} \{ \boldsymbol{s} \in \mathscr{J}_{J \ddagger K} \mid \boldsymbol{s} \upharpoonright J \in \sigma, \boldsymbol{s} \upharpoonright K \in \tau, \boldsymbol{s} \upharpoonright B^{[1]}, B^{[2]} \in \mathit{pr}_B \}
\end{equation*}
and their \emph{\bfseries composition} $\sigma ; \tau : \mathcal{H}^\omega(J \ddagger K)$ by $\sigma ; \tau \stackrel{\mathrm{df. }}{=} \mathcal{H}^\omega (\sigma \ddagger \tau)$ (see Theorem~\ref{ThmHidingTheorem}).
\end{definition}

We also write $\tau \circ \sigma$ for $\sigma ; \tau$.
If $J = A \multimap B$, $K = B \multimap C$, then our composition $\sigma ; \tau : \mathcal{H}^\omega (A \multimap B \ddagger B \multimap C) \trianglelefteqslant A \multimap C$ coincides with the standard one \cite{hyland2000full,abramsky1999game,mccusker1998games}; see \cite{yamada2016dynamic} for the detail.
In this sense, our composition generalizes the standard one, and it is decomposed into \emph{concatenation plus hiding}. 

\begin{example}
Consider the concatenation $\mathit{succ}^\dagger \ddagger \mathit{pred} : ([!\mathcal{N}_{\mathscr{W}}]_{\Lbag \Lbag \Rbag \hbar \Lbag \Rbag \Rbag \hbar} \multimap [!\mathcal{N}_{\mathscr{E} \mathscr{S}}]_{\Lbag \Rbag \hbar}) \ddagger ([!\mathcal{N}_{\mathscr{W} \mathscr{N}}]_{\Lbag \Rbag \hbar} \multimap [\mathcal{N}_{\mathscr{E}}])$.
Its typical play is as follows:
\begin{center}
\begin{tabular}{ccccccc}
$[!\mathcal{N}_{\mathscr{W}}]_{\Lbag \Lbag \Rbag \hbar \Lbag \Rbag \Rbag \hbar}$ & $\stackrel{\mathit{succ}^\dagger}{\multimap}$ & $[!\mathcal{N}_{\mathscr{E} \mathscr{S}}]_{\Lbag \Rbag \hbar}$ & $\ddagger$ & $[!\mathcal{N}_{\mathscr{W} \mathscr{N}}]_{\Lbag \Rbag \hbar}$ & $\stackrel{\mathit{pred}}{\multimap}$ & $[\mathcal{N}_{\mathscr{E}}]$ \\
\cline{1-7}
&&&&&&\tikzmark{SuccConcatPredC1} $[\hat{q}_{\mathscr{E}}]$ \tikzmark{SuccConcatPredC10} \\
&&&&\tikzmark{SuccConcatPredC5} $[\hat{q}_{\mathscr{W} \mathscr{N}}]_{\Lbag \Rbag \hbar}$ \tikzmark{SuccConcatPredD1}&& \\
&&\tikzmark{SuccConcatPredC2} $[\hat{q}_{\mathscr{E} \mathscr{S}}]_{\Lbag \Rbag \hbar}$ \tikzmark{SuccConcatPredC4}&&&& \\
\tikzmark{SuccConcatPredC3} $[\hat{q}_{\mathscr{W}}]_{\Lbag \Lbag \Rbag \hbar \Lbag \Rbag \Rbag \hbar}$ \tikzmark{SuccConcatPredD2}&&&&&& \\
\tikzmark{SuccConcatPredD3} $[\mathit{no}_{\mathscr{W}}]_{\Lbag \Lbag \Rbag \hbar \Lbag \Rbag \Rbag \hbar}$&&&&&& \\
&&\tikzmark{SuccConcatPredC7} $[\mathit{yes}_{\mathscr{E} \mathscr{S}}]_{\Lbag \Rbag \hbar}$ \tikzmark{SuccConcatPredD4}&&&& \\
&&&&\tikzmark{SuccConcatPredD5} $[\mathit{yes}_{\mathscr{W} \mathscr{N}}]_{\Lbag \Rbag \hbar}$ \tikzmark{SuccConcatPredC6}&& \\
&&&&\tikzmark{SuccConcatPredC9} $[q_{\mathscr{W} \mathscr{N}}]_{\Lbag \Rbag \hbar}$ \tikzmark{SuccConcatPredD6}&& \\
&&\tikzmark{SuccConcatPredD7} $[q_{\mathscr{E} \mathscr{S}}]_{\Lbag \Rbag \hbar}$ \tikzmark{SuccConcatPredC8}&&&& \\
&&$[\mathit{no}_{\mathscr{E} \mathscr{S}}]_{\Lbag \Rbag \hbar}$ \tikzmark{SuccConcatPredD8}&&&& \\
&&&&\tikzmark{SuccConcatPredD9} $[\mathit{no}_{\mathscr{W} \mathscr{N}}]_{\Lbag \Rbag \hbar}$&& \\
&&&&&&$[\mathit{no}_{\mathscr{E}}]$ \tikzmark{SuccConcatPredD10}
\end{tabular}
\begin{tikzpicture}[overlay, remember picture, yshift=.25\baselineskip]
\draw [->] ({pic cs:SuccConcatPredD1}) to ({pic cs:SuccConcatPredC1});
\draw [->] ({pic cs:SuccConcatPredD2}) to ({pic cs:SuccConcatPredC2});
\draw [->] ({pic cs:SuccConcatPredD3}) [bend left] to ({pic cs:SuccConcatPredC3});
\draw [->] ({pic cs:SuccConcatPredD4}) [bend right] to ({pic cs:SuccConcatPredC4});
\draw [->] ({pic cs:SuccConcatPredD5}) [bend left] to ({pic cs:SuccConcatPredC5});
\draw [->] ({pic cs:SuccConcatPredD6}) [bend right] to ({pic cs:SuccConcatPredC6});
\draw [->] ({pic cs:SuccConcatPredD7}) [bend left] to ({pic cs:SuccConcatPredC7});
\draw [->] ({pic cs:SuccConcatPredD8}) [bend right] to ({pic cs:SuccConcatPredC8});
\draw [->] ({pic cs:SuccConcatPredD9}) [bend left] to ({pic cs:SuccConcatPredC9});
\draw [->] ({pic cs:SuccConcatPredD10}) [bend right] to ({pic cs:SuccConcatPredC10});
\draw [->] ({pic cs:SuccConcatPredC4}) to ({pic cs:SuccConcatPredC5});
\end{tikzpicture}
\end{center}
\end{example}

\if0
\begin{notation}
The binary operations $\otimes$, $\langle \_, \_ \rangle$, $\ddagger$ and $;$ are left associative.
\end{notation}
\fi

Finally, we introduce the \emph{currying} and the \emph{uncurrying} of strategies:
\begin{definition}[Currying and uncurrying of strategies \cite{abramsky1999game}]
If $\phi :  G$ (resp. $\psi :  H$) and $\mathcal{H}^\omega(G) \trianglelefteqslant A \otimes B \multimap C$ (resp. $\mathcal{H}^\omega(H) \trianglelefteqslant A \multimap (B \multimap C)$) for some normalized games $A$, $B$ and $C$, then its \emph{\bfseries currying} $\Lambda(\phi) : \Lambda(G)$ (resp. \emph{\bfseries uncurrying} $\Lambda^\circleddash(\psi) : \Lambda^\circleddash(H)$) is defined by:
\begin{align*}
\Lambda(\phi) &\stackrel{\mathrm{df. }}{=} \{ \boldsymbol{s} \in \mathscr{L}_{\Lambda(G)} \mid \mathit{peel}_{\Lambda(G)}^\ast(\boldsymbol{s}) \in \phi \ \! \} \\
\Lambda^\circleddash(\psi) &\stackrel{\mathrm{df. }}{=} \{ \boldsymbol{s} \in \mathscr{L}_{\Lambda^\circleddash(H)} \mid \mathit{peel}_{\Lambda^\circleddash(H)}^\ast(\boldsymbol{s}) \in \psi \ \! \}.
\end{align*}
\end{definition}

\begin{theorem}[Constructions on strategies \cite{yamada2016dynamic}]
The constructions $\otimes$, $\langle \_, \_ \rangle$, $(\_)^\dagger$, $\ddagger$, $;$, $\Lambda$ and $\Lambda^\circleddash$ on strategies are all well-defined.
\end{theorem}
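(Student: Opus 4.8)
The plan is to treat the seven constructions uniformly. By Theorem~\ref{ThmConstructionsOnGames} each ambient game ($A \otimes B \multimap C \otimes D$, $\langle L, R \rangle$, $G^\dagger$, $J \ddagger K$, $\mathcal{H}^\omega(J \ddagger K)$, $\Lambda(G)$, $\Lambda^\circleddash(H)$) is already a well-defined dynamic game, so it suffices to verify, for each construction, that the set defining it is (i) a subset of the even-length positions of that game, (ii) non-empty and even-prefix-closed (axiom S1), (iii) deterministic (axiom S2), and (iv) innocent, granted that the component strategies enjoy (i)--(iv). Each construction is described by a family of \emph{restriction maps} on j-sequences: the projections $\boldsymbol{s} \upharpoonright \mathscr{W}$, $\boldsymbol{s} \upharpoonright \mathscr{E}$ for $\otimes$ and $\&$-based games, the thread projections $\boldsymbol{s} \upharpoonright \boldsymbol{e}$ for $(\_)^\dagger$, the peeling maps $\mathit{peel}^\ast_{\Lambda(G)}$, $\mathit{peel}^\ast_{\Lambda^\circleddash(H)}$ for currying and uncurrying, and $\boldsymbol{s} \upharpoonright J$, $\boldsymbol{s} \upharpoonright K$, $\boldsymbol{s} \upharpoonright B^{[1]}, B^{[2]}$ for concatenation. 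The key observation to be recorded first is that every such map commutes with taking prefixes, sends positions of the ambient game to positions of the relevant component games, and is compatible with P-views, in the sense that the P-view of a restriction is the restriction of the P-view (modulo the bookkeeping of internal moves). Granting this, (i) is immediate from the definitions of positions of the ambient games in Section~\ref{ConstructionsOnGames}, and (ii) follows because $\boldsymbol{\epsilon}$ restricts to $\boldsymbol{\epsilon}$, which lies in every strategy, and even-prefix-closure transports along the restriction maps.

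For (iii) I would prove, for each construction, a small \emph{switching lemma}: after any O-move in the ambient game, exactly one component (or exactly one thread, or the internal copy-cat region) is ``active'', and which one is active is a function of the last move alone, read off from its inner tags as set up in Section~\ref{ConstructionsOnGames}; determinacy of the ambient strategy then reduces to determinacy of the active component. For (iv), the compatibility of restriction with P-views lets one convert an equality of ambient P-views $\lceil \boldsymbol{t} m \rceil = \lceil \boldsymbol{s} m \rceil$ into an equality of the relevant component P-view, at which point innocence of the component supplies the required extension. For $\otimes$, $\langle \_, \_ \rangle$, $(\_)^\dagger$, $\Lambda$ and $\Lambda^\circleddash$ these checks are entirely routine and are carried out in \cite{yamada2016dynamic}; the only addition in the present variant is to confirm that the restriction maps interact correctly with the new dummy-function structure and the strengthened axiom DP2, which poses no difficulty because a dummy differs from its P-move only in inner tags.

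The genuine work is concentrated in concatenation $\sigma \ddagger \tau : J \ddagger K$, and hence in composition, which then comes for free via the Hiding Theorem~\ref{ThmHidingTheorem}: $\sigma ; \tau = \mathcal{H}^\omega(\sigma \ddagger \tau)$ is a strategy on $\mathcal{H}^\omega(J \ddagger K)$ precisely because $\sigma \ddagger \tau$ is one on $J \ddagger K$. Here the three defining conditions $\boldsymbol{s} \upharpoonright J \in \sigma$, $\boldsymbol{s} \upharpoonright K \in \tau$ and $\boldsymbol{s} \upharpoonright B^{[1]}, B^{[2]} \in \mathit{pr}_B$ must be shown to pin down a single-valued, innocent next-move assignment at every odd-length position of $J \ddagger K$. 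I would split on the last O-move $m$: if $m$ lies in $A$ or is internal to $J$, the mandated reply is supplied by $\sigma$; if $m$ lies in $C$ or is internal to $K$, by $\tau$; and if $m$ is an occurrence in one of the copies $B^{[1]}, B^{[2]}$, the $\mathit{pr}_B$ condition forces P to copy it into the other copy — and one checks these cases are exhaustive and mutually exclusive. The new point relative to \cite{yamada2016dynamic}, and the one I expect to be the main obstacle, is that the strengthened axiom DP2 now additionally requires every \emph{internal O-move} occurring in $\boldsymbol{s}$ to be the dummy $\Delta_{J \ddagger K}(p')$ of the immediately preceding internal P-move $p'$ with a prescribed justifier; one must verify that $\sigma \ddagger \tau$ as defined never has to ``choose'' such an internal O-move — it is uniquely determined — and that P's subsequent reply is again forced, so that determinacy and innocence are undisturbed. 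Since a dummy is obtained from its P-move by the finitary inner-tag operation $\delta_{J \ddagger K}$, this reduces to a bookkeeping verification on inner tags, which I would carry out case by case on the five clauses of $\Delta_{J \ddagger K}$ in Definition~\ref{DefConcatenationOfGames}. Once this is settled, currying and uncurrying of strategies are immediate, because $\mathit{peel}^\ast_{\Lambda(G)}$ and $\mathit{peel}^\ast_{\Lambda^\circleddash(H)}$ are bijective relabellings commuting with all of the relevant structure, and the theorem follows.
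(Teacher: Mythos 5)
Your plan is correct and matches the paper's approach: the paper itself offers no argument beyond deferring to \cite{yamada2016dynamic}, where exactly this routine verification (S1, S2 and innocence transported along the restriction/peeling maps, with the switching analysis for determinacy and the hiding theorem handling $;$ via $\ddagger$) is carried out. You also correctly isolate the only obligation that is new in this paper's variant, namely compatibility with the dummy functions and the strengthened axiom DP2, which the paper likewise dismisses as a straightforward inner-tag bookkeeping check.
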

\if0
\begin{proof}
Straightforward; see \cite{yamada2016dynamic} for the detail.
\end{proof}
\fi

\begin{lemma}[Hiding lemma on strategies \cite{yamada2016dynamic}]
\label{LemHidingLemmaOnStrategies}
Let $\spadesuit_{i \in I}$ be either $\otimes$, $\langle \_, \_ \rangle$, $(\_)^\dagger$, $\ddagger$, $\Lambda$ or $\Lambda^\circleddash$, and $d \in \mathbb{N} \cup \{ \omega \}$. Then, for any family $(\sigma_i)_{i \in I}$ of strategies, we have:
\begin{enumerate}

\item $\mathcal{H}^d(\spadesuit_{i \in I}\sigma_i) = \spadesuit_{i \in I} \mathcal{H}^d(\sigma_i)$ if $\spadesuit_{i \in I}$ is not $\ddagger$;

\item $\mathcal{H}^d (\sigma_1 \ddagger \sigma_2) = \mathcal{H}^d(\sigma_1) \ddagger \mathcal{H}^d(\sigma_2)$ if $\mathcal{H}^d(\sigma_1 \ddagger \sigma_2)$ is not yet normalized, and $\mathcal{H}^d(\sigma_1 \ddagger \sigma_2) = \mathcal{H}^d(\sigma_1) ; \mathcal{H}^d(\sigma_2)$ otherwise.

\end{enumerate}
\end{lemma}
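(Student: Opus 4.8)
The plan is to parallel the game-level Hiding lemma (Lemma~\ref{LemHidingLemmaOnGames}), on which the statement rests, and argue construction by construction; I will work throughout with an arbitrary $d \in \mathbb{N} \cup \{ \omega \}$, deleting occurrences of moves of priority order in $\{ 1, \dots, d \}$ and re-routing pointers along $d$-external justifiers as in Definitions~\ref{DefExternalJustifiers}--\ref{DefHidingOperationOnJsequences}.

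For a non-$\ddagger$ construction $\spadesuit$, I would first observe, from Lemma~\ref{LemHidingLemmaOnGames}(1), that the ambient games satisfy $\mathcal{H}^d(\spadesuit_{i \in I} G_i) \trianglelefteqslant \spadesuit_{i \in I} \mathcal{H}^d(G_i)$, where $G_i$ is the ambient game of $\sigma_i$; by the Hiding theorem (Theorem~\ref{ThmHidingTheorem}) $\mathcal{H}^d(\spadesuit_{i \in I} \sigma_i)$ is a strategy on the left-hand game and $\spadesuit_{i \in I} \mathcal{H}^d(\sigma_i)$ a strategy on the right-hand game, and since $\phi : G \trianglelefteqslant H$ implies $\phi : H$, both are strategies on $\spadesuit_{i \in I} \mathcal{H}^d(G_i)$, so only set equality is at issue. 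The heart of the matter is a single commutation: $d$-hiding on j-sequences commutes with the j-subsequence (``projection'') maps $\boldsymbol{s} \mapsto \boldsymbol{s} \upharpoonright X$ used in the definitions of the constructions, because hiding changes neither the outer/inner tags of surviving occurrences nor their relative order, and deletion of an occurrence depends only on the priority order of its move, which projection leaves alone; consequently the $d$-external justifier $\mathcal{J}^{\circleddash d}$ of a surviving occurrence is computed compatibly in a composite game and in each of its components, so $\mathcal{H}^d_{\spadesuit_i G_i}(\boldsymbol{s}) \upharpoonright X_j = \mathcal{H}^d_{G_j}(\boldsymbol{s} \upharpoonright X_j)$, and the trailing-move truncation $\natural$ commutes with projection since $d$-completeness of the trailing move is visible in every relevant projection. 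From this, for $\boldsymbol{s} \in \spadesuit_i \sigma_i$ each projection of $\boldsymbol{s} \natural \mathcal{H}^d$ equals $(\boldsymbol{s} \upharpoonright X_j) \natural \mathcal{H}^d \in \mathcal{H}^d(\sigma_j)$, whence $\boldsymbol{s} \natural \mathcal{H}^d \in \spadesuit_i \mathcal{H}^d(\sigma_i)$; for the reverse inclusion I would take a position of $\spadesuit_i \mathcal{H}^d(\sigma_i)$, lift each of its projections to a position of the corresponding $\sigma_j$, and reconstruct a composite pre-image in $\spadesuit_i \sigma_i$ from these lifts, the reconstruction being forced by even-prefix-closure and determinacy of the $\sigma_i$. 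This settles (1).

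For (2) I would split according to Lemma~\ref{LemHidingLemmaOnGames}(2). If $\mathcal{H}^d(\sigma_1 \ddagger \sigma_2)$ is not yet normalized, then in particular no move of the intermediate game $B$ has been hidden (the moves of $B$ in $J \ddagger K$ carrying the largest priority order), so $\mathcal{H}^d(J \ddagger K) = \mathcal{H}^d(J) \ddagger \mathcal{H}^d(K)$; the projection-commutes-with-hiding argument then applies as above, with the extra check that the copy-cat constraint $\mathit{pr}_B$ on the two copies $B^{[1]}, B^{[2]}$ is preserved under $d$-hiding as long as $B$ survives, giving $\mathcal{H}^d(\sigma_1 \ddagger \sigma_2) = \mathcal{H}^d(\sigma_1) \ddagger \mathcal{H}^d(\sigma_2)$. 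If instead $\mathcal{H}^d(\sigma_1 \ddagger \sigma_2)$ is normalized, then $\mathcal{H}^d$ coincides with $\mathcal{H}^\omega$ both on $\sigma_1 \ddagger \sigma_2$ and on each $\sigma_i$ (whose component games are already normalized in this case), so by Definition~\ref{DefConcatenationAndCompositionOfStrategies}, $\mathcal{H}^d(\sigma_1 \ddagger \sigma_2) = \mathcal{H}^\omega(\sigma_1 \ddagger \sigma_2) = \mathcal{H}^\omega(\sigma_1) ; \mathcal{H}^\omega(\sigma_2) = \mathcal{H}^d(\sigma_1) ; \mathcal{H}^d(\sigma_2)$.

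I expect the main obstacle to be the careful verification that $d$-hiding really does commute with the j-subsequence projections at the level of pointer structure --- that projecting a $d$-external justifier computed in a composite game returns the $d$-external justifier computed directly in the relevant component --- together with the bookkeeping for the truncation $\natural$ and, for $\ddagger$, the interaction of hiding with the $\mathit{pr}_B$ copy-cat condition on $B^{[1]}, B^{[2]}$; these are exactly the places where one must confirm that the strengthened axiom DP2 and the dummy-function structure cause no trouble, which the paper asserts (and \cite{yamada2016dynamic} establishes) is routine.
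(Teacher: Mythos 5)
The paper does not actually prove this lemma: it is imported verbatim from \cite{yamada2016dynamic}, and the proof environment (``See \cite{yamada2016dynamic}'') is commented out of the source. So there is no in-paper argument to compare against; what can be said is that your sketch is the natural one and, as far as the structure of the definitions in Section~\ref{DynamicStrategies} dictates, essentially the argument the cited reference must run: every construction $\spadesuit$ is defined by constraints of the form ``each projection $\boldsymbol{s} \upharpoonright X$ lies in the corresponding component strategy,'' so the whole lemma reduces to showing that $\mathcal{H}^d_{(-)}$ and $\natural$ commute with those projections at the level of moves and of $d$-external pointers, plus the game-level Lemma~\ref{LemHidingLemmaOnGames} to split the $\ddagger$ case. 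Your identification of why the non-normalized branch of (2) works (the $B$-moves carry priority order $\mu = \max(\mu(J),\mu(K))+1$, strictly above everything in $J$ and $K$, so they are hidden last) is exactly right.

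One step is stated too quickly. In the normalized branch of (2) you write $\mathcal{H}^\omega(\sigma_1 \ddagger \sigma_2) = \mathcal{H}^\omega(\sigma_1) ; \mathcal{H}^\omega(\sigma_2)$ ``by Definition~\ref{DefConcatenationAndCompositionOfStrategies},'' but that definition only gives $\mathcal{H}^\omega(\sigma_1) ; \mathcal{H}^\omega(\sigma_2) = \mathcal{H}^\omega\bigl(\mathcal{H}^\omega(\sigma_1) \ddagger \mathcal{H}^\omega(\sigma_2)\bigr)$, which is not literally the left-hand side. You need to route through your own first branch: apply it at $d' = \mu - 1$ to get $\mathcal{H}^{\mu-1}(\sigma_1 \ddagger \sigma_2) = \mathcal{H}^\omega(\sigma_1) \ddagger \mathcal{H}^\omega(\sigma_2)$ (using that $\mu - 1 \geqslant \mu(J), \mu(K)$), and then apply $\mathcal{H}^1$ once more together with Theorem~\ref{ThmHidingTheorem}'s iteration property to conclude $\mathcal{H}^d(\sigma_1 \ddagger \sigma_2) = \mathcal{H}^\omega(\sigma_1) ; \mathcal{H}^\omega(\sigma_2)$. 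With that repair, and the pointer/truncation bookkeeping you already flag (which is where the strengthened DP2 and the dummy functions must be checked, as the paper does explicitly for Lemma~\ref{LemClosureOfArenasAndJSequencesUnderHiding}), the argument is sound.
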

\if0
\begin{proof}
See \cite{yamada2016dynamic}.
\end{proof}
\fi

\section{Viable strategies}
\label{ViableStrategies}
We have defined our games and strategies in the previous section.
In this main section of the paper, we introduce a novel notion of `effective' or \emph{viable} strategies, and show that they subsume all computations of the programming language \emph{PCF} \cite{plotkin1977lcf,mitchell1996foundations}, and therefore they are  \emph{Turing complete}.

In Section~\ref{SubsectionViableStrategies}, we define viability of strategies and show that it is preserved under the constructions on strategies defined in Section~\ref{ConstructionsOnStrategies}. We then describe various examples of viable strategies in Section~\ref{Examples}, based on which we finally prove in Section~\ref{TuringCompleteness} that viable strategies may interpret all terms of PCF.

\subsection{Viable strategies}
\label{SubsectionViableStrategies}
The idea of viable strategies is as follows.
First, it seems a reasonable idea to restrict the number of previous occurrences of moves which P is allowed to look at to calculate the next P-move to a \emph{bounded} one.\footnote{Note that this is analogous to the computation of a TM which looks at only one cell of an infinite tape at a time.}
Fortunately, to model the language PCF, it turns out that strategies only need to read off at most \emph{last three moves} in the P-view (and possibly a few initial and internal moves) as we shall see, which is clearly `effectively' achievable in an informal sense.
Thus, it remains to formalize how strategies `effectively' compute the next P-move from such a bounded number of previous occurrences. 
Note that (as already mentioned) computation of internal O-moves should be done by P, but it is rather trivial by the axiom DP2, and therefore we do not have to take into account it.
We focus on \emph{innocent} strategies as a means to narrow down previous occurrences to be concerned with.\footnote{Of course, there would be another means to `effectively' eliminate irrelevant moves from the history of previous moves; in fact, we need more than P-views in order to model languages with \emph{states} \cite{abramsky1999game}, which is left as future work.}

As the set $\pi_1(M_G)$ is finite for any game $G$, innocent strategies that are \emph{finitary} in the sense that their view functions are finite seem sufficient at first glance.
However, to model \emph{fixed-point combinators} in PCF, strategies need to initiate new threads \emph{unboundedly many times} \cite{hyland2000full,abramsky2000full}; also, they have to model promotion in which possible outer tags are infinitely many.
Thus, finitary strategies are not enough. 

Then, how can we define a stronger notion of `effective computability' of the next P-move from (a bounded number of) previous occurrences \emph{solely in terms of games and strategies}? 
Our solution is, which is the main achievement of the present paper, to define a strategy $\sigma : G$ to be `effective' or \emph{viable} if it is `describable' by a finitary strategy, called an \emph{instruction strategy}, on the \emph{instruction game} of $G$:
\begin{notation}
For conceptual clarity, we assign a unique symbol $\mathsf{m}$ to each inner element $m \in \pi_1(M_G)$ for a game $G$, for which we may assume that these symbols are \emph{pairwise distinct} since the set $\pi_1 (M_G)$ is finite.
Also, we assign symbols to elements of outer tags by a function $\mathscr{C} : \ell \mapsto \prime$, $\hbar \mapsto \sharp$, $\Lbag \mapsto \langle$ and $\Rbag \mapsto \rangle$.
Let us define $\mathsf{Sym}(\pi_1 (M_G)) \stackrel{\mathrm{df. }}{=} \{ \mathsf{m} \mid m \in \pi_1(M_G) \}$, $\mathsf{Sym}(\mathcal{T})  \stackrel{\mathrm{df. }}{=} \{ \prime, \sharp, \langle, \rangle \}$ and $\mathsf{Sym}(M_G) \stackrel{\mathrm{df. }}{=} \mathsf{Sym}(\pi_1(M_G)) \times \mathsf{Sym}(\mathcal{T})$.
(N.b., technically, these new symbols are not necessary; they are just for clarity.)
\end{notation}

\begin{definition}[Instruction games]
The \emph{\bfseries instruction game} $\mathcal{G}(M_G)$ on a game $G$ is the product $\mathcal{G}(\pi_1(M_G)) \& \mathcal{G}(\mathcal{T})$, where $\mathcal{G}(\mathcal{T})$ is the tag game (Example~\ref{ExTagGame}), and the game $\mathcal{G}(\pi_1(M_G))$ is defined by: 
\begin{itemize}

\item $M_{\mathcal{G}(\pi_1(M_G))} \stackrel{\mathrm{df. }}{=} \{ [\hat{q}_G], [\square] \} \cup \{ [\mathsf{m}] \ \! | \ \! \mathsf{m} \in \mathsf{Sym}(\pi_1(M_G)) \}$, where $\hat{q}_G$ and $\square$ are arbitrary chosen elements with $\hat{q}_G \neq \square \wedge \{ \hat{q}_G, \square \} \cap \mathsf{Sym}(\pi_1(M_G)) = \emptyset$;

\item $\lambda_{\mathcal{G}(\pi_1(M_G))} : [\hat{q}_G] \mapsto (\mathsf{O}, \mathsf{Q}, 0)$, $[\square] \mapsto (\mathsf{P}, \mathsf{A}, 0)$, $[\mathsf{m}] \mapsto (\mathsf{P}, \mathsf{A}, 0)$;

\item $\vdash_{\mathcal{G}(\pi_1(M_G))} \stackrel{\mathrm{df. }}{=} \{ (\star, [\hat{q}_G]), ([\hat{q}_G], [\square]) \} \cup \{ ([\hat{q}_G], [\mathsf{m}]) \ \! | \ \! \mathsf{m} \in \mathsf{Sym}(\pi_1(M_G)) \}$;

\item $\Delta_{\mathcal{G}(\pi_1(M_G))} \stackrel{\mathrm{df. }}{=} \emptyset$;

\item $P_{\mathcal{G}(\pi_1(M_G))} \stackrel{\mathrm{df. }}{=} \mathsf{Pref} (\{ [\hat{q}_G] . [\square] \} \cup \{ [\hat{q}_G] . [\mathsf{m}] \ \! | \ \! \mathsf{m} \in \mathsf{Sym}(\pi_1(M_G)) \})$, where $[\hat{q}_G]$ justifies $[\square]$ and $[\mathsf{m}]$.

%\item $\boldsymbol{s} \simeq_{\mathcal{G}(\pi_1(M_G))} \boldsymbol{t} \stackrel{\mathrm{df. }}{\Leftrightarrow} \boldsymbol{s} = \boldsymbol{t}$.

\end{itemize}
\end{definition}
The positions $[\hat{q}_G] . [\mathsf{m}]$ and $[\hat{q}_G] . [\square]$ are to represent the inner element $m \in \pi_1(M_G)$ and `no element', respectively, which will be clearer shortly.

\begin{convention}
Pointers in positions of instruction games are rather trivial, and thus we usually omit them. 
\end{convention}

\begin{notation}
Let $G$ be a game, and $[m]_{\boldsymbol{e}} \in M_G$ with $\boldsymbol{e} = e_1 e_2 \dots e_k \in \mathcal{T}$. We write $\underline{[m]_{\boldsymbol{e}}}$ for the strategy $\langle \underline{m}, \underline{\boldsymbol{e}}\rangle : \mathcal{G}(M_G)$, where $\underline{m} : \mathcal{G}(\pi_1(M_G))$ and $\underline{\boldsymbol{e}} : \mathcal{G}(\mathcal{T})$ are defined respectively by $\underline{m} \stackrel{\mathrm{df. }}{=} \mathsf{Pref}(\{ [\hat{q}_G] . [\mathsf{m}] \})^{\mathsf{Even}}$ and $\underline{\boldsymbol{e}} \stackrel{\mathrm{df. }}{=} \mathsf{Pref}(\{ [\hat{q}_{\mathcal{T}}] . [\mathscr{C}(e_1)] . [q_{\mathcal{T}}] . [\mathscr{C}(e_2)] \dots [q_{\mathcal{T}}] . [\mathscr{C}(e_k)] . [q_{\mathcal{T}}] . [\checkmark] \})^{\mathsf{Even}}$. 
Similarly, we define $\underline{\square} \stackrel{\mathrm{df. }}{=} \mathsf{Pref}(\{ [\hat{q}_G] . [\square] \})^{\mathsf{Even}} : \mathcal{G}(\pi_1(M_G))$ and $\underline{[\square]} \stackrel{\mathrm{df. }}{=} \langle \underline{\square}, \underline{\boldsymbol{\epsilon}} \rangle : \mathcal{G}(M_G)$.
For any finite sequence $\boldsymbol{s} = [m_l]_{\boldsymbol{e}_l} [m_{l-1}]_{\boldsymbol{e}_{l-1}} \dots [m_1]_{\boldsymbol{e}_1} \in M_G^\ast$ of moves and number $n \geqslant l$, we define $\underline{\boldsymbol{s}}_n \stackrel{\mathrm{df. }}{=} \langle \underbrace{\underline{[\square]}, \dots, \underline{[\square]}}_{n-l}, \underline{[m_l]_{\boldsymbol{e}_l}}, \underline{[m_{l-1}]_{\boldsymbol{e}_{l-1}}}, \dots, \underline{[m_1]_{\boldsymbol{e}_1}} \rangle : \mathcal{G}(M_G)^n$ and $\mathcal{G}(M_G)^n \stackrel{\mathrm{df. }}{=} \underbrace{\mathcal{G}(M_G) \& \mathcal{G}(M_G) \dots \& \mathcal{G}(M_G)}_{n}$, where the $n$-ary pairing and product are just abbreviations of the $(n-1)$-times iteration of the usual (binary) ones from the left.
Given a strategy $\sigma : \mathcal{G}(M_G)$, we define $\mathcal{M}(\sigma)$ to be the unique move in $M_G$ such that $\underline{\mathcal{M}(\sigma)} = \sigma$ if it exists, and undefined otherwise.
\end{notation}

The idea is to `describe' or \emph{realize} an `effective' strategy $\sigma$ on a game $G$ by a finitary strategy $\mathcal{A}(\sigma)^\circledS$ on the game  $\mathcal{G}(M_G)^3 \Rightarrow \mathcal{G}(M_G)$.
For instance, recall the successor strategy $\mathit{succ} : \mathcal{N} \Rightarrow \mathcal{N}$ in Example~\ref{ExSuccPred}, which computes as in Figure~\ref{FormalSucc} below.
It is then easy to see that this computation can be described by a finite partial function $(m_3, m_2, m_1) \mapsto m$ as already mentioned in the introduction, where $m_1$, $m_2$, $m_3$ are the last, second last and third last occurrences in the P-view of the current odd-length position, respectively, and $m$ is the next P-move.
Concretely, the finite partial function is given by the following table:
\begin{align*}
&(\square, \square, [\hat{q}_{\mathscr{E}}]) \mapsto [\hat{q}_{\mathscr{W}}]_{\Lbag \Rbag \hbar} \mid ([\mathit{no}_{\mathscr{W}}]_{\Lbag \Rbag \hbar}, \_, [q_{\mathscr{E}}]) \mapsto [\mathit{no}_{\mathscr{E}}] \mid \\ &(\_, \_, [\mathit{yes}_{\mathscr{W}}]_{\Lbag \Rbag \hbar}) \mapsto [\mathit{yes}_{\mathscr{E}}] \mid (\_, \_, [\mathit{no}_{\mathscr{W}}]_{\Lbag \Rbag \hbar}) \mapsto [\mathit{yes}_{\mathscr{E}}] \mid \\ &([\mathit{yes}_{\mathscr{W}}]_{\Lbag \Rbag \hbar}, \_, [q]_{\mathscr{E}}) \mapsto [q_{\mathscr{W}}]_{\Lbag \Rbag \hbar}
\end{align*}
where $\square$ and $\_$ mean `no move' and `any move', respectively.
Therefore, it can be realized by a finitary strategy $\mathcal{A}(\mathit{succ})^\circledS : \mathcal{G}(M_{\mathcal{N} \Rightarrow \mathcal{N}})^3 \Rightarrow \mathcal{G}(M_{\mathcal{N} \Rightarrow \mathcal{N}})$ in the sense that it satisfies $\mathcal{A}(\mathit{succ})^\circledS \circ \langle \underline{m_3}, \underline{m_2}, \underline{m_1} \rangle^\dagger = \underline{m}$ for all input-output pairs $(m_3, m_2, m_1) \mapsto m$ of $\mathit{succ}$.
Diagrammatically, $\mathcal{A}(\mathit{succ})^\circledS$ computes as follows:
\begin{figure}
\begin{center}
\begin{tabular}{cccccccccccc}
$[\mathcal{N}_{\mathscr{W}}]_{\Lbag \boldsymbol{e} \Rbag \hbar}$ && $\stackrel{\mathit{succ}}{\Rightarrow}$ && $[\mathcal{N}_{\mathscr{E}}]$ &&& $[\mathcal{N}_{\mathscr{W}}]_{\Lbag \boldsymbol{e} \Rbag \hbar}$ && $\stackrel{\mathit{succ}}{\Rightarrow}$ && $[\mathcal{N}_{\mathscr{E}}]$ \\ \cline{1-5} \cline{8-12}
&&&&\tikzmark{csuccF1} $[\hat{q}_{\mathscr{E}}]$ \tikzmark{csuccF3} &&&& &&&\tikzmark{csuccF21} $[\hat{q}_{\mathscr{E}}]$ \tikzmark{csuccF23} \\
\tikzmark{csuccF2} $[\hat{q}_{\mathscr{W}}]_{\Lbag \Rbag \hbar}$ \tikzmark{dsuccF1}&&&& &&& \tikzmark{csuccF22} $[\hat{q}_{\mathscr{W}}]_{\Lbag \Rbag \hbar}$ \tikzmark{dsuccF21}&&&& \\
\tikzmark{dsuccF2} $[\mathit{yes}_{\mathscr{W}}]_{\Lbag \Rbag \hbar}$ \tikzmark{csuccF5}&&&& &&& \tikzmark{dsuccF22} $[\mathit{no}_{\mathscr{W}}]_{\Lbag \Rbag \hbar}$ \tikzmark{csuccF25}&&&& \\
&&&&\tikzmark{csuccF4} $[\mathit{yes}_{\mathscr{E}}]$ \tikzmark{dsuccF3} &&&& &&&\tikzmark{csuccF24} $[\mathit{yes}_{\mathscr{E}}]$ \tikzmark{dsuccF23} \\
&&&&\tikzmark{dsuccF4} $[q_{\mathscr{E}}]$ \tikzmark{csuccF9} &&&& &&&\tikzmark{dsuccF24} $[q_{\mathscr{E}}]$ \tikzmark{csuccF29} \\
\tikzmark{csuccF8} $[q_{\mathscr{W}}]_{\Lbag \Rbag \hbar}$ \tikzmark{dsuccF5}&&&& &&& &&&&\tikzmark{csuccF27} $[\mathit{no}_{\mathscr{E}}]$ \tikzmark{dsuccF29} \\
\tikzmark{dsuccF8} $[\mathit{yes}_{\mathscr{W}}]_{\Lbag \Rbag \hbar}$ \tikzmark{csuccF6}&&&& &&&& &&& \\
&&&&\tikzmark{csuccF7} $[\mathit{yes}_{\mathscr{E}}]$ \tikzmark{dsuccF9} &&& &&&& \\
&&&&\tikzmark{dsuccF7} $[q_{\mathscr{E}}]$ &&& &&&& \\
$[q_{\mathscr{W}}]_{\Lbag \Rbag \hbar}$ \tikzmark{dsuccF6}&&&& &&& &&&& \\
&&$\vdots$&& &&&& &&& \\
$[\mathit{yes}_{\mathscr{W}}]_{\Lbag \Rbag \hbar}$ \tikzmark{csuccF10}&&&& &&& &&&& \\
&&&&\tikzmark{csuccF11} $[\mathit{yes}_{\mathscr{E}}]$ &&& &&&& \\
&&&&\tikzmark{dsuccF11} $[q_{\mathscr{E}}]$ \tikzmark{csuccF13} &&&&& && \\
\tikzmark{csuccF12} $[q_{\mathscr{W}}]_{\Lbag \Rbag \hbar}$ \tikzmark{dsuccF10}&&&& &&& &&&& \\
\tikzmark{dsuccF12} $[\mathit{no}_{\mathscr{W}}]_{\Lbag \Rbag \hbar}$&&&& &&& &&&& \\
&&&&\tikzmark{csuccF14} $[\mathit{yes}_{\mathscr{E}}]$ \tikzmark{dsuccF13} &&&& &&& \\
&&&&\tikzmark{dsuccF14} $[q_{\mathscr{E}}]$ \tikzmark{csuccF15} &&&& &&& \\
&&&& $[\mathit{no}_{\mathscr{E}}]$ \tikzmark{dsuccF15} &&&& &&& 
\end{tabular}
\begin{tikzpicture}[overlay, remember picture, yshift=.25\baselineskip]
    \draw [->] ({pic cs:dsuccF1}) to ({pic cs:csuccF1});
    \draw [->] ({pic cs:dsuccF2}) [bend left] to ({pic cs:csuccF2});
    \draw [->] ({pic cs:dsuccF3}) [bend right] to ({pic cs:csuccF3});
    \draw [->] ({pic cs:dsuccF4}) [bend left] to ({pic cs:csuccF4});
    \draw [->] ({pic cs:dsuccF5}) [bend right] to ({pic cs:csuccF5});
    \draw [->] ({pic cs:dsuccF6}) [bend right] to ({pic cs:csuccF6}); 
    \draw [->] ({pic cs:dsuccF7}) [bend left] to ({pic cs:csuccF7});
    \draw [->] ({pic cs:dsuccF8}) [bend left] to ({pic cs:csuccF8});
    \draw [->] ({pic cs:dsuccF9}) [bend right] to ({pic cs:csuccF9});
    \draw [->] ({pic cs:dsuccF10}) [bend right] to ({pic cs:csuccF10});
    \draw [->] ({pic cs:dsuccF11}) [bend left] to ({pic cs:csuccF11});
    \draw [->] ({pic cs:dsuccF12}) [bend left] to ({pic cs:csuccF12});
    \draw [->] ({pic cs:dsuccF13}) [bend right] to ({pic cs:csuccF13});
    \draw [->] ({pic cs:dsuccF14}) [bend left] to ({pic cs:csuccF14});
    \draw [->] ({pic cs:dsuccF15}) [bend right] to ({pic cs:csuccF15});
    \draw [->] ({pic cs:dsuccF21}) to ({pic cs:csuccF21});
    \draw [->] ({pic cs:dsuccF22}) [bend left] to ({pic cs:csuccF22});
    \draw [->] ({pic cs:dsuccF23}) [bend right] to ({pic cs:csuccF23});
    \draw [->] ({pic cs:dsuccF24}) [bend left] to ({pic cs:csuccF24});
    %\draw [->] ({pic cs:dsucc25}) [bend right] to ({pic cs:csucc25});
    %\draw [->] ({pic cs:dsucc26}) [bend right] to ({pic cs:csucc26}); 
    %\draw [->] ({pic cs:dsucc27}) [bend left] to ({pic cs:csucc27});
    %\draw [->] ({pic cs:dsucc28}) [bend left] to ({pic cs:csucc28});
    \draw [->] ({pic cs:dsuccF29}) [bend right] to ({pic cs:csuccF29});
    %\draw [->] ({pic cs:dsucc210}) [bend left] to ({pic cs:csucc210});
    %\draw [->] ({pic cs:dsucc211}) [bend left] to ({pic cs:csucc211});
    %\draw [->] ({pic cs:dsucc212}) [bend left] to ({pic cs:csucc212});
    %\draw [->] ({pic cs:dsucc213}) [bend right] to ({pic cs:csucc213});
    %\draw [->] ({pic cs:dsucc214}) [bend left] to ({pic cs:csucc214});
    %\draw [->] ({pic cs:dsucc215}) [bend right] to ({pic cs:csucc215});
    %\draw [->] ({pic cs:ddouble1}) [bend right] to ({pic cs:cdouble1});
    %\draw [->] ({pic cs:ddouble2}) [bend left] to ({pic cs:cdouble2}); 
    %\draw [->] ({pic cs:ddouble3}) [bend right] to ({pic cs:cdouble3});
  \end{tikzpicture}
  \caption{The successor strategy $\mathit{succ} : \mathcal{N} \Rightarrow \mathcal{N}$}
  \label{FormalSucc}
  \end{center}
\end{figure}

\begin{center}
\begin{tabular}{ccccccc}
$\mathcal{G}(M_{\mathcal{N} \Rightarrow \mathcal{N}})^{[0]}$ & $\&$ & $\mathcal{G}(M_{\mathcal{N} \Rightarrow \mathcal{N}})^{[1]}$ & $\&$ & $\mathcal{G}(M_{\mathcal{N} \Rightarrow \mathcal{N}})^{[2]}$ & $\stackrel{\mathcal{A}(\mathit{succ})^\circledS}{\Rightarrow}$ & $\mathcal{G}(M_{\mathcal{N} \Rightarrow \mathcal{N}})^{[3]}$ \\
\cline{1-7}
&&&&&&$(\hat{q}_{\mathcal{N} \Rightarrow \mathcal{N}})^{[3]}$ \\
&&&&$(\hat{q}_{\mathcal{N} \Rightarrow \mathcal{N}})^{[2]}$&& \\
&&&&$(\mathsf{\hat{q}_E})^{[2]}$&& \\
&&&&&&$(\mathsf{\hat{q}_W})^{[3]}$
\end{tabular}
\end{center}

\begin{center}
\begin{tabular}{ccccccc}
$\mathcal{G}(M_{\mathcal{N} \Rightarrow \mathcal{N}})^{[0]}$ & $\&$ & $\mathcal{G}(M_{\mathcal{N} \Rightarrow \mathcal{N}})^{[1]}$ & $\&$ & $\mathcal{G}(M_{\mathcal{N} \Rightarrow \mathcal{N}})^{[2]}$ & $\stackrel{\mathcal{A}(\mathit{succ})^\circledS}{\Rightarrow}$ & $\mathcal{G}(M_{\mathcal{N} \Rightarrow \mathcal{N}})^{[3]}$ \\
\cline{1-7}
&&&&&&$(\hat{q}_{\mathcal{T}})^{[3]}$ \\
&&&&$(\hat{q}_{\mathcal{N} \Rightarrow \mathcal{N}})^{[2]}$&& \\
&&&&$(\mathsf{\hat{q}_E})^{[2]}$&& \\
&&&&&&$(\langle)^{[3]}$ \\
&&&&&&$(q_{\mathcal{T}})^{[3]}$ \\
&&&&&&$(\rangle)^{[3]}$ \\
&&&&&&$(q_{\mathcal{T}})^{[3]}$ \\
&&&&&&$(\sharp)^{[3]}$ \\
&&&&&&$(q_{\mathcal{T}})^{[3]}$ \\
&&&&&&$(\checkmark)^{[3]}$
\end{tabular}
\end{center}

\begin{center}
\begin{tabular}{ccccccc}
$\mathcal{G}(M_{\mathcal{N} \Rightarrow \mathcal{N}})^{[0]}$ & $\&$ & $\mathcal{G}(M_{\mathcal{N} \Rightarrow \mathcal{N}})^{[1]}$ & $\&$ & $\mathcal{G}(M_{\mathcal{N} \Rightarrow \mathcal{N}})^{[2]}$ & $\stackrel{\mathcal{A}(\mathit{succ})^\circledS}{\Rightarrow}$ & $\mathcal{G}(M_{\mathcal{N} \Rightarrow \mathcal{N}})^{[3]}$ \\
\cline{1-7}
&&&&&&$(\hat{q}_{\mathcal{N} \Rightarrow \mathcal{N}})^{[3]}$ \\
&&&&$(\hat{q}_{\mathcal{N} \Rightarrow \mathcal{N}})^{[2]}$&& \\
&&&&$(\mathsf{q_E})^{[2]}$&& \\
$(\hat{q}_{\mathcal{N} \Rightarrow \mathcal{N}})^{[0]}$&&&&&& \\ 
$(\mathsf{no_W})^{[0]}$&&&&&& \\
&&&&&&$(\mathsf{no_E})^{[3]}$
\end{tabular}
\end{center}

\begin{center}
\begin{tabular}{ccccccc}
$\mathcal{G}(M_{\mathcal{N} \Rightarrow \mathcal{N}})^{[0]}$ & $\&$ & $\mathcal{G}(M_{\mathcal{N} \Rightarrow \mathcal{N}})^{[1]}$ & $\&$ & $\mathcal{G}(M_{\mathcal{N} \Rightarrow \mathcal{N}})^{[2]}$ & $\stackrel{\mathcal{A}(\mathit{succ})^\circledS}{\Rightarrow}$ & $\mathcal{G}(M_{\mathcal{N} \Rightarrow \mathcal{N}})^{[3]}$ \\
\cline{1-7}
&&&&&&$(\hat{q}_{\mathcal{T}})^{[3]}$ \\
&&&&$(\hat{q}_{\mathcal{N} \Rightarrow \mathcal{N}})^{[2]}$&& \\
&&&&$(\mathsf{q_E})^{[2]}$&& \\
$(\hat{q}_{\mathcal{N} \Rightarrow \mathcal{N}})^{[0]}$&&&&&& \\
$(\mathsf{no_W})^{[0]}$&&&&&& \\
&&&&&&$(\checkmark)^{[3]}$ \\
\end{tabular}
\end{center}

\begin{center}
\begin{tabular}{ccccccc}
$\mathcal{G}(M_{\mathcal{N} \Rightarrow \mathcal{N}})^{[0]}$ & $\&$ & $\mathcal{G}(M_{\mathcal{N} \Rightarrow \mathcal{N}})^{[1]}$ & $\&$ & $\mathcal{G}(M_{\mathcal{N} \Rightarrow \mathcal{N}})^{[2]}$ & $\stackrel{\mathcal{A}(\mathit{succ})^\circledS}{\Rightarrow}$ & $\mathcal{G}(M_{\mathcal{N} \Rightarrow \mathcal{N}})^{[3]}$ \\
\cline{1-7}
&&&&&&$(\hat{q}_{\mathcal{N} \Rightarrow \mathcal{N}})^{[3]}$ \\
&&&&$(\hat{q}_{\mathcal{N} \Rightarrow \mathcal{N}})^{[2]}$&& \\
&&&&$(\mathsf{yes_W})^{[2]}$ ($(\mathsf{no_W})^{[2]}$)&& \\
&&&&&&$(\mathsf{yes_E})^{[3]}$
\end{tabular}
\end{center}

\begin{center}
\begin{tabular}{ccccccc}
$\mathcal{G}(M_{\mathcal{N} \Rightarrow \mathcal{N}})^{[0]}$ & $\&$ & $\mathcal{G}(M_{\mathcal{N} \Rightarrow \mathcal{N}})^{[1]}$ & $\&$ & $\mathcal{G}(M_{\mathcal{N} \Rightarrow \mathcal{N}})^{[2]}$ & $\stackrel{\mathcal{A}(\mathit{succ})^\circledS}{\Rightarrow}$ & $\mathcal{G}(M_{\mathcal{N} \Rightarrow \mathcal{N}})^{[3]}$ \\
\cline{1-7}
&&&&&&$(\hat{q}_{\mathcal{T}})^{[3]}$ \\
&&&&$(\hat{q}_{\mathcal{N} \Rightarrow \mathcal{N}})^{[2]}$&& \\
&&&&$(\mathsf{yes_W})^{[2]}$ ($(\mathsf{no_W})^{[2]}$)&& \\
&&&&&&$(\checkmark)^{[3]}$
\end{tabular}
\end{center}

\begin{center}
\begin{tabular}{ccccccc}
$\mathcal{G}(M_{\mathcal{N} \Rightarrow \mathcal{N}})^{[0]}$ & $\&$ & $\mathcal{G}(M_{\mathcal{N} \Rightarrow \mathcal{N}})^{[1]}$ & $\&$ & $\mathcal{G}(M_{\mathcal{N} \Rightarrow \mathcal{N}})^{[2]}$ & $\stackrel{\mathcal{A}(\mathit{succ})^\circledS}{\Rightarrow}$ & $\mathcal{G}(M_{\mathcal{N} \Rightarrow \mathcal{N}})^{[3]}$ \\
\cline{1-7}
&&&&&&$(\hat{q}_{\mathcal{N} \Rightarrow \mathcal{N}})^{[3]}$ \\
&&&&$(\hat{q}_{\mathcal{N} \Rightarrow \mathcal{N}})^{[2]}$&& \\
&&&&$(\mathsf{q_E})^{[2]}$&& \\
$(\hat{q}_{\mathcal{N} \Rightarrow \mathcal{N}})^{[0]}$&&&&&& \\ 
$(\mathsf{yes_W})^{[0]}$&&&&&& \\
&&&&&&$(\mathsf{q_W})^{[3]}$ 
\end{tabular}
\end{center}

\begin{center}
\begin{tabular}{ccccccc}
$\mathcal{G}(M_{\mathcal{N} \Rightarrow \mathcal{N}})^{[0]}$ & $\&$ & $\mathcal{G}(M_{\mathcal{N} \Rightarrow \mathcal{N}})^{[1]}$ & $\&$ & $\mathcal{G}(M_{\mathcal{N} \Rightarrow \mathcal{N}})^{[2]}$ & $\stackrel{\mathcal{A}(\mathit{succ})^\circledS}{\Rightarrow}$ & $\mathcal{G}(M_{\mathcal{N} \Rightarrow \mathcal{N}})^{[3]}$ \\
\cline{1-7}
&&&&&&$(\hat{q}_{\mathcal{T}})^{[3]}$ \\
&&&&$(\hat{q}_{\mathcal{N} \Rightarrow \mathcal{N}})^{[2]}$&& \\
&&&&$(\mathsf{q_E})^{[2]}$&& \\
$(\hat{q}_{\mathcal{N} \Rightarrow \mathcal{N}})^{[0]}$&&&&&& \\ 
$(\mathsf{yes_W})^{[0]}$&&&&&& \\
&&&&&&$(\langle)^{[3]}$ \\
&&&&&&$(q_{\mathcal{T}})^{[3]}$ \\
&&&&&&$(\rangle)^{[3]}$ \\
&&&&&&$(q_{\mathcal{T}})^{[3]}$ \\
&&&&&&$(\sharp)^{[3]}$ \\
&&&&&&$(q_{\mathcal{T}})^{[3]}$ \\
&&&&&&$(\checkmark)^{[3]}$
\end{tabular}
\end{center}

Hence, we are particularly concerned with games of the form $\mathcal{G}(M_G)^3 \Rightarrow \mathcal{G}(M_G)$, where $G$ is a game.
We loosely call $\mathcal{G}(M_G)^3 \Rightarrow \mathcal{G}(M_G)$ \emph{instruction games} as well.

Clearly, the strategy $\mathcal{A}(\mathit{succ})^\circledS$ is \emph{finitary}\footnote{Note that $\mathit{succ}$ is already finitary. $\mathcal{A}(\mathit{succ})^\circledS$ is just for an illustration, and the idea of `realizing strategies by strategies' is necessary only for more complex strategies.}, and it correctly `describes' the computation of $\mathit{succ}$.
In Definition~\ref{DefStAlgorithms} below, we define formalized finite tables $\mathcal{A}(\sigma)$ for such strategies $\mathcal{A}(\sigma)^\circledS : \mathcal{G}(M_G)^3 \Rightarrow \mathcal{G}(M_G)$ on $\sigma : G$ as \emph{st-algorithm} for $\sigma : G$.

\begin{notation}
%To describe a finite partial function $f$, we list every input/output pair $(x_i, y_i) \in f$ as $f : x_0 \mapsto y_0 \mid x_1 \mapsto y_1 \mid \dots$ 
We write `tags' in $\mathcal{G}(M_G)^3 \Rightarrow \mathcal{G}(M_G)$ \emph{informally} for brevity, e.g., $\mathcal{G}(M_G)^{[0]} \Rightarrow \mathcal{G}(M_G)^{[1]}$, $\mathcal{G}(M_G)^{[0]} \& \mathcal{G}(M_G)^{[1]} \& \mathcal{G}(M_G)^{[2]}  \Rightarrow \mathcal{G}(M_G)^{[3]}$, $(\hat{q}_G)^{[0]}$, $(q_{\mathcal{T}})^{[1]}$.
%As instruction strategies are finitary, it suffices in practice. 
\end{notation}

However, there remain two points to overcome. 
The first point is the pairing $\langle \sigma, \tau \rangle : \langle L, R \rangle$ of strategies $\sigma : L$ and $\tau : R$ such that $\mathcal{H}^\omega(L) \trianglelefteqslant C \Rightarrow A$ and $\mathcal{H}^\omega(R) \trianglelefteqslant C \Rightarrow B$: Because moves of $C$ are common to $\sigma$ and $\tau$, the last three moves in each P-view may not suffice; the pairing $\langle \sigma, \tau \rangle$ needs to know whether $A$ or $B$ the first occurrence of a move in each play belongs to. We shall record this information as \emph{states} of positions (see Definition~\ref{DefStAlgorithms} below). 

The second point is how to `effectively' calculate the finite and `relevant' part of outer tags.
For this point, we introduce the notion of \emph{m-views} as follows.
Note first that omitting tags for brevity occurrences of $\langle$ and $\rangle$ in any position $\boldsymbol{s} \in P_{\mathcal{G}(M_G)^3 \Rightarrow \mathcal{G}(M_G)}$ form \emph{unique} pairs similarly to `QA-pairs' for \emph{well-bracketing} (Definition~\ref{DefWellBracketing}): Each occurrence of $\rangle$ is paired with the most recent yet unpaired occurrence of $\langle$ in the same component game $\mathcal{G}(\mathcal{T})$ in $\mathcal{G}(M_G)^3 \Rightarrow \mathcal{G}(M_G)$; one in such a pair is called the \emph{\bfseries mate} of the other. 
Then, let us define:
\begin{definition}[M-views]
Let $G$ be a game, and assume $\boldsymbol{s} \in P_{\mathcal{G}(M_G)^3 \Rightarrow \mathcal{G}(M_G)}$, where we omit tags in $\boldsymbol{s}$ for brevity.
The \emph{\bfseries depth} of an occurrence of $\langle$ in $\boldsymbol{s}$ is the number of previous occurrences of $\langle$ in the same component game $\mathcal{G}(\mathcal{T})$ whose mate does not occur before that occurrence; the \emph{\bfseries depth} of an occurrence of $\rangle$ in $\boldsymbol{s}$ is the depth of its mate.
The \emph{\bfseries matching view} (\emph{\bfseries m-view}) $\llbracket \boldsymbol{s} \rrbracket_G^d$ of $\boldsymbol{s}$ up to depth $d \in \mathbb{N}$ is the j-subsequence of $\boldsymbol{s}$ that consists of occurrences of $\langle$ or $\rangle$ of depth $\leqslant d$.
\end{definition}

\begin{notation}
Given a finite sequence $\boldsymbol{s} = x_k x_{k-1} \dots x_1$ and a natural number $l \in \mathbb{N}$, we define $\boldsymbol{s} \downharpoonright l \stackrel{\mathrm{df. }}{=} \begin{cases} \boldsymbol{s} &\text{if $l \geqslant k$;} \\ x_l x_{l-1} \dots x_1 &\text{otherwise.} \end{cases}$
A function $f : \pi_1(M_G) \to \{ \top, \bot \}$, where $G$ is a game, and $\top$ and $\bot$ are any symbols with $\top \neq \bot$, induces another function $f^\star : M_G^\ast \to \pi_1(M_G)^\ast$ defined by $f^\star([m_k]_{\boldsymbol{e}_k} [m_{k-1}]_{\boldsymbol{e}_{k-1}} \dots [m_1]_{\boldsymbol{e}_1}) \stackrel{\mathrm{df. }}{=} m_{i_l} m_{i_{l-1}} \dots m_{i_1}$, where $l \leqslant k$ and $m_{i_l} m_{i_{l-1}} \dots m_{i_1}$ is the subsequence of $m_k m_{k-1} \dots m_1$ that consists of $m_{i_j}$ such that $f(m_{i_j}) = \top$ for $j = 1, 2, \dots, l$.
\end{notation}

It is clearly `effective' to calculate the m-view of a position in an informal sense.
We are now ready to make the notion of `describable by a finitary strategy' precise:
\begin{definition}[St-algorithms]
\label{DefStAlgorithms}
An \emph{\bfseries st-algorithm} $\mathcal{A}$ on a game $G$, written $\mathcal{A} :: G$, is a family $\mathcal{A} = (\mathcal{A}_{\boldsymbol{m}})_{\boldsymbol{m} \in \mathcal{S}_{\mathcal{A}}}$ of finite partial functions $\mathcal{A}_{\boldsymbol{m}} : \partial_{\boldsymbol{m}} (P_{\mathcal{G}(M_G)^3 \Rightarrow \mathcal{G}(M_G)}^{\mathsf{Odd}}) \rightharpoonup M_{\mathcal{G}(M_G)^3 \Rightarrow \mathcal{G}(M_G)}$, which also specifies the justifier of each output in the input (but we usually treat this structure \emph{implicit} as in \cite{abramsky1999game,mccusker1998games}), where: 
\begin{itemize}

\item $\mathcal{S}_{\mathcal{A}} \subseteq \pi_1(M_G)^\ast$ is a finite set of \emph{\bfseries states};

\item $\partial_{\boldsymbol{m}} (\boldsymbol{t}x) \stackrel{\mathrm{df. }}{=} (\boldsymbol{t}x \! \downharpoonright |\mathcal{A}_{\boldsymbol{m}}|, \llbracket \boldsymbol{t}x \rrbracket_G^{\|\mathcal{A}_{\boldsymbol{m}}\|})$ for all $\boldsymbol{t}x \in P_{\mathcal{G}(M_G)^3 \Rightarrow \mathcal{G}(M_G)}^{\mathsf{Odd}}$, where $|\mathcal{A}_{\boldsymbol{m}}|, \| \mathcal{A}_{\boldsymbol{m}} \| \in \mathbb{N}$ are natural numbers assigned to $\mathcal{A}_{\boldsymbol{m}}$, called the \emph{\bfseries view-scope} and the \emph{\bfseries mate-scope} of $\mathcal{A}_{\boldsymbol{m}}$, respectively

\end{itemize}
equipped with the \emph{\bfseries query (function)} $\mathcal{Q}_{\mathcal{A}} : \pi_1(M_G) \to \{ \top, \bot \}$ that satisfies:
\begin{itemize}

\item \textsc{(Q1)} $[m]_{\boldsymbol{e}} \in M_G^{\mathsf{Init}} \Rightarrow \mathcal{Q}_{\mathcal{A}}(m) = \top$;

\item \textsc{(Q2)} $\mathcal{Q}_{\mathcal{A}}(m) = \top \Rightarrow \exists \boldsymbol{e} \in \mathcal{T} . \ \! [m]_{\boldsymbol{e}} \in M_G^{\mathsf{Init}} \cup M_G^{\mathsf{Int}}$.

\end{itemize}

\end{definition}

\begin{remark}
The axiom Q2 is not necessary for the main theorem in Section~\ref{TuringCompleteness}, but it is informative and reasonable to ensure that states are initial or internal moves.
\end{remark}

\if0
\begin{remark}
$\mathcal{Q}_{\mathcal{A}}^\circledS(\boldsymbol{s}) \neq \boldsymbol{\epsilon}$ for any $\boldsymbol{s} \in \mathscr{J}_G \setminus \{ \boldsymbol{\epsilon} \}$ since $[m] \in M_G^{\mathsf{Init}} \Rightarrow \mathcal{Q}_{\mathcal{A}}(m) = \top$.
\end{remark}
\fi

\if0
\begin{notation}
Let $\simeq$ denote the \emph{Kleene equality}, i.e., $x \simeq y \stackrel{\mathrm{df. }}{\Leftrightarrow} (x \downarrow \wedge \ y \downarrow \wedge \ x = y) \vee (x \uparrow \wedge \ y \uparrow)$, where we write $x \downarrow$ if an element $x$ is defined and $x \uparrow$ otherwise.
\end{notation}
\fi

\begin{definition}[Instruction strategies]
Given a game $G$, an st-algorithm $\mathcal{A} :: G$ and a state $\boldsymbol{m} \in \mathcal{S}_{\mathcal{A}}$, the \emph{\bfseries instruction strategy} $\mathcal{A}_{\boldsymbol{m}}^\circledS$ of $\mathcal{A}$ at $\boldsymbol{m}$ is the strategy on the game $\mathcal{G}(M_G)^3 \Rightarrow \mathcal{G}(M_G)$ defined by: 
\begin{equation*}
\mathcal{A}_{\boldsymbol{m}}^\circledS \stackrel{\mathrm{df. }}{=} \{ \boldsymbol{\epsilon} \} \cup \{ \boldsymbol{t}xy \in P_{\mathcal{G}(M_G)^3 \Rightarrow \mathcal{G}(M_G)} \mid \boldsymbol{t} \in \mathcal{A}_{\boldsymbol{m}}^\circledS, \mathcal{A}_{\boldsymbol{m}} \circ \partial_{\boldsymbol{m}}(\boldsymbol{t}x) \downarrow, y = \mathcal{A}_{\boldsymbol{m}} \circ \partial_{\boldsymbol{m}}(\boldsymbol{t}x) \}.
\end{equation*}
\end{definition}

\begin{convention}
Since an st-algorithm $\mathcal{A} :: G$ refers to m-views \emph{only occasionally}, we treat $\mathcal{A}_{\boldsymbol{m}}$ for each $\boldsymbol{m} \in \mathcal{S}_{\mathcal{A}}$ as a partial function $\mathcal{A}_{\boldsymbol{m}} : \{ \boldsymbol{t}x \! \downharpoonright |\mathcal{A}_{\boldsymbol{m}}| \mid \boldsymbol{t}x \in P_{\mathcal{G}(M_G)^3 \Rightarrow \mathcal{G}(M_G)}^{\mathsf{Odd}} \} \rightharpoonup M_{\mathcal{G}(M_G)^3 \Rightarrow \mathcal{G}(M_G)}$ in most cases. 
Accordingly, $\mathcal{A}_{\boldsymbol{m}}^\circledS$ is mostly a strategy $\mathcal{G}(M_G)^3 \Rightarrow \mathcal{G}(M_G)$ whose function representation $\mathcal{A}_{\boldsymbol{m}}$ is finite.
\end{convention}

\begin{remark}
Note that it does not make difference if each st-algorithm $\mathcal{A} :: G$ focuses on P-views of positions $\boldsymbol{t}x \in P_{\mathcal{G}(M_G)^3 \Rightarrow \mathcal{G}(M_G)}^{\mathsf{Odd}}$ since $\lceil \boldsymbol{t} x \rceil = \boldsymbol{t} x$.
Also, strictly speaking, each instruction strategy $\mathcal{A}_{\boldsymbol{m}}^\circledS : \mathcal{G}(M_G)^3 \Rightarrow \mathcal{G}(M_G)$ has to specify justifiers (in $P_G$) of outputs\footnote{Though it is not complicated at all to specify justifiers since the choice is ternary, i.e., the last, third last or opening occurrence in the P-view.}. 
%This is easily achieved by changing it into $\mathcal{A}_{\boldsymbol{\mathsf{m}}}^\circledS : \mathcal{G}(M_G)^3 \Rightarrow \mathcal{G}(M_G) \& \boldsymbol{2}$ as the choice is \emph{ternary} (the last or third last move in the P-view, or the opening move).
However, since justifiers occurring in this paper are all obvious ones, we have adopted the abbreviated form of $\mathcal{A}_{\boldsymbol{m}}^\circledS$ as above.
%Also, if the justifier is the opening move $\mathcal{O}(\lceil \boldsymbol{s}a \rceil)$, then $\mathsf{st}(\mathcal{A})$ needs to refer to it; however, since it is necessary \emph{only occasionally}, we have defined the abbreviated form of $\mathsf{st}(\mathcal{A})$ as above for brevity, keeping in mind that each $\mathcal{A}_{\boldsymbol{m}}^\circledS$ is formally of the form $\mathcal{A}_{\boldsymbol{m}}^\circledS : \mathcal{G}(M_G)^4 \Rightarrow \mathcal{G}(M_G)$.
\end{remark}

Thus, an instruction strategy is a strategy on the game $\mathcal{G}(M_G)^3 \Rightarrow \mathcal{G}(M_G)$, where $G$ is a game, that is \emph{finitary} in the sense that it is representable by a finite partial function, and so it is clearly `computable' in an informal sense.
We shall see that the number $3$ on $\mathcal{G}(M_G)^3$ is the least number to achieve Turing completeness in Section~\ref{TuringCompleteness}. 
As already mentioned, our idea is to utilize such an instruction strategy as a `description' of a strategy on $G$, which may be `effectively read-off':

\begin{definition}[Realizability]
The strategy $\mathsf{st}(\mathcal{A})$ \emph{\bfseries realized} by an st-algorithm $\mathcal{A} :: G$ is defined by: 
\begin{equation*}
\mathsf{st}(\mathcal{A}) \stackrel{\mathrm{df. }}{=} \{ \boldsymbol{\epsilon} \} \cup \{ \boldsymbol{s}ab \in P_G \mid \boldsymbol{s} \in \mathsf{st}(\mathcal{A}), \mathcal{A}^\circledS(\lceil \boldsymbol{s}a \rceil \! \downharpoonright 3) \downarrow, b = \mathcal{A}^\circledS(\lceil \boldsymbol{s}a \rceil \! \downharpoonright 3) \ \! \}
\end{equation*}
where $\mathcal{A}^\circledS(\lceil \boldsymbol{s}a \rceil \! \downharpoonright 3) \stackrel{\mathrm{df. }}{\simeq} \mathcal{M}(\mathcal{A}^\circledS_{\mathcal{Q}_{\mathcal{A}}^\star(\lceil \boldsymbol{s}a \rceil)} \circ (\underline{\lceil \boldsymbol{s}a \rceil \! \downharpoonright 3}_3)^\dagger)$, and $\mathcal{A}^\circledS(\lceil \boldsymbol{s}a \rceil \! \downharpoonright 3) \downarrow$ presupposes $\mathcal{Q}_{\mathcal{A}}^\star(\lceil \boldsymbol{s}a \rceil) \in \mathcal{S}_{\mathcal{A}}$.
\end{definition}

Clearly, $\mathcal{A} :: G \Rightarrow \mathsf{st}(\mathcal{A}) : G$ holds.
We are now ready to define the central notion of the paper, namely `effective computability' of strategies:
\begin{definition}[Viable strategies]
\label{DefViability}
A strategy $\sigma : G$ is \emph{\bfseries viable} if there exists an st-algorithm $\mathcal{A} :: G$ that realizes $\sigma$, i.e., $\mathsf{st}(\mathcal{A}) = \sigma$.
\end{definition}

That is, a strategy $\sigma : G$ is viable if there is a finitary strategy on $\mathcal{G}(M_G)^3 \Rightarrow \mathcal{G}(M_G)$ that `describes' the computation of $\sigma$.
The terms \emph{realize} and \emph{realizability} come from mathematical logic, in which a \emph{realizer} refers to some computational information that `realizes' the constructive truth of a mathematical statement \cite{troelstra1998realizability}.

Given an st-algorithm $\mathcal{A} :: G$ that realizes a strategy $\sigma : G$, P may `effectively execute' $\mathcal{A}$ to compute $\sigma$ roughly as follows:
\begin{enumerate}

\item Given $\boldsymbol{s}a \in P_G^{\mathsf{Odd}}$, P calculates the current state $\boldsymbol{m} \stackrel{\mathrm{df. }}{=} \mathcal{Q}_{\mathcal{A}}^\star(\lceil \boldsymbol{s}a \rceil)$ and the last (up to) three moves $\lceil \boldsymbol{s}a \rceil \! \downharpoonright 3$ in the P-view; if $\boldsymbol{m} \not \in \mathcal{S}_{\mathcal{A}}$, then she stops, i.e., the next move is undefined;

\item Otherwise, she composes $(\underline{\lceil \boldsymbol{s}a \rceil \! \downharpoonright 3}_3)^\dagger$ with $\mathcal{A}_{\boldsymbol{m}}^\circledS$, calculating $\mathcal{A}_{\boldsymbol{m}}^\circledS \circ (\underline{\lceil \boldsymbol{s}a \rceil \! \downharpoonright 3}_3)^\dagger$;

\item Finally, she reads off the next move $\mathcal{M}(\mathcal{A}_{\boldsymbol{m}}^\circledS \circ (\underline{\lceil \boldsymbol{s}a \rceil \! \downharpoonright 3}_3)^\dagger)$ (and its justifier) and performes that move.

\end{enumerate}
%For conceptual clarity, we assume that P- and O-views of positions of $G$ are recorded symbolically and available to the respective participants, just like games in the ordinary sense such as chess and poker\footnote{Though, strictly speaking, in chess and poker, physical objects such as pieces and cards play the role of symbols to record partial histories of the game.}.
For conceptual clarity, here we assume that P may write down moves $[m]_{\boldsymbol{e}}$ in P-views as $[\mathsf{m}]_{\mathscr{C}^\ast(\boldsymbol{e})}$ and execute strategies on instruction games symbolically on her `scratch pad', and also she may read off strategies $\sigma : \mathcal{G}(M_G)$ and reproduce them as moves $\mathcal{M}(\sigma) \in M_G$.
This procedure is clearly `effective' in an informal sense, which is our justification of the notion of viable strategies.

Note that there are two kinds of processes in viable strategies $\sigma : G$.
The first kind is the process of $\sigma$ per se whose atomic steps are $(\boldsymbol{s} a \in P_G^{\mathsf{Odd}}) \mapsto \boldsymbol{s}a . \sigma(\lceil \boldsymbol{s}a \rceil)$, and the second kind is the process of its st-algorithm $\mathcal{A}$ whose atomic steps are $\boldsymbol{t} x \in P_{\mathcal{G}(M_G)^3 \Rightarrow \mathcal{G}(M_G)}^{\mathsf{Odd}} \mapsto \boldsymbol{t} x . \mathcal{A}_{\boldsymbol{m}} \circ \partial_{\boldsymbol{m}}(\boldsymbol{t} x)$, where $\boldsymbol{m}$ is the `current state'.
The former is abstract and `high-level', while the latter is symbolic and `low-level'.
In this manner, we have achieved a mathematical formulation of `high-level' and `low-level' processes and `effective computability' of the former in terms of the latter (as promised in the introduction).

%Since view functions are a standard representation of strategies \cite{hyland2000full,mccusker1998games}, and finiteness is a self-explained concept, one may say that our definition of effective strategies is \emph{intrinsic}.

In order to establish Theorem~\ref{ThmPreservationOfViability}, which is our main theorem, later, we shall focus on the following st-algorithms:
\begin{definition}[Standard st-algorithms]
\label{DefStandardAlgorithms}
An st-algorithm $\mathcal{A} :: G$ is \emph{\bfseries standard} if it satisfies the following three conditions:
\begin{enumerate}

\item It does not refer to any input outer tag when it computes the inner element, i.e., if $\boldsymbol{s} = [(\hat{q}_G)_\mathscr{WE}] x \boldsymbol{t} \in \mathcal{A}^\circledS_{\boldsymbol{m}}$ for some $\boldsymbol{m} \in \mathcal{S}_{\mathcal{A}}$, then $\hat{q}_{\mathcal{T}}$ does not occur in $\boldsymbol{s}$;

\item If it refers to input outer tags, they must belongs to the last move in the P-view of the current position of $G$, i.e., if $\hat{q}_{\mathcal{T}}$ occurs as a P-move in some $\boldsymbol{s} \in \mathcal{A}^\circledS_{\boldsymbol{m}}$, where $\boldsymbol{m} \in \mathcal{S}_{\mathcal{A}}$, then the inner tag of the move is $\mathscr{EEW}$;

\item The symbol $\square$ does not occur in $\mathcal{A}_{\boldsymbol{m}}$ for any $\boldsymbol{m} \in \mathcal{S}_{\mathcal{A}}$.

\end{enumerate}
\end{definition}
\if0
The first condition makes sense since outer tags are solely for exponentials. Also, as we shall see, the second condition does not prohibit us from obtaining a Turing complete model of computation. Finally, the third condition is not a problem for initial moves are usually distinguished from non-initial moves.
\fi

\if0
\begin{convention}
From now on, \emph{\bfseries st-algorithms} refer to \emph{standard} st-algorithms by default. 
\end{convention}
\fi

\begin{example}
The \emph{\bfseries zero strategy} $\mathit{zero}_A \stackrel{\mathrm{df. }}{=} \mathsf{Pref}(\{ [\hat{q}_\mathscr{E}] [\mathit{no}_\mathscr{E}] \})^{\mathsf{Even}} : [A_{\mathscr{W}}]_{\Lbag \boldsymbol{e} \Rbag \hbar} \Rightarrow [\mathcal{N}_{\mathscr{E}}]$ on any normalized game $A$ is viable since we may give an st-algorithm $\mathcal{A}(\mathit{zero}_A)$ by $\mathcal{Q}_{\mathcal{A}(\mathit{zero}_A)}(m) \stackrel{\mathrm{df. }}{=} \begin{cases} \top &\text{if $m = \hat{q}_\mathscr{E}$;} \\ \bot &\text{otherwise} \end{cases}$, $\mathcal{S}_{\mathcal{A}(\mathit{zero}_A)} \stackrel{\mathrm{df. }}{=} \{ \hat{q}_\mathscr{E} \}$, $|\mathcal{A}(\mathit{zero}_A)_{\hat{q}_\mathscr{E}}| \stackrel{\mathrm{df. }}{=} 1$, $\|\mathcal{A}(\mathit{zero}_A)_{\hat{q}_\mathscr{E}} \| \stackrel{\mathrm{df. }}{=} 0$ and $\mathcal{A}(\mathit{zero}_A)_{\hat{q}_\mathscr{E} } : (\hat{q}_{A \Rightarrow N})^{[3]} \mapsto (\mathsf{\mathsf{no}_E})^{[3]} \mid (\hat{q}_{\mathcal{T}})^{[3]} \mapsto (\mathsf{\checkmark})^{[3]}$.
Then, the instruction strategy $\mathcal{A}(\mathit{zero}_A)_{\hat{q}_{\mathscr{E}}}^\circledS$ is as depicted in the following diagram:
\begin{center}
\begin{tabular}{ccccccc}
$\mathcal{G}(M_{A \Rightarrow \mathcal{N}})^{[0]}$ & $\&$ & $\mathcal{G}(M_{A \Rightarrow \mathcal{N}})^{[1]}$ & $\&$ & $\mathcal{G}(M_{A \Rightarrow \mathcal{N}})^{[2]}$ & $\stackrel{\mathcal{A}(\mathit{zero}_A)_{\hat{q}_\mathscr{E}}^\circledS}{\Rightarrow}$ & $\mathcal{G}(M_{A \Rightarrow \mathcal{N}})^{[3]}$ \\
\cline{1-7}
&&&&&&$(\hat{q}_{A \Rightarrow N})^{[3]}$ ($(\hat{q}_{\mathcal{T}})^{[3]}$) \\
&&&&&&$(\mathsf{no_E})^{[3]}$ ($(\mathsf{\checkmark})^{[3]}$)
\end{tabular}
\end{center}
Clearly, $\mathit{zero}_A$ is standard, and $\mathsf{st}(\mathcal{A}(\mathit{zero}_A)) = \mathit{zero}_A$.
%Also, it is trivial to see that $\mathit{zero}$ is $\mathcal{A}(\mathit{zero})$ is standard. We will not mention standardness of algorithms below unless a particular attention is necessary.
\end{example}

\begin{example}
Let us complete the example of successor strategy $\mathit{succ} : [\mathcal{N}_\mathscr{W}]_{\Lbag \boldsymbol{e} \Rbag \hbar} \Rightarrow [\mathcal{N}_\mathscr{E}]$ (Example~\ref{ExSuccPred}). 
%It is straightforward to see that $\mathit{succ} \circ \underline{n}^\dagger = \underline{n+1}$ for all $n \in \mathbb{N}$.
We give an st-algorithm $\mathcal{A}(\mathit{succ})$ for $\mathit{succ}$ by defining $\mathcal{Q}_{\mathcal{A}(\mathit{succ})}(m) \stackrel{\mathrm{df. }}{=} \begin{cases} \top &\text{if $m = \hat{q}_\mathscr{E}$;} \\ \bot &\text{otherwise} \end{cases}$, $\mathcal{S}_{\mathcal{A}(\mathit{succ})} \stackrel{\mathrm{df. }}{=} \{ \hat{q}_\mathscr{E} \}$, $|\mathcal{A}(\mathit{succ})_{\hat{q}_\mathscr{E}}| \stackrel{\mathrm{df. }}{=} 11$, $\| \mathcal{A}(\mathit{succ})_{\hat{q}_\mathscr{E}} \| \stackrel{\mathrm{df. }}{=} 0$ and the table is as given in Appendix~\ref{AppSucc}.
We clearly have $\mathsf{st}(\mathcal{A}(\mathit{succ})) = \mathit{succ}$, which establishes the viability of $\mathit{succ}$.
Also, it is easy to see that $\mathcal{A}(\mathit{succ})$ is standard. 
\end{example}

\begin{example}
Similarly to $\mathit{zero}$ and $\mathit{succ}$, we may give an st-algorithm $\mathcal{A}(\mathit{pred})$ for the predecessor strategy $\mathit{pred} : [!\mathcal{N}_\mathscr{W}]_{\Lbag \boldsymbol{e} \Rbag \hbar} \multimap [\mathcal{N}_\mathscr{E}]$ (Example~\ref{ExSuccPred}) as follows.
We define the states, the view- and mate-scopes and query of $\mathcal{A}(\mathit{pred})$ to be the same as those of $\mathcal{A}(\mathit{succ})$.
At this point, it should suffice to show diagrams for $\mathcal{A}(\mathit{pred})_{\hat{q}_\mathscr{E}}^\circledS$ since it is clear that there is a finite table $\mathcal{A}(\mathit{pred})_{\hat{q}_\mathscr{E}}$ achieving it:
\begin{center}
\begin{tabular}{ccccccc}
$\mathcal{G}(M_{\mathcal{N} \Rightarrow \mathcal{N}})^{[0]}$ & $\&$ & $\mathcal{G}(M_{\mathcal{N} \Rightarrow \mathcal{N}})^{[1]}$ & $\&$ & $\mathcal{G}(M_{\mathcal{N} \Rightarrow \mathcal{N}})^{[2]}$ & $\stackrel{\mathcal{A}(\mathit{pred})_{\hat{q}_\mathscr{E}}^\circledS}{\Rightarrow}$ & $\mathcal{G}(M_{\mathcal{N} \Rightarrow \mathcal{N}})^{[3]}$ \\
\cline{1-7}
&&&&&&$(\hat{q}_{\mathcal{N} \Rightarrow \mathcal{N}})^{[3]}$ \\
&&&&$(\hat{q}_{\mathcal{N} \Rightarrow \mathcal{N}})^{[2]}$&& \\
&&&&$(\mathsf{\hat{q}_E})^{[2]}$ ($(\mathsf{q_E})^{[2]}$)&& \\
&&&&&&$(\mathsf{\hat{q}_W})^{[3]}$ ($(\mathsf{q_W})^{[3]}$)
\end{tabular}
\end{center}

\begin{center}
\begin{tabular}{ccccccc}
$\mathcal{G}(M_{\mathcal{N} \Rightarrow \mathcal{N}})^{[0]}$ & $\&$ & $\mathcal{G}(M_{\mathcal{N} \Rightarrow \mathcal{N}})^{[1]}$ & $\&$ & $\mathcal{G}(M_{\mathcal{N} \Rightarrow \mathcal{N}})^{[2]}$ & $\stackrel{\mathcal{A}(\mathit{pred})_{\hat{q}_\mathscr{E}}^\circledS}{\Rightarrow}$ & $\mathcal{G}(M_{\mathcal{N} \Rightarrow \mathcal{N}})^{[3]}$ \\
\cline{1-7}
&&&&&&$(\hat{q}_{\mathcal{T}})^{[3]}$ \\
&&&&$(\hat{q}_{\mathcal{N} \Rightarrow \mathcal{N}})^{[2]}$&& \\
&&&& $(\mathsf{\hat{q}_E})^{[2]}$ ($(\mathsf{q_E})^{[2]}$) && \\
&&&&&& $(\langle)^{[3]}$ \\
&&&&&&$(q_{\mathcal{T}})^{[3]}$ \\
&&&&&&$(\rangle)^{[3]}$ \\
&&&&&&$(q_{\mathcal{T}})^{[3]}$ \\
&&&&&&$(\sharp)^{[3]}$ \\
&&&&&&$(q_{\mathcal{T}})^{[3]}$ \\
&&&&&&$(\checkmark)^{[3]}$ 
\end{tabular}
\end{center}

\begin{center}
\begin{tabular}{ccccccc}
$\mathcal{G}(M_{\mathcal{N} \Rightarrow \mathcal{N}})^{[0]}$ & $\&$ & $\mathcal{G}(M_{\mathcal{N} \Rightarrow \mathcal{N}})^{[1]}$ & $\&$ & $\mathcal{G}(M_{\mathcal{N} \Rightarrow \mathcal{N}})^{[2]}$ & $\stackrel{\mathcal{A}(\mathit{pred})_{\hat{q}_\mathscr{E}}^\circledS}{\Rightarrow}$ & $\mathcal{G}(M_{\mathcal{N} \Rightarrow \mathcal{N}})^{[3]}$ \\
\cline{1-7}
&&&&&&$(\hat{q}_{\mathcal{N} \Rightarrow \mathcal{N}})^{[3]}$ \\
&&&&$(\hat{q}_{\mathcal{N} \Rightarrow \mathcal{N}})^{[2]}$&& \\
&&&& $(\mathsf{no_W})^{[2]}$&& \\
&&&&&& $(\mathsf{no_E})^{[3]}$
\end{tabular}
\end{center}

\begin{center}
\begin{tabular}{ccccccc}
$\mathcal{G}(M_{\mathcal{N} \Rightarrow \mathcal{N}})^{[0]}$ & $\&$ & $\mathcal{G}(M_{\mathcal{N} \Rightarrow \mathcal{N}})^{[1]}$ & $\&$ & $\mathcal{G}(M_{\mathcal{N} \Rightarrow \mathcal{N}})^{[2]}$ & $\stackrel{\mathcal{A}(\mathit{pred})_{\hat{q}_\mathscr{E}}^\circledS}{\Rightarrow}$ & $\mathcal{G}(M_{\mathcal{N} \Rightarrow \mathcal{N}})^{[3]}$ \\
\cline{1-7}
&&&&&&$(\hat{q}_{\mathcal{T}})^{[3]}$ \\
&&&&$(\hat{q}_{\mathcal{N} \Rightarrow \mathcal{N}})^{[2]}$&& \\
&&&&$(\mathsf{no_W})^{[2]}$ && \\
&&&&&&$(\checkmark)^{[3]}$ 
\end{tabular}
\end{center}

\begin{center}
\begin{tabular}{ccccccc}
$\mathcal{G}(M_{\mathcal{N} \Rightarrow \mathcal{N}})^{[0]}$ & $\&$ & $\mathcal{G}(M_{\mathcal{N} \Rightarrow \mathcal{N}})^{[1]}$ & $\&$ & $\mathcal{G}(M_{\mathcal{N} \Rightarrow \mathcal{N}})^{[2]}$ & $\stackrel{\mathcal{A}(\mathit{pred})_{\hat{q}_\mathscr{E}}^\circledS}{\Rightarrow}$ & $\mathcal{G}(M_{\mathcal{N} \Rightarrow \mathcal{N}})^{[3]}$ \\
\cline{1-7}
&&&&&&$(\hat{q}_{\mathcal{N} \Rightarrow \mathcal{N}})^{[3]}$ \\
&&&&$(\hat{q}_{\mathcal{N} \Rightarrow \mathcal{N}})^{[2]}$&& \\
&&&&$(\mathsf{yes_W})^{[2]}$&& \\
&&$(\hat{q}_{\mathcal{N} \Rightarrow \mathcal{N}})^{[1]}$&&&& \\
&&$(\mathsf{\hat{q}_W})^{[1]}$ ($(\mathsf{q_W})^{[1]}$)&&&& \\
&&&&&& $(\mathsf{q_W})^{[3]}$ ($(\mathsf{yes_E})^{[3]}$)
\end{tabular}
\end{center}

\begin{center}
\begin{tabular}{ccccccc}
$\mathcal{G}(M_{\mathcal{N} \Rightarrow \mathcal{N}})^{[0]}$ & $\&$ & $\mathcal{G}(M_{\mathcal{N} \Rightarrow \mathcal{N}})^{[1]}$ & $\&$ & $\mathcal{G}(M_{\mathcal{N} \Rightarrow \mathcal{N}})^{[2]}$ & $\stackrel{\mathcal{A}(\mathit{pred})_{\hat{q}_\mathscr{E}}^\circledS}{\Rightarrow}$ & $\mathcal{G}(M_{\mathcal{N} \Rightarrow \mathcal{N}})^{[3]}$ \\
\cline{1-7}
&&&&&&$(\hat{q}_{\mathcal{T}})^{[3]}$ \\
&&&&$(\hat{q}_{\mathcal{N} \Rightarrow \mathcal{N}})^{[2]}$&& \\
&&&&$(\mathsf{yes_W})^{[2]}$&& \\
&&$(\hat{q}_{\mathcal{N} \Rightarrow \mathcal{N}})^{[1]}$&&&& \\
&&$(\mathsf{\hat{q}_W})^{[1]}$ &&&& \\
&&&&&&$(\langle)^{[3]}$ \\
&&&&&&$(q_{\mathcal{T}})^{[3]}$ \\
&&&&&&$(\rangle)^{[3]}$ \\
&&&&&&$(q_{\mathcal{T}})^{[3]}$ \\
&&&&&&$(\sharp)^{[3]}$ \\
&&&&&&$(q_{\mathcal{T}})^{[3]}$ \\
&&&&&&$(\checkmark)^{[3]}$
\end{tabular}
\end{center}

\begin{center}
\begin{tabular}{ccccccc}
$\mathcal{G}(M_{\mathcal{N} \Rightarrow \mathcal{N}})^{[0]}$ & $\&$ & $\mathcal{G}(M_{\mathcal{N} \Rightarrow \mathcal{N}})^{[1]}$ & $\&$ & $\mathcal{G}(M_{\mathcal{N} \Rightarrow \mathcal{N}})^{[2]}$ & $\stackrel{\mathcal{A}(\mathit{pred})_{\hat{q}_\mathscr{E}}^\circledS}{\Rightarrow}$ & $\mathcal{G}(M_{\mathcal{N} \Rightarrow \mathcal{N}})^{[3]}$ \\
\cline{1-7}
&&&&&&$(\hat{q}_{\mathcal{T}})^{[3]}$ \\
&&&&$(\hat{q}_{\mathcal{N} \Rightarrow \mathcal{N}})^{[2]}$&& \\
&&&&$(\mathsf{yes_W})^{[2]}$&& \\
&&$(\hat{q}_{\mathcal{N} \Rightarrow \mathcal{N}})^{[1]}$&&&& \\
&&$(\mathsf{q_W})^{[1]}$ &&&& \\
&&&&&&$(\checkmark)^{[3]}$
\end{tabular}
\end{center}

Clearly $\mathsf{st}(\mathcal{A}(\mathit{pred})) = \mathit{pred}$, establishing the viability of $\mathit{pred}$.
Also, it is easy to see that $\mathcal{A}(\mathit{pred})$ is standard. 
%Note that $\mathcal{A}(\mathit{zero})$, $\mathcal{A}(\mathit{succ})$ and $\mathcal{A}(\mathit{pred})$ are all standard, and $\mathit{zero}$, $\mathit{succ}$ and $\mathit{pred}$ are all trivially consistent. 
\end{example}

\begin{example}
\label{ExInformalFix}
Consider the \emph{\bfseries fixed-point strategy} $\mathit{fix}_A : ([A_{\mathscr{WW}}]_{\Lbag \boldsymbol{e'} \Rbag \hbar \Lbag \boldsymbol{e} \Rbag \hbar \boldsymbol{f}} \Rightarrow [A_{\mathscr{EW}}]_{\Lbag \boldsymbol{e'} \Rbag \hbar \boldsymbol{f}}) \Rightarrow [A_{\mathscr{E}}]_{\boldsymbol{f}}$ for each normalized game $A$ interpreting the \emph{fixed-point combinator} $\mathsf{fix_A}$ in PCF \cite{abramsky2000full,hyland2000full,mccusker1998games}, where $A$ is the interpretation of a type $\mathsf{A}$ of PCF.
Roughly, $\mathit{fix}_A$ computes as follows (for its detailed description, see \cite{hyland1997game,hyland2000full}):
\begin{itemize}

\item After an opening occurrence $[a_\mathscr{E}]_{\boldsymbol{f}}$, $\mathit{fix}_A$ copies it and performs the second move $[a_{\mathscr{EW}}]_{\Lbag \Rbag \hbar \boldsymbol{f}}$.

\item If O initiates a new thread $[a'_{\mathscr{WW}}]_{\Lbag \boldsymbol{e'} \Rbag \hbar \Lbag \boldsymbol{e} \Rbag \hbar \boldsymbol{f}}$ in the inner implication, then $\mathit{fix}_A$ copies it and launches a new thread in the outer implication by $[a'_{\mathscr{EW}}]_{\Lbag \Lbag \boldsymbol{e'} \Rbag \hbar \Lbag \boldsymbol{e} \Rbag \Rbag \hbar \boldsymbol{f}}$.

\item If O makes a move $[a''_{\mathscr{WW}}]_{\Lbag \boldsymbol{e'} \Rbag \hbar \Lbag \boldsymbol{e} \Rbag \hbar \boldsymbol{f}}$ (resp. $[a''_{\mathscr{EW}}]_{\Lbag \Rbag \hbar \boldsymbol{f}}$, $[a''_{\mathscr{EW}}]_{\Lbag \Lbag \boldsymbol{e'} \Rbag \hbar \Lbag \boldsymbol{e} \Rbag \Rbag \hbar \boldsymbol{f}}$, $[a''_{\mathscr{E}}]_{\boldsymbol{f}}$) in an existing thread, then $\mathit{fix}_A$ copies it and makes the next move $[a''_{\mathscr{EW}}]_{\Lbag \Lbag \boldsymbol{e'} \Rbag \hbar \Lbag \boldsymbol{e} \Rbag \Rbag \hbar \boldsymbol{f}}$ (resp. $[a''_{\mathscr{E}}]_{\boldsymbol{f}}$, $[a''_{\mathscr{WW}}]_{\Lbag \boldsymbol{e'} \Rbag \hbar \Lbag \boldsymbol{e} \Rbag \hbar \boldsymbol{f}}$, $[a''_{\mathscr{EW}}]_{\Lbag \Rbag \hbar \boldsymbol{f}}$) in the \emph{dual thread} to which the third last occurrence in the current P-view belongs.

\end{itemize}
%Note that this particular implementation of outer tags works by the bijectivity of the interpretation of the symbols $\langle \_, \_ \rangle$ (see Def.~\ref{DefExtendedDecoding}).
Clearly, $\mathit{fix}_A$ is not finitary for the calculation of a unbounded number of outer tags.
It is, however, viable for any game $A$, which is perhaps surprising to many readers.
Here, let us just informally describe an st-algorithm $\mathcal{A}(\mathit{fix}_A)$ that realizes $\mathit{fix}_A$ as a preparation for Section~\ref{Examples}.
Let $\mathcal{Q}_{\mathcal{A}(\mathit{fix}_A)}(m) = \top \stackrel{\mathrm{df. }}{\Leftrightarrow} m \in \pi_1(M_{(A \Rightarrow A) \Rightarrow A}^{\mathsf{Init}})$ and $\mathcal{S}_{\mathcal{A}(\mathit{fix}_A)} \stackrel{\mathrm{df. }}{=} \pi_1(M_{(A \Rightarrow A) \Rightarrow A}^{\mathsf{Init}})$. Since $\mathcal{A}(\mathit{fix}_A)_{m}$ does not depend on $m$, fix an arbitrary state $m \in \mathcal{S}_{\mathcal{A}(\mathit{fix}_A)}$.
Below, we proceed by a case analysis on the rightmost component of input strategies on $\mathcal{G}(M_{(A \Rightarrow A) \Rightarrow A})^3$ for $\mathcal{A}(\mathit{fix}_A)_{m}^\circledS$ (which corresponds to the last occurrence in the current P-view of $(A \Rightarrow A) \Rightarrow A$).
\begin{itemize}

\item If the rightmost component is of the form $(\underline{[a_\mathscr{E}]_{\boldsymbol{f}}})^\dagger$, then $\mathcal{A}(\mathit{fix}_A)_{m}^\circledS$ recognizes it by the inner tag $\mathscr{E}$, and calculates the next move $[a_{\mathscr{EW}}]_{\Lbag \Rbag \hbar \boldsymbol{f}}$ once and for all for the inner element $a_{\mathscr{EW}}$ and `digit-by-digit' for the outer tag $\Lbag \Rbag \hbar \boldsymbol{f}$ (by producing $\Lbag \Rbag \hbar$ and copying $\boldsymbol{f}$).

\item If the rightmost component is of the form $(\underline{[a_{\mathscr{EW}}]_{\Lbag \Rbag \hbar \boldsymbol{f}}})^\dagger$, then the leftmost component (which corresponds to the third last occurrence in the current P-view of $(A \Rightarrow A) \Rightarrow A$) is of the form $(\underline{[a'_\mathscr{E}]_{\boldsymbol{f}}})^\dagger$, and thus $\mathcal{A}(\mathit{fix}_A)_{m}^\circledS$ recognizes it by the inner tags $\mathscr{EW}$ and $\mathscr{E}$, and calculates the next move $[a_\mathscr{E}]_{\boldsymbol{f}}$ once and for all for the inner element $a_\mathscr{E}$ and `digit-by-digit' for the outer tag $\boldsymbol{f}$ (by ignoring $\Lbag \Rbag \hbar$ and copying $\boldsymbol{f}$).

\item If the rightmost component is of the form $(\underline{[a_{\mathscr{EW}}]_{\Lbag \Lbag \boldsymbol{e'} \Rbag \hbar \Lbag \boldsymbol{e} \Rbag \Rbag \hbar \boldsymbol{f}}})^\dagger$, then $\mathcal{A}(\mathit{fix}_A)_{m}^\circledS$ calculates the next move $[a_{\mathscr{WW}}]_{\Lbag \boldsymbol{e'} \Rbag \hbar \Lbag \boldsymbol{e} \Rbag \hbar \boldsymbol{f}}$ similarly to the above case but with the help of m-views for the outer tag; see Section~\ref{Examples} for the details.

\item If the rightmost component is of the form $(\underline{[a_{\mathscr{WW}}]_{\Lbag \boldsymbol{e'} \Rbag \hbar \Lbag \boldsymbol{e} \Rbag \hbar \boldsymbol{f}}})^\dagger$, then $\mathcal{A}(\mathit{fix}_A)_{m}^\circledS$ calculates the next move $[a_{\mathscr{EW}}]_{\Lbag \Lbag \boldsymbol{e'} \Rbag \hbar \Lbag \boldsymbol{e} \Rbag \Rbag \hbar \boldsymbol{f}}$ in a similar manner to the above case with the help of m-views for the outer tag (n.b. only in this last case the justifier may not be the last or third last occurrence in the current P-view, but it may be the opening occurrence); see Section~\ref{Examples} for the detail.
\end{itemize}

%It should be intuitively clear that the algorithm $\mathcal{A}(\mathit{fix}_A)$ is standard, and it realizes the strategy $\mathit{fix}_A$.
\end{example}

We now turn to establishing a key theorem, which states that viability of strategies are preserved under the constructions on strategies defined in Section~\ref{ConstructionsOnStrategies}:
\begin{theorem}[Preservation of viability]
\label{ThmPreservationOfViability}
Viable strategies are closed under tensor $\otimes$, pairing $\langle \_, \_ \rangle$, promotion $(\_)^\dagger$, concatenation $\ddagger$, currying $\Lambda$ and uncurrying $\Lambda^\circleddash$ if the underlying st-algorithms are standard, where the standardness is also preserved.
\end{theorem}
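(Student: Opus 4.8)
The plan is to prove the six closure properties one construction at a time, in each case explicitly building a standard st-algorithm for the result out of the given standard st-algorithm(s), and then verifying that the realized strategy is exactly the constructed strategy. The unifying principle is that each construction on strategies in Section~\ref{ConstructionsOnStrategies} is defined by a ``tagged restriction'' of positions onto the component games, so the next P-move of the compound strategy is always computed by one of the component strategies applied to the appropriate tagged sub-view. Concretely, for a compound strategy $\theta$ built from $\sigma_1,\dots,\sigma_k$ I would take $\mathcal{S}_{\mathcal{A}(\theta)}$ to be essentially the product/union of the $\mathcal{S}_{\mathcal{A}(\sigma_i)}$ (with some extra finite bookkeeping discussed below), take $\mathcal{Q}_{\mathcal{A}(\theta)}$ to be induced by the $\mathcal{Q}_{\mathcal{A}(\sigma_i)}$ via the tag-relabelling, set the view-scope and mate-scope of each $\mathcal{A}(\theta)_{\boldsymbol m}$ to the maximum of the relevant component scopes (plus a small constant for tag manipulation), and let $\mathcal{A}(\theta)_{\boldsymbol m}$ itself be obtained from the component tables by pre- and post-composing with the finite ``tag surgery'' maps that the construction dictates (adding, deleting, or swapping the inner tags $\mathscr{W},\mathscr{E},\mathscr{N},\mathscr{S}$, and inserting/removing the outer-tag brackets $\Lbag,\Rbag$). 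Since inner tags are finitely many and each such surgery is a fixed finite relabelling, the resulting $\mathcal{A}(\theta)_{\boldsymbol m}$ is again a finite partial function, hence an st-algorithm.

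I would take the cases roughly in order of increasing difficulty. Tensor $\otimes$ and currying/uncurrying $\Lambda,\Lambda^\circleddash$ are the easiest: here the underlying games differ from the components only by a fixed inner-tag prefix, so $\mathcal{A}(\sigma\otimes\tau)$ is just the disjoint union of (tag-adjusted copies of) $\mathcal{A}(\sigma)$ and $\mathcal{A}(\tau)$, with $\mathcal{A}(\Lambda(\phi))$ and $\mathcal{A}(\Lambda^\circleddash(\psi))$ obtained from $\mathcal{A}(\phi),\mathcal{A}(\psi)$ by the finite relabelling $\mathit{peel}_{\Lambda(G)}$ (resp.\ $\mathit{peel}_{\Lambda^\circleddash(H)}$) applied to inputs and outputs; standardness is immediate since we only touch inner tags and never introduce $\square$ or new references to $\hat q_{\mathcal{T}}$. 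Concatenation $\ddagger$ is the conceptually important case emphasized in the introduction: $\mathcal{A}(\sigma\ddagger\tau)$ is literally the disjoint union of $\mathcal{A}(\sigma)$ and $\mathcal{A}(\tau)$ with inner tags adjusted by $\mathit{peel}_{J\ddagger K}$ and $\mathit{att}_{J\ddagger K}$, together with one extra finite clause that handles the copy-cat move at the $B^{[1]}/B^{[2]}$ interface (which by the strengthened axiom DP2 is forced, hence computable by a fixed finite rule); here I must check that when the current last P-view move lies in the glued $B$-region, the third-last-move information still suffices to recover which component strategy to consult, which follows from the structure of P-views across a concatenation. Pairing $\langle\_,\_\rangle$ is where the \emph{states} do their work: because moves of the shared domain $C$ are common to $\sigma$ and $\tau$, the last three P-view moves need not reveal whether we are in the $A$-thread or the $B$-thread, so I would enlarge the state set to $\mathcal{S}_{\mathcal{A}(\sigma)}\times\{L\}\;\cup\;\mathcal{S}_{\mathcal{A}(\tau)}\times\{R\}$ and use the query function (which can inspect the opening occurrence, an initial move, hence $\mathcal{Q}$-queried by Q1) to record the $L/R$ choice; this is exactly the design anticipated in the text right after Definition~\ref{DefStAlgorithms}.

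Promotion $(\_)^\dagger$ will be the main obstacle, since it is the construction that genuinely forces us beyond finitary strategies and is the reason m-views were introduced. Here $\mathcal{A}(\phi^\dagger)$ must, from the last (up to) three moves in the P-view of $G^\dagger$ and a bounded-depth m-view, reconstruct enough of the outer tag $\Lbag\boldsymbol e\Rbag$ of the current thread to know which thread of $\phi$ to simulate, and then run $\mathcal{A}(\phi)$ ``digit by digit'' on the stripped tag while re-inserting the $\Lbag\cdot\Rbag\hbar$ wrapper on outputs. The key points to verify are: (i) the relevant thread-identifying prefix of an outer tag in a position of $G^\dagger$ always has depth bounded by a constant depending only on $\phi$'s st-algorithm, so a fixed mate-scope suffices to recover it via $\llbracket\cdot\rrbracket_G^d$; (ii) the finite table of $\mathcal{A}(\phi)_{\boldsymbol m}$, which already processes outer tags digit-by-digit by standardness condition~2 (references only to the last move's tag, inner tag $\mathscr{EEW}$), composes correctly with the bracket-wrapping so that the composite is still finite; and (iii) standardness is preserved — in particular no $\square$ is introduced and the ``no input outer tag when computing the inner element'' condition survives because the inner element of a move of $G^\dagger$ coincides with that of the corresponding move of $G$. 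Once promotion is settled, the remaining work is the routine verification, in each of the six cases, that $\mathsf{st}(\mathcal{A}(\theta)) = \theta$, which reduces to checking that the constructed table computes, for every odd-length position, the same next move (and justifier) that the definition of $\theta$ prescribes; this is a position-chasing argument using the tag-restriction definitions and the identity $\mathcal{A}^\circledS\circ(\underline{\cdot}_3)^\dagger=\underline{\cdot}$ characterizing realizability, done uniformly once the tag-surgery maps are pinned down.
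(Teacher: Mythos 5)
Your proposal is correct and follows essentially the same route as the paper: a case-by-case construction of the compound st-algorithm by finite inner-tag relabelling of the component tables, with states (tracked via the query on initial/internal occurrences) disambiguating the shared-$C$ ambiguity in pairing and the $\iota$-vs-$\kappa$ ambiguity in concatenation, and with promotion handled by reconstructing the thread-identifying outer-tag prefix via bounded-depth m-views and simulating the component's digit-by-digit tag computation while inserting the $\Lbag\cdot\Rbag\hbar$ wrapper. The key facts you flag — the decomposition of P-views across $\otimes$, $\langle\_,\_\rangle$ and $\ddagger$, the role of Q1 and the absence of $\square$ (standardness) in concatenation, and the preservation of standardness under promotion — are exactly the ones the paper's proof relies on.
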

\begin{proof}
Let us first show that tensor $\otimes$ preserves viability of normalized strategies.
Let $\sigma : [A_\mathscr{W}]_{\boldsymbol{e}} \multimap [C_\mathscr{E}]_{\boldsymbol{e'}}$ and $\tau : [B_\mathscr{W}]_{\boldsymbol{f}} \multimap [D_\mathscr{E}]_{\boldsymbol{f'}}$ be viable strategies with st-algorithms $\mathcal{A}(\sigma)$ and $\mathcal{A}(\tau)$ realizing $\sigma$ and $\tau$, respectively.
We have to construct an st-algorithm $\mathcal{A}(\sigma \otimes \tau)$ such that $\mathsf{st}(\mathcal{A}(\sigma \otimes \tau)) = \sigma \otimes \tau : [A_{\mathscr{WW}}]_{\boldsymbol{e}} \otimes [B_{\mathscr{EW}}]_{\boldsymbol{f}} \multimap [C_{\mathscr{WE}}]_{\boldsymbol{e'}} \otimes [D_{\mathscr{EE}}]_{\boldsymbol{f'}}$.
Let us define the finite set $\mathcal{S}_{\mathcal{A}(\sigma \otimes \tau)}$ of states and the query $\mathcal{Q}_{\mathcal{A}(\sigma \otimes \tau)}$ by:
\begin{align*}
\mathcal{S}_{\mathcal{A}(\sigma \otimes \tau)} \stackrel{\mathrm{df. }}{=} \ & \{ m^{(k)}_{\mathscr{W} X_k} m^{(k-1)}_{\mathscr{W} X_{k-1}} \dots m^{(1)}_{\mathscr{W} X_1} \ \! | \ \! m^{(k)}_{X_k} m^{(k-1)}_{X_{k-1}} \dots m^{(1)}_{X_1} \in \mathcal{S}_{\mathcal{A}(\sigma)} \} \\
&\cup \{ n^{(l)}_{\mathscr{E} Y_l} n^{(l-1)}_{\mathscr{E} Y_{l-1}} \dots n^{(1)}_{\mathscr{E} Y_1} \ \! | \ \! n^{(l)}_{Y_l} n^{(l-1)}_{Y_{l-1}} \dots n^{(1)}_{Y_1} \in \mathcal{S}_{\mathcal{A}(\tau)} \} \\
\mathcal{Q}_{\mathcal{A}(\sigma \otimes \tau)} : \ &a_{\mathscr{WW}} \mapsto \mathcal{Q}_{\mathcal{A}(\sigma)}(a_\mathscr{W}), b_{\mathscr{EW}} \mapsto \mathcal{Q}_{\mathcal{A}(\tau)}(b_\mathscr{W}), c_{\mathscr{WE}} \mapsto \mathcal{Q}_{\mathcal{A}(\sigma)}(c_\mathscr{E}), \\
&d_{\mathscr{EE}} \mapsto \mathcal{Q}_{\mathcal{A}(\tau)}(d_\mathscr{E})
\end{align*}
where $X_k, X_{k-1}, \dots, X_1, Y_l, Y_{l-1}, \dots, Y_1 \in \{ \mathscr{W}, \mathscr{E} \}$ are inner tags that are the rightmost components of $m^{(k)}, m^{(k-1)}, \dots, m^{(1)} \in \pi_1(M_A \cup M_C), n^{(l)}, n^{(l-1)}, \dots, n^{(1)} \in \pi_1(M_B \cup M_D)$, respectively.
Clearly, $\mathcal{Q}_{\mathcal{A}(\sigma \otimes \tau)}$ satisfies the axioms Q1 and Q2.
For each $m^{(k)}_{X_k} m^{(k-1)}_{X_{k-1}} \dots m^{(1)}_{X_1} \in \mathcal{S}_{\mathcal{A}(\sigma)}$ and $n^{(l)}_{Y_l} n^{(l-1)}_{Y_{l-1}} \dots n^{(1)}_{Y_1} \in \mathcal{S}_{\mathcal{A}(\tau)}$, we construct the finite partial functions $\mathcal{A}(\sigma \otimes \tau)_{m^{(k)}_{\mathscr{W}X_k} m^{(k-1)}_{\mathscr{W}X_{k-1}} \dots m^{(1)}_{\mathscr{W}X_1}}$ and $\mathcal{A}(\sigma \otimes \tau)_{n^{(l)}_{\mathscr{E}Y_l} n^{(l-1)}_{\mathscr{E}Y_{l-1}} \dots n^{(1)}_{\mathscr{E}Y_1}}$ from $\mathcal{A}(\sigma)_{m^{(k)}_{X_k} m^{(k-1)}_{X_{k-1}} \dots m^{(1)}_{X_1}}$ and $\mathcal{A}(\tau)_{n^{(l)}_{Y_l} n^{(l-1)}_{Y_{l-1}} \dots n^{(1)}_{Y_1}}$ simply by changing symbols $\mathsf{m_{X}} \in \mathsf{Sym}(\pi_1(M_{A \multimap C}))$ and $\mathsf{n_{Y}} \in \mathsf{Sym}(\pi_1(M_{B \multimap D}))$ into $\mathsf{m_{W X}}$ and $\mathsf{n_{E Y}}$ respectively in their finite tables, where the view-scopes are defined by: 
\begin{align*}
|\mathcal{A}(\sigma \otimes \tau)_{m^{(k)}_{\mathscr{W}X_k} m^{(k-1)}_{\mathscr{W}X_{k-1}} \dots m^{(1)}_{\mathscr{W}X_1}}| &\stackrel{\mathrm{df. }}{=} |\mathcal{A}(\sigma)_{m^{(k)}_{X_k} m^{(k-1)}_{X_{k-1}} \dots m^{(1)}_{X_1}}| \\
|\mathcal{A}(\sigma \otimes \tau)_{n^{(l)}_{\mathscr{E}l} n^{(l-1)}_{\mathscr{E}Y_{l-1}} \dots n^{(1)}_{\mathscr{E}Y_1}}| &\stackrel{\mathrm{df. }}{=} |\mathcal{A}(\tau)_{n^{(l)}_{Y_l} n^{(l-1)}_{Y_{l-1}} \dots n^{(1)}_{Y_1}}|
\end{align*}
and the mate-scopes are defined similarly.
Then, because a P-view in $\sigma \otimes \tau$ is either a P-view in $\sigma$ or $\tau$ (which is shown by induction on the length of positions of $\sigma \otimes \tau$), it is straightforward to see that $\mathsf{st}(\mathcal{A}(\sigma \otimes \tau)) = \sigma \otimes \tau$ holds.
Also, it is clear that $\mathcal{A}(\sigma \otimes \tau)$ is standard if so are $\mathcal{A}(\sigma)$ and $\mathcal{A}(\tau)$.
Intuitively, $\mathcal{A}(\sigma \otimes \tau)$ sees the new digit ($\mathscr{W}$ or $\mathscr{E}$) of the current state $\boldsymbol{s} \in \mathcal{S}_{\mathcal{A}(\sigma \otimes \tau)}$ and decides $\mathcal{A}(\sigma)$ or $\mathcal{A}(\tau)$ to apply (n.b. since $\mathcal{Q}_{\mathcal{A}(\sigma \otimes \tau)}$ `tracks' every initial move by Q1, a state must be non-empty).

It is then clear that pairing of dynamic strategies may be handled in a completely similar manner; currying and uncurrying are even simpler.

\if0
Next, consider the pairing $\langle \phi, \psi \rangle : L \& R$ of effective strategies $\phi : L$, $\psi : R$ such that $\mathcal{H}^\omega(L) = C \multimap A$, $\mathcal{H}^\omega(R) = C \multimap B$ for some static games $A, B, C$. 
Let $\mathcal{A}(\phi)$, $\mathcal{A}(\psi)$ be st-algorithms realizing $\phi$, $\psi$, respectively.
Note that $L \& R$ is the \emph{generalized pairing} defined in \cite{yamada2016dynamic}; roughly, it is the usual pairing but moves in $C$ are not ``duplicated''.
Since the query functions $\mathcal{Q}_{\mathcal{A}(\phi)}$, $\mathcal{Q}_{\mathcal{A}(\psi)}$ ``track'' only initial or internal moves, they in particular ``ignore'' moves in $C$. 
Thus, we may safely apply the same construction of st-algorithms as that for $\otimes$ except that the additional $0,1$ digits lie on the righthand side, and inner tags of moves in $C$ are not changed.
\fi

Now, consider the concatenation $\iota \ddagger \kappa : J \ddagger K$ of viable dynamic strategies $\iota : J$ and $\kappa : K$ such that $\mathcal{H}^\omega(J) \trianglelefteqslant A \multimap B$ and $\mathcal{H}^\omega(K) \trianglelefteqslant B \multimap C$ for some normalized games $A$, $B$ and $C$. Let $\mathcal{A}(\iota)$ and $\mathcal{A}(\kappa)$ be standard st-algorithms such that $\mathsf{st}(\mathcal{A}(\iota)) = \iota$ and $\mathsf{st}(\mathcal{A}(\kappa)) = \kappa$.
We define the set $\mathcal{S}_{\mathcal{A}(\iota \ddagger \kappa)}$ of states and the query $\mathcal{Q}_{\mathcal{A}(\iota \ddagger \kappa)}$ by: 
\begin{align*}
\mathcal{S}_{\mathcal{A}(\iota \ddagger \kappa)} &\stackrel{\mathrm{df. }}{=} \{ n^{(l)}_{G(Y_l)} n^{(l-1)}_{G(Y_{l-1})} \dots n^{(1)}_{G(Y_1)} m^{(k)}_{F(X_k)} m^{(k-1)}_{F(X_{k-1})} \dots m^{(1)}_{F(X_1)} \ \! | \\ & \ \ \ \ \ \ m^{(k)}_{X_k} m^{(k-1)}_{X_{k-1}} \dots m^{(1)}_{X_1} \in \mathcal{S}_{\mathcal{A}(\iota)}, n^{(l)}_{Y_l} n^{(l-1)}_{Y_{l-1}} \dots n^{(1)}_{Y_1} \in \mathcal{S}_{\mathcal{A}(\kappa)} \} \\
\mathcal{Q}_{\mathcal{A}(\iota \ddagger \kappa)} &: m^{(i)}_{F(X_i)} \mapsto \mathcal{Q}_{\mathcal{A}(\iota)}(m^{(i)}_{X_i}), n^{(j)}_{G(Y_j)} \mapsto \mathcal{Q}_{\mathcal{A}(\kappa)}(n^{(j)}_{Y_j})
\end{align*}
where 
\begin{align*}
F(X_i) &\stackrel{\mathrm{df. }}{=} \begin{cases} X_i &\text{if $m^{(i)}_{X_i} \in M_J^{\mathsf{Ext}} \wedge X_i = \mathscr{W}$;} \\ X_i \mathscr{S} &\text{otherwise} \end{cases} \\
G(Y_j) &\stackrel{\mathrm{df. }}{=} \begin{cases} Y_j &\text{if $n^{(j)}_{Y_j} \in M_K^{\mathsf{Ext}} \wedge Y_j = \mathscr{E}$;} \\ Y_j \mathscr{N} &\text{otherwise.} \end{cases}
\end{align*}

We construct the finite partial function $\mathcal{A}(\iota \ddagger \kappa)_{n^{(l)}_{G(Y_l)} n^{(l-1)}_{G(Y_{l-1})} \dots n^{(1)}_{G(Y_1)} m^{(k)}_{F(X_k)} m^{(k-1)}_{F(X_{k-1})} \dots m^{(1)}_{F(X_1)}}$ (including its view- and mate-scopes) from $\mathcal{A}(\kappa)_{n^{(l)}_{Y_l} n^{(l-1)}_{Y_{l-1}} \dots n^{(l)}_{Y_1}}$ if $k = 0$ and from $\mathcal{A}(\iota)_{m^{(k)}_{X_k} m^{(k-1)}_{X_{k-1}} \dots m^{(1)}_{X_1}}$ otherwise, by modifying the symbols in the table similarly to the case of tensor (where the view- and mate-scopes are just inherited).
Again, $\mathcal{Q}_{\mathcal{A}(\iota \ddagger \kappa)}$ clearly satisfies the axioms Q1 and Q2.
Because a P-view in $\iota \ddagger \kappa$ is a one in $\kappa$ or a one in $\iota$ followed by a one in $\kappa$ (it is crucial here that $\mathcal{Q}_{\mathcal{A}(\iota)}$ `tracks' initial moves by Q1, and $\square$ does not appear in $\mathcal{A}(\iota)$ for it is standard), we may conclude that $\mathsf{st}(\mathcal{A}(\iota \ddagger \kappa)) = \iota \ddagger \kappa$.
Moreover, $\mathcal{A}(\iota \ddagger \kappa)$ is clearly standard as so are $\mathcal{A}(\iota)$ and $\mathcal{A}(\kappa)$.

Finally, assume that $\varphi^\dagger : [!A_\mathscr{W}]_{\Lbag \boldsymbol{e} \Rbag \hbar \boldsymbol{f}} \multimap [!B_\mathscr{E}]_{\Lbag \boldsymbol{e'} \Rbag \hbar \boldsymbol{f'}}$ is the promotion of a viable strategy $\varphi : [!A_\mathscr{W}]_{\Lbag \boldsymbol{e} \Rbag \hbar \boldsymbol{f}} \multimap [B_\mathscr{E}]_{\boldsymbol{f'}}$ with a standard st-algorithm $\mathcal{A}(\varphi)$ that realizes $\varphi$.
As the more general case $\varphi : G$, where $\mathcal{H}^\omega(G) \trianglelefteqslant A \Rightarrow B$, is similar (as internal moves of $G^\dagger$ keep new digits of outer tags on moves of $!B$), we focus on the case $\varphi : A \Rightarrow B$ for simplicity. 
We define $\mathcal{S}_{\mathcal{A}(\varphi^\dagger)} \stackrel{\mathrm{df. }}{=} \mathcal{S}_{\mathcal{A}(\varphi)}$ and $\mathcal{Q}_{\mathcal{A}(\varphi^\dagger)} \stackrel{\mathrm{df. }}{=} \mathcal{Q}_{\mathcal{A}(\varphi)}$.
Then, roughly, the idea is that if $\varphi$ makes the next P-move $[a_\mathscr{W}]_{\Lbag \boldsymbol{e} \Rbag \hbar \boldsymbol{f}}$ (resp. $[b_\mathscr{E}]_{\boldsymbol{f'}}$) at an odd-length position $\boldsymbol{t}x$ of $[!A_\mathscr{W}]_{\Lbag \boldsymbol{e} \Rbag \hbar \boldsymbol{f}} \multimap [B_\mathscr{E}]_{\boldsymbol{f'}}$, then $\varphi^\dagger$ at an odd-length position $\boldsymbol{t'} x'$ of $[!A_\mathscr{W}]_{\Lbag \boldsymbol{e} \Rbag \hbar \boldsymbol{f}} \multimap [!B_\mathscr{E}]_{\Lbag \boldsymbol{e'} \Rbag \hbar \boldsymbol{f'}}$ that begins with an initial move $[b^{(0)}_\mathscr{E}]_{\Lbag \boldsymbol{e^{(0)}} \Rbag \hbar \boldsymbol{f}^{(0)}}$ and satisfies $\boldsymbol{t'} x' \upharpoonright \boldsymbol{e^{(0)}} = \boldsymbol{t} x$ makes the corresponding next P-move $[a_\mathscr{W}]_{\Lbag \Lbag \boldsymbol{e^{(0)}} \Rbag \hbar \Lbag \boldsymbol{e} \Rbag \Rbag \hbar \boldsymbol{f}}$ (resp. $[b_\mathscr{E}]_{\Lbag \boldsymbol{e^{(0)}} \Rbag \hbar \boldsymbol{f'}}$).
As opposed to the other constructions, however, the formal definition of the finite table for each $\mathcal{A}(\varphi^\dagger)_{\boldsymbol{m}}$, where $\boldsymbol{m} \in \mathcal{S}_{\mathcal{A}(\varphi^\dagger)}$, is rather involved; thus, we just informally describe how to obtain these tables from those for $\varphi$, which should suffice for the reader to see how to construct the tables if he or she wishes:

\begin{enumerate}

\item Fix $\boldsymbol{s} \in \mathcal{S}_{\mathcal{A}(\varphi^\dagger)} = \mathcal{S}_{\mathcal{A}(\varphi)}$.
Since $\mathcal{A}(\varphi)$ is standard, and P-views in $!A \multimap \ !B$ are those in $!A \multimap B$, we define $\mathcal{A}(\varphi^\dagger)_{\boldsymbol{s}}$'s calculation of inner elements to be exactly the same as that of $\mathcal{A}(\varphi)_{\boldsymbol{s}}$, where we may assume that the computation of each inner element takes at most $8$ moves since inputs for instruction strategies are ternary.
Thus, below we focus on $\mathcal{A}(\varphi^\dagger)_{\boldsymbol{s}}$'s calculation of outer tags.

\item Let $\boldsymbol{t} [m]_{\boldsymbol{\tilde{e}}} [n]_{\boldsymbol{e}} \in \varphi$ and $\boldsymbol{t'} [m]_{\boldsymbol{\tilde{e}'}} [n]_{\boldsymbol{e'}} \in \varphi^\dagger$ such that $\mathcal{Q}_{\mathcal{A}(\varphi^\dagger)}^\star(\lceil \boldsymbol{t'}[m]_{\boldsymbol{\tilde{e}}} \rceil) = \boldsymbol{s}$ and $\boldsymbol{t'} [m]_{\boldsymbol{\tilde{e}'}} [n]_{\boldsymbol{e'}} \upharpoonright \boldsymbol{e^{(0)}} = \boldsymbol{t} [m]_{\boldsymbol{\tilde{e}}} [n]_{\boldsymbol{e}}$, where the opening occurrence of the current thread in $\varphi^\dagger$ is of the form $[b^{(0)}]_{\Lbag \boldsymbol{e^{(0)}} \Rbag \hbar \boldsymbol{f^{(0)}}}$. 
Let us describe how $\mathcal{A}(\varphi^\dagger)_{\boldsymbol{s}}$ calculates $\mathscr{C}^\ast(\boldsymbol{e'})$ by a case analysis on $m$ and $n$:

\begin{itemize}

\item If $m$ and $n$ both belong to $A$, which $\mathcal{A}(\varphi^\dagger)_{\boldsymbol{s}}$ may recognize by the method described below, then $\boldsymbol{\tilde{e}}$, $\boldsymbol{e}$, $\boldsymbol{\tilde{e}'}$ and $\boldsymbol{e'}$ are respectively of the form $\Lbag \boldsymbol{\tilde{g}} \Rbag \hbar \boldsymbol{\tilde{h}}$, $\Lbag \boldsymbol{g} \Rbag \hbar \boldsymbol{h}$, $\Lbag \Lbag \boldsymbol{e^{(0)}} \Rbag \hbar \Lbag \boldsymbol{\tilde{g}} \Rbag \Rbag \hbar \boldsymbol{\tilde{h}}$ and $\Lbag \Lbag \boldsymbol{e^{(0)}} \Rbag \hbar \Lbag \boldsymbol{g} \Rbag \Rbag \hbar \boldsymbol{h}$.
Note that the outer tags which $\mathcal{A}(\varphi)_{\boldsymbol{s}}$ refers to for computing $\boldsymbol{e}$ are only $\boldsymbol{\tilde{e}}$ as $\mathcal{A}(\varphi)$ is standard.
Then, with the help of m-views, $\mathcal{A}(\varphi^\dagger)_{\boldsymbol{s}}$ first calculates $\langle \langle \mathscr{C}^\ast(\boldsymbol{e^{(0)}}) \rangle \sharp$ by copying it from $\mathscr{C}^\ast(\boldsymbol{\tilde{e}'})$, and then computes the remaining $\langle \mathscr{C}^\ast(\boldsymbol{g}) \rangle \rangle \sharp \mathscr{C}^\ast(\boldsymbol{h})$ by simulating the computation of $\mathscr{C}^\ast(\boldsymbol{e})$ by $\mathcal{A}(\varphi)_{\boldsymbol{s}}$, inserting $\rangle$ before $\sharp$. 
This computation of $\mathcal{A}(\varphi^\dagger)_{\boldsymbol{s}}$ is clearly standard.

\item If $m$ and $n$ belong to $A$ and $B$, respectively, then $\boldsymbol{\tilde{e}}$, $\boldsymbol{e}$, $\boldsymbol{\tilde{e}'}$ and $\boldsymbol{e'}$ are of the form $\Lbag \boldsymbol{\tilde{g}} \Rbag \hbar \boldsymbol{\tilde{h}}$, $\boldsymbol{h}, \Lbag \Lbag \boldsymbol{e^{(0)}} \Rbag \hbar \Lbag \boldsymbol{\tilde{g}} \Rbag \Rbag \hbar \boldsymbol{\tilde{h}}$ and $\Lbag \boldsymbol{e^{(0)}} \Rbag \hbar \boldsymbol{h}$, respectively.
Again, with the help of m-views, $\mathcal{A}(\varphi^\dagger)_{\boldsymbol{s}}$ first calculates $\langle \mathscr{C}^\ast(\boldsymbol{e^{(0)}}) \rangle \sharp$ by copying $\mathscr{C}^\ast(\boldsymbol{e^{(0)}})$ from $\mathscr{C}^\ast(\boldsymbol{\tilde{e}'})$, and then computes $\mathscr{C}^\ast(\boldsymbol{h})$ by simulating the computation of $\mathscr{C}^\ast(\boldsymbol{e})$ by $\mathcal{A}(\varphi)_{\boldsymbol{s}}$.
This computation of $\mathcal{A}(\varphi^\dagger)_{\boldsymbol{s}}$ is standard.

\item The remaining two cases are completely analogous.

\end{itemize}

\item Now, it remains to stipulate how $\mathcal{A}(\varphi^\dagger)_{\boldsymbol{s}}$ distinguishes the above four cases.
Assume that $(q_{!A \multimap B})^{[3]} \boldsymbol{v} (n)^{[3]}$ occurs when $\mathcal{A}(\varphi)_{\boldsymbol{s}}$ computes the inner element $n$.
Note that $\boldsymbol{v}$ has enough information to identify $n$.
Thus, by simulating $\mathcal{A}(\varphi)_{\boldsymbol{s}}$'s computation for inner elements and replacing $(n)^{[3]}$ with $(q_{!A \multimap !B})^{[2]}$ to learn about $m$, $\mathcal{A}(\varphi^\dagger)_{\boldsymbol{s}}$ may recognize the current case out of the four described above.
Specifically, if $(q_{\mathcal{T}})^{[3]} \mapsto x$ is the first step for $\mathcal{A}(\varphi)_{\boldsymbol{s}}$ to compute $\boldsymbol{e}$, then correspondingly $\mathcal{A}(\varphi^\dagger)_{\boldsymbol{s}}$ computes as $(q_{\mathcal{T}})^{[3]} \mapsto v_1$, $(q_{\mathcal{T}})^{[3]} v_1 v_2 \mapsto v_3$, \dots, $(q_{\mathcal{T}})^{[3]} \boldsymbol{v} \mapsto (q_{!A \multimap !B})^{[2]}$, $(q_{\mathcal{T}})^{[3]} \boldsymbol{v}  (q_{!A \multimap !B})^{[2]} (\mathsf{m})^{[2]} \mapsto x'$, where $x'$ is the first step for $\mathcal{A}(\varphi^\dagger)_{\boldsymbol{s}}$ to compute $\boldsymbol{e'}$, and if $\mathcal{A}(\varphi)_{\boldsymbol{s}}$ next computes $(q_{\mathcal{T}})^{[3]} x y \mapsto z$, then correspondingly $\mathcal{A}(\varphi^\dagger)_{\boldsymbol{s}}$ computes as $(q_{\mathcal{T}})^{[3]} \boldsymbol{v}  (q_{!A \multimap !B})^{[2]} (\mathsf{m})^{[2]} x' y' \mapsto v_1$, $(q_{\mathcal{T}})^{[3]} \boldsymbol{v}  (q_{!A \multimap !B})^{[2]} (\mathsf{m})^{[2]} x' y' v_1 v_2 \mapsto v_3$, \dots, $(q_{\mathcal{T}})^{[3]} \boldsymbol{v}  (q_{!A \multimap !B})^{[2]} (\mathsf{m})_2 x' y' \boldsymbol{v} \mapsto (q_{!A \multimap !B})^{[2]}$, $(q_{\mathcal{T}})^{[3]} \boldsymbol{v}  (q_{!A \multimap !B})^{[2]} (\mathsf{m})^{[2]} x' y' \boldsymbol{v}  (q_{!A \multimap !B})^{[2]} (\mathsf{m})^{[2]} \mapsto z'$, where $y', z'$ are the second and third steps for $\mathcal{A}(\varphi^\dagger)_{\boldsymbol{s}}$ to compute $\boldsymbol{e'}$, and so on. That is, $\mathcal{A}(\varphi^\dagger)_{\boldsymbol{s}}$ remembers the current case by inserting the sequence $\boldsymbol{v}  (q_{!A \multimap !B})^{[2]} (\mathsf{m})^{[2]}$ (of length $\leqslant 8$) between each computational step. %Note that we may define $|\mathcal{A}(\varphi^\dagger)_{\boldsymbol{s}}| \stackrel{\mathrm{df. }}{=} 7 \cdot |\mathcal{A}(\varphi)_{\boldsymbol{s}}|$ since $\boldsymbol{v}$ is at most of length $6$.

\end{enumerate}
It should be clear from the above description how to construct $\mathcal{A}(\varphi^\dagger)_{\boldsymbol{s}}$ from $\mathcal{A}(\varphi)_{\boldsymbol{s}}$. 
%Also, this method is clearly applicable for the case where $\varphi$ is not static.
Finally, note that $\mathcal{A}(\varphi^\dagger)$ is clearly standard.
%We first take every input/output pair of $\mathcal{A}(\varphi)_{\boldsymbol{m}}$ for calculation of inner elements but replace the opening move $[q_{!A \multimap !B}]_3$ and the last move $[\mathsf{m}]_3$ by $[q_{\mathcal{T}}]_3$ and $[q_{!A \multimap !B}]_2$, respectively. 
%By these pairs, $\mathcal{A}(\varphi^\dagger)_{\boldsymbol{m}}$ first learns $A$ or $B$ the last and the next moves respectively belong to, which is represented by a resulting position of the form $[q_{\mathcal{T}}]_3 \boldsymbol{v} [q_{!A \multimap !B}]_2 [\mathsf{m}]_2$, where $\boldsymbol{v}$ and $\mathsf{m}$ present the information for the case distinction.
%Then for each of such a position $[q_{\mathcal{T}}]_3 \boldsymbol{v} [q_{!A \multimap !B}]_2 [\mathsf{m}]_2$, take a copy of input/output pairs of $\mathcal{A}(\varphi)_{\boldsymbol{m}}$ for calculating outer tags, where append $[q_{\mathcal{T}}]_3 \boldsymbol{v} [q_{!A \multimap !B}]_2 [\mathsf{l}]_2$ on the top of each input and adjust the pairs depending on the case out of the four. 
\end{proof}

\subsection{Examples of viable strategies}
\label{Examples}
This section presents various examples of a viable strategy realized by a standard st-algorithm.
These strategies except \emph{fixed-point strategies} $\mathit{fix}_A$ are actually finitary; thus, we need the notion of viable strategies only for promotion $(\_)^\dagger$ and $\mathit{fix}_A$, both of which are necessary for the proof of Turing completeness in Section~\ref{TuringCompleteness}.
%Another point to note is that the \emph{last three moves} in P-views provide sufficient information to compute the next move, which is perhaps a surprising fact to many readers.

\begin{example}
Given a normalized game $A$, we define an st-algorithm $\mathcal{A}(\mathit{cp}_A)$ that realizes the copy-cat strategy $\mathit{cp}_A : [A_\mathscr{W}]_{\boldsymbol{e}} \multimap [A_\mathscr{E}]_{\boldsymbol{e}}$ by $\mathcal{Q}_{\mathit{cp}_A}(m) \stackrel{\mathrm{df. }}{=} \begin{cases} \top &\text{if $m \in \pi_1(M_{A \multimap A}^{\mathsf{Init}})$} \\ \bot &\text{otherwise} \end{cases}$, $\mathcal{S}_{\mathit{cp}_A} \stackrel{\mathrm{df. }}{=} \pi_1(M_{A \multimap A}^{\mathsf{Init}})$, $|\mathcal{A}(\mathit{cp}_A)_m| \stackrel{\mathrm{df. }}{=} 3$ and $\|\mathcal{A}(\mathit{cp}_A)_m\| \stackrel{\mathrm{df. }}{=} 0$ for all $m \in \mathcal{S}_{\mathit{cp}_A}$, and the table is as given in Appendix~\ref{TableCP}.
$\mathcal{Q}_{\mathit{cp}_A}$ clearly satisfies the required condition. 
Accordingly, $\mathcal{A}(\mathit{cp}_A)_m^\circledS$ computes as in the following diagrams:
\begin{center}
\begin{tabular}{ccccccc}
$\mathcal{G}(M_{A \multimap A})^{[0]}$ & $\&$ & $\mathcal{G}(M_{A \multimap A})^{[1]}$ & $\&$ & $\mathcal{G}(M_{A \multimap A})^{[2]}$ & $\stackrel{\mathcal{A}(\mathit{cp}_A)_m^\circledS}{\Rightarrow}$ & $\mathcal{G}(M_{A \multimap A})^{[3]}$ \\
\cline{1-7} 
&&&&&&$(\hat{q}_{\mathcal{T}})^{[3]}$ \\
&&&&$(\hat{q}_{\mathcal{T}})^{[2]}$&& \\
&&&&$(e_1)^{[2]}$&& \\
&&&&&&$(e_1)^{[3]}$ \\
&&&&&&$(q_{\mathcal{T}})^{[3]}$ \\
&&&&$(q_{\mathcal{T}})^{[2]}$&& \\
&&&&$(e_2)^{[2]}$&& \\
&&&&&&$(e_2)^{[3]}$ \\
&&&&&$\vdots$& \\
&&&&&&$(q_{\mathcal{T}})^{[3]}$ \\
&&&&$(q_{\mathcal{T}})^{[2]}$&& \\
&&&&$(e_k)^{[2]}$&& \\
&&&&&&$(e_k)^{[3]}$ \\
&&&&&&$(q_{\mathcal{T}})^{[3]}$ \\
&&&&$(q_{\mathcal{T}})^{[2]}$&& \\
&&&&$(\checkmark)^{[2]}$&& \\
&&&&&&$(\checkmark)^{[3]}$
\end{tabular}
\end{center}
\begin{center}
\begin{tabular}{ccccccc}
$\mathcal{G}(M_{A \multimap A})^{[0]}$ & $\&$ & $\mathcal{G}(M_{A \multimap A})^{[1]}$ & $\&$ & $\mathcal{G}(M_{A \multimap A})^{[2]}$ & $\stackrel{\mathcal{A}(\mathit{cp}_A)_m^\circledS}{\Rightarrow}$ & $\mathcal{G}(M_{A \multimap A})^{[3]}$ \\
\cline{1-7} 
&&&&&&$(\hat{q}_{A \multimap A})^{[3]}$ \\
&&&&$(\hat{q}_{A \multimap A})^{[2]}$&& \\
&&&&$(\mathsf{a_W})^{[2]}$ ($(\mathsf{a_E})^{[2]}$)&& \\
&&&&&&$(\mathsf{a_E})^{[3]}$ ($(\mathsf{a_W})^{[3]}$)
\end{tabular}
\end{center}
Then it is easy to see that $\mathsf{st}(\mathcal{A}(\mathit{cp}_A)) = \mathit{cp}_A$ holds, showing viability of $\mathit{cp}_A$.
Also, $\mathcal{A}(\mathit{cp}_A)$ is clearly standard.
In a completely analogous (but slightly more complex for outer tags) manner, we may show that the dereliction $\mathit{der}_A : A \Rightarrow A$ for each normalized game $A$ is viable with a standard st-algorithm realizing it as well.
\end{example}

\begin{example}
The \emph{\bfseries case strategy} $\mathit{case}_A : [A_\mathscr{WWW}]_{\Lbag \boldsymbol{e''} \Rbag \hbar \boldsymbol{f}} \& [A_{\mathscr{EWW}}]_{\Lbag \boldsymbol{e'} \Rbag \hbar \boldsymbol{f}} \& [\boldsymbol{2}_{\mathscr{EW}}]_{\Lbag \boldsymbol{e} \Rbag \hbar} \Rightarrow [A_{\mathscr{E}}]_{\boldsymbol{f}}$ for each normalized game $A$ is defined by:
\begin{align*}
\mathit{case}_A &\stackrel{\mathrm{df. }}{=} \mathsf{Pref}(\{ [a_{\mathscr{E}}]_{\boldsymbol{e}} [\hat{q}_{\mathscr{EW}}]_{\Lbag \boldsymbol{e} \Rbag \hbar} [\mathit{tt}_{\mathscr{EW}}]_{\Lbag \boldsymbol{e} \Rbag \hbar} [a_\mathscr{WWW}]_{\Lbag \Rbag \hbar \boldsymbol{e}} . \boldsymbol{s} \mid [a_{\mathscr{E}}]_{\boldsymbol{e}} [a_\mathscr{WWW}]_{\Lbag \Rbag \hbar \boldsymbol{e}} . \boldsymbol{s} \in \mathit{der}^\mathscr{W}_A \} \\
& \ \ \ \ \cup \{ [a_{\mathscr{E}}]_{\boldsymbol{f}} [q_{\mathscr{EW}}]_{\Lbag \boldsymbol{f} \Rbag \hbar} [\mathit{ff}_{\mathscr{EW}}]_{\Lbag \boldsymbol{f} \Rbag \hbar} [a_{\mathscr{EWW}}]_{\Lbag \Rbag \hbar \boldsymbol{f}} . \boldsymbol{t} \mid [a_{\mathscr{E}}]_{\boldsymbol{f}} [a_{\mathscr{EWW}}]_{\Lbag \Rbag \hbar \boldsymbol{f}} . \boldsymbol{t} \in \mathit{der}^{\mathscr{E}}_A \})^{\mathsf{Even}}
\end{align*}
where $\mathit{der}^\mathscr{W}_A : [A_\mathscr{WWW}]_{\Lbag \boldsymbol{e''} \Rbag \hbar \boldsymbol{f}} \Rightarrow  [A_{\mathscr{E}}]_{\boldsymbol{f}}$ and $\mathit{der}^{\mathscr{E}}_A : [A_{\mathscr{EWW}}]_{\Lbag \boldsymbol{e'} \Rbag \hbar \boldsymbol{f}} \Rightarrow  [A_{\mathscr{E}}]_{\boldsymbol{f}}$ are the same as the usual dereliction $\mathit{der}_A : [A_\mathscr{W}]_{\Lbag \boldsymbol{e'} \Rbag \hbar \boldsymbol{f}} \Rightarrow [A_\mathscr{E}]_{\boldsymbol{f}}$ up to inner tags.
Given input strategies $\sigma_1, \sigma_2 : T \multimap A$ and $\beta : T \multimap \boldsymbol{2}$, the composition $\mathit{case}_A \circ \langle \langle \sigma_1, \sigma_2 \rangle, \beta \rangle^\dagger$ is $\sigma_1$ (resp. $\sigma_2$) if $\beta$ is $\{ \boldsymbol{\epsilon}, [\hat{q}_{\mathscr{E}}] . [\mathit{tt}_{\mathscr{E}}] \}$ (resp. $\{ \boldsymbol{\epsilon}, [\hat{q}_{\mathscr{E}}] . [\mathit{ff}_{\mathscr{E}}] \}$).
%Note that finite-openness of $A$ is crucial here since otherwise the algorithm for $\mathit{case}$ described below cannot be standard.
%Since this strategy distinguishes different copies of symbols $\ell, \hbar, \Lbag, \Rbag$, we explicitly write subscripts $\boldsymbol{\alpha} \in \{ 0, 1 \}^\ast$ on them.
We give an st-algorithm $\mathcal{A}(\mathit{case}_A)$ that realizes $\mathit{case}_A$ whose states and query function are similar to those of $\mathcal{A}(\mathit{cp}_A)$, and for all $m \in \mathcal{S}_{\mathcal{A}(\mathit{case}_A)}$ the instruction strategy $\mathcal{A}(\mathit{case}_A)_m^\circledS$ is as follows (again, we skip formally writing down the table of $\mathcal{A}(\mathit{case}_A)_m$), where we write $A^{A^2 \& \boldsymbol{2}}$ for $A \& A \& \boldsymbol{2} \Rightarrow A$:
\begin{center}
\begin{tabular}{ccccccc}
$\mathcal{G}(M_{A^{A^2 \& \boldsymbol{2}}})^{[0]}$ & $\&$ & $\mathcal{G}(M_{A^{A^2 \& \boldsymbol{2}}})^{[1]}$ & $\&$ & $\mathcal{G}(M_{A^{A^2 \& \boldsymbol{2}}})^{[2]}$ & $\stackrel{\mathcal{A}(\mathit{case}_A)_m^\circledS}{\Rightarrow}$ & $\mathcal{G}(M_{A^{A^2 \& \boldsymbol{2}}})^{[3]}$ \\
\cline{1-7} 
&&&&&&$(\hat{q}_{A^{A^2 \& \boldsymbol{2}}})^{[3]}$ \\
&&&&$(\hat{q}_{A^{A^2 \& \boldsymbol{2}}})^{[2]}$&& \\
&&&&$(\mathsf{a_{E}})^{[2]}$&& \\
&&&&&&$(\mathsf{\hat{q}_{EW}})^{[3]}$ 
\end{tabular}
\end{center}
where $a_{\mathscr{E}} \in \pi_1(M_{A \multimap A}^{\mathsf{Init}})$, which can be recognized as it has no justifier.

\begin{center}
\begin{tabular}{ccccccc}
$\mathcal{G}(M_{A^{A^2 \& \boldsymbol{2}}})^{[0]}$ & $\&$ & $\mathcal{G}(M_{A^{A^2 \& \boldsymbol{2}}})^{[1]}$ & $\&$ & $\mathcal{G}(M_{A^{A^2 \& \boldsymbol{2}}})^{[2]}$ & $\stackrel{\mathcal{A}(\mathit{case}_A)_m^\circledS}{\Rightarrow}$ & $\mathcal{G}(M_{A^{A^2 \& \boldsymbol{2}}})^{[3]}$ \\
\cline{1-7} 
&&&&&&$(q_{A^{A^2 \& \boldsymbol{2}}})^{[3]}$ \\
&&&&$(q_{A^{A^2 \& \boldsymbol{2}}})^{[2]}$&& \\
&&&&$(\mathsf{tt_{EW}})^{[2]}$ ($(\mathsf{ff_{EW}})^{[2]}$)&& \\
$(q_{A^{A^2 \& \boldsymbol{2}}})^{[0]}$&&&&&& \\
$(\mathsf{a_{E}})^{[0]}$&&&&&& \\
&&&&&&$(\mathsf{a_{WWW}})^{[3]}$ ($(\mathsf{a_{EWW}})^{[3]}$) 
\end{tabular}
\end{center}

\begin{center}
\begin{tabular}{ccccccc}
$\mathcal{G}(M_{A^{A^2 \& \boldsymbol{2}}})^{[0]}$ & $\&$ & $\mathcal{G}(M_{A^{A^2 \& \boldsymbol{2}}})^{[1]}$ & $\&$ & $\mathcal{G}(M_{A^{A^2 \& \boldsymbol{2}}})^{[2]}$ & $\stackrel{\mathcal{A}(\mathit{case}_A)_m^\circledS}{\Rightarrow}$ & $\mathcal{G}(M_{A^{A^2 \& \boldsymbol{2}}})^{[3]}$ \\
\cline{1-7} 
&&&&&&$(q_{A^{A^2 \& \boldsymbol{2}}})^{[3]}$ \\
&&&&$(q_{A^{A^2 \& \boldsymbol{2}}})^{[2]}$&& \\
&&&&$(\mathsf{a_{WWW}})^{[2]}$ ($(\mathsf{a_{EWW}})^{[2]}$)&& \\
&&&&&&$(\mathsf{a_{E}})^{[3]}$ 
\end{tabular}
\end{center}

\begin{center}
\begin{tabular}{ccccccc}
$\mathcal{G}(M_{A^{A^2 \& \boldsymbol{2}}})^{[0]}$ & $\&$ & $\mathcal{G}(M_{A^{A^2 \& \boldsymbol{2}}})^{[1]}$ & $\&$ & $\mathcal{G}(M_{A^{A^2 \& \boldsymbol{2}}})^{[2]}$ & $\stackrel{\mathcal{A}(\mathit{case}_A)_m^\circledS}{\Rightarrow}$ & $\mathcal{G}(M_{A^{A^2 \& \boldsymbol{2}}})^{[3]}$ \\
\cline{1-7} 
&&&&&&$(q_{A^{A^2 \& \boldsymbol{2}}})^{[3]}$ \\
&&&&$(q_{A^{A^2 \& \boldsymbol{2}}})^{[2]}$&& \\
&&&&$(\mathsf{a_{E}})^{[2]}$ && \\
$(q_{A^{A^2 \& \boldsymbol{2}}})^{[0]}$&&&&&& \\
$(\mathsf{a'_{WWW}})^{[0]}$ ($(\mathsf{a'_{EWW}})^{[0]}$)&&&&&& \\
&&&&&&$(\mathsf{a_{WWW}})^{[3]}$ ($(\mathsf{a_{EWW}})^{[3]}$)
\end{tabular}
\end{center}
where $a_{\mathscr{E}} \not \in \pi_1(M_{A \multimap A}^{\mathsf{Init}})$.

\begin{center}
\begin{tabular}{ccccccc}
$\mathcal{G}(M_{A^{A^2 \& \boldsymbol{2}}})^{[0]}$ & $\&$ & $\mathcal{G}(M_{A^{A^2 \& \boldsymbol{2}}})^{[1]}$ & $\&$ & $\mathcal{G}(M_{A^{A^2 \& \boldsymbol{2}}})^{[2]}$ & $\stackrel{\mathcal{A}(\mathit{case}_A)_m^\circledS}{\Rightarrow}$ & $\mathcal{G}(M_{A^{A^2 \& \boldsymbol{2}}})^{[3]}$ \\
\cline{1-7} 
&&&&&&$(\hat{q}_{\mathcal{T}})^{[3]}$ \\
&&&&$(\hat{q}_{A^{A^2 \& \boldsymbol{2}}})^{[2]}$&& \\
&&&&$(\mathsf{a_{E}})^{[2]}$&& \\
&&&&&&$(\langle)^{[3]}$ \\
&&&&&&$(q_{\mathcal{T}})^{[3]}$ \\
&&&&$(\hat{q}_{A^{A^2 \& \boldsymbol{2}}})^{[2]}$&& \\
&&&&$(\mathsf{a_{E}})^{[2]}$&& \\
&&&&$(\hat{q}_{\mathcal{T}})^{[2]}$&& \\
&&&&$(e_1)^{[2]}$&& \\
&&&&&&$(e_1)^{[3]}$ \\
&&&&&&$(q_{\mathcal{T}})^{[3]}$ \\
&&&&$(\hat{q}_{A^{A^2 \& \boldsymbol{2}}})^{[2]}$&& \\
&&&&$(\mathsf{a_{E}})^{[2]}$&& \\
&&&&$(q_{\mathcal{T}})^{[2]}$&& \\
&&&&$(e_2)^{[2]}$&& \\
&&&&&&$(e_2)^{[3]}$ \\
&&&&&$\vdots$& \\
&&&&&&$(q_{\mathcal{T}})^{[3]}$ \\
&&&&$(\hat{q}_{A^{A^2 \& \boldsymbol{2}}})^{[2]}$&& \\
&&&&$(\mathsf{a_{E}})^{[2]}$&& \\
&&&&$(q_{\mathcal{T}})^{[2]}$&& \\
&&&&$(e_k)^{[2]}$&& \\
&&&&&&$(e_k)^{[3]}$ \\
&&&&&&$(q_{\mathcal{T}})^{[3]}$ \\
&&&&$(\hat{q}_{A^{A^2 \& \boldsymbol{2}}})^{[2]}$&& \\
&&&&$(\mathsf{a_{E}})^{[2]}$&& \\
&&&&$(q_{\mathcal{T}})^{[2]}$&& \\
&&&&$(\checkmark)^{[2]}$&& \\
&&&&&&$(\rangle)^{[3]}$ \\
&&&&&&$(q_{\mathcal{T}})^{[3]}$ \\
&&&&&&$(\sharp)^{[3]}$ \\
&&&&&&$(q_{\mathcal{T}})^{[3]}$ \\
&&&&&&$(\checkmark)^{[3]}$
\end{tabular}
\end{center}
where $a_{\mathscr{E}} \in M_{A \multimap A}^{\mathsf{Init}}$. Note that the iteration of $(\hat{q}_{A^{A^2 \& \boldsymbol{2}}})_2 . (\mathsf{a_{E}})_2$ is to distinguish this case from other cases.
This remark is applied to the remaining diagrams below.

\begin{center}
\begin{tabular}{ccccccc}
$\mathcal{G}(M_{A^{A^2 \& \boldsymbol{2}}})^{[0]}$ & $\&$ & $\mathcal{G}(M_{A^{A^2 \& \boldsymbol{2}}})^{[1]}$ & $\&$ & $\mathcal{G}(M_{A^{A^2 \& \boldsymbol{2}}})^{[2]}$ & $\stackrel{\mathcal{A}(\mathit{case}_A)_m^\circledS}{\Rightarrow}$ & $\mathcal{G}(M_{A^{A^2 \& \boldsymbol{2}}})^{[3]}$ \\
\cline{1-7} 
&&&&&&$(\hat{q}_{\mathcal{T}})^{[3]}$ \\
&&&&$(\hat{q}_{A^{A^2 \& \boldsymbol{2}}})^{[2]}$&& \\
&&&&$(\mathsf{tt_{EW}})^{[2]}$ ($(\mathsf{ff_{EW}})^{[2]}$)&& \\
&&&&&&$(\langle)^{[3]}$ \\
&&&&&&$(q_{\mathcal{T}})^{[3]}$ \\
&&&&&&$(\rangle)^{[3]}$ \\
&&&&&&$(q_{\mathcal{T}})^{[3]}$ \\
&&&&&&$(\sharp)^{[3]}$ \\
&&&&&&$(q_{\mathcal{T}})^{[3]}$ \\
&&&&$(\hat{q}_{\mathcal{T}})^{[2]}$&& \\
&&&&$(\langle)^{[2]}$&& \\
&&&&$(\hat{q}_{A^{A^2 \& \boldsymbol{2}}})^{[2]}$&& \\
&&&&$(\mathsf{tt_{EW}})^{[2]}$ ($(\mathsf{ff_{EW}})^{[2]}$)&& \\
&&&&$(q_{\mathcal{T}})^{[2]}$&& \\
&&&&$(e_1)^{[2]}$&& \\
&&&&&&$(e_1)^{[3]}$ \\
&&&&&&$(q_{\mathcal{T}})^{[3]}$ \\
&&&&$(\hat{q}_{A^{A^2 \& \boldsymbol{2}}})^{[2]}$&& \\
&&&&$(\mathsf{tt_{EW}})^{[2]}$ ($(\mathsf{ff_{EW}})^{[2]}$)&& \\
&&&&$(q_{\mathcal{T}})^{[2]}$&& \\
&&&&$(e_2)^{[2]}$&& \\
&&&&&&$(e_2)^{[3]}$ \\
&&&&&$\vdots$& \\
&&&&&&$(q_{\mathcal{T}})^{[3]}$ \\
&&&&$(\hat{q}_{A^{A^2 \& \boldsymbol{2}}})^{[2]}$&& \\
&&&&$(\mathsf{tt_{EW}})^{[2]}$ ($(\mathsf{ff_{EW}})^{[2]}$)&& \\
&&&&$(q_{\mathcal{T}})^{[2]}$&& \\
&&&&$(e_k)^{[2]}$&& \\
&&&&&&$(e_k)^{[3]}$ \\
&&&&&&$(q_{\mathcal{T}})^{[3]}$ \\
&&&&$(\hat{q}_{A^{A^2 \& \boldsymbol{2}}})^{[2]}$&& \\
&&&&$(\mathsf{tt_{EW}})^{[2]}$ ($(\mathsf{ff_{EW}})^{[2]}$)&& \\
&&&&$(q_{\mathcal{T}})^{[2]}$&& \\
&&&&$(\rangle)^{[2]}$&& \\
&&&&$(q_{\mathcal{T}})^{[2]}$&& \\
&&&&$(\sharp)^{[2]}$&& \\
&&&&$(q_{\mathcal{T}})^{[2]}$&& \\
&&&&$(\checkmark)^{[2]}$&& \\
&&&&&&$(\checkmark)^{[3]}$
\end{tabular}
\end{center}
%Again, the iteration of $(\hat{q}_{A^{A^2 \& \boldsymbol{2}}})_2 . (\mathsf{tt_{EW}})_2$ (or $(\hat{q}_{A^{A^2 \& \boldsymbol{2}}})_2 . (\mathsf{ff_{EW}})_2$) is to distinguish this case from other cases.

\begin{center}
\begin{tabular}{ccccccc}
$\mathcal{G}(M_{A^{A^2 \& \boldsymbol{2}}})^{[0]}$ & $\&$ & $\mathcal{G}(M_{A^{A^2 \& \boldsymbol{2}}})^{[1]}$ & $\&$ & $\mathcal{G}(M_{A^{A^2 \& \boldsymbol{2}}})^{[2]}$ & $\stackrel{\mathcal{A}(\mathit{case}_A)_m^\circledS}{\Rightarrow}$ & $\mathcal{G}(M_{A^{A^2 \& \boldsymbol{2}}})^{[3]}$ \\
\cline{1-7} 
&&&&&&$(\hat{q}_{\mathcal{T}})^{[3]}$ \\
&&&&$(\hat{q}_{A^{A^2 \& \boldsymbol{2}}})^{[2]}$&& \\
&&&&$(\mathsf{a_{WWW}})^{[2]}$ ($(\mathsf{a_{EWW}})^{[2]}$)&& \\
&&&&$(q_{\mathcal{T}})^{[2]}$&& \\
&&&&$(\langle)^{[2]}$&& \\
&&&&$(q_{\mathcal{T}})^{[2]}$&& \\
&&&&$(\rangle)_2$&& \\
&&&&$(q_{\mathcal{T}})^{[2]}$&& \\
&&&&$(\sharp)^{[2]}$&& \\
&&&&$(q_{\mathcal{T}})^{[2]}$&& \\
&&&&$(e_1)^{[2]}$&& \\
&&&&&&$(e_1)^{[3]}$ \\
&&&&&&$(q_{\mathcal{T}})^{[3]}$ \\
&&&&$(\hat{q}_{A^{A^2 \& \boldsymbol{2}}})^{[2]}$&& \\
&&&&$(\mathsf{a_{WWW}})^{[2]}$ ($(\mathsf{a_{EWW}})^{[2]}$)&& \\
&&&&$(q_{\mathcal{T}})^{[2]}$&& \\
&&&&$(e_2)^{[2]}$&& \\
&&&&&&$(e_2)^{[3]}$ \\
&&&&&$\vdots$& \\
&&&&&&$(q_{\mathcal{T}})^{[3]}$ \\
&&&&$(\hat{q}_{A^{A^2 \& \boldsymbol{2}}})^{[2]}$&& \\
&&&&$(\mathsf{a_{WWW}})^{[2]}$ ($(\mathsf{a_{EWW}})^{[2]}$)&& \\
&&&&$(q_{\mathcal{T}})^{[2]}$&& \\
&&&&$(e_k)^{[2]}$&& \\
&&&&&&$(e_k)^{[3]}$ \\
&&&&&&$(q_{\mathcal{T}})^{[3]}$ \\
&&&&$(\hat{q}_{A^{A^2 \& \boldsymbol{2}}})^{[2]}$&& \\
&&&&$(\mathsf{a_{WWW}})^{[2]}$ ($(\mathsf{a_{EWW}})^{[2]}$)&& \\
&&&&$(q_{\mathcal{T}})^{[2]}$&& \\
&&&&$(\checkmark)^{[2]}$&& \\
&&&&&&$(\checkmark)^{[3]}$ 
\end{tabular}
\end{center}
%Yet again, the iteration of $(\hat{q}_{A^{A^2 \& \boldsymbol{2}}})_2 . (\mathsf{a_{WWW}})_2$ (or $(\hat{q}_{A^{A^2 \& \boldsymbol{2}}})_2 . (\mathsf{a_{EWW}})_2$) is to distinguish this case from other cases.

\begin{center}
\begin{tabular}{ccccccc}
$\mathcal{G}(M_{A^{A^2 \& \boldsymbol{2}}})^{[0]}$ & $\&$ & $\mathcal{G}(M_{A^{A^2 \& \boldsymbol{2}}})^{[1]}$ & $\&$ & $\mathcal{G}(M_{A^{A^2 \& \boldsymbol{2}}})^{[2]}$ & $\stackrel{\mathcal{A}(\mathit{case}_A)_m^\circledS}{\Rightarrow}$ & $\mathcal{G}(M_{A^{A^2 \& \boldsymbol{2}}})^{[3]}$ \\
\cline{1-7} 
&&&&&&$(\hat{q}_{\mathcal{T}})^{[3]}$ \\
&&&&$(\hat{q}_{A^{A^2 \& \boldsymbol{2}}})^{[2]}$&& \\
&&&&$(a_{\mathsf{E}})^{[2]}$&& \\
&&&&&&$(\langle)^{[3]}$ \\
&&&&&&$(q_{\mathcal{T}})^{[3]}$ \\
&&&&&&$(\rangle)^{[3]}$ \\
&&&&&&$(q_{\mathcal{T}})^{[3]}$ \\
&&&&&&$(\sharp)^{[3]}$ \\
&&&&&&$(q_{\mathcal{T}})^{[3]}$ \\
&&&&$(\hat{q}_{A^{A^2 \& \boldsymbol{2}}})^{[2]}$&& \\
&&&&$(a_{\mathsf{E}})^{[2]}$&& \\
&&&&$(q_{\mathcal{T}})^{[2]}$&& \\
&&&&$(e_1)^{[2]}$&& \\
&&&&&&$(e_1)^{[3]}$ \\
&&&&&&$(q_{\mathcal{T}})^{[3]}$ \\
&&&&$(\hat{q}_{A^{A^2 \& \boldsymbol{2}}})^{[2]}$&& \\
&&&&$(a_{\mathsf{E}})^{[2]}$&& \\
&&&&$(q_{\mathcal{T}})^{[2]}$&& \\
&&&&$(e_2)^{[2]}$&& \\
&&&&&&$(e_2)^{[3]}$ \\
&&&&&$\vdots$& \\
&&&&&&$(q_{\mathcal{T}})^{[3]}$ \\
&&&&$(\hat{q}_{A^{A^2 \& \boldsymbol{2}}})^{[2]}$&& \\
&&&&$(a_{\mathsf{E}})^{[2]}$&& \\
&&&&$(q_{\mathcal{T}})^{[2]}$&& \\
&&&&$(e_k)^{[2]}$&& \\
&&&&&&$(e_k)^{[3]}$ \\
&&&&&&$(q_{\mathcal{T}})^{[3]}$ \\
&&&&$(q_{\mathcal{T}})^{[2]}$&& \\
&&&&$(\checkmark)^{[2]}$&& \\
&&&&&&$(\checkmark)^{[3]}$ 
\end{tabular}
\end{center}
where $a_{\mathscr{E}} \not \in \pi_1(M_{A \multimap A}^{\mathsf{Init}})$.
%The iteration of $(\hat{q}_{A^{A^2 \& \boldsymbol{2}}})_2 . (\mathsf{a_{E}})_2$ is to distinguish this case from other cases.
Clearly, $\mathsf{st}(\mathcal{A}(\mathit{case}_A)) = \mathit{case}_A$, and so $\mathit{case}_A$ is viable.
%Note that the last two diagrams are possible as tags $\boldsymbol{e}, \boldsymbol{e'} \in \mathcal{T}$ from different components of $A$ are pairwise distinct.
And again, it is easy to see that $\mathcal{A}(\mathit{case}_A)$ is standard. 
\end{example}

\begin{example}
Consider the \emph{\bfseries ifzero strategy} $\mathit{zero?} : [\mathcal{N}_\mathscr{W}]_{\Lbag \boldsymbol{e} \Rbag \hbar} \Rightarrow [\boldsymbol{2}_\mathscr{E}]$ defined by $\mathit{zero?} \stackrel{\mathrm{df. }}{=} \mathsf{Pref}(\{ [\hat{q}_\mathscr{E}] [\hat{q}_\mathscr{W}]_{\Lbag \Rbag \hbar} [\mathit{no}_\mathscr{W}]_{\Lbag \Rbag \hbar} [\mathit{tt}_\mathscr{E}], [\hat{q}_\mathscr{E}] [\hat{q}_\mathscr{W}]_{\Lbag \Rbag \hbar} [\mathit{yes}_\mathscr{W}]_{\Lbag \Rbag \hbar} [\mathit{ff}_\mathscr{E}] \})^{\mathsf{Even}}$.
It outputs $\mathit{tt}$ (resp. $\mathit{ff}$) if the input is $\underline{0}$ (resp. $\underline{n+1}$ for some $n \in \mathbb{N}$).
Let us give an st-algorithm $\mathcal{A}(\mathit{zero?})$ that realizes $\mathit{zero?}$ as follows.
Define $\mathcal{Q}_{\mathcal{A}(\mathit{zero?})}(m) \stackrel{\mathrm{df. }}{=} \begin{cases} \top &\text{if $m = \hat{q}_\mathscr{E}$;} \\ \bot &\text{otherwise} \end{cases}$, $\mathcal{S}_{\mathcal{A}(\mathit{zero?})} \stackrel{\mathrm{df. }}{=} \{ \hat{q}_\mathscr{E} \}$, $|\mathcal{A}(\mathit{zero?})_{\hat{q}_\mathscr{E}}| \stackrel{\mathrm{df. }}{=} 3$, $\| \mathcal{A}(\mathit{zero?})_{\hat{q}_\mathscr{E}} \| \stackrel{\mathrm{df. }}{=} 0$, and the instruction strategy $\mathcal{A}(\mathit{zero?})_{\hat{q}_\mathscr{E}}^\circledS$ is as depicted in the following diagrams (again, we omit the formal description of $\mathcal{A}(\mathit{zero?})_{\hat{q}_\mathscr{E}}$ as it should be clear at this point):
\begin{center}
\begin{tabular}{ccccccc}
$\mathcal{G}(M_{\mathcal{N} \Rightarrow \boldsymbol{2}})^{[0]}$ & $\&$ & $\mathcal{G}(M_{\mathcal{N} \Rightarrow \boldsymbol{2}})^{[1]}$ & $\&$ & $\mathcal{G}(M_{\mathcal{N} \Rightarrow \boldsymbol{2}})^{[2]}$ & $\stackrel{\mathcal{A}(\mathit{zero?})_{\hat{q}_\mathscr{E}}^\circledS}{\Rightarrow}$ & $\mathcal{G}(M_{\mathcal{N} \Rightarrow \boldsymbol{2}})^{[3]}$ \\
\cline{1-7} 
&&&&&&$(\hat{q}_{\mathcal{N} \Rightarrow \boldsymbol{2}})^{[3]}$ \\
&&&&$(\hat{q}_{\mathcal{N} \Rightarrow \boldsymbol{2}})^{[2]}$&& \\
&&&&$(\mathsf{\hat{q}_E})^{[2]}$&& \\
&&&&&&$(\mathsf{\hat{q}_W})^{[3]}$ 
\end{tabular}
\end{center}

\begin{center}
\begin{tabular}{ccccccc}
$\mathcal{G}(M_{\mathcal{N} \Rightarrow \boldsymbol{2}})^{[0]}$ & $\&$ & $\mathcal{G}(M_{\mathcal{N} \Rightarrow \boldsymbol{2}})^{[1]}$ & $\&$ & $\mathcal{G}(M_{\mathcal{N} \Rightarrow \boldsymbol{2}})^{[2]}$ & $\stackrel{\mathcal{A}(\mathit{zero?})_{\hat{q}_\mathscr{E}}^\circledS}{\Rightarrow}$ & $\mathcal{G}(M_{\mathcal{N} \Rightarrow \boldsymbol{2}})^{[3]}$ \\
\cline{1-7} 
&&&&&&$(\hat{q}_{\mathcal{T}})^{[3]}$ \\
&&&&$(\hat{q}_{\mathcal{N} \Rightarrow \boldsymbol{2}})^{[2]}$&& \\
&&&&$(\mathsf{\hat{q}_E})^{[2]}$&& \\
&&&&&&$(\langle)^{[3]}$ \\
&&&&&&$(q_{\mathcal{T}})^{[3]}$ \\
&&&&&&$(\rangle)^{[3]}$ \\
&&&&&&$(q_{\mathcal{T}})^{[3]}$ \\
&&&&&&$(\sharp)^{[3]}$ \\
&&&&&&$(q_{\mathcal{T}})^{[3]}$ \\
&&&&&&$(\checkmark)^{[3]}$ \\
\end{tabular}
\end{center}

\begin{center}
\begin{tabular}{ccccccc}
$\mathcal{G}(M_{\mathcal{N} \Rightarrow \boldsymbol{2}})^{[0]}$ & $\&$ & $\mathcal{G}(M_{\mathcal{N} \Rightarrow \boldsymbol{2}})^{[1]}$ & $\&$ & $\mathcal{G}(M_{\mathcal{N} \Rightarrow \boldsymbol{2}})^{[2]}$ & $\stackrel{\mathcal{A}(\mathit{zero?})_{\hat{q}_\mathscr{E}}^\circledS}{\Rightarrow}$ & $\mathcal{G}(M_{\mathcal{N} \Rightarrow \boldsymbol{2}})^{[3]}$ \\
\cline{1-7} 
&&&&&&$(\hat{q}_{\mathcal{N} \Rightarrow \boldsymbol{2}})^{[3]}$ \\
&&&&$(\hat{q}_{\mathcal{N} \Rightarrow \boldsymbol{2}})^{[2]}$&& \\
&&&&$(\mathsf{no_W})^{[2]}$ ($(\mathsf{yes_W})^{[2]}$) && \\
&&&&&&$(\mathsf{tt_E}]^{[3]}$ ($[\mathsf{ff_E})^{[3]}$)
\end{tabular}
\end{center}
\begin{center}
\begin{tabular}{ccccccc}
$\mathcal{G}(M_{\mathcal{N} \Rightarrow \boldsymbol{2}})^{[0]}$ & $\&$ & $\mathcal{G}(M_{\mathcal{N} \Rightarrow \boldsymbol{2}})^{[1]}$ & $\&$ & $\mathcal{G}(M_{\mathcal{N} \Rightarrow \boldsymbol{2}})^{[2]}$ & $\stackrel{\mathcal{A}(\mathit{zero?})_{\hat{q}_\mathscr{E}}^\circledS}{\Rightarrow}$ & $\mathcal{G}(M_{\mathcal{N} \Rightarrow \boldsymbol{2}})^{[3]}$ \\
\cline{1-7} 
&&&&&&$(\hat{q}_{\mathcal{T}})^{[3]}$ \\
&&&&$(\hat{q}_{\mathcal{N} \Rightarrow \boldsymbol{2}})^{[2]}$&& \\
&&&&$(\mathsf{no_W})^{[2]}$ ($(\mathsf{yes_W})^{[2]}$)&& \\
&&&&&&$(\checkmark)^{[3]}$
\end{tabular}
\end{center}
We clearly have $\mathsf{st}(\mathcal{A}(\mathit{zero?})) = \mathit{zero?}$, and $\mathcal{A}(\mathit{zero?})$ is standard.
\end{example}

\begin{example}
Consider the \emph{\bfseries fixed-point strategy} $\mathit{fix}_A : ([A_{\mathscr{WW}}]_{\Lbag \boldsymbol{e'} \Rbag \hbar \Lbag \boldsymbol{e} \Rbag \hbar \boldsymbol{f}} \Rightarrow [A_{\mathscr{EW}}]_{\Lbag \boldsymbol{e'} \Rbag \hbar \boldsymbol{f}}) \Rightarrow [A_{\mathscr{E}}]_{\boldsymbol{f}}$ for each normalized game $A$ \cite{abramsky2000full,hyland2000full,mccusker1998games}.
We already described $\mathit{fix}_A$ informally in Example~\ref{ExInformalFix}; here we give a more detailed account, but again, it should suffice to just give diagrams for $\mathcal{A}(\mathit{fix}_A)_m^\circledS$ (where $m \in \mathcal{S}_{\mathit{fix}_A}$):

\begin{center}
\begin{tabular}{ccccccc}
$\mathcal{G}(M_{A \Rightarrow A \Rightarrow A})^{[0]}$ & $\&$ & $\mathcal{G}(M_{A \Rightarrow A \Rightarrow A})^{[1]}$ & $\&$ & $\mathcal{G}(M_{A \Rightarrow A \Rightarrow A})^{[2]}$ & $\stackrel{\mathcal{A}(\mathit{fix}_A)_m^\circledS}{\Rightarrow}$ & $\mathcal{G}(M_{A \Rightarrow A \Rightarrow A})^{[3]}$ \\
\cline{1-7} 
&&&&&&$(\hat{q}_{A \Rightarrow A \Rightarrow A})^{[3]}$ \\
&&&&$(\hat{q}_{A \Rightarrow A \Rightarrow A})^{[2]}$&& \\
&&&&$(\mathsf{a_{WW}})^{[2]}$ ($(\mathsf{a_{E}})^{[2]}$)&& \\
&&&&&&$(\mathsf{a_{EW}})^{[3]}$
\end{tabular}
\end{center}

\begin{center}
\begin{tabular}{ccccccc}
$\mathcal{G}(M_{A \Rightarrow A \Rightarrow A})^{[0]}$ & $\&$ & $\mathcal{G}(M_{A \Rightarrow A \Rightarrow A})^{[1]}$ & $\&$ & $\mathcal{G}(M_{A \Rightarrow A \Rightarrow A})^{[2]}$ & $\stackrel{\mathcal{A}(\mathit{fix}_A)_m^\circledS}{\Rightarrow}$ & $\mathcal{G}(M_{A \Rightarrow A \Rightarrow A})^{[3]}$ \\
\cline{1-7} 
&&&&&&$(\hat{q}_{A \Rightarrow A \Rightarrow A})^{[3]}$ \\
&&&&$(\hat{q}_{A \Rightarrow A \Rightarrow A})^{[2]}$&& \\
&&&&$(\mathsf{a_{EW}})^{[2]}$&& \\
$(\hat{q}_{A \Rightarrow A \Rightarrow A})^{[0]}$&&&&&& \\
$(\mathsf{a'_E})^{[0]}$ ($(\mathsf{a'_{WW}})^{[0]}$)&&&&&& \\
&&&&&&$(\mathsf{a_{E}})^{[3]}$ ($(\mathsf{a_{WW}})^{[3]}$)
\end{tabular}
\end{center}

\begin{center}
\begin{tabular}{ccccccc}
$\mathcal{G}(M_{A \Rightarrow A \Rightarrow A})^{[0]}$ & $\&$ & $\mathcal{G}(M_{A \Rightarrow A \Rightarrow A})^{[1]}$ & $\&$ & $\mathcal{G}(M_{A \Rightarrow A \Rightarrow A})^{[2]}$ & $\stackrel{\mathcal{A}(\mathit{fix}_A)_m^\circledS}{\Rightarrow}$ & $\mathcal{G}(M_{A \Rightarrow A \Rightarrow A})^{[3]}$ \\
\cline{1-7} 
&&&&&&$(\hat{q}_{\mathcal{T}})^{[3]}$ \\
&&&&$(\hat{q}_{A \Rightarrow A \Rightarrow A})^{[2]}$&& \\
&&&&$(\mathsf{a_E})^{[2]}$&& \\
&&&&&&$(\langle)^{[3]}$ \\
&&&&&&$(q_{\mathcal{T}})^{[3]}$ \\
&&&&&&$(\rangle)^{[3]}$ \\
&&&&&&$(q_{\mathcal{T}})^{[3]}$ \\
&&&&&&$(\sharp)^{[3]}$ \\
&&&&&&$(q_{\mathcal{T}})^{[3]}$ \\
&&&&$(q_{\mathcal{T}})^{[2]}$&& \\
&&&&$(e_1)^{[2]}$&& \\
&&&&&&$(e_1)^{[3]}$ \\
&&&&&&$(q_{\mathcal{T}})^{[3]}$ \\
&&&&$(q_{\mathcal{T}})^{[2]}$&& \\
&&&&$(e_2)^{[2]}$&& \\
&&&&&&$(e_2)^{[3]}$ \\
&&&&&$\vdots$& \\
&&&&&&$(q_{\mathcal{T}})^{[3]}$ \\
&&&&$(q_{\mathcal{T}})^{[2]}$&& \\
&&&&$(e_k)^{[2]}$&& \\
&&&&&&$(e_k)^{[3]}$ \\
&&&&&&$(q_{\mathcal{T}})^{[3]}$ \\
&&&&$(q_{\mathcal{T}})^{[2]}$&& \\
&&&&$(\checkmark)^{[2]}$&& \\
&&&&&&$(\checkmark)^{[3]}$
\end{tabular}
\end{center}

\begin{center}
\begin{tabular}{ccccccc}
$\mathcal{G}(M_{A \Rightarrow A \Rightarrow A})^{[0]}$ & $\&$ & $\mathcal{G}(M_{A \Rightarrow A \Rightarrow A})^{[1]}$ & $\&$ & $\mathcal{G}(M_{A \Rightarrow A \Rightarrow A})^{[2]}$ & $\stackrel{\mathcal{A}(\mathit{fix}_A)_m^\circledS}{\Rightarrow}$ & $\mathcal{G}(M_{A \Rightarrow A \Rightarrow A})^{[3]}$ \\
\cline{1-7} 
&&&&&&$(\hat{q}_{\mathcal{T}})^{[3]}$ \\
&&&&$(\hat{q}_{A \Rightarrow A \Rightarrow A})^{[2]}$&& \\
&&&&$(\mathsf{a_{EW}})^{[2]}$&& \\
$(\hat{q}_{A \Rightarrow A \Rightarrow A})^{[0]}$&&&&&& \\
$(\mathsf{a'_{E}})^{[0]}$&&&&&& \\
&&&&$(\hat{q}_{\mathcal{T}})^{[2]}$&& \\
&&&&$(\langle)^{[2]}$&& \\
&&&&$(q_{\mathcal{T}})^{[2]}$&& \\
&&&&$(\rangle)^{[2]}$&& \\
&&&&$(q_{\mathcal{T}})^{[2]}$&& \\
&&&&$(\sharp)^{[2]}$&& \\
&&&&$(\hat{q}_{A \Rightarrow A \Rightarrow A})^{[2]}$&& \\
&&&&$(\mathsf{a_{EW}})^{[2]}$&& \\
$(\hat{q}_{A \Rightarrow A \Rightarrow A})^{[0]}$&&&&&& \\
$(\mathsf{a'_{E}})^{[0]}$&&&&&& \\
&&&&$(q_{\mathcal{T}})^{[2]}$&& \\
&&&&$(e_1)^{[2]}$&& \\
&&&&&&$(e_1)^{[3]}$ \\
&&&&&&$(q_{\mathcal{T}})^{[3]}$ \\
&&&&$(\hat{q}_{A \Rightarrow A \Rightarrow A})^{[2]}$&& \\
&&&&$(\mathsf{a_{EW}})^{[2]}$&& \\
$(\hat{q}_{A \Rightarrow A \Rightarrow A})^{[0]}$&&&&&& \\
$(\mathsf{a'_{E}})^{[0]}$&&&&&& \\
&&&&$(q_{\mathcal{T}})^{[2]}$&& \\
&&&&$(e_2)^{[2]}$&& \\
&&&&&&$(e_2)^{[3]}$ \\
&&&&&$\vdots$& \\
&&&&&&$(q_{\mathcal{T}})^{[3]}$ \\
&&&&$(\hat{q}_{A \Rightarrow A \Rightarrow A})^{[2]}$&& \\
&&&&$(\mathsf{a_{EW}})^{[2]}$&& \\
$(\hat{q}_{A \Rightarrow A \Rightarrow A})^{[0]}$&&&&&& \\
$(\mathsf{a'_{E}})^{[0]}$&&&&&& \\
&&&&$(q_{\mathcal{T}})^{[2]}$&& \\
&&&&$(e_k)^{[2]}$&& \\
&&&&&&$(e_k)^{[3]}$ \\
&&&&&&$(q_{\mathcal{T}})^{[3]}$ \\
&&&&$(\hat{q}_{A \Rightarrow A \Rightarrow A})^{[2]}$&& \\
&&&&$(\mathsf{a_{EW}})^{[2]}$&& \\
$(\hat{q}_{A \Rightarrow A \Rightarrow A})^{[0]}$&&&&&& \\
$(\mathsf{a'_{E}})^{[0]}$&&&&&& \\
&&&&$(q_{\mathcal{T}})^{[2]}$&& \\
&&&&$(\checkmark)^{[2]}$&& \\
&&&&&&$(\checkmark)^{[3]}$
\end{tabular}
\end{center}
%where the iteration of $(\hat{q}_{A \Rightarrow A \Rightarrow A})_2 . (\mathsf{a_{WE}})_2 . (\hat{q}_{A \Rightarrow A \Rightarrow A})_0 . (\mathsf{a'_{W}})_0$ is the same purpose as in the case of $\mathit{case}_A$.

\begin{center}
\begin{tabular}{ccccccc}
$\mathcal{G}(M_{A \Rightarrow A \Rightarrow A})^{[0]}$ & $\&$ & $\mathcal{G}(M_{A \Rightarrow A \Rightarrow A})^{[1]}$ & $\&$ & $\mathcal{G}(M_{A \Rightarrow A \Rightarrow A})^{[2]}$ & $\stackrel{\mathcal{A}(\mathit{fix}_A)_m^\circledS}{\Rightarrow}$ & $\mathcal{G}(M_{A \Rightarrow A \Rightarrow A})^{[3]}$ \\
\cline{1-7} 
&&&&&&$(\hat{q}_{\mathcal{T}})^{[3]}$ \\
&&&&$(\hat{q}_{A \Rightarrow A \Rightarrow A})^{[2]}$&& \\
&&&&$(\mathsf{a_{EW}})^{[2]}$&& \\
$(\hat{q}_{A \Rightarrow A \Rightarrow A})^{[0]}$&&&&&& \\
$(\mathsf{a'_{WW}})^{[0]}$&&&&&& \\
&&&&$(\hat{q}_{\mathcal{T}})^{[2]}$&& \\
&&&&$(\langle)^{[2]} @0$&& \\
&&&&$(q_{\mathcal{T}})^{[2]}$&& \\
&&&&$(\langle)^{[2]} @1$&& \\
&&&&&&$(\langle)^{[3]} @0$ \\
&&&&&$\vdots$& \\
&&&&&&$(q_{\mathcal{T}})^{[3]}$ \\
&&&&$(q_{\mathcal{T}})^{[2]}$&& \\
&&&&$(\rangle)^{[2]} @1$&& \\
&&&&&&$(\rangle)^{[3]} @0$ \\
&&&&&&$(q_{\mathcal{T}})^{[3]}$ \\
&&&&$(q_{\mathcal{T}})^{[2]}$&& \\
&&&&$(\sharp)^{[2]}$&& \\
&&&&&&$(\sharp)^{[3]}$ \\
&&&&&&$(q_{\mathcal{T}})^{[3]}$ \\
&&&&$(q_{\mathcal{T}})^{[2]}$&& \\
&&&&$(\langle)^{[2]} @1$&& \\
&&&&&&$(\langle)^{[3]} @0$ \\
&&&&&$\vdots$& \\
&&&&&&$(q_{\mathcal{T}})^{[3]}$ \\
&&&&$(q_{\mathcal{T}})^{[2]}$&& \\
&&&&$(\rangle)^{[2]} @1$&& \\
&&&&&&$(\rangle)^{[3]} @0$ \\
&&&&&&$(q_{\mathcal{T}})^{[3]}$ \\
&&&&$(q_{\mathcal{T}})^{[2]}$&& \\
&&&&$(\rangle)^{[2]}@0$&& \\
&&&&$(q_{\mathcal{T}})^{[2]}$&& \\
&&&&$(\sharp)^{[2]}$&& \\
&&&&&&$(\sharp)^{[3]}$ \\
&&&&&&$(q_{\mathcal{T}})^{[3]}$ \\
&&&&$(q_{\mathcal{T}})^{[2]}$&& \\
&&&&$(e)^{[2]}$&& \\
&&&&&&$(e)^{[3]}$ \\
&&&&&$\vdots$& \\
%&&&&&&$(q_{\mathcal{T}})_3$ \\
%&&&&$(q_{\mathcal{T}})_2$&& \\
%&&&&$(e_k)_2$&& \\
%&&&&&&$(e_k)_3$ \\
&&&&&&$(q_{\mathcal{T}})^{[3]}$ \\
&&&&$(q_{\mathcal{T}})^{[2]}$&& \\
&&&&$(\checkmark)^{[2]}$&& \\
&&&&&&$(\checkmark)^{[3]}$ \\
\end{tabular}
\end{center}

\begin{center}
\begin{tabular}{ccccccc}
$\mathcal{G}(M_{A \Rightarrow A \Rightarrow A})^{[0]}$ & $\&$ & $\mathcal{G}(M_{A \Rightarrow A \Rightarrow A})^{[1]}$ & $\&$ & $\mathcal{G}(M_{A \Rightarrow A \Rightarrow A})^{[2]}$ & $\stackrel{\mathcal{A}(\mathit{fix}_A)_m^\circledS}{\Rightarrow}$ & $\mathcal{G}(M_{A \Rightarrow A \Rightarrow A})^{[3]}$ \\
\cline{1-7} 
&&&&&&$(\hat{q}_{\mathcal{T}})^{[3]}$ \\
&&&&$(\hat{q}_{A \Rightarrow A \Rightarrow A})^{[2]}$&& \\
&&&&$(\mathsf{a_{WW}})^{[2]}$&& \\
&&&&&&$(\langle)^{[3]} @0$ \\
&&&&&&$(q_{\mathcal{T}})^{[3]}$ \\
&&&&$(\hat{q}_{\mathcal{T}})^{[2]}$&& \\
&&&&$(\langle)^{[2]} @0$&& \\
&&&&&&$(\langle)^{[3]} @1$ \\
&&&&&$\vdots$& \\
&&&&&&$(q_{\mathcal{T}})^{[3]}$ \\
&&&&$(q_{\mathcal{T}})^{[2]}$&& \\
&&&&$(\rangle)^{[2]} @0$&& \\
&&&&&&$(\rangle)^{[3]} @1$ \\
&&&&&&$(q_{\mathcal{T}})^{[3]}$ \\
&&&&$(q_{\mathcal{T}})^{[2]}$&& \\
&&&&$(\sharp)^{[2]}$&& \\
&&&&&&$(\sharp)^{[3]}$ \\
&&&&&&$(q_{\mathcal{T}})^{[3]}$ \\
&&&&$(q_{\mathcal{T}})^{[2]}$&& \\
&&&&$(\langle)^{[2]} @0$&& \\
&&&&&&$(\langle)^{[3]} @1$ \\
&&&&&$\vdots$& \\
&&&&&&$(q_{\mathcal{T}})^{[3]}$ \\
&&&&$(q_{\mathcal{T}})^{[2]}$&& \\
&&&&$(\rangle)^{[2]} @0$&& \\
&&&&&&$(\rangle)^{[3]} @1$ \\
&&&&&&$(q_{\mathcal{T}})^{[3]}$ \\
&&&&&&$(\rangle)^{[3]} @0$ \\
&&&&&&$(q_{\mathcal{T}})^{[3]}$ \\
&&&&$(q_{\mathcal{T}})^{[2]}$&& \\
&&&&$(e_1)^{[2]}$&& \\
&&&&&&$(e_1)^{[3]}$ \\
&&&&&&$(q_{\mathcal{T}})^{[3]}$ \\
&&&&$(q_{\mathcal{T}})^{[2]}$&& \\
&&&&$(e_2)^{[2]}$&& \\
&&&&&&$(e_2)^{[3]}$ \\
&&&&&$\vdots$& \\
&&&&&&$(q_{\mathcal{T}})^{[3]}$ \\
&&&&$(q_{\mathcal{T}})^{[2]}$&& \\
&&&&$(e_k)^{[2]}$&& \\
&&&&&&$(e_k)^{[3]}$ \\
&&&&&&$(q_{\mathcal{T}})^{[3]}$ \\
&&&&$(q_{\mathcal{T}})^{[2]}$&& \\
&&&&$(\checkmark)^{[2]}$&& \\
&&&&&&$(\checkmark)^{[3]}$ \\
\end{tabular}
\end{center}
where $(\langle)_i@d$ (resp. $(\rangle)_i@d$) in the diagrams indicates that $(\langle)_i$ (resp. $(\rangle)_i$) is of depth $d$.
We have omitted the iteration of $(\hat{q}_{A \Rightarrow A \Rightarrow})^{[2]} . (\mathsf{a_{EW}})^{[2]} (\hat{q}_{A \Rightarrow A \Rightarrow})^{[0]} . (\mathsf{a'_{WW}})^{[0]}$ and $(\hat{q}_{A \Rightarrow A \Rightarrow})^{[2]} . (\mathsf{a_{WW}})^{[2]}$ respectively in the last two diagrams for the lack of space. Also, the vertical dots abbreviate `copy-cat' between $(\_)^{[3]}$- and $(\_)^{[2]}$-moves.
With m-views, there is clearly a finite table $\mathcal{A}(\mathit{fix}_A)_m$ that implements the instruction strategy $\mathcal{A}(\mathit{fix}_A)^\circledS_m$.
It is then not hard to see that $\mathsf{st}(\mathcal{A}(\mathit{fix}_A)) = \mathit{fix}_A$ holds, showing that $\mathit{fix}_A$ is viable.
Also, it is easy to see that $\mathcal{A}(\mathit{fix}_A)$ is standard. 
\end{example}

\subsection{Turing completeness}
\label{TuringCompleteness}
In the last two sections, we have seen through examples that each `atomic' strategy \emph{definable} in PCF \cite{abramsky1999game} is viable, and it is realized by a standard st-algorithm. 
In addition, Theorem~\ref{ThmPreservationOfViability} shows that constructions on strategies preserve this property.
From this fact, our main theorem immediately follows:
\begin{theorem}[Main theorem]
\label{ThmMainTheorem}
Every normalized strategy $\sigma : S_\sigma$ definable in PCF has a viable strategy $\phi_\sigma : D_\sigma$ that satisfies $\sigma = \mathcal{H}^\omega(\phi_\sigma) : \mathcal{H}^\omega(D_\sigma) \trianglelefteqslant S_\sigma$.
\end{theorem}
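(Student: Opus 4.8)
The plan is to proceed by induction on the syntax of PCF, more precisely on the structure of the PCF B\"{o}hm tree (the $\eta$-long normal form) that names $\sigma$. Recall from the discussion preceding Section~\ref{TuringCompleteness} that the game model of \cite{abramsky1999game} is \emph{fully complete} for PCF, so every normalized strategy $\sigma : S_\sigma$ definable in PCF is built, via a finite derivation, from a fixed stock of `atomic' strategies --- derelictions $\mathit{der}_A$, the zero strategy $\mathit{zero}_A$, successor $\mathit{succ}$, predecessor $\mathit{pred}$, the ifzero strategy $\mathit{zero?}$, case strategies $\mathit{case}_A$, and fixed-point strategies $\mathit{fix}_A$ --- by means of the constructions tensor $\otimes$, pairing $\langle \_, \_ \rangle$, promotion $(\_)^\dagger$, concatenation $\ddagger$ (equivalently composition, since $;$ is $\ddagger$ followed by $\mathcal{H}^\omega$), currying $\Lambda$ and uncurrying $\Lambda^\circleddash$. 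The induction hypothesis to carry is the strengthened statement: every strategy $\phi_\sigma$ arising in this derivation (before applying $\mathcal{H}^\omega$) is viable \emph{and} is realized by a \emph{standard} st-algorithm in the sense of Definition~\ref{DefStandardAlgorithms}.

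First I would handle the base cases: for each atomic strategy of PCF I invoke the explicit standard st-algorithms exhibited in Section~\ref{Examples} (and the examples in Section~\ref{SubsectionViableStrategies}) --- $\mathcal{A}(\mathit{der}_A)$, $\mathcal{A}(\mathit{zero}_A)$, $\mathcal{A}(\mathit{succ})$, $\mathcal{A}(\mathit{pred})$, $\mathcal{A}(\mathit{zero?})$, $\mathcal{A}(\mathit{case}_A)$ and $\mathcal{A}(\mathit{fix}_A)$ --- each of which was verified there to be standard and to realize the corresponding strategy. For the inductive step, suppose $\phi_\sigma$ is obtained from strategies $\phi_1, \dots, \phi_n$ (each viable via standard st-algorithms, by the induction hypothesis) by one of the admissible constructions. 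Then Theorem~\ref{ThmPreservationOfViability} applies verbatim: it asserts precisely that viable strategies are closed under $\otimes$, $\langle \_, \_ \rangle$, $(\_)^\dagger$, $\ddagger$, $\Lambda$ and $\Lambda^\circleddash$ \emph{when the underlying st-algorithms are standard}, and that standardness is preserved. Hence $\phi_\sigma$ is again viable with a standard st-algorithm, closing the induction and producing a viable $\phi_\sigma : D_\sigma$.

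It then remains to verify the equation $\sigma = \mathcal{H}^\omega(\phi_\sigma)$ and the subgame relation $\mathcal{H}^\omega(D_\sigma) \trianglelefteqslant S_\sigma$. These are bookkeeping facts about the hiding operation established in \cite{yamada2016dynamic} and quoted in the excerpt: the hiding functor $\mathcal{H}^\omega$ commutes with the constructions on games and strategies up to the subgame relation (Lemmas~\ref{LemHidingLemmaOnGames} and \ref{LemHidingLemmaOnStrategies}, together with Theorems~\ref{ThmClosureOfGamesUnderHiding} and \ref{ThmHidingTheorem}), and $\mathcal{H}^\omega(\phi \ddagger \psi) = \mathcal{H}^\omega(\phi) ; \mathcal{H}^\omega(\psi)$ recovers ordinary composition. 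So one shows by the same induction that $\phi_\sigma$ is a `dynamic refinement' of $\sigma$: at each step the hiding of the refined construction equals (up to $\trianglelefteqslant$) the static construction applied to the hidings, and the atomic strategies of Section~\ref{Examples} are by construction dynamic refinements of the standard atomic PCF strategies. Since $\sigma$ is already normalized, $\mathcal{H}^\omega(\sigma) = \sigma$, and the claim follows.

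The genuine content of the theorem is not in this final assembly --- which is essentially a routine induction once the pieces are in place --- but in the two results it rests on, namely the battery of explicit standard st-algorithms for the atomic strategies (especially $\mathcal{A}(\mathit{fix}_A)$, whose handling of unboundedly many outer tags via m-views is the crux) and, above all, Theorem~\ref{ThmPreservationOfViability}, where the delicate case is promotion $(\_)^\dagger$: one must show that a standard st-algorithm for $\varphi$ can be transformed into one for $\varphi^\dagger$ that, digit by digit and with the aid of m-views, rebuilds the deeply nested outer tags $\Lbag \Lbag \boldsymbol{e^{(0)}} \Rbag \hbar \Lbag \boldsymbol{e} \Rbag \Rbag \hbar \boldsymbol{f}$ from the shallower tags $\Lbag \boldsymbol{e} \Rbag \hbar \boldsymbol{f}$ manipulated by $\varphi$, while remaining finitary and standard. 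Thus the main obstacle for the present statement has already been discharged in the preceding subsections; what is left here is to confirm that PCF-definable strategies are exactly those generated by the atomic strategies under the listed constructions, and to track the hiding equations through the induction.
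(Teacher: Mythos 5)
Your proposal is correct and follows essentially the same route as the paper: enumerate the PCF-definable strategies as those generated from the atomic strategies (each realized by an explicit standard st-algorithm in the preceding sections) under the listed constructions, apply Theorem~\ref{ThmPreservationOfViability} for the inductive step, and track the equations $\mathcal{H}^\omega(\phi_\sigma) = \sigma$ and $\mathcal{H}^\omega(D_\sigma) \trianglelefteqslant S_\sigma$ through the same induction via the hiding lemmas. The only cosmetic difference is that the paper's grammar for $\mathcal{PCF}$ uses just $\Lambda$, $\langle\_,\_\rangle$ and $(\_)^\dagger \ddagger (\_)$ rather than the full list of constructions you mention, but this does not affect the argument.
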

\begin{proof}
First, see \cite{abramsky1999game} for normalized strategies definable in the language PCF.
We enumerate these normalized strategies by the following construction of a set $\mathcal{PCF}$ (which also contains strategies not definable in PCF):
\begin{enumerate}

\item $(\sigma : S_\sigma) \in \mathcal{PCF}$ if $\sigma : S_\sigma$ is `atomic', i.e., $\mathit{der}_A : A \Rightarrow A$, $\mathit{zero}_A : A \Rightarrow \mathcal{N}$, $\mathit{succ} : \mathcal{N} \Rightarrow \mathcal{N}$, $\mathit{pred} : \mathcal{N} \Rightarrow \mathcal{N}$, $\mathit{zero}? : \mathcal{N} \Rightarrow \boldsymbol{2}$, $\mathit{case}_{\mathcal{N}} : A \Rightarrow A \Rightarrow \boldsymbol{2} \Rightarrow A$ or $\mathit{fix}_A : (A \Rightarrow A) \Rightarrow A$, where $A$ is a normalized game generated from $\mathcal{N}$ by $\&$ and $\Rightarrow$ (n.b. the construction of $A$ is `orthogonal' to that of $\sigma : S_\sigma$);

\item $(\Lambda(\sigma) : A \Rightarrow (B \Rightarrow C)) \in \mathcal{PCF}$ if $\sigma \in \mathcal{PCF}$ and $S_\sigma = A \& B \Rightarrow C$ for some normalized games $A$, $B$ and $C$;

\item $(\langle \varphi, \psi \rangle : C \Rightarrow A \& B) \in \mathcal{PCF}$ if $\varphi, \psi \in \mathcal{PCF}$, $S_\varphi = C \Rightarrow A$ and $S_\psi = C \Rightarrow B$ for some normalized games $A$, $B$ and $C$;

\item $(\iota^\dagger ; \kappa : A \Rightarrow C) \in \mathcal{PCF}$ if $\iota, \kappa \in \mathcal{PCF}$, $S_\iota = A \Rightarrow B$ and $S_\kappa = B \Rightarrow C$ for some normalized games $A$, $B$ and $C$

\end{enumerate}
where since \emph{projections} and \emph{evaluation} are derelictions up to inner tags, we count them as `atomic' ones as well.
Note that the strategy $\underline{n} : T \Rightarrow \mathcal{N}$ (i.e., the one defined below Example~\ref{ExLazyNaturalNumbers}) may be obtained by $\mathit{zero}^\dagger ; \underbrace{\mathit{succ}^\dagger ; \mathit{succ}^\dagger ; \cdots ; \mathit{succ}}_n$ for each $n \in \mathbb{N}$, which interprets numerals.

We assign a strategy $\phi_\sigma$ to each $\sigma \in \mathcal{PCF}$ along with the construction: 1. $D_\sigma \stackrel{\mathrm{df. }}{=} S_\sigma$ and $\phi_\sigma \stackrel{\mathrm{df. }}{=} \sigma$ if $\sigma$ is `atomic'; 2. $D_{\Lambda(\sigma)} \stackrel{\mathrm{df. }}{=} \Lambda(D_\sigma)$ and $\phi_{\Lambda(\sigma)} \stackrel{\mathrm{df. }}{=} \Lambda(\phi_\sigma)$; 3. $D_{\langle \varphi, \psi \rangle} \stackrel{\mathrm{df. }}{=} \langle D_\varphi, D_\psi \rangle$ and $\phi_{\langle \varphi, \psi \rangle} \stackrel{\mathrm{df. }}{=} \langle \phi_\varphi, \phi_\psi \rangle$; 4. $D_{\iota^\dagger ; \kappa} \stackrel{\mathrm{df. }}{=} D_\iota^\dagger \ddagger D_\kappa$ and $\phi_{\iota^\dagger ; \kappa}  \stackrel{\mathrm{df. }}{=} \phi_\iota^\dagger \ddagger \phi_\kappa$.

We have shown in the previous sections that `atomic' strategies are all viable, realized by standard st-algorithms.  
Moreover, the operations $\Lambda$, $\langle \_, \_ \rangle$ and $(\_)^\dagger \ddagger (\_)$ on strategies have been shown to preserve viability of strategies and standardness of the underlying st-algorithms in Theorem~\ref{ThmPreservationOfViability}.
Therefore, we may conclude that $\phi_\sigma$ is viable (and realized by a standard st-algorithm) for each $\sigma \in \mathcal{PCF}$.

It remains to show $\mathcal{H}^\omega(\phi_\sigma) = \sigma$ and $\mathcal{H}^\omega(D_\sigma) \trianglelefteqslant S_\sigma$ for all $\sigma \in \mathcal{PCF}$; we show it by induction on the construction of $\sigma$:
%Next, let us write $\cong$ and $\trianglelefteqslant_{\cong}$ for the equality of games and the subgame relation, respectively, \emph{up to inner tags}. We may show it by induction:
\begin{enumerate}

\item $\mathcal{H}^\omega(\phi_\sigma) = \mathcal{H}^\omega(\sigma) = \sigma$ and $\mathcal{H}^\omega(D_\sigma) = \mathcal{H}^\omega(S_\sigma) = S_\sigma$ if $\sigma$ is `atomic' since in this case $\sigma$ and $S_\sigma$ are both normalized;

\item $\mathcal{H}^\omega(\phi_{\Lambda(\sigma)}) = \mathcal{H}^\omega(\Lambda(\phi_\sigma)) = \Lambda(\mathcal{H}^\omega(\phi_\sigma)) = \Lambda(\sigma)$ and $\mathcal{H}^\omega(D_{\Lambda(\sigma)}) = \mathcal{H}^\omega(\Lambda(D_\sigma)) \trianglelefteqslant \Lambda (\mathcal{H}^\omega(D_\sigma)) \trianglelefteqslant \Lambda(S_\sigma)$ by the induction hypothesis, Theorem~\ref{ThmConstructionsOnGames} and Lemmata~\ref{LemHidingLemmaOnGames} and \ref{LemHidingLemmaOnStrategies};

\item $\mathcal{H}^\omega(\phi_{\langle \varphi, \psi \rangle}) =  \mathcal{H}^\omega(\langle \phi_\varphi, \phi_\psi \rangle) = \langle \mathcal{H}^\omega(\phi_\varphi), \mathcal{H}^\omega(\phi_\psi) \rangle = \langle \varphi, \psi \rangle$ and $\mathcal{H}^\omega(D_{\langle \varphi, \psi \rangle}) = \mathcal{H}^\omega(\langle D_\varphi, D_\psi \rangle) \trianglelefteqslant \langle \mathcal{H}^\omega(D_\varphi), \mathcal{H}^\omega(D_\psi) \rangle \trianglelefteqslant \langle S_\varphi, S_\psi \rangle = S_{\langle \varphi, \psi \rangle}$ by the induction hypothesis, Theorem~\ref{ThmConstructionsOnGames} and Lemmata~\ref{LemHidingLemmaOnGames} and \ref{LemHidingLemmaOnStrategies};

\item $\mathcal{H}^\omega(\phi_{\iota^\dagger ; \kappa}) = \mathcal{H}^\omega(\phi_\iota^\dagger \ddagger \phi_\kappa) = \mathcal{H}^\omega(\phi_\iota)^\dagger ; \mathcal{H}^\omega(\phi_\kappa) = \iota^\dagger ; \kappa$ and $\mathcal{H}^\omega(D_{\iota^\dagger ; \kappa}) = \mathcal{H}^\omega(D_{\iota^\dagger} \ddagger D_\kappa) = \mathcal{H}^1(\mathcal{H}^\omega(D_{\iota^\dagger}) \ddagger \mathcal{H}^\omega(D_\kappa)) \trianglelefteqslant \mathcal{H}^1(S_{\iota^\dagger} \ddagger S_\kappa) \trianglelefteqslant A \Rightarrow C = S_{\iota^\dagger ; \kappa}$ by the induction hypothesis, Theorem~\ref{ThmConstructionsOnGames} and Lemmata~\ref{LemHidingLemmaOnGames} and \ref{LemHidingLemmaOnStrategies}

\end{enumerate}
which completes the proof.
\end{proof}

Since PCF is \emph{Turing complete} \cite{gunter1992semantics,longley2015higher}, this result particularly implies:
\begin{corollary}[Turing completeness]
\label{CoroTuringCompleteness}
Every partial recursive function $f : \mathbb{N}^k \rightharpoonup \mathbb{N}$, where $k \in \mathbb{N}$, has a viable strategy $\phi_f : D_f$ such that $\mathcal{H}^\omega(D_f) \trianglelefteqslant \mathcal{N}^k \Rightarrow \mathcal{N}$ and $\mathcal{H}^\omega(\langle \underline{n_1}, \underline{n_2}, \dots, \underline{n_k} \rangle^\dagger \ddagger \phi_f) \simeq \underline{f(n_1, n_2, \dots, n_k)}$ for all $(n_1, n_2, \dots, n_k) \in \mathbb{N}^k$, where the strategy $\underline{n} : T \Rightarrow \mathcal{N}$ for each $n \in \mathbb{N}$ is the one below Example~\ref{ExLazyNaturalNumbers} up to tags.\footnote{Recall that $\langle \_, \_ \rangle$, $(\_)^\dagger$ and $\ddagger$ are \emph{pairing}, \emph{promotion} and \emph{concatenation} of strategies defined in Section~\ref{ConstructionsOnStrategies}.}
\end{corollary}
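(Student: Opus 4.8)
The plan is to derive Corollary~\ref{CoroTuringCompleteness} as a near-immediate consequence of the Main Theorem (Theorem~\ref{ThmMainTheorem}) together with the standard fact that PCF is Turing complete. First I would recall that PCF contains, for each partial recursive $f : \mathbb{N}^k \rightharpoonup \mathbb{N}$, a closed term $\mathsf{M}_f$ of type $\mathsf{Nat} \Rightarrow \mathsf{Nat} \Rightarrow \dots \Rightarrow \mathsf{Nat}$ (with $k$ arguments) that computes $f$ in the sense that $\mathsf{M}_f \, \underline{n_1} \, \underline{n_2} \dots \underline{n_k}$ evaluates to $\underline{f(n_1,\dots,n_k)}$ when $f(n_1,\dots,n_k)\!\downarrow$ and diverges otherwise; this is the classical Turing-completeness of PCF \cite{gunter1992semantics,longley2015higher}. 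Uncurrying, there is a closed term of type $\mathsf{Nat}^k \Rightarrow \mathsf{Nat}$ whose game-semantic denotation is a normalized strategy $\sigma_f : \mathcal{N}^k \Rightarrow \mathcal{N}$ definable in PCF, and which is the standard (hiding-normalized) game model of the term.

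Next I would invoke Theorem~\ref{ThmMainTheorem} applied to $\sigma = \sigma_f$: it yields a viable strategy $\phi_f : D_f$ with $\mathcal{H}^\omega(D_f) \trianglelefteqslant \mathcal{N}^k \Rightarrow \mathcal{N}$ and $\mathcal{H}^\omega(\phi_f) = \sigma_f$. Set this $\phi_f$ and $D_f$ as the witnesses required by the corollary. It then remains to compute the behaviour of $\phi_f$ on inputs, i.e.\ to verify the equation $\mathcal{H}^\omega(\langle \underline{n_1}, \dots, \underline{n_k} \rangle^\dagger \ddagger \phi_f) \simeq \underline{f(n_1,\dots,n_k)}$. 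For this I would use the decomposition of composition into \emph{concatenation plus hiding} (Definition~\ref{DefConcatenationAndCompositionOfStrategies}) together with the Hiding Lemma on strategies (Lemma~\ref{LemHidingLemmaOnStrategies}): since hiding commutes with concatenation in the appropriate sense, $\mathcal{H}^\omega(\langle \underline{n_1},\dots,\underline{n_k}\rangle^\dagger \ddagger \phi_f) = \mathcal{H}^\omega(\langle \underline{n_1},\dots,\underline{n_k}\rangle)^\dagger ; \mathcal{H}^\omega(\phi_f) = \langle \underline{n_1},\dots,\underline{n_k}\rangle^\dagger ; \sigma_f$ (the strategies $\underline{n_i} : T \Rightarrow \mathcal{N}$ being already normalized). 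The latter composition, carried out entirely in the static category $\mathcal{G}$, is exactly the game-semantic interpretation of substituting numerals into the PCF term, and by soundness/adequacy of the game model of PCF it equals $\underline{f(n_1,\dots,n_k)}$ when the value is defined and the empty strategy (diverges) otherwise — hence the Kleene equality $\simeq$.

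I would organise the write-up in this order: (i) quote Turing-completeness of PCF to obtain the defining PCF term and its normalized denotation $\sigma_f$; (ii) apply Theorem~\ref{ThmMainTheorem} to extract $\phi_f$, $D_f$, recording the subgame relation on $\mathcal{H}^\omega(D_f)$; (iii) reduce the input-output equation for $\phi_f$ to a static-category computation via the concatenation/hiding decomposition and Lemma~\ref{LemHidingLemmaOnStrategies}; (iv) conclude by appealing to computational adequacy of the game model of PCF, which identifies $\langle\underline{n_1},\dots,\underline{n_k}\rangle^\dagger;\sigma_f$ with $\underline{f(n_1,\dots,n_k)}$.

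The main obstacle I anticipate is step (iv), namely being precise about the computational-adequacy link between the syntactic evaluation of PCF terms and the game-semantic composition $\langle\underline{n_1},\dots,\underline{n_k}\rangle^\dagger;\sigma_f$, since the paper has set up the model only to the extent of definability of strategies from PCF terms, not a full soundness-and-adequacy development. I would handle this either by citing the standard full-abstraction/adequacy results for HO-style game models of PCF \cite{hyland2000full,abramsky1999game} (adjusted to the lazy natural number game $\mathcal{N}$, which is interdefinable with the usual flat-natural-number object by the viable strategies $\mathit{succ}$, $\mathit{zero}$, $\mathit{zero?}$, etc.\ exhibited in Section~\ref{Examples}), or, more self-containedly, by tracing through the explicit construction of $\sigma_f$ inside $\mathcal{PCF}$ from Theorem~\ref{ThmMainTheorem}'s proof and checking the numeral-composition equation by induction on that construction. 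A secondary, routine point is ensuring the `tags' match up so that $\langle\underline{n_1},\dots,\underline{n_k}\rangle^\dagger$ composes with $\phi_f$ on the nose; this is the kind of inner/outer-tag bookkeeping already handled uniformly in Sections~\ref{OnTags} and \ref{ConstructionsOnGames}, so I would dispatch it with a remark rather than a full calculation.
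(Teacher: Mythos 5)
Your proposal follows essentially the same route as the paper's own proof: take a PCF term implementing $f$, let $\sigma_f$ be its normalized game-semantic denotation, apply Theorem~\ref{ThmMainTheorem} to obtain $\phi_f$ and $D_f$, and then use the concatenation-plus-hiding decomposition (Lemma~\ref{LemHidingLemmaOnStrategies}) to reduce the input-output equation to $\langle \underline{n_1},\dots,\underline{n_k}\rangle^\dagger ; \sigma_f \simeq \underline{f(n_1,\dots,n_k)}$. The only difference is that you are more explicit than the paper about the final adequacy step (the paper simply writes the last Kleene equality without comment), which is a reasonable refinement rather than a divergence.
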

\begin{proof}
Let $\mathsf{x_1 : N, x_2 : N, \dots, x_k : N \vdash F : N}$ be a term of PCF that implements a given partial recursive function $f : \mathbb{N}^k \rightharpoonup \mathbb{N}$, i.e., $\mathsf{F [n_1/x_1, n_2/x_2, \dots, n_k/x_k]}$ evaluates to $\mathsf{f(n_1, n_2, \dots, n_k)}$ if $f(n_1, n_2, \dots, n_k)$ is defined and diverges otherwise, for all $n_i \in \mathbb{N}$ ($i = 1, 2, \dots, k$), where $\mathsf{n : N}$ is the $n^{\text{th}}$-numeral, and $\mathsf{F [n_1/x_1, n_2/x_2, \dots, n_k/x_k]}$ is the result of substituting $\mathsf{n_i}$ for $\mathsf{x_i}$ in $\mathsf{F}$ for $i = 1, 2, \dots, k$ (see, e.g., \cite{gunter1992semantics,longley2015higher} for how to construct $\mathsf{F}$ from $f$).
Then, there exists a normalized strategy $\sigma_f : \mathcal{N}^k \Rightarrow \mathcal{N}$ in $\mathcal{PCF}$ that interprets $\mathsf{F}$ in the game semantics of PCF in \cite{abramsky1999game}.
By Theorem~\ref{ThmMainTheorem}, there exists a viable strategy $\phi_f : D_f$ such that $\mathcal{H}^\omega(\phi_f) = \sigma_f$ and $\mathcal{H}^\omega(D_f) \trianglelefteqslant \mathcal{N}^k \Rightarrow \mathcal{N}$.
Hence, $\mathcal{H}^\omega(\langle \underline{n_1}, \underline{n_2}, \dots, \underline{n_k} \rangle^\dagger \ddagger \phi_f) = \mathcal{H}^\omega(\langle \underline{n_1}, \underline{n_2}, \dots, \underline{n_k} \rangle^\dagger) ; \mathcal{H}^\omega(\phi_f) = \langle \underline{n_1}, \underline{n_2}, \dots, \underline{n_k} \rangle^\dagger ; \sigma_f \simeq \underline{f(n_1, n_2, \dots, n_k)}$.
\end{proof}

\begin{remark}
Crucially, there is clearly a partial recursive function $f : \mathbb{N}^k \rightharpoonup \mathbb{N}$ such that $\sigma_f$ is \emph{not} viable (but $\phi_f$ is viable) by the finitary nature of tables for st-algorithms.
\end{remark}

As our game-semantic model of computation is Turing complete, some of the well-known theorems in computability theory \cite{cutland1980computability,rogers1967theory} are immediately generalized (in the sense that they are not restricted to computation on natural numbers):
\begin{corollary}[Generalized smn-theorem]
\label{CoroGeneralizedSMN}
If strategies $\sigma_i : T \Rightarrow A_i$, $i = 1, 2, \dots, n$, and $\phi : D$ with $\mathcal{H}^\omega(D) \trianglelefteqslant A_1 \& A_2 \& \dots \& A_n \& B_1 \& B_2 \& \dots \& B_m \Rightarrow C$ are realized by standard st-algorithms, then we may compute a standard st-algorithm that realizes a viable strategy $\phi_{\sigma_1, \sigma_2, \dots, \sigma_n} : D_{A_1, A_2, \dots, A_n}$ such that:

\begin{enumerate}

\item $\mathcal{H}^\omega(\phi_{\sigma_1, \sigma_2, \dots, \sigma_n}) : \mathcal{H}^\omega(D_{A_1, A_2, \dots, A_n}) \trianglelefteqslant T \& B_1 \& B_2 \& \dots \& B_m \Rightarrow C$; 

\item $\langle \{ \boldsymbol{\epsilon} \}, \tau_1, \tau_2, \dots, \tau_m \rangle^\dagger \ddagger \phi_{\sigma_1, \sigma_2, \dots, \sigma_n} \simeq \langle \sigma_1, \sigma_2, \dots, \sigma_n, \tau_1, \tau_2, \dots, \tau_m \rangle^\dagger \ddagger \phi$ for any strategies $\tau_j : T \Rightarrow B_j$, $j = 1, 2, \dots, m$.

\end{enumerate}
\end{corollary}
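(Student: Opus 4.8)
The statement is the game-semantic form of the smn-theorem: from st-algorithms for $\phi$ and for the $\sigma_i$ we are to compute, uniformly, an st-algorithm for the strategy obtained from $\phi$ by ``freezing'' its first $n$ arguments at $\sigma_1,\dots,\sigma_n$. The plan is to exhibit $\phi_{\sigma_1,\dots,\sigma_n}$ as an explicit composite of the given data and of the ``atomic'' viable strategies of Section~\ref{Examples}, and then to lean on the fact that the proof of Theorem~\ref{ThmPreservationOfViability} is itself an effective, finitary procedure on st-algorithms.

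First I would introduce the auxiliary strategy
\[
\rho_{\sigma_1,\dots,\sigma_n}\stackrel{\mathrm{df. }}{=}\langle\,\sigma_1\circ\mathsf{w},\dots,\sigma_n\circ\mathsf{w},\ \mathit{der}_{B_1}\circ\mathsf{p}_1,\dots,\mathit{der}_{B_m}\circ\mathsf{p}_m\,\rangle
\]
on a game whose $\mathcal{H}^\omega$ is a subgame of $T\,\&\,B_1\,\&\,\cdots\,\&\,B_m\Rightarrow A_1\,\&\,\cdots\,\&\,A_n\,\&\,B_1\,\&\,\cdots\,\&\,B_m$, where $\mathsf{w}$ is the unique strategy into the terminal game $T$ and $\mathsf{p}_j$ is the $j$-th $B$-projection; intuitively $\rho_{\sigma_1,\dots,\sigma_n}$ plays $\sigma_i$ in the $i$-th $A$-component and copy-cats between the left and right occurrences of each $B_j$. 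Then I would set $\phi_{\sigma_1,\dots,\sigma_n}\stackrel{\mathrm{df. }}{=}\rho_{\sigma_1,\dots,\sigma_n}^\dagger\ddagger\phi$ and let $D_{A_1,\dots,A_n}$ be the ensuing ambient game, which depends only on the games $A_i,B_j,C$ and not on the strategies. Every constituent is viable and realised by a standard st-algorithm: $\sigma_1,\dots,\sigma_n$ and $\phi$ by hypothesis; the $\mathit{der}_{B_j}$ and the projections by Section~\ref{Examples}; context weakening (viewing each $\sigma_i$ as a strategy with the larger domain) is an inner-tag relabelling argument of the kind used in the proof of Theorem~\ref{ThmPreservationOfViability}; and $\mathsf{w}$ is finitary. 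Applying the closure of viable strategies under $\langle\_,\dots,\_\rangle$, $(\_)^\dagger$ and $\ddagger$ (Theorem~\ref{ThmPreservationOfViability}) yields a standard st-algorithm $\mathcal{A}$ with $\mathsf{st}(\mathcal{A})=\phi_{\sigma_1,\dots,\sigma_n}$; and since that proof transforms the input st-algorithms by a fixed finitary recipe, $\mathcal{A}$ is \emph{computable} from st-algorithms for $\sigma_1,\dots,\sigma_n,\phi$ — which is exactly the ``$s$-function is computable'' content of the theorem.

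Claim~(1) is then routine bookkeeping: apply $\mathcal{H}^\omega$ and use Lemmata~\ref{LemHidingLemmaOnGames} and~\ref{LemHidingLemmaOnStrategies} together with Theorem~\ref{ThmConstructionsOnGames} to obtain $\mathcal{H}^\omega(\phi_{\sigma_1,\dots,\sigma_n})=\mathcal{H}^\omega(\rho_{\sigma_1,\dots,\sigma_n})^\dagger\,;\,\mathcal{H}^\omega(\phi)$ on a subgame of $T\,\&\,B_1\,\&\,\cdots\,\&\,B_m\Rightarrow C$. Claim~(2) is the main obstacle, and I would prove it by a direct algebraic calculation: associativity of concatenation and functoriality of promotion give
\[
\langle\{\boldsymbol{\epsilon}\},\tau_1,\dots,\tau_m\rangle^\dagger\ddagger\phi_{\sigma_1,\dots,\sigma_n}\ \simeq\ \bigl(\langle\{\boldsymbol{\epsilon}\},\tau_1,\dots,\tau_m\rangle^\dagger\ddagger\rho_{\sigma_1,\dots,\sigma_n}\bigr)^{\!\dagger}\ddagger\phi ,
\]
and then naturality of pairing together with the equations $\mathsf{w}\circ\langle\{\boldsymbol{\epsilon}\},\tau_1,\dots,\tau_m\rangle=\{\boldsymbol{\epsilon}\}$ (terminality of $T$), $\mathsf{p}_j\circ\langle\{\boldsymbol{\epsilon}\},\tau_1,\dots,\tau_m\rangle=\tau_j$, $\sigma_i\circ\{\boldsymbol{\epsilon}\}=\sigma_i$ and $\mathit{der}_{B_j}\circ\tau_j=\tau_j$ collapse the inner expression to $\langle\sigma_1,\dots,\sigma_n,\tau_1,\dots,\tau_m\rangle$, yielding the right-hand side of~(2). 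The delicate points, where I expect the real work to lie, are making the coherence isomorphisms for $\ddagger$, $(\_)^\dagger$ and $\langle\_,\dots,\_\rangle$ (established in \cite{yamada2016dynamic}) line up on the nose, that is, modulo the renaming of ``tags'' that is implicit throughout, and checking that the identity survives the partiality of $\phi$ — which is precisely why the statement is phrased with the Kleene equality $\simeq$ rather than $=$.
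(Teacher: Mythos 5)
Your proposal is essentially correct but follows a genuinely different decomposition from the paper's. The paper's proof is a one-liner: it defines $\phi_{\sigma_1, \dots, \sigma_n}$ as $\Lambda^{\circleddash}(\cdots\Lambda^{\circleddash}(\langle \sigma_1, \dots, \sigma_n \rangle^\dagger \ddagger \Lambda(\cdots\Lambda(\phi)\cdots))\cdots)$, i.e.\ it curries the $B_j$-arguments away $m$ times, concatenates the promoted $n$-ary pairing of the $\sigma_i$ on the left, and uncurries $m$ times, so that the only operations involved are pairing, promotion, concatenation, $\Lambda$ and $\Lambda^{\circleddash}$ --- exactly those covered by Theorem~\ref{ThmPreservationOfViability}, whose (effective) proof is then cited for the computation of the standard st-algorithm; claims (1) and (2) are not verified explicitly there. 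You instead keep the $B_j$'s in place and precompose $\phi$ with a substitution tuple $\rho_{\sigma_1,\dots,\sigma_n}$ made of weakened $\sigma_i$'s and $B_j$-copy-cats, which avoids currying and makes the verification of (1) and (2) more explicit than in the paper. What the paper's route buys is that no auxiliary viability facts are needed beyond Theorem~\ref{ThmPreservationOfViability}; your route additionally needs that each component of $\rho_{\sigma_1,\dots,\sigma_n}$ is realized by a standard st-algorithm, and here you should not define these components literally as the composites $\sigma_i \circ \mathsf{w}$ and $\mathit{der}_{B_j}\circ\mathsf{p}_j$: composition \emph{with} hiding is precisely the operation under which viability is not known to be preserved (the paper stresses there is no obvious instruction strategy for a hidden composite), and $\circ$ is absent from Theorem~\ref{ThmPreservationOfViability}. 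You partly anticipate this by treating weakening as an inner-tag relabelling, and the $B_j$-components are likewise derelictions up to inner tags, so the fix is to exhibit these strategies and their standard st-algorithms directly rather than as composites; with that repair, and with the equations used for (2) (terminality, naturality of pairing, $\mathit{der}_{B_j}\circ\tau_j=\tau_j$, the promotion law) read after applying $\mathcal{H}^\omega$ --- since with $\ddagger$ they hold only up to internal moves, which is how the $\simeq$ in (2) must be understood --- your argument goes through.
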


\begin{proof}
We define $\phi_{\sigma_1, \sigma_2, \dots, \sigma_n} \stackrel{\mathrm{df. }}{=} \underbrace{\Lambda^\circleddash(\cdots \Lambda^\circleddash}_m(\langle \sigma_1, \sigma_2, \dots, \sigma_n \rangle^\dagger \ddagger \underbrace{\Lambda( \cdots \Lambda}_m(\phi) \cdots)) \cdots )$, where the proof of Theorem~\ref{ThmPreservationOfViability} describes how to `effectively' obtain a standard st-algorithm that realizes $\phi_{\sigma_1, \sigma_2, \dots, \sigma_n}$ in an informal sense\footnote{It is interesting future work to formalize this informal `effective computability' by certain viable strategies.}. Note that Corollary~\ref{CoroTuringCompleteness} implies that it is in fact a generalization of the conventional smn-theorem \cite{cutland1980computability}.
\end{proof}

\begin{corollary}[Generalized FRT]
\label{CoroGeneralizedFRT}
Given a viable strategy $\varphi : D$ such that $\mathcal{H}^\omega(D) \trianglelefteqslant T \Rightarrow (A \Rightarrow A)$ realized by a standard st-algorithm, there exists another viable strategy $\sigma_\varphi : D_\varphi$ with $\mathcal{H}^\omega(D_\varphi) \trianglelefteqslant T \Rightarrow A$ realized by a standard st-algorithm such that $\mathcal{H}^\omega (\sigma_\varphi^\dagger \ddagger \varphi) = \mathcal{H}^\omega(\sigma_\varphi) : T \Rightarrow A$.
\end{corollary}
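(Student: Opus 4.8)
The plan is to obtain $\sigma_\varphi$ by feeding $\varphi$ into the fixed-point strategy $\mathit{fix}_A$, exactly as the First Recursion Theorem is obtained for PCF. Regarding $\varphi$, up to `tags', as an element of $A \Rightarrow A$ (i.e.\ a strategy on $D$ with $\mathcal{H}^\omega(D) \trianglelefteqslant T \Rightarrow (A \Rightarrow A)$; recall $T \Rightarrow G$ is identical up to `tags' to $G$, cf.\ Section~\ref{SubsectionViableStrategies}), and recalling that $\mathit{fix}_A : (A \Rightarrow A) \Rightarrow A$ is viable and realized by a standard st-algorithm (the fixed-point example of Section~\ref{Examples}), I would set $D_\varphi \stackrel{\mathrm{df. }}{=} D^\dagger \ddagger D_{\mathit{fix}_A}$ and $\sigma_\varphi \stackrel{\mathrm{df. }}{=} \varphi^\dagger \ddagger \mathit{fix}_A$, where $D_{\mathit{fix}_A}$ is the ambient game of the chosen viable presentation of $\mathit{fix}_A$. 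Since $\mathcal{H}^\omega(D^\dagger) \trianglelefteqslant \ !T \multimap \ !(A \Rightarrow A)$ and $\mathcal{H}^\omega(D_{\mathit{fix}_A}) \trianglelefteqslant \ !(A \Rightarrow A) \multimap A$, the concatenation is well-typed, and by Theorem~\ref{ThmConstructionsOnGames} and Lemma~\ref{LemHidingLemmaOnGames} (arguing as in the proof of Theorem~\ref{ThmMainTheorem}) $\mathcal{H}^\omega(D_\varphi) \trianglelefteqslant \ !T \multimap A = T \Rightarrow A$, which gives the required typing.

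Next I would establish viability together with standardness. By Theorem~\ref{ThmPreservationOfViability}, promotion $(\_)^\dagger$ and concatenation $\ddagger$ preserve viability and standardness of the underlying st-algorithms; since $\varphi$ is viable via a standard st-algorithm by hypothesis and $\mathit{fix}_A$ is viable via a standard st-algorithm by the fixed-point example of Section~\ref{Examples}, so is $\sigma_\varphi = \varphi^\dagger \ddagger \mathit{fix}_A$, and, the proof of Theorem~\ref{ThmPreservationOfViability} being effective in the informal sense, a standard st-algorithm realizing $\sigma_\varphi$ can be computed. This step is entirely parallel to Corollary~\ref{CoroGeneralizedSMN}.

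It then remains to verify the fixed-point equation $\mathcal{H}^\omega(\sigma_\varphi^\dagger \ddagger \varphi) = \mathcal{H}^\omega(\sigma_\varphi)$. First, using Lemma~\ref{LemHidingLemmaOnStrategies} (for $d = \omega$ every concatenation is sent to composition, and promotion commutes with $\mathcal{H}^\omega$) together with $\mathcal{H}^\omega(\mathit{fix}_A) = \mathit{fix}_A$ ($\mathit{fix}_A$ being normalized), I would compute $\mathcal{H}^\omega(\sigma_\varphi) = \mathcal{H}^\omega(\varphi)^\dagger ; \mathit{fix}_A$ and $\mathcal{H}^\omega(\sigma_\varphi^\dagger \ddagger \varphi) = \mathcal{H}^\omega(\sigma_\varphi)^\dagger ; \mathcal{H}^\omega(\varphi)$, thereby reducing the claim to the purely static identity $(\psi^\dagger ; \mathit{fix}_A)^\dagger ; \psi = \psi^\dagger ; \mathit{fix}_A$ in the category $\mathcal{G}$ of static games and strategies, where $\psi \stackrel{\mathrm{df. }}{=} \mathcal{H}^\omega(\varphi)$ (viewed up to `tags' as a morphism $A \to A$ when it appears on the right of $\ddagger$). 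This is precisely the unfolding law of the game-semantic fixed-point combinator: writing $\mathrm{Y}(\psi) \stackrel{\mathrm{df. }}{=} \psi^\dagger ; \mathit{fix}_A$, the strategy $\mathit{fix}_A$ is constructed so that $\mathrm{Y}(\psi) = \mathrm{Y}(\psi)^\dagger ; \psi$, i.e.\ applying the operator $\psi$ to $\mathrm{Y}(\psi)$ returns $\mathrm{Y}(\psi)$; I would invoke the standard proofs of this fact in \cite{hyland2000full,abramsky2000full,mccusker1998games}, or, if a self-contained argument is preferred, verify it by a direct play-by-play unwinding of $\mathit{fix}_A$ along the lines sketched in Example~\ref{ExInformalFix}. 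Combining this with the previous computations yields $\mathcal{H}^\omega(\sigma_\varphi^\dagger \ddagger \varphi) = \mathcal{H}^\omega(\sigma_\varphi) : T \Rightarrow A$, completing the proof.

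I expect the main obstacle to be bookkeeping rather than mathematical depth: one must carefully reconcile the two roles of $\varphi$ — as a point of $A \Rightarrow A$ fed to $\mathit{fix}_A$, and as the operator applied to $\sigma_\varphi$ in $\sigma_\varphi^\dagger \ddagger \varphi$ — under the identification $T \Rightarrow G \cong G$, keep every concatenation and promotion well-typed through the subgame relations, and track the passages through Lemmata~\ref{LemHidingLemmaOnGames} and \ref{LemHidingLemmaOnStrategies} correctly; the only genuinely substantive ingredient, the static unfolding law for $\mathit{fix}_A$, is classical and already available from the literature on game semantics of PCF.
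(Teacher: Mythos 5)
Your proposal is correct and takes exactly the route the paper does: its entire proof is the single line ``Just define $\sigma_\varphi \stackrel{\mathrm{df. }}{=} \varphi^\dagger \ddagger \mathit{fix}_A$.'' You supply the typing, the appeal to Theorem~\ref{ThmPreservationOfViability} for viability and standardness, and the reduction of the fixed-point equation to the static unfolding law of $\mathit{fix}_A$, all of which the paper leaves implicit but which are precisely the intended justifications.
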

\begin{proof}
Just define $\sigma_\varphi \stackrel{\mathrm{df. }}{=} \varphi^\dagger \ddagger \mathit{fix}_A$.
\end{proof}

Finally, let us show that the converse of the main theorem also holds for classical computation because it would then give further naturality and/or reasonability of our definition of `effective computability': 
\begin{theorem}[Conservativeness]
\label{ThmConservativeness}
Any viable strategy $\sigma : G$ with $\mathcal{H}^\omega(G) \trianglelefteqslant \mathcal{N}^k \Rightarrow \mathcal{N}$, where $k \in \mathbb{N}$, can be simulated by a partial recursive function $f_\sigma : \mathbb{N} \rightharpoonup \mathbb{N}$ in the sense that $\mathcal{H}^\omega(\langle \underline{n_1}, \underline{n_2}, \dots, \underline{n_k} \rangle^\dagger \ddagger \sigma) \simeq \underline{f_\sigma(n_1, n_2, \dots, n_k)}$ for all $n_1, n_2, \dots, n_k \in \mathbb{N}$.
%Any viable strategy $\sigma : G$ such that $\mathcal{H}^\omega(G) \trianglelefteqslant \mathcal{N}^k \Rightarrow \mathcal{N}$, where $k \in \mathbb{N}$, can be simulated by a Turing machine $\mathscr{M}$ in the sense that $\mathcal{H}^\omega(\langle \underline{n_1}, \underline{n_2}, \dots, \underline{n_k} \rangle^\dagger \ddagger \sigma) \simeq \underline{f_{\mathscr{M}}(n_1, n_2, \dots, n_k)}$ for all $n_1, n_2, \dots, n_k \in \mathbb{N}$, where $f_{\mathscr{M}} : \mathbb{N}^k \rightharpoonup \mathbb{N}$ is the partial recursive function which $\mathscr{M}$ computes. 
\end{theorem}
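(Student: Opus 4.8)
The plan is to show that a universal algorithm can recursively enumerate the (finite) st-algorithm realizing $\sigma$, simulate both the `low-level' process of the instruction strategies and the `high-level' process of $\sigma$ itself, and thereby compute $f_\sigma$. First I would fix a viable strategy $\sigma : G$ with $\mathcal{H}^\omega(G) \trianglelefteqslant \mathcal{N}^k \Rightarrow \mathcal{N}$, together with a standard st-algorithm $\mathcal{A} :: G$ realizing it (which exists by Definition~\ref{DefViability}). The key observation is that $\mathcal{A}$ is a \emph{finite} object: the set $\mathcal{S}_{\mathcal{A}}$ of states is finite, the query function $\mathcal{Q}_{\mathcal{A}}$ has finite domain $\pi_1(M_G)$, and each $\mathcal{A}_{\boldsymbol{m}}$ is a finite partial function with fixed view-scope $|\mathcal{A}_{\boldsymbol{m}}|$ and mate-scope $\|\mathcal{A}_{\boldsymbol{m}}\|$. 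Hence $\mathcal{A}$ can be coded as a single natural number, and all of its data are uniformly decidable/computable from that code by primitive-recursive bookkeeping.

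Next I would define $f_\sigma$ by giving an informal `human computer' (equivalently, a register or Turing machine, invoking Church--Turing) that, on input $(n_1,\dots,n_k)$, plays the composite game $\mathcal{H}^\omega(\langle \underline{n_1},\dots,\underline{n_k}\rangle^\dagger \ddagger \sigma)$ step by step and reads off the resulting position on $\mathcal{N}$. Concretely: the machine maintains the current position $\boldsymbol{s} \in P_G$ of the \emph{un-hidden} game $G$ (its finitary data --- moves as tagged elements, pointers --- are codable as a natural number since $\pi_1(M_G)$ is finite and outer tags are finite strings over $\Sigma \cup \{\Lbag,\Rbag\}$); at each P-turn it computes the current P-view $\lceil \boldsymbol{s}a \rceil$ (a decidable operation by following the chain of justifiers, as emphasized in Section~\ref{SubsectionViableStrategies}), truncates to the last three moves $\lceil \boldsymbol{s}a \rceil \!\downharpoonright 3$, computes the state $\boldsymbol{m} = \mathcal{Q}_{\mathcal{A}}^\star(\lceil \boldsymbol{s}a \rceil)$, and if $\boldsymbol{m} \in \mathcal{S}_{\mathcal{A}}$ it simulates the finitary instruction strategy $\mathcal{A}_{\boldsymbol{m}}^\circledS$ composed with $(\underline{\lceil \boldsymbol{s}a \rceil \!\downharpoonright 3}_3)^\dagger$ --- this is a finite interaction between a finite-table strategy and a finite family of `encoding' strategies, so its unique play is computable --- and extracts the next move $\mathcal{M}(\mathcal{A}_{\boldsymbol{m}}^\circledS \circ (\cdots)^\dagger)$ and its justifier, exactly as in the three-step procedure spelled out after Definition~\ref{DefViability}. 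For O-turns, the moves are forced: external O-moves in $\mathcal{N}^k \Rightarrow \mathcal{N}$ come from the fixed input strategies $\langle \underline{n_1},\dots,\underline{n_k}\rangle^\dagger$ (whose tables are finite and explicitly given below Example~\ref{ExLazyNaturalNumbers}), and internal O-moves are mere copies of the last internal P-move by axiom DP2 --- no search or choice is needed. The machine iterates until the play reaches an external answer $[\mathit{no}]$ on the codomain $\mathcal{N}$, at which point it has produced a $d$-complete position whose image under $\mathcal{H}^\omega$ is some $\underline{m}$, and it outputs $m$; if the simulation never halts (either because some state leaves $\mathcal{S}_{\mathcal{A}}$ and $\sigma$ is undefined there, or because the play is infinite), $f_\sigma$ diverges, matching the Kleene equality in the statement.

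I would then argue correctness: by construction the machine faithfully reproduces the atomic steps of the concatenation $\langle \underline{n_1},\dots,\underline{n_k}\rangle^\dagger \ddagger \sigma$ (composition without hiding, Definition~\ref{DefConcatenationAndCompositionOfStrategies}), and at the end applying $\mathcal{H}^\omega$ amounts to deleting the internal moves from the recorded position, which is what the machine does when it reports only the $\mathcal{N}$-component; by the Hiding Theorem~\ref{ThmHidingTheorem} the result is a well-defined position of $\mathcal{N}$, hence of the form $\underline{m}$, and the Kleene equality $\mathcal{H}^\omega(\langle \underline{n_1},\dots,\underline{n_k}\rangle^\dagger \ddagger \sigma) \simeq \underline{f_\sigma(n_1,\dots,n_k)}$ follows. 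Since every individual step is effective in the informal sense and the control flow is a simple loop, $f_\sigma$ is partial recursive by Church--Turing.

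The main obstacle I anticipate is not conceptual but bookkeeping: one must check that the internal `communication' part of the concatenation --- potentially unboundedly long sequences of internal moves with nested outer tags of unbounded length, together with their pointer structure --- is genuinely \emph{finitely codable at each finite stage} and that the simulation of $\mathcal{A}_{\boldsymbol{m}}^\circledS \circ (\underline{\cdots}_3)^\dagger$ terminates after a bounded number of steps (bounded in terms of the view-scope, mate-scope, and the ternary branching of instruction games, as noted in the proof of Theorem~\ref{ThmPreservationOfViability}). Establishing that the m-view computation and the justifier-tracing are primitive recursive in the code of the current position is routine but tedious. A secondary subtlety is confirming that divergence of the game-semantic play corresponds precisely to undefinedness of $f_\sigma$ --- i.e.\ that the machine loops rather than erroneously halting --- which follows from determinism (S2) and the fact that an external answer on $\mathcal{N}$ is the unique terminating configuration, but it should be stated carefully. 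None of this requires new ideas; it is a matter of carrying out the evident encoding.
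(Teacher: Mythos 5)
Your proposal is correct and follows essentially the same route as the paper: both simulate the play of $\sigma$ (against the numeral inputs) by an effective machine that records the current position with its pointer structure, extracts the last three moves of the P-view, runs the finite instruction strategy of the realizing st-algorithm to obtain the next P-move, and dispenses with O-moves by noting they are forced (numeral inputs externally, DP2-dummies internally), concluding by Church--Turing. The paper merely makes the encoding concrete via a 5-tape Turing machine with explicit identifier/m-view tapes, which is exactly the bookkeeping you flag as the remaining routine work.
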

\begin{proof}[Proof sketch]
This theorem is not as surprising as Theorem~\ref{ThmMainTheorem} for one may just employ \emph{Church's thesis} \cite{cutland1980computability}.
Nevertheless, let us give a proof sketch for the theorem, by which an ardent reader can construct a detailed proof if she or he wishes to. 

The idea is to simulate a given viable strategy $\sigma : G$ by a 5-tape TM $\mathscr{M}$ by writing down an input on the first tape, the entire history of previous occurrences of moves, i.e., each position during a play, on the second tape, and the last three moves in each P-view as well as the next P-move on the third tape, where the fourth and fifth tapes are used for auxiliary computations (specified below).

Let us first specify the `format' of the first and second tapes.
On each of these tapes, moves are separated by an occurrence of a distinguished symbol $\mathsf{\$}$, and each move $[m]_{e_1 e_2 \dots e_k}$ together with the \emph{identifier} $j \in \mathbb{N}$ of its justifier defined below is written as the sequence
\begin{center}
\begin{tikzpicture}[every node/.style={block},
        block/.style={minimum height=1.5em,outer sep=0pt,draw,rectangle,node distance=0pt}]
   \node (A) {$\mathsf{m}$};
   \node (B) [left=of A] {$\prime$};
   \node (C) [left=of B] {$\ldots$};
   \node (G) [left=of C] {$\prime$};
   \node (H) [left=of G] {$\prime$};
   \node (K) [left=of H] {$\$$};
   %\node (I) [left=of K] {$\ldots$};
   %\node (J) [left=of I] {$\vdash$};
   \node (D) [right=of A] {$\mathsf{e_1}$};
   \node (E) [right=of D] {$\mathsf{e_2}$};
   \node (L) [right=of E] {$\ldots$};
   \node (M) [right=of L] {$\mathsf{e_k}$};
   \node (N) [right=of M] {$\$$};
   %\node (O) [right=of N] {$\ldots$};
   %\node (F) [above = 0.75cm of A,draw=red,thick] {\textsf q};
   %\draw[-latex] (F) -- (A);
   %\draw[-latex,blue] ($(F.east)!0.5!(A.east)$) -- ++(7mm,0);
   \draw (K.north west) -- ++(-1cm,0) (K.south west) -- ++ (-1cm,0) 
                 (N.north east) -- ++(1cm,0) (N.south east) -- ++ (1cm,0);
\end{tikzpicture}
\end{center}
where there are $j$-many ($\prime$)'s (which represents $j$) between the left occurrence of $\mathsf{\$}$ and the occurrence of $\mathsf{m}$.
We define the \emph{\bfseries identifier} of each occurrence of a move $[n]_{f_1 f_2 \dots f_l}$ on the tape to be the number $i \in \mathbb{N}$ such that $\mathsf{n}$ is written on the $i^{\textit{th}}$-cell of the tape, where note that $i \geqslant 1$.
Then, the number $j$ of ($\prime$)'s displayed above is defined to be the identifier of the justifier of $[m]_{e_1 e_2 \dots e_k}$ if it exists, and $0$ if $[m]_{e_1 e_2 \dots e_k}$ is initial.
In this manner, we encode pointers on the tape.\footnote{Note that pointers in instruction games are rather trivial, and thus we omit them.} 
We also assume without loss of generality that the symbol $\mathsf{m}$ contains information for the I/E-parity of each move $[m]_{e_1 e_2 \dots e_k}$; an obvious (though not canonical) way to achieve it is to use two different fonts for $\mathsf{m}$.

On the other hand, the last three moves in the P-view of the current odd-length position (in the `format' described above but without identifiers), their identifiers and m-views are written respectively on the third, fourth and fifth tapes, where each occurrence of a move, an identifier or an m-view is separated again by $\mathsf{\$}$.

Next, note that computation of the next move is trivial if it is an O-move because external O-moves in the output $\mathcal{N}$ are all obvious questions, external O-moves in the input $\mathcal{N}$ are already given as an input on the first tape, and internal O-moves are just `dummies' of internal P-moves by the axiom DP2 (Definition~\ref{DefGames}).
Note also that $\mathscr{M}$ may recognize the O/P-parity of the next move by its state (and the I/E-parity by the symbolic information on the tape assumed above). 
Hence, it suffices to focus on computation of the next P-move; $\mathscr{M}$ computes it as follows:
\begin{enumerate}

\item Copy the last occurrence $[m_1]_{\boldsymbol{e^{(1)}}}$ of the current P-view on the second tape onto the initial cells of the third tape, compute its identifier and m-view in the obvious manner, and write them on the fourth and fifth tapes, respectively;

\item Locate the second last occurrence $[m_2]_{\boldsymbol{e^{(2)}}}$ of the current P-view on the second tape by the identifier associated to the occurrence $[m_1]_{\boldsymbol{e^{(1)}}}$, and then execute the same computation as the one on $[m_1]_{\boldsymbol{e^{(1)}}}$, where the new content prefixed with $\mathsf{\$}$ on each tape is just concatenated to the existing one;

\item Similarly, locate the third last occurrence $[m_3]_{\boldsymbol{e^{(3)}}}$ of the current P-view on the second tape (which is easy as it locates next to $[m_2]_{\boldsymbol{e^{(2)}}}$), and execute the same computation on it (so that the third, fourth and fifth tapes contain all information of the last three moves in the P-view);

\item With the current contents on the third, fourth and fifth tapes, compute the next P-move $[m]_{\boldsymbol{e}}$ and the identifier of its justifier, write them on the second tape, and erase all contents on the third, fourth and fifth tapes.

\end{enumerate}

Note that $\mathscr{M}$ is clearly able to execute the last step $([m_3]_{\boldsymbol{e^{(3)}}}, [m_2]_{\boldsymbol{e^{(2)}}}, [m_1]_{\boldsymbol{e^{(1)}}}) \mapsto [m]_{\boldsymbol{e}}$ by basically simulating the computation of an instruction strategy for $\sigma$, completing the proof.
\end{proof}

\begin{remark}
Theorem~\ref{ThmConservativeness} does \emph{not} hold for higher-order computation because TMs cannot take additional inputs from O during the course of computation. 
Of course, one may consider TMs that interact with O, like the model of computation employed in computability logic, but it is no longer TMs in the usual sense.
\end{remark}

As an immediate corollary, we have:
\begin{corollary}[Universality]
\label{CoroUniversality}
Let $\mathsf{A}$ be a type of PCF, and $A$ be the game that interprets $\mathsf{A}$ (as defined in \cite{yamada2016dynamic}).
Then, any viable strategy $\alpha$ on $A$ is the denotation $a$ of a term $\mathsf{a}$ of PCF (as defined in \cite{yamada2016dynamic}) up to internal moves, i.e., $\mathcal{H}^\omega(\alpha) = \mathcal{H}^\omega(a)$.
\end{corollary}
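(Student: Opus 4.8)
\textbf{Proof proposal for Corollary~\ref{CoroUniversality} (Universality).}
The plan is to combine the Main Theorem (Theorem~\ref{ThmMainTheorem}), the Conservativeness Theorem (Theorem~\ref{ThmConservativeness}), and the classical \emph{universality} result for the game model of PCF in \cite{abramsky2000full,hyland2000full} (equivalently the variant used in \cite{yamada2016dynamic}). Recall that universality for the PCF game model states that every \emph{recursive} strategy (in the usual sense: representable by a partial recursive function) on the game $\mathcal{H}^\omega(A)$ interpreting a PCF type $\mathsf{A}$ is the denotation of some PCF term. So the entire content of the corollary is to bridge between \emph{viability} of $\alpha$ and \emph{recursiveness} of its hidden image $\mathcal{H}^\omega(\alpha)$, and then to quote the classical result.

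First I would reduce to the normalized case. Given a viable strategy $\alpha : A$, the hiding functor $\mathcal{H}^\omega$ (Theorem~\ref{ThmHidingTheorem}) yields $\mathcal{H}^\omega(\alpha) : \mathcal{H}^\omega(A)$, and $\mathcal{H}^\omega(A)$ is (up to tags) the conventional game interpreting $\mathsf{A}$. It therefore suffices to show that $\mathcal{H}^\omega(\alpha)$ is a recursive strategy on $\mathcal{H}^\omega(A)$; then classical universality gives a PCF term $\mathsf{a}$ whose denotation $a$ (in the sense of \cite{yamada2016dynamic}) satisfies $\mathcal{H}^\omega(a) = \mathcal{H}^\omega(\alpha)$, which is exactly the claim ``$\alpha$ is the denotation of a PCF term up to internal moves.'' So the crux is: \emph{viable $\Rightarrow$ the hidden image is recursive.}

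For that implication I would adapt the construction in the proof sketch of Theorem~\ref{ThmConservativeness}, but without assuming the ambient game is $\mathcal{N}^k \Rightarrow \mathcal{N}$. Since $\alpha$ is viable, fix an st-algorithm $\mathcal{A} :: A$ with $\mathsf{st}(\mathcal{A}) = \alpha$; its states $\mathcal{S}_{\mathcal{A}}$, view- and mate-scopes, query function $\mathcal{Q}_{\mathcal{A}}$, and each finite table $\mathcal{A}_{\boldsymbol{m}}$ are all finite data, hence codable as a natural number. A multi-tape TM can then: maintain the current position of $\mathcal{H}^\omega(A)$ together with encoded pointers on a tape (exactly as in the proof of Theorem~\ref{ThmConservativeness}); from a position $\boldsymbol{s}a \in P_{\mathcal{H}^\omega(A)}^{\mathsf{Odd}}$ reconstruct a corresponding position of the dynamic game $A$ (by the Hiding Theorem the external part of a play by $\alpha$ determines, and is determined by, the play by $\mathcal{H}^\omega(\alpha)$; internal O-moves are forced dummies by DP2 and internal P-moves are computed by $\mathcal{A}$); compute the P-view, truncate it to the last three moves $\lceil \boldsymbol{s}a \rceil \downharpoonright 3$, compute the current state $\mathcal{Q}_{\mathcal{A}}^\star(\lceil \boldsymbol{s}a \rceil)$ and the relevant m-views; then simulate the finitary instruction strategy $\mathcal{A}_{\boldsymbol{m}}^\circledS$ symbolically to read off $\mathcal{A}^\circledS(\lceil \boldsymbol{s}a \rceil \downharpoonright 3) = \mathcal{M}(\mathcal{A}_{\boldsymbol{m}}^\circledS \circ (\underline{\lceil \boldsymbol{s}a \rceil \downharpoonright 3}_3)^\dagger)$, which is the next external P-move of $\mathcal{H}^\omega(\alpha)$ (iterating over any internal P-moves in between, which terminate because each is computed by a finite table and the hiding operation on a $d$-complete position is well-defined). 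This exhibits $\mathcal{H}^\omega(\alpha)$ as computed by a partial recursive function, i.e.\ $\mathcal{H}^\omega(\alpha)$ is recursive. Applying classical universality finishes the argument.

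\textbf{Main obstacle.} The delicate point is simulating the \emph{dynamic} computation — in particular, recovering internal O-moves and the intermediate internal P-moves — from the purely external history of $\mathcal{H}^\omega(\alpha)$, and arguing that this internal segment is always finite and effectively computable (so that the TM genuinely halts at each external step). This rests on DP2 (internal O-moves are dummies of the preceding internal P-move, with prescribed pointers), on the finitariness of each instruction table $\mathcal{A}_{\boldsymbol{m}}$, and on Lemma~\ref{LemHidingLemmaOnStrategies}/Theorem~\ref{ThmHidingTheorem} ensuring $\mathcal{H}^\omega$ behaves functorially; I expect these to suffice but the bookkeeping of pointers across the hidden/unhidden correspondence, together with correctly handling m-views for deeply nested exponentials, is where the real work lies. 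A minor secondary point is checking that ``the denotation $a$ of a term $\mathsf{a}$ of PCF as defined in \cite{yamada2016dynamic}'' matches, up to internal moves, the denotation in the conventional model to which classical universality applies — this follows because $\mathcal{H}^\omega$ sends the former to the latter, which is precisely the content of the soundness part of \cite{yamada2016dynamic}.
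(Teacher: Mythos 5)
Your proposal is correct and follows essentially the same route as the paper: both reduce the claim to showing that $\mathcal{H}^\omega(\alpha)$ is a recursive strategy in the conventional sense by reusing the TM simulation from the proof of Theorem~\ref{ThmConservativeness}, and then invoke the classical universality theorem for the game semantics of PCF from \cite{abramsky2000full,hyland2000full}. The paper's proof is terser (it adds only a small bookkeeping step identifying total/non-total strategies on $\mathcal{N}$ with numerals and $\bot$ so that $\mathcal{H}^\omega(\alpha)$ lives in the conventional model), whereas you spell out the adaptation of the simulation to arbitrary PCF types — a detail the paper leaves implicit.
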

\begin{proof}
First, recall that a strategy $\sigma : G$ is \emph{total} if $\boldsymbol{s} \in \sigma \wedge \boldsymbol{s} . o \in P_G^{\mathsf{Odd}} \Rightarrow \exists p \in M_G . \ \! \boldsymbol{s} . o . p \in \sigma$ \cite{abramsky1997semantics}.
Next, let us identify the total strategy $\underline{n} : \mathcal{N}$ with $n : N$ for each $n \in \mathbb{N}$, and any non-total strategy on $\mathcal{N}$ with $\bot \stackrel{\mathrm{df. }}{=} \{ \boldsymbol{\epsilon} \} : N$; in this way, the normalized one $\mathcal{H}^\omega(\alpha) : \mathcal{H}^\omega(A)$ for each \emph{viable} strategy $\alpha : A$ such that $A$ is the dynamic game semantics of a type $\mathsf{A}$ of PCF \cite{yamada2016dynamic} may be regarded as a static strategy in the conventional game semantics of PCF \cite{mccusker1998games}.
Then, by the proof of Theorem~\ref{ThmConservativeness}, $\mathcal{H}^\omega(\alpha) : \mathcal{H}^\omega(A)$ is \emph{recursive}, i.e., computable by a TM, and thus it is the conventional game semantics of a term $\mathsf{a}$ of PCF by the univarsality theorem of the conventional game semantics of PCF \cite{abramsky2000full,hyland2000full}.
Therefore, we may conclude that $\alpha$ coincides with the dynamic game semantics $a$ of $\mathsf{a}$ up to internal moves, i.e., $\mathcal{H}^\omega(\alpha) = \mathcal{H}^\omega(a)$, completing the proof. 
\end{proof}

\begin{remark}
Our universality theorem (Corollary~\ref{CoroUniversality}) holds only up to internal moves because (dynamic) strategies may compute internal moves in such a way that does not correspond to computation of PCF.
Certainly, it would be interesting as future work to refine this result so that it holds \emph{on the nose}.
\end{remark}

\section{Conclusion and future work}
\label{ConclusionAndFutureWork}
We have given a novel notion of `effective computability' in game semantics, namely \emph{viable} strategies.
Due to its  \emph{intrinsic}, \emph{non-inductive} and \emph{non-axiomatic} natures, it can be seen as a fundamental investigation of `effective' computation beyond classical one, where note that viability of strategies makes sense \emph{universally}, i.e., regardless of the underlying games (e.g., games do not have to correspond to types of PCF).
Furthermore, our game-semantic model of computation formulates both `high-level' and `low-level' computational processes, and defines  `computability' of the former in terms of the latter, which sheds new light on the very notion of computation.
For instance, strategies $\underline{n} : \mathcal{N}$ may be seen as the \emph{definition} of natural numbers, and thus a viable strategy of the form $\phi : \mathcal{N}^k \Rightarrow \mathcal{N}$ can be regarded as `high-level' computation on natural numbers, not on their representations, and (the table of) an st-algorithm that realizes $\phi$ can be seen as its symbolic implementation.

There are various directions for further work. First, we need to analyze the exact computational power of viable strategies, in comparison with other known notion of higher-order computability \cite{longley2015higher}.
%Also, it would be fruitful to generalize the present work to \emph{concurrent computation}, \emph{probabilistic computation}, etc. to establish their ``effective computability'' in an intrinsic manner.
Also, as an application, the present framework may give an accurate measure for computational complexity \cite{kozen2006theory}, where note that the work on dynamic games and strategies \cite{yamada2016dynamic} has already given such a measure via internal moves, but the present work may refine it further since two single steps in a game $G$ may take different numbers of steps in the instruction game $\mathcal{G}(M_G)^3 \Rightarrow \mathcal{G}(M_G)$. 
Moreover, it is of theoretical interest to see which theorems in computability theory can be generalized by the present framework in addition to the smn- and the first recursion theorems.
%(we would also like to show the effectivity of the \emph{hiding operation} \cite{yamada2016dynamic} in this sense).
However, the most imminent future work is perhaps, by exploiting the flexibility of game semantics, to enlarge the scope of the present work (i.e., not only the language PCF) in order to establish a computational model of various (constructive) logics and programming languages.
We are particularly interested in how to apply our approach to \emph{non-innocent} strategies.

Finally, let us propose two open questions.
Since the definition of viable strategies is somewhat reflexive (as it is via strategies), we may naturally consider strategies \emph{that can be realized by a viable strategy}.
Let us define such strategies to be \emph{2-viable}. 
More generally, rephrasing viability as \emph{1-viability}, we define a strategy to be \emph{$(n+1)$-viable} if it can be realized by an $n$-viable strategy for each $n \in \mathbb{N}$.
Clearly, any $n$-viable strategy is `effective' in an informal sense.
Then, the first questions is:
\begin{quote}
Is the class of all $(n+1)$-viable strategies strictly larger than that of all $n$-viable strategies for each $n \in \mathbb{N}$?
\end{quote}
This question seems highly interesting from a theoretical perspective.\footnote{Note that this question would not have arised if we had not defined `effective computability' \emph{solely in terms of games and strategies}.}
If the answer is positive, then there would be an infinite hierarchy of generalized viable strategies.
It is then natural to ask the following second question:
\begin{quote}
Does the hierarchy, if it exists, correspond to any known hierarchy (perhaps in computability theory or proof theory)?
\end{quote}
We shall aim to answer these questions as future work as well.

%\nocite{oreg,schn,pond,smith,marg,hunn,advi,koha,mouse}

%%%%%%%%%%%%%%%%%%%%%%%%%%%%%%%%%%%%%%%%%%%%%%
%%                                          %%
%% Backmatter begins here                   %%
%%                                          %%
%%%%%%%%%%%%%%%%%%%%%%%%%%%%%%%%%%%%%%%%%%%%%%

\section*{Acknowledgements}
The author acknowleges support from Funai Overseas Scholarship, and also he is grateful to Samson Abramsky and Robin Piedeleu for fruitful discussions.
%%%%%%%%%%%%%%%%%%%%%%%%%%%%%%%%%%%%%%%%%%%%%%%%%%%%%%%%%%%%%
%%                  The Bibliography                       %%
%%                                                         %%
%%  Bmc_mathpys.bst  will be used to                       %%
%%  create a .BBL file for submission.                     %%
%%  After submission of the .TEX file,                     %%
%%  you will be prompted to submit your .BBL file.         %%
%%                                                         %%
%%                                                         %%
%%  Note that the displayed Bibliography will not          %%
%%  necessarily be rendered by Latex exactly as specified  %%
%%  in the online Instructions for Authors.                %%
%%                                                         %%
%%%%%%%%%%%%%%%%%%%%%%%%%%%%%%%%%%%%%%%%%%%%%%%%%%%%%%%%%%%%%

% if your bibliography is in bibtex format, use those commands:
\bibliographystyle{alpha} % Style BST file (bmc-mathphys, vancouver, spbasic).
\bibliography{CategoricalLogic,GamesAndStrategies,Recursion,PCF,TypeTheoriesAndProgrammingLanguages,HoTT,GoI,LinearLogic}      % Bibliography file (usually '*.bib' )

\newcommand{\etalchar}[1]{$^{#1}$}
\begin{thebibliography}{TVD14}

\bibitem[A{\etalchar{+}}97]{abramsky1997semantics}
Samson Abramsky et~al.
\newblock Semantics of {Interaction: An Introduction to Game Semantics}.
\newblock {\em Semantics and Logics of Computation, Publications of the Newton
  Institute}, pages 1--31, 1997.

\bibitem[Abr14]{abramsky2014intensionality}
Samson Abramsky.
\newblock Intensionality, {Definability and Computation}.
\newblock In {\em Johan van Benthem on Logic and Information Dynamics}, pages
  121--142. Springer, 2014.

\bibitem[AC98]{amadio1998domains}
Roberto~M Amadio and Pierre-Louis Curien.
\newblock {\em Domains and {L}ambda-calculi}.
\newblock Number~46. Cambridge University Press, 1998.

\bibitem[AJ94]{abramsky1994games}
Samson Abramsky and Radha Jagadeesan.
\newblock Games and {Full Completeness for Multiplicative Linear Logic}.
\newblock {\em The Journal of Symbolic Logic}, 59(02):543--574, 1994.

\bibitem[AJM00]{abramsky2000full}
Samson Abramsky, Radha Jagadeesan, and Pasquale Malacaria.
\newblock Full {A}bstraction for {PCF}.
\newblock {\em Information and Computation}, 163(2):409--470, 2000.

\bibitem[AJV15]{abramsky2015games}
Samson Abramsky, Radha Jagadeesan, and Matthijs V{\'a}k{\'a}r.
\newblock Games for {Dependent Types}.
\newblock In {\em Automata, Languages, and Programming}, pages 31--43.
  Springer, 2015.

\bibitem[AM99a]{abramsky1999game}
Samson Abramsky and Guy McCusker.
\newblock Game {S}emantics.
\newblock In {\em Computational logic}, pages 1--55. Springer, 1999.

\bibitem[AM99b]{abramsky1999concurrent}
Samson Abramsky and P-A Mellies.
\newblock Concurrent games and full completeness.
\newblock In {\em Logic in Computer Science, 1999. Proceedings. 14th Symposium
  on}, pages 431--442. IEEE, 1999.

\bibitem[B{\etalchar{+}}84]{barendregt1984lambda}
Hendrik~Pieter Barendregt et~al.
\newblock {\em The {L}ambda {C}alculus}, volume~3.
\newblock North-Holland Amsterdam, 1984.

\bibitem[BC82]{berry1982sequential}
G{\'e}rard Berry and Pierre-Louis Curien.
\newblock Sequential algorithms on concrete data structures.
\newblock {\em Theoretical Computer Science}, 20(3):265--321, 1982.

\bibitem[Bla92]{blass1992game}
Andreas Blass.
\newblock A game semantics for linear logic.
\newblock {\em Annals of Pure and Applied logic}, 56(1):183--220, 1992.

\bibitem[Blo17]{blot2017realizability}
Valentin Blot.
\newblock Realizability for peano arithmetic with winning conditions in hon
  games.
\newblock {\em Annals of Pure and Applied Logic}, 168(2):254--277, 2017.

\bibitem[BO08]{blum2008concrete}
William Blum and CHL Ong.
\newblock {A Concrete Presentation of Game Semantics}, 2008.

\bibitem[Buc94]{bucciarelli1994another}
Antonio Bucciarelli.
\newblock Another approach to sequentiality: Kleene's unimonotone functions.
\newblock In {\em Mathematical Foundations of Programming Semantics}, pages
  333--358. Springer, 1994.

\bibitem[Cur07]{curien2007definability}
Pierre-Louis Curien.
\newblock Definability and full abstraction.
\newblock {\em Electronic Notes in Theoretical Computer Science}, 172:301--310,
  2007.

\bibitem[Cut80]{cutland1980computability}
Nigel Cutland.
\newblock {\em Computability: An {Introduction to Recursive Function Theory}}.
\newblock Cambridge university press, 1980.

\bibitem[DGL05]{dimovski2005data}
Aleksandar Dimovski, Dan~R Ghica, and Ranko Lazi{\'c}.
\newblock {Data-abstraction Refinement: A Game Semantic Approach}.
\newblock In {\em International Static Analysis Symposium}, pages 102--117.
  Springer, 2005.

\bibitem[DLL08]{dal2008quantitative}
Ugo Dal~Lago and Olivier Laurent.
\newblock Quantitative game semantics for linear logic.
\newblock In {\em International Workshop on Computer Science Logic}, pages
  230--245. Springer, 2008.

\bibitem[Fel86]{felscher1986dialogues}
Walter Felscher.
\newblock Dialogues as a foundation for intuitionistic logic.
\newblock In {\em Handbook of philosophical logic}, pages 341--372. Springer,
  1986.

\bibitem[F{\'e}r17]{feree2017game}
Hugo F{\'e}r{\'e}e.
\newblock Game semantics approach to higher-order complexity.
\newblock {\em Journal of Computer and System Sciences}, 87:1--15, 2017.

\bibitem[Gan80]{gandy1980church}
Robin Gandy.
\newblock Church's thesis and principles for mechanisms.
\newblock {\em Studies in Logic and the Foundations of Mathematics},
  101:123--148, 1980.

\bibitem[Gan93]{gandy1993dialogues}
RO~Gandy.
\newblock Dialogues, blass games and sequentiality for objects of finite type.
\newblock {\em Unpublished manuscript}, 1993.

\bibitem[Ghi05]{ghica2005slot}
Dan~R Ghica.
\newblock Slot games: a quantitative model of computation.
\newblock In {\em ACM SIGPLAN Notices}, volume~40, pages 85--97. ACM, 2005.

\bibitem[Gir87]{girard1987linear}
Jean-Yves Girard.
\newblock Linear logic.
\newblock {\em Theoretical computer science}, 50(1):1--101, 1987.

\bibitem[Gir89]{girard1989geometry}
Jean-Yves Girard.
\newblock Geometry of {I}nteraction {I}: {I}nterpretation of {S}ystem {F}.
\newblock {\em Studies in Logic and the Foundations of Mathematics},
  127:221--260, 1989.

\bibitem[Gir90]{girard1990geometry}
Jean-Yves Girard.
\newblock Geometry of interaction {II}: Deadlock-free algorithms.
\newblock In {\em COLOG-88}, pages 76--93. Springer, 1990.

\bibitem[Gir95]{girard1995geometry}
Jean-Yves Girard.
\newblock Geometry of interaction {III}: accommodating the additives.
\newblock {\em London Mathematical Society Lecture Note Series}, pages
  329--389, 1995.

\bibitem[Gir03]{girard2003geometry}
Jean-Yves Girard.
\newblock Geometry of interaction {IV}: the feedback equation.
\newblock In {\em Logic Colloquium}, volume~3, pages 76--117. Citeseer, 2003.

\bibitem[Gir11]{girard2011geometry}
Jean-Yves Girard.
\newblock Geometry of interaction {V}: logic in the hyperfinite factor.
\newblock {\em Theoretical Computer Science}, 412(20):1860--1883, 2011.

\bibitem[Gir13]{girard2013geometry}
Jean-Yves Girard.
\newblock Geometry of interaction {VI}: a blueprint for transcendental syntax.
\newblock {\em preprint}, 2013.

\bibitem[Gre05]{greenland2005game}
William~Edward Greenland.
\newblock {\em Game {Semantics for Region Analysis}}.
\newblock PhD thesis, University of Oxford, 2005.

\bibitem[Gun92]{gunter1992semantics}
Carl~A Gunter.
\newblock {\em Semantics of {P}rogramming {L}anguages: {S}tructures and
  {T}echniques}.
\newblock MIT press, 1992.

\bibitem[Gur04]{gurevich2004abstract}
Yuri Gurevich.
\newblock Abstract state machines: An overview of the project.
\newblock {\em Foundations of Information and Knowledge Systems}, pages 6--13,
  2004.

\bibitem[HO93]{hyland1993fair}
J~Martin~E Hyland and C-H~Luke Ong.
\newblock Fair games and full completeness for multiplicative linear logic
  without the mix-rule.
\newblock {\em preprint}, 190, 1993.

\bibitem[HO00]{hyland2000full}
J~Martin~E Hyland and C-HL Ong.
\newblock On {F}ull {A}bstraction for {PCF}: {I}, {II}, and {III}.
\newblock {\em Information and computation}, 163(2):285--408, 2000.

\bibitem[Hoa78]{hoare1978communicating}
Charles Antony~Richard Hoare.
\newblock Communicating sequential processes.
\newblock In {\em The origin of concurrent programming}, pages 413--443.
  Springer, 1978.

\bibitem[Hyl97]{hyland1997game}
Martin Hyland.
\newblock Game {S}emantics.
\newblock {\em Semantics and logics of computation}, 14:131, 1997.

\bibitem[Jap03]{japaridze2003introduction}
Giorgi Japaridze.
\newblock Introduction to computability logic.
\newblock {\em Annals of Pure and Applied Logic}, 123(1-3):1--99, 2003.

\bibitem[Kle59]{kleene1959recursive}
Stephen~Cole Kleene.
\newblock Recursive functionals and quantifiers of finite types i.
\newblock {\em Transactions of the American Mathematical Society}, 91(1):1--52,
  1959.

\bibitem[Kle63]{kleene1963recursive}
Stephen~Cole Kleene.
\newblock Recursive functionals and quantifiers of finite types ii.
\newblock {\em Transactions of the American Mathematical Society},
  108(1):106--142, 1963.

\bibitem[Kle78]{kleene1978recursive}
Stephen~Cole Kleene.
\newblock Recursive functionals and quantifiers of finite types revisited {I}.
\newblock {\em Studies in Logic and the Foundations of Mathematics},
  94:185--222, 1978.

\bibitem[Kle80]{kleene1980recursive}
Stephen~Cole Kleene.
\newblock Recursive functionals and quantifiers of finite types revisited {II}.
\newblock {\em Studies in Logic and the Foundations of Mathematics}, 101:1--29,
  1980.

\bibitem[Kle82]{kleene1982recursive}
Stephen~Cole Kleene.
\newblock Recursive functionals and quantifiers of finite types revisited
  {III}.
\newblock {\em Studies in Logic and the Foundations of Mathematics}, 109:1--40,
  1982.

\bibitem[Kle85]{kleene1985unimonotone}
SC~Kleene.
\newblock Unimonotone functions of finite types (recursive functionals and
  quantifiers of finite types revisited {IV}).
\newblock {\em Recursion Theory}, 42:119--138, 1985.

\bibitem[Kle91]{kleene1991recursive}
Stephen~Cole Kleene.
\newblock Recursive functionals and quantifiers of finite types revisited. {V}.
\newblock {\em Transactions of the American Mathematical Society},
  325(2):593--630, 1991.

\bibitem[Koz06]{kozen2006theory}
Dexter~C Kozen.
\newblock {\em Theory of {C}omputation}.
\newblock Springer Science \& Business Media, 2006.

\bibitem[Koz12]{kozen2012automata}
Dexter~C Kozen.
\newblock {\em {A}utomata and {C}omputability}.
\newblock Springer Science \& Business Media, 2012.

\bibitem[Lau04]{laurent2004polarized}
Olivier Laurent.
\newblock Polarized games.
\newblock {\em Annals of Pure and Applied Logic}, 130(1-3):79--123, 2004.

\bibitem[LN15]{longley2015higher}
John Longley and Dag Normann.
\newblock {\em Higher-{O}rder {C}omputability}.
\newblock Springer, 2015.

\bibitem[McC98]{mccusker1998games}
Guy McCusker.
\newblock {\em Games and {Full Abstraction for a Functional Metalanguage with
  Recursive Types}}.
\newblock Springer Science \& Business Media, 1998.

\bibitem[Mil]{milner2005software}
Robin Milner.
\newblock Software science: From virtual to reality.
\newblock {\em Bulletin of EATCS}, 87.

\bibitem[Mil80]{milner1980calculus}
Robin Milner.
\newblock A calculus of communicating systems.
\newblock {\em Lecture Notes in Comput. Sci. 92}, 1980.

\bibitem[Mit96]{mitchell1996foundations}
John~C Mitchell.
\newblock {\em Foundations for {P}rogramming {L}anguages}, volume~1.
\newblock MIT press Cambridge, 1996.

\bibitem[Mos98]{moschovakis1998founding}
Yiannis~N Moschovakis.
\newblock On founding the theory of algorithms.
\newblock {\em Truth in mathematics}, pages 71--104, 1998.

\bibitem[Nic94]{nickau1994hereditarily}
Hanno Nickau.
\newblock Hereditarily {S}equential {F}unctionals.
\newblock In {\em Logical Foundations of Computer Science}, pages 253--264.
  Springer, 1994.

\bibitem[Ong06]{ong2006model}
C-HL Ong.
\newblock On {Model-checking Trees Generated by Higher-order Recursion
  Schemes}.
\newblock In {\em 21st Annual IEEE Symposium on Logic in Computer Science
  (LICS'06)}, pages 81--90. IEEE, 2006.

\bibitem[Oua97]{ouaknine1997two}
Jo{\"e}l Ouaknine.
\newblock {\em A {T}wo-{D}imensional {E}xtension of {L}ambekps {C}ategorical
  {P}roof {T}heory}.
\newblock PhD thesis, McGill University, Montr{\'e}al, 1997.

\bibitem[Plo77]{plotkin1977lcf}
Gordon~D. Plotkin.
\newblock Lcf considered as a {P}rogramming {L}anguage.
\newblock {\em Theoretical computer science}, 5(3):223--255, 1977.

\bibitem[RLL80]{requena1980dialogische}
Est{\'e}ban Requena, Paul LORENZEN, and Kuno LORENZ.
\newblock Dialogische logik, 1980.

\bibitem[RR67]{rogers1967theory}
Hartley Rogers and H~Rogers.
\newblock {\em Theory of {Recursive Functions and Effective Computability}},
  volume~5.
\newblock McGraw-Hill New York, 1967.

\bibitem[Sho67]{shoenfield1967mathematical}
Joseph~R Shoenfield.
\newblock {\em Mathematical {L}ogic}, volume~21.
\newblock Addison-Wesley Reading, 1967.

\bibitem[Sip12]{sipser2012introduction}
Michael Sipser.
\newblock {\em Introduction to the Theory of Computation}.
\newblock Cengage Learning, 2012.

\bibitem[SS71]{scott1971toward}
Dana~S Scott and Christopher Strachey.
\newblock {\em Toward a mathematical semantics for computer languages},
  volume~1.
\newblock Oxford University Computing Laboratory, Programming Research Group,
  1971.

\bibitem[T{\etalchar{+}}98]{troelstra1998realizability}
Anne~Sjerp Troelstra et~al.
\newblock Realizability.
\newblock 1998.

\bibitem[Tur36]{turing1936computable}
Alan~Mathison Turing.
\newblock On {Computable Numbers, with an Application to the
  Entscheidungsproblem}.
\newblock {\em J. of Math}, 58(345-363):5, 1936.

\bibitem[TVD14]{troelstra2014constructivism}
Anne~Sjerp Troelstra and Dirk Van~Dalen.
\newblock {\em Constructivism in {M}athematics}, volume~2.
\newblock Elsevier, 2014.

\bibitem[VO08]{van2008realizability}
Jaap Van~Oosten.
\newblock {\em Realizability: an introduction to its categorical side}, volume
  152.
\newblock Elsevier, 2008.

\bibitem[Wei12]{weihrauch2012computable}
Klaus Weihrauch.
\newblock {\em Computable analysis: an introduction}.
\newblock Springer Science \& Business Media, 2012.

\bibitem[Win93]{winskel1993formal}
Glynn Winskel.
\newblock {\em The {F}ormal {S}emantics of {P}rogramming {L}anguages: {A}n
  {I}ntroduction}.
\newblock MIT press, 1993.

\bibitem[YA16]{yamada2016dynamic}
Norihiro Yamada and Samson Abramsky.
\newblock {Dynamic Games and Strategies}.
\newblock {\em arXiv preprint arXiv:1601.04147}, 2016.

\bibitem[Yam16]{yamada2016game}
Norihiro Yamada.
\newblock {Game Semantics for Martin-L\"{o}f Type Theory}.
\newblock {\em arXiv preprint arXiv:1610.01669}, 2016.

\end{thebibliography}
% for author-year bibliography (bmc-mathphys or spbasic)
% a) write to bib file (bmc-mathphys only)
% @settings{label, options="nameyear"}
% b) uncomment next line
%\nocite{label}

% or include bibliography directly:
% \begin{thebibliography}
% \bibitem{b1}
% \end{thebibliography}

%%%%%%%%%%%%%%%%%%%%%%%%%%%%%%%%%%%
%%                               %%
%% Figures                       %%
%%                               %%
%% NB: this is for captions and  %%
%% Titles. All graphics must be  %%
%% submitted separately and NOT  %%
%% included in the Tex document  %%
%%                               %%
%%%%%%%%%%%%%%%%%%%%%%%%%%%%%%%%%%%

%%
%% Do not use \listoffigures as most will included as separate files

%%%%%%%%%%%%%%%%%%%%%%%%%%%%%%%%%%%
%%                               %%
%% Tables                        %%
%%                               %%
%%%%%%%%%%%%%%%%%%%%%%%%%%%%%%%%%%%

%% Use of \listoftables is discouraged.
%%

%%%%%%%%%%%%%%%%%%%%%%%%%%%%%%%%%%%
%%                               %%
%% Additional Files              %%
%%                               %%
%%%%%%%%%%%%%%%%%%%%%%%%%%%%%%%%%%%

\appendix
\section{A Finite Table for Successor Strategy}
\label{AppSucc}
The finite table for an st-algorithm $\mathcal{A}(\mathit{succ})$ for the successor strategy $\mathit{succ} : \mathcal{N} \Rightarrow \mathcal{N}$ is as follows: 
\begin{align*}
\mathcal{A}(\mathit{succ})_{\hat{q}_{\mathscr{E}}} : \ &(\hat{q}_{\mathcal{N} \Rightarrow \mathcal{N}})^{[3]} \mapsto (\hat{q}_{\mathcal{N} \Rightarrow \mathcal{N}})^{[2]} \mid (\hat{q}_{\mathcal{N} \Rightarrow \mathcal{N}})^{[3]} (\hat{q}_{\mathcal{N} \Rightarrow \mathcal{N}})^{[2]} (\mathsf{\hat{q}_E})^{[2]} \mapsto (\mathsf{\hat{q}_W})^{[3]} \mid \\ 
&(\hat{q}_{\mathcal{T}})^{[3]} \mapsto (\hat{q}_{\mathcal{N} \Rightarrow \mathcal{N}})^{[2]} \mid (\hat{q}_{\mathcal{T}})^{[3]} (\hat{q}_{\mathcal{N} \Rightarrow \mathcal{N}})^{[2]} (\mathsf{\hat{q}_E})^{[2]} \mapsto (\langle)^{[3]} \mid \\ 
&(\hat{q}_{\mathcal{T}})^{[3]} (\hat{q}_{\mathcal{N} \Rightarrow \mathcal{N}})^{[2]} (\mathsf{\hat{q}_E})^{[2]} (\langle)^{[3]} (q_{\mathcal{T}})^{[3]} \mapsto (\rangle)^{[3]} \mid \\
&(\hat{q}_{\mathcal{T}})^{[3]} (\hat{q}_{\mathcal{N} \Rightarrow \mathcal{N}})^{[2]} (\mathsf{\hat{q}_E})^{[2]} (\langle)^{[3]} (q_{\mathcal{T}})^{[3]} (\rangle)^{[3]} (q_{\mathcal{T}})^{[3]} \mapsto (\sharp)^{[3]} \mid \\
&(\hat{q}_{\mathcal{T}})^{[3]} (\hat{q}_{\mathcal{N} \Rightarrow \mathcal{N}})^{[2]} (\mathsf{\hat{q}_E})^{[2]} (\langle)^{[3]} (q_{\mathcal{T}})^{[3]} (\rangle)^{[3]} (q_{\mathcal{T}})^{[3]} (\sharp)^{[3]} (q_{\mathcal{T}})^{[3]} \mapsto (\checkmark)^{[3]} \mid \\
&(\hat{q}_{\mathcal{N} \Rightarrow \mathcal{N}})^{[3]} (\hat{q}_{\mathcal{N} \Rightarrow \mathcal{N}})^{[2]} (\mathsf{q_E})^{[2]} \mapsto (\hat{q}_{\mathcal{N} \Rightarrow \mathcal{N}})^{[0]} \mid \\
&(\hat{q}_{\mathcal{N} \Rightarrow \mathcal{N}})^{[3]} (\hat{q}_{\mathcal{N} \Rightarrow \mathcal{N}})^{[2]} (\mathsf{q_E})^{[2]} (\hat{q}_{\mathcal{N} \Rightarrow \mathcal{N}})^{[0]} (\mathsf{no_W})^{[0]} \mapsto (\mathsf{no_E})^{[3]} \mid \\ 
&(\hat{q}_{\mathcal{T}})^{[3]} (\hat{q}_{\mathcal{N} \Rightarrow \mathcal{N}})^{[2]} (\mathsf{q_E})^{[2]} \mapsto (\hat{q}_{\mathcal{N} \Rightarrow \mathcal{N}})^{[0]} \mid \\
&(\hat{q}_{\mathcal{T}})^{[3]} (\hat{q}_{\mathcal{N} \Rightarrow \mathcal{N}})^{[2]} (\mathsf{q_E})^{[2]} (\hat{q}_{\mathcal{N} \Rightarrow \mathcal{N}})^{[0]} (\mathsf{no_W})^{[0]} \mapsto (\checkmark)^{[3]} \mid \\
&(\hat{q}_{\mathcal{N} \Rightarrow \mathcal{N}})^{[3]} (\hat{q}_{\mathcal{N} \Rightarrow \mathcal{N}})^{[2]} (\mathsf{q_E})^{[2]} (\hat{q}_{\mathcal{N} \Rightarrow \mathcal{N}})^{[0]} (\mathsf{yes_W})^{[0]} \mapsto (\mathsf{q_W})^{[3]} \mid \\
&(\hat{q}_{\mathcal{T}})^{[3]} (\hat{q}_{\mathcal{N} \Rightarrow \mathcal{N}})^{[2]} (\mathsf{q_E})^{[2]} (\hat{q}_{\mathcal{N} \Rightarrow \mathcal{N}})^{[0]} (\mathsf{yes_W})^{[0]} \mapsto (\langle)^{[3]} \mid \\
&(\hat{q}_{\mathcal{T}})^{[3]} (\hat{q}_{\mathcal{N} \Rightarrow \mathcal{N}})^{[2]} (\mathsf{q_E})^{[2]} (\hat{q}_{\mathcal{N} \Rightarrow \mathcal{N}})^{[0]} (\mathsf{yes_W})^{[0]} (\langle)^{[3]} (q_{\mathcal{T}})^{[3]} \mapsto (\rangle)^{[3]} \mid \\
&(\hat{q}_{\mathcal{T}})^{[3]} (\hat{q}_{\mathcal{N} \Rightarrow \mathcal{N}})^{[2]} (\mathsf{q_E})^{[2]} (\hat{q}_{\mathcal{N} \Rightarrow \mathcal{N}})^{[0]} (\mathsf{yes_W})^{[0]} (\langle)^{[3]} (q_{\mathcal{T}})^{[3]} (\rangle)^{[3]} (q_{\mathcal{T}})^{[3]} \mapsto (\sharp)^{[3]} \mid \\
&(\hat{q}_{\mathcal{T}})^{[3]} (\hat{q}_{\mathcal{N} \Rightarrow \mathcal{N}})^{[2]} (\mathsf{q_E})^{[2]} (\hat{q}_{\mathcal{N} \Rightarrow \mathcal{N}})^{[0]} (\mathsf{yes_W})^{[0]} (\langle)^{[3]} (\hat{q}_{\mathcal{T}})^{[3]} (\rangle)^{[3]} (q_{\mathcal{T}})^{[3]} (\sharp)^{[3]} (q_{\mathcal{T}})^{[3]} \mapsto (\checkmark)^{[3]} \mid \\
&(\hat{q}_{\mathcal{N} \Rightarrow \mathcal{N}})^{[3]} (\hat{q}_{\mathcal{N} \Rightarrow \mathcal{N}})^{[2]} (\mathsf{x_W})^{[2]} \mapsto (\mathsf{yes_E})^{[3]} \mid (\hat{q}_{\mathcal{T}})^{[3]} (\hat{q}_{\mathcal{N} \Rightarrow \mathcal{N}})^{[2]} (\mathsf{x_W})^{[2]} \mapsto (\checkmark)^{[3]} 
\end{align*}
where $\mathsf{x_W} \in \{ \mathsf{yes_W}, \mathsf{no_W} \}$.

\section{A Finite Table for Copy-Cat Strategy}
\label{TableCP}
The finite table for an st-algorithm $\mathcal{A}(\mathit{cp}_A)$ for the copy-cat strategy $\mathit{cp} : A \multimap A$ is as follows: 
\begin{align*}
\mathcal{A}(\mathit{cp}_A)_m : \ &(\hat{q}_{A \multimap A})^{[3]} \mapsto (\hat{q}_{A \multimap A})^{[2]} \mid (\hat{q}_{A \multimap A})^{[3]} (\hat{q}_{A \multimap A})^{[2]} (\mathsf{a_W})^{[2]} \mapsto (\mathsf{a_E})^{[3]} \mid \\
&(\hat{q}_{A \multimap A})^{[3]} (\hat{q}_{A \multimap A})^{[2]} (\mathsf{a_E})^{[2]} \mapsto (\mathsf{a_W})^{[3]} \mid \\
&(\hat{q}_{\mathcal{T}})^{[3]} \mapsto (\hat{q}_{\mathcal{T}})^{[2]} \mid (\mathsf{x})^{[3]} (\mathsf{x})^{[2]} (\mathsf{y})^{[2]} \mapsto (\mathsf{y})^{[3]} \mid (\mathsf{x})^{[2]} (\mathsf{x})^{[3]} (\mathsf{y})^{[3]} \mapsto (\mathsf{y})^{[2]}
\end{align*}
where $a \in \pi_1(M_A)$, $\mathsf{x}, \mathsf{y} \in \pi_1(M_{\mathcal{G}(\mathcal{T})})$.

\end{document}